%% file: strategic-experiments.tex
\documentclass{article}
\pdfoutput=1

\usepackage[utf8]{inputenc} 
\usepackage[T1]{fontenc}    
\PassOptionsToPackage{hyphens}{url} 
\usepackage[hyperfootnotes=false]{hyperref}
\usepackage{booktabs}       
\usepackage{amsfonts}       
\usepackage{nicefrac}       
\usepackage{microtype}      
\usepackage[table]{xcolor}  
\usepackage{breqn}
\usepackage[hang,flushmargin]{footmisc}
\usepackage{algorithm}
\usepackage{algpseudocode}
\usepackage{graphicx}
\usepackage{enumerate}
\usepackage{caption}
\usepackage{array}           
\usepackage{subcaption}
\usepackage{comment}
\usepackage[export]{adjustbox}
\usepackage{mathtools}
\usepackage{amsthm}
\usepackage{authblk}
\usepackage{fullpage}
\usepackage{multirow}

\usepackage{enumitem}

\usepackage[sort&compress, numbers]{natbib}

\usepackage{tikz}
\usetikzlibrary{positioning}
\usetikzlibrary{decorations.pathreplacing}
\usetikzlibrary{calc}
\usetikzlibrary{fit}
\usepackage[HTML]{xcolor}
\usepackage[most]{tcolorbox}
\usepackage{Definitions}
\usepackage{dsfont}

\definecolor{lightbluebox}{RGB}{245,249,255}
\definecolor{darkbluebox}{RGB}{20,40,120}

\newtcolorbox{mybox}[2]{
    enhanced,
    breakable,
    colback=lightbluebox,
    colframe=darkbluebox,
    coltitle=white,
    colbacktitle=darkbluebox,
    fonttitle=\bfseries,
    arc=6pt,
    boxrule=1pt,
    left=6pt,
    right=6pt,
    top=8pt,
    bottom=6pt,
    attach boxed title to top left={
        xshift=6pt,
        yshift=-6pt
    },
    boxed title style={
        arc=6pt,
        boxrule=0pt,
        left=6pt,
        right=6pt,
        top=4pt,
        bottom=4pt
    },
    title=#1,
    label={#2}
}

\definecolor{mycommentcolor}{HTML}{4f9739}
\algrenewcommand\algorithmiccomment[1]{\hfill\textcolor{mycommentcolor}{\(\triangleright\) #1}}

\newlength{\inlineheight}
\algnewcommand{\LineComment}[1]{\State \textcolor{mycommentcolor}{\(\triangleright\) #1}}

\title{Optimizing Social Utility in Sequential Experiments}

\author{Ander Artola Velasco$^{\S}$}
\author{Stratis Tsirtsis$^{\dagger}$}
\author{Manuel~Gomez-Rodriguez$^{\S}$}
\affil{$^{\S}$Max Planck Institute for Software Systems, Kaiserslautern, Germany \\
\{avelasco, manuel\}@mpi-sws.org}

\affil{$^{\dagger}$Hasso Plattner Institute, Potsdam, Germany \\ stratis.tsirtsis@hpi.de}

\date{}

\begin{document}

\maketitle

\begin{abstract}
\input{000abstract.tex}
\end{abstract}

\section{Introduction}
\label{sec:intro}
\input{010introduction.tex}

\section{A Protocol for Subsidized Sequential Experimentation}
\label{sec:model}
\input{020model.tex}

\section{Experimental Design Using Belief Markov Decision Processes}
\label{sec:MDP}
\input{030MDP.tex}

\section{Optimal Experimental Design under Subsidies}
\label{sec:agent}
\input{040agent.tex}

\section{Finding Optimal Subsidies} 
\label{sec:principal}
\input{050regulator.tex}

\section{Experiments: Subsidizing Antibiotic Development}
\label{sec:experiments}
\input{060experiments.tex}

\section{Discussion and Limitations}
\label{sec:discussion}
\input{070discussion.tex}

\section{Conclusions}
\label{sec:conclusions}
\input{080conclusions.tex}

\vspace{2mm}
\xhdr{Acknowledgements} 
Gomez-Rodriguez acknowledges support from the European Research Council (ERC) under the European Union'{}s Horizon 2020 research and innovation programme (grant agreement No. 945719).
Tsirtsis acknowledges supports from the Alexander von Humboldt Foundation in the framework of the Alexander von Humboldt Professorship (Humboldt Professor of Technology and Regulation awarded to Sandra Wachter) endowed by the Federal Ministry of Education and Research via the Hasso Plattner Institute.
{ 
\small
\bibliographystyle{unsrt}
\bibliography{strategic-experiments}
}

\clearpage
\newpage

\appendix

\input{090appendix.tex}

\end{document}

%% file: 000abstract.tex
Regulatory approval of products in high-stakes domains such as drug development requires statistical evidence of safety and efficacy through large-scale randomized controlled trials.
%
However, the high financial cost of these trials may deter developers who lack absolute certainty in their product's efficacy, ultimately stifling the development of `moonshot' products that could offer high social utility.
%
To address this inefficiency, in this paper, we introduce a statistical protocol for experimentation where the product developer (the agent) conducts a randomized controlled trial sequentially and the regulator (the principal) partially subsidizes its cost.
%
By modeling the protocol using a belief Markov decision process, we show that the agent'{}s optimal strategy can be found efficiently using dynamic programming.
%
Further, we show that the social utility is a piecewise linear and convex function over the subsidy level the principal selects, and thus the socially optimal subsidy can also be found efficiently using divide-and-conquer.
%
Simulation experiments using publicly available data on antibiotic development and approval demonstrate that our statistical protocol can be used to increase social utility by more than $35$$\%$ relative to standard, non-sequential protocols. 

%% file: 010introduction.tex
%
Access to markets in high-stakes domains, such as drug development, is strictly governed by regulatory bodies to ensure that new products meet rigorous safety and efficacy standards~\citep{gieringer1985safety,fda2019effectiveness}.
Randomized controlled trials (RCTs) serve as one of the primary mechanisms for access control, requiring developers to gather sufficient statistical evidence to prove that a product is safe and effective~\citep{o2010building, Janiaud2021-jg, farina2021strength}.

However, the high financial costs of RCTs can stifle the development of `moonshot' products that could offer high social utility, as developers may hesitate to proceed if they lack absolute certainty in their product'{}s efficacy~\citep{detsky1989clinical,frantz2003clinical,martin2017much}.
%
To avoid missing such opportunities, calls have emerged for adaptive RCTs~\citep{brown2009adaptive,mahajan2010adaptive,fda2026use} and targeted subsidies~\citep{nih,fda,dfg,edctp}, particularly in the context of orphan and rare diseases where small patient populations often discourage private investment.

%
Yet, the effectiveness of such interventions depends on how developers strategically respond to them. In this context, a recent line of work~\citep{bates2022principal,shi2024sharp,hossain2025strategic} has argued that the regulatory approval of products is best modeled as a principal-agent game~\citep{grossman1992analysis}, 
where the regulator (the principal) designs an approval protocol and a hypothesis testing rule to incentivize the product developer (the agent) to act in a way that aligns with the principal's interests.
%
In this paper, we extend this line of work to consider a setting in which the RCTs are conducted sequentially and are (partially) subsidized by the principal.

%
\xhdr{Our contributions}
%
We introduce a statistical protocol for experimentation where the product developer (the agent) conducts an RCT sequentially and the regulator (the principal) partially subsidizes its cost.
%
At each step of the protocol, the agent and principal update their beliefs about the product's effectiveness based on the latest experimental outcome. 
%
If there is sufficient statistical evidence to reject the null hypothesis~\citep{ramdas2023gametheoreticstatisticssafeanytimevalid}, the principal approves the product; otherwise, they request that the agent gather more evidence, and the agent may either proceed with the experiment or terminate without approval.
%
By modeling the protocol using a belief Markov decision process~\citep{kaelbling1998planning}, we show that the agent'{}s optimal strategy can be found efficiently using dynamic programming.
%
Further, we show that the social utility is a piecewise linear and convex function over the subsidy level the principal selects, and thus the socially optimal subsidy can also be found efficiently via divide-and-conquer. 

%
To validate our statistical protocol, we conduct simulation experiments using publicly available data on antibiotic development and approval.
The results show that our protocol can be used to increase social utility by more than $35$$\%$ relative to standard, non-sequential protocols.\footnote{The code for our experiments is publicly available at \url{https://github.com/Human-Centric-Machine-Learning/strategic-experiments}.}

%
\xhdr{Further related work} Our work builds upon further related work on the economic aspects of statistical testing, sequential hypothesis testing, and Bayesian experimental design.

%
A recent and closely related line of work studies the economic and strategic incentives arising in regulatory approval processes~\citep{TetenovEcon, bates2022principal, McClellan2022, shi2024sharp, hossain2025strategic}. Therein, Shi et al.~\citep{shi2024sharp} and Hossain et al.~\citep{hossain2025strategic} focus on determining the optimal hypothesis test that the principal can use to control false positives and false negatives, while Tetenov~\cite{TetenovEcon} and Bates et al.~\citep{bates2022principal} study, respectively, the design of approval and payment rules to disincentivize agents who know their product is ineffective from participating in the approval process. Relatedly, McClellan~\citep{McClellan2022} considers a setting in which the principal designs approval rules to encourage agent participation without monetary transfers, \eg, by lowering approval standards if previous experiments were not successful. In contrast to these works, we study how the principal can optimally subsidize the agent to increase social utility in a scenario where both are uncertain about the effectiveness of the product.

%
Within the hypothesis testing literature, our work draws on an active line of research on anytime-valid statistical inference using e-values~\citep{ramdas2023gametheoreticstatisticssafeanytimevalid, GrunwaldSafe, ramdas2025hypothesistestingevalues}, which has been successfully applied to a wide range of statistical problems~\citep{waudbysmith2022estimatingmeansboundedrandom, xu2024Online, shin_ramdas_rinaldo_2024, shekhar2024Two, waudbysmith2025universallogoptimalitygeneralclasses, chugg2026post, bates2022principal, gauthier2025backward, huang2026towards, velasco2026auditing, dhillon2026escores}.
Most closely related to ours is the work of Bates et al.~\citep{bates2022principal}, who also uses e-values in the context of RCTs. However, their focus is on designing contracts that disincentivize agents with ineffective products from participating in the approval process. We instead use e-values to design an approval protocol that provides sequential error guarantees to the principal.

%
Our work also connects to the broad literature on Bayesian experimental design~\citep{RobbinsOptimal, Peskir2006-vv,Powell2012-tq, Ghavamzadeh_2015, cho2022bayesian, buening2023minimaxbayesreinforcementlearning, shen2023bayesian, cheng2025optimal, Shen2025-zc}. Among these, the closest works to ours~\citep{shen2023bayesian,cheng2025optimal,Shen2025-zc} model a Bayesian agent’s experiment selection problem using (partially observable) Markov decision processes. However, their objective is to design policies that maximize information gain while trading off experimental costs. In contrast, in the approval setting we study, the agent has direct economic incentives to conduct experiments, as product approval is financially beneficial.

%% file: 020model.tex
We consider an agent (the product developer) who seeks regulatory approval for a product from a principal (the regulator). The product is characterized by an efficacy parameter $\theta^*\in [0,1]$, unknown both to the agent and the principal, with higher values indicating a more effective product.\footnote{In certain settings, the principal may be interested not only in the efficacy of a product but also in other properties of the product such as its safety. In Appendix~\ref{app:general-model}, we discuss how to extend the approval process to such settings.} Motivated by multi-stage clinical trials~\citep{fda2019effectiveness, Janiaud2021-jg}, in this section, we introduce a sequential approval process in which the agent conducts a sequence of RCTs to provide sufficient evidence that the product meets the principal's standards. In turn, the principal commits to subsidize a fraction of the agent's total experimentation cost and, after each trial, they decide whether to approve the product or require the agent to conduct further experimentation.

The agent begins the (sequential) approval process with a prior belief $B_0$ about the efficacy $\theta^*$ of its product, which we model using a Beta distribution $B_0 = \mathrm{Beta}(\alpha_0, \beta_0)$, where $\alpha_0, \beta_0 > 0$ are given parameters.\footnote{The parameters $\alpha_0$ and $\beta_0$ characterize prior information the agent may have about the product based on, \eg, preliminary tests. In Appendix~\ref{app:general-model}, we show that the approval process can be extended to more general settings with arbitrary prior beliefs.}
At each time step $t \in [T] = \{0, 1, \dots, T\}$ of the approval process, the agent's action is to either (i) conduct a randomized trial with a sample size $n_t \in \{1, \dots, n^{\texttt{max}}\}$ and incur a cost $c(n_t)$, where $c: \mathbb{N} \to \mathbb{R}_{+}$ is a non-decreasing cost function, or (ii) opt out and stop the approval process at no additional cost, which we represent as $n_t=0$ with $c(0) = 0$. If the agent decides to conduct a trial by selecting $n_t>0$, the agent observes a random outcome  $X_t \sim \mathrm{Bin}(n_t, \theta^*)$, which represents the number of \emph{successes} in the trial and depends on the unknown efficacy $\theta^{*}$, and this outcome is then revealed to the principal.\footnote{The specific meaning of the number of successes is application dependent. In the context of clinical trials, it may correspond to the number of patients who recover after receiving a treatment.}
Then, based on the outcome $X_t$, the agent updates their (posterior) belief $B_t$ about the true efficacy $\theta^*$ of the product, \ie,
\begin{equation}\label{eq:bayes-belief}
        B_{t+1} = \mathrm{Beta}(\underbrace{\alpha_t + X_t}_{\alpha_{t+1}}, \underbrace{\beta_t + n_t -X_t}_{\beta_{t+1}}).
\end{equation}
Throughout the process, the agent employs a (possibly randomized) policy $\pi \in \Pi$ to select the sample sizes $n_t \sim \pi(\alpha_t, \beta_t, C_t, t)$ based on their belief $B_t = \mathrm{Beta}(\alpha_t, \beta_t)$ about the efficacy of the product, the total cost $C_t = \sum_{k=0}^{t-1} c(n_k)$ they have incurred so far, and the time step $t$ of the approval process.
Moreover, the principal uses the revealed outcomes $X_t$ as evidence to decide on the approval of the product. Formally, the principal conducts a (sequential) hypothesis test with null and alternative hypotheses given by
\begin{equation}\label{eq:hypothesis-test}
    H_0 = \left\{ \theta^* \, \colon \, \theta^* < \theta^{\texttt{b}} \right\} \,\, \text{and} \,\, H_1 = \left\{ \theta^* \, \colon \, \theta^* \geq \theta^{\texttt{b}} \right\},
\end{equation}
where $\theta^{\texttt{b}}\in(0,1)$ is a baseline efficacy mandated by the principal and known to the agent (\eg, the efficacy of the current standard-of-care treatment in the context of clinical trials). The null hypothesis $H_0$ therefore corresponds to the product failing to meet the principal's standard, and we will equivalently refer to the principal approving the product as \emph{rejecting} $H_0$. 

In general, the principal can implement any decision rule to reject $H_0$. 
However, in what follows, we draw on the literature on sequential hypothesis testing~\citep{waldSeq,ramdas2023gametheoreticstatisticssafeanytimevalid,ramdas2025hypothesistestingevalues}, and consider a principal who aims to control the false positive rate, \ie, the probability of approving a product whose efficacy does not exceed the baseline $\theta^{\texttt{b}}$.
Concretely, based on the sample size $n_t$ and revealed outcome $X_t$, the principal computes---and shares with the agent---a non-negative quantity $E(X_t, n_t) \in \mathbb{R}_{+}$, referred to as an \emph{e-value}, quantifying the observed evidence against $H_0$ at time $t$ (\ie, larger e-values correspond to stronger evidence against $H_0$).\footnote{Computing $E(X_t, n_t)$ plays a role similar to that of a p-value for rejecting $H_0$, while offering stronger guarantees in sequential settings. See Appendix~\ref{app:background-seq} for a brief overview of sequential hypothesis testing with e-values.}
Then, the principal aggregates all available evidence multiplicatively via a (stochastic) process $M$, which we will refer to as the \emph{test process}, and whose value at time $t$ is defined as:
\begin{equation}\label{eq:test-process}
M_{t}=
\begin{dcases}
    1 & t=0\\
    E(X_{t-1}, n_{t-1}) \cdot M_{t-1} & t\geq1.
\end{dcases}
\end{equation}

Finally, based on the test process $M$, the principal rejects the null hypothesis $H_0$ (approves the product) as soon as the accumulated evidence exceeds a fixed threshold $\kappa\in (0,1)$ set in advance, \ie, as soon as $M_{t+1} \geq 1/\kappa$ for some time step $t$. Here, $\kappa$ acts as a tolerance parameter specifying how much evidence must be accumulated before rejecting $H_0$, where smaller values of $\kappa$ correspond to a more conservative approval standard set by the principal, and note that, if the process fails to yield sufficient evidence for approval within $T$ time steps, it concludes without approval. 

In the above sequential test, the false positive rate is bounded by $\kappa$ as long as the e-values are such that $\EE_{H_0}[E(X_t, n_t)] \leq 1$, as shown elsewhere~\citep{ramdas2023gametheoreticstatisticssafeanytimevalid}, \ie,
\begin{equation}
    P_{H_0}\left( \left\{ \exists t \in [T] \colon M_{t+1} \geq 1/\kappa \right\} \right) \leq \kappa.
\end{equation}
Importantly, the above guarantee on the false positive rate holds throughout the entire approval process---a particularly desirable property known as \emph{any-time validity}~\cite{GrunwaldSafe}. To ensure that $\EE_{H_0}[E(X_t, n_t)] \leq 1$ holds, we construct an e-value that exponentiates a sufficient statistic for the unknown efficacy $\theta^*$~\citep{PenaExp, HowardTimeUni} by comparing the empirical success rate $X_t / n_t$ against a monotone transformation of the baseline $\theta^{\texttt{b}}$:\footnote{See Appendix~\ref{app:general-model} for an extension to arbitrary e-values; proofs are deferred to Appendix~\ref{app:proofs}.}
\begin{proposition}\label{prop:binomial-e-value}
    Given the (unknown) efficacy parameter $\theta^*\in [0,1]$, $n_t>0$, and a binomial variable $X_t\sim \mathrm{Bin}(n_t,\theta^*)$, the positive random variable
    \begin{equation}\label{eq:binomial-e-value}
        E(X_t,n_t) = \exp\left( X_t - n_t\cdot \log(1+\theta^{\texttt{b}}(e-1)) \right)
    \end{equation}
    is a valid e-value under $H_0 = \left\{ {\theta^* \, \colon \, \theta^* < \theta^{\texttt{b}}} \right\}$, that is, $\EE_{X_t\sim \mathrm{Bin}(n_t,\theta^*)}[E(X_t,n_t)] \leq 1$ for any $\theta^*\in H_0$.
\end{proposition}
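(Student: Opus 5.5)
The plan is to compute the expectation in closed form via the binomial moment generating function, and then use monotonicity in $\theta^*$. For $X_t\sim \mathrm{Bin}(n_t,\theta^*)$ and any $\lambda\in\mathbb{R}$, independence of the $n_t$ Bernoulli trials gives $\EE[e^{\lambda X_t}] = (1-\theta^* + \theta^* e^{\lambda})^{n_t}$. Taking $\lambda = 1$, the factor $\exp(-n_t \log(1+\theta^{\texttt{b}}(e-1)))$ is deterministic and equals $(1+\theta^{\texttt{b}}(e-1))^{-n_t}$. Therefore
\begin{equation*}
\EE_{X_t\sim \mathrm{Bin}(n_t,\theta^*)}[E(X_t,n_t)] = \left(\frac{1+\theta^*(e-1)}{1+\theta^{\texttt{b}}(e-1)}\right)^{n_t}.
\end{equation*}

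Next, I would observe that the map $\theta\mapsto 1+\theta(e-1)$ is strictly increasing and strictly positive on $[0,1]$, since $e-1>0$. Hence for any $\theta^*\in H_0$, i.e.\ $\theta^*<\theta^{\texttt{b}}$, the base of the power above lies in $(0,1)$. Raising it to the positive integer $n_t$ keeps the value in $(0,1)$, so the expectation is at most $1$ (indeed strictly less). Positivity of $E(X_t,n_t)$ is immediate because it is an exponential. Together these establish that $E(X_t,n_t)$ is a valid e-value under $H_0$.

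I do not expect a real obstacle. The only point needing care is to state the MGF identity correctly and to note that $\theta^{\texttt{b}}\in(0,1)$ makes the logarithm well defined. It is also worth remarking that the same computation gives equality at $\theta^*=\theta^{\texttt{b}}$. This explains the choice of the "monotone transformation" $\log(1+\theta^{\texttt{b}}(e-1))$: it is exactly the log-MGF at $\lambda=1$ evaluated at the boundary of $H_0$.
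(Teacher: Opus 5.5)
Your proposal is correct and follows essentially the same route as the paper: factor out the deterministic $(1+\theta^{\texttt{b}}(e-1))^{-n_t}$, evaluate $\EE[e^{X_t}]=(1-\theta^*+\theta^* e)^{n_t}$ via the binomial moment generating function, and bound it using $\theta^*<\theta^{\texttt{b}}$. Your remark that equality holds at $\theta^*=\theta^{\texttt{b}}$ is a good explanation of why this particular shift constant was chosen.
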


Furthermore, given the specific form of e-value in Eq.~\ref{eq:binomial-e-value}, it is easy to verify that the value of the test process $M_t$ is uniquely determined by the parameters $\alpha_t$ and $\beta_t$ characterizing the agent's belief:

\begin{proposition}\label{prop:test-process}
At any time step $t\in [T]$, the value of the test process $M_t$ satisfies:
\begin{equation}\label{eq:f-function}
    M_t = f(\alpha_t, \beta_t)
    \quad \text{where} \quad
    f(\alpha, \beta)
    =
    \exp\left(
    \alpha - \alpha_0
    - (\alpha + \beta - \alpha_0 - \beta_0)\cdot\log(1+\theta^{\texttt{b}}(e-1))
    \right).
\end{equation}
\end{proposition}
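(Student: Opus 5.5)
The plan is to argue by induction on $t$, exploiting that both the test process $M_t$ and the belief parameters $(\alpha_t,\beta_t)$ follow additive updates driven by the same data $(X_t,n_t)$. Throughout, write $\gamma = \log(1+\theta^{\texttt{b}}(e-1))$ for brevity.

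\emph{Base case.} At $t=0$, Eq.~\eqref{eq:test-process} gives $M_0=1$. On the other side,
\[
f(\alpha_0,\beta_0)=\exp\bigl(0 - 0\cdot\gamma\bigr)=1 ,
\]
so the two agree.

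\emph{Inductive step.} Suppose $M_t=f(\alpha_t,\beta_t)$. Using Eq.~\eqref{eq:test-process} and the e-value of Proposition~\ref{prop:binomial-e-value},
\[
M_{t+1}=\exp\bigl(X_t-n_t\gamma\bigr)\cdot\exp\bigl(\alpha_t-\alpha_0-(\alpha_t+\beta_t-\alpha_0-\beta_0)\gamma\bigr).
\]
The Bayesian update in Eq.~\eqref{eq:bayes-belief} gives $\alpha_{t+1}=\alpha_t+X_t$ and $\alpha_{t+1}+\beta_{t+1}=\alpha_t+\beta_t+n_t$. Collecting the exponents, the increments $X_t$ and $n_t$ are absorbed exactly, so the exponent becomes $\alpha_{t+1}-\alpha_0-(\alpha_{t+1}+\beta_{t+1}-\alpha_0-\beta_0)\gamma$. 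Hence $M_{t+1}=f(\alpha_{t+1},\beta_{t+1})$.

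\emph{The stopping case.} This is the only point needing care. If the agent opts out at step $t$ ($n_t=0$), no trial is run. I would adopt the convention that then $X_t=0$ and $E(0,0)=1$, which is consistent with Eq.~\eqref{eq:binomial-e-value} evaluated at $n_t=0$. The belief is likewise unchanged, since the update with $n_t=X_t=0$ leaves $(\alpha_t,\beta_t)$ fixed. With this convention the identity persists trivially at all subsequent steps.

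Beyond this convention, the argument is a direct algebraic verification and presents no genuine obstacle.
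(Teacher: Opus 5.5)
Your proof is correct and is essentially the paper's argument. The paper unrolls the product of e-values into $\exp\bigl(\sum_s X_s - \gamma\sum_s n_s\bigr)$ and substitutes the unrolled Bayesian updates $\sum_s X_s = \alpha_{t+1}-\alpha_0$ and $\sum_s n_s = \alpha_{t+1}+\beta_{t+1}-\alpha_0-\beta_0$, which is the closed form of your one-step induction. Your opt-out convention $E(0,0)=1$ is harmless; the paper instead simply restricts to trajectories with $n_0,\ldots,n_t$ all nonzero.
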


%
%
If the agent's product is approved, the agent and the principal obtain benefits $\rho^{\texttt{A}}, \rho^{\texttt{S}} > 0$, respectively, and we refer to the latter as the social benefit upon approval. In the context of clinical trials, $\rho^{\texttt{A}}$ can be interpreted as the (estimated) economic benefit obtained by the agent---the pharmaceutical company---from drug sales if the drug is approved, while $\rho^{\texttt{S}}$ represents the corresponding benefit to society resulting from bringing an effective treatment to market. 

Since the principal also benefits from a product's approval, it can be in their interest to incentivize the agent to continue experimenting, particularly in situations where the agent's benefit $\rho^{\texttt{A}}$ is not high enough to compensate for the total experimentation cost required to reject $H_0$. 
In the following, we focus our attention on a natural and widely-used mechanism through which the principal can provide such an incentive: subsidizing, conditional on approval, a fraction $\varepsilon \in [0, \varepsilon^{\texttt{max}}]$ of the total cost incurred by the agent, where $\varepsilon^{\texttt{max}} \leq 1$ denotes the maximum fraction the principal is willing to subsidize~\citep{Renwick2016-wj, Wu31122022}.

Given the above benefits, experimentation cost, and subsidies, the agent's and the principal's (expected) utilities $U^{\texttt{A}}$ and $U^{\texttt{S}}$ are given by 
\begin{equation}\label{eq:true-utilities}
\left.
\begin{aligned}
        U^{\texttt{A}}(\pi; \varepsilon) &= \EE_{ \pi} \left[\left( \rho^{\texttt{A}} + \varepsilon \cdot \sum_{t=0}^\tau c(n_t) \right) \cdot \mathds{1}\{ M_{\tau+1} \geq 1/\kappa\} - \sum_{t=0}^\tau c(n_t) \,\middle|\, \theta^* \right], \\
        U^{\texttt{S}}(\varepsilon;\pi) &= \EE_{ \pi} \left[ \left( \rho^{\texttt{S}} - \varepsilon \cdot \sum_{t=0}^\tau c(n_t) \right) \cdot \mathds{1}\{ M_{\tau+1} \geq 1/\kappa\} \,\middle|\, \theta^* \right],
\end{aligned}
\right.
\end{equation}
where $\tau = T \wedge \min\{ t\in[T] \colon n_t = 0 \,\,\text{or}\,\, M_{t+1} \geq 1/\kappa\} $ is the last step of the approval process, $\mathds{1}\{\bullet\}$ is the indicator function, and the expectation $\EE_{\pi}[\bullet|\theta^*]$ is taken over the random outcomes $X_t \sim \mathrm{Bin}(n_t, \theta^*)$.\footnote{Throughout, we adopt the convention $\min \emptyset = +\infty$ and write $x \wedge y = \min(x, y)$ for $x, y \in \mathbb{R} \cup \{+\infty\}$.}

However, since the agent and the principal do not know the true efficacy $\theta^*$ a priori, they cannot find the policy $\pi$ and subsidy $\varepsilon$ that maximize their respective utilities, as defined in Eq.~\ref{eq:true-utilities}.
In the next sections, we investigate how the agent and principal may leverage their evolving beliefs about efficacy to find near-optimal policies and subsidies as the approval process progresses.

%% file: 030MDP.tex
If the agent does not know the true efficacy $\theta^*$, we argue that, as the approval process progresses, they act based on their beliefs regarding $\theta^*$ given the trial outcomes observed~\citep{harsanyi1979bayesian}.
Consequently, we consider an agent who determines their trial sample sizes by planning ahead---not according to the (unknown) actual evolution of the process, but according to how they anticipate their beliefs will evolve depending on their actions and anticipated outcomes.
%

To formalize the agent's planning strategy, we employ the framework of (belief) Markov decision processes~\citep{Sutton1998,kaelbling1998planning}.
Specifically, for a fixed subsidy $\varepsilon$ specified by the principal, we define the process $\Mcal^\varepsilon = (\Scal, \Acal, P, r^\varepsilon, T)$, whose components we describe next.

The state space $\Scal = \Scal^{\texttt{in}} \cup \{S^{\texttt{out}} \}$ includes states $S = (\alpha, \beta, C) \in \Scal^{\texttt{in}}$, where $\alpha$ and $\beta$ are the parameters characterizing the agent's belief and $C$ is the agent's total running cost, 
as well as a special absorbing state $S^{\texttt{out}}$, which indicates that the agent has opted out of the approval process.
The action space $\Acal= \{0,\dots,n^{\texttt{max}}\}$ consists of all possible sample sizes the agent may select.

Further, the transition distribution $P$ characterizes how the agent anticipates their belief and cost will evolve after they conduct a trial of size $n$.
Formally, a transition from a state $S = (\alpha, \beta, C)$ to a state $S'$ follows from the (randomized) assignment
\begin{equation}\label{eq:transition-dynamics}
S' =
    \begin{dcases}
    (\alpha + X,\; \beta + n - X,\; C + c(n)) 
    & \text{if } n > 0 \text{ and } f(S) < 1/\kappa\\
    S 
    & \text{if } n > 0 \text{ and } f(S) \geq 1/\kappa\\
    S^{\texttt{out}} 
    & \text{if } n = 0 
\end{dcases}
\end{equation}
where 
$X \sim \mathrm{Bin}(n, \theta)$ with $\theta \sim \mathrm{Beta}(\alpha,\beta)$,
and the function $f(S) = f(\alpha,\beta)$ as in Eq.~\ref{eq:f-function} if $S \in \Scal^{\texttt{in}}$ and $f(S^{\texttt{out}}) = 0$.
%
In words, the upper case captures the Bayesian belief update and the additional cost incurred by the agent for conducting an experiment of size $n$ and observing an outcome $X$ (see Eq.~\ref{eq:bayes-belief}), 
the middle case captures the successful conclusion of the approval process, 
and the lower case captures the scenario in which the agent decides to opt out.
%
Importantly, note that evaluating the likelihood of such transitions does not depend on the efficacy $\theta^*$ and thus is possible solely based on the agent's belief.

Finally, the reward $r^\varepsilon$ characterizes the agent's anticipated profit or loss due to conducting a trial under subsidy $\varepsilon$.
Formally, the reward for transitioning from state $S$ to state $S'$ via action $n$ is given by
\begin{equation}\label{eq:MDP-reward-def}
r^\varepsilon(S,n,S') =
\begin{dcases}
    -c(n) + (\rho^{\texttt{A}} + \varepsilon\cdot(C+c(n)))\cdot\mathds{1}\{ f(S') \ge 1/\kappa \} & 
    \text{if } S \neq S^{\texttt{out}} \text{ and } f(S)<1/\kappa\\
    0 & \text{if } S = S^{\texttt{out}} \text{ or } f(S) \ge 1/\kappa.
\end{dcases}
\end{equation}
In words, as long as the approval process is in progress, the reward is simply equal to the (negative) cost $c(n)$ the agent anticipates to incur by conducting a trial with their chosen sample size $n$.
If the updated state $S'$ leads to product approval (\ie, $f(S)<1/\kappa$ and $f(S')\geq 1/\kappa$), the agent anticipates receiving their one-off benefit $\rho^{\texttt{A}}>0$ and a subsidized fraction of their total incurred cost throughout the process, as determined by the subsidy $\varepsilon$ selected by the principal.

Now that we have defined all the components of the Markov decision process $\Mcal^\varepsilon$, we can formalize the agent's strategy within the process.
Starting from the initial state $S_0 = (\alpha_0,\beta_0,0)$, the agent aims to select a policy $\pi: \Scal \times [T] \to \Delta(\Acal)$ maximizing their \emph{anticipated utility}
\begin{equation}\label{eq:adaptive-utility}
    \Bar{U}^{\texttt{A}}(\pi;\varepsilon) = \EE_{\pi}\left[ \sum_{t=0}^T r^\varepsilon(S_t,n_t,S_{t+1}) \middle| S_0=(\alpha_0, \beta_0, 0)\right],
\end{equation}
where the expectation is taken over the state transitions that the agent anticipates to occur throughout the approval process, given their initial belief.
Here, it is important to note that both the reward $r^\varepsilon$ and the transition distribution $P$ are known to the agent, hence they can evaluate the anticipated utility of any given policy $\pi$ before the approval process starts.
Moreover, due to Bellman's optimality principle in MDPs~\citep{Sutton1998}, for any $S \in \Scal$ and time step $l \in [T]$, the agent's optimal policy $\pi^\varepsilon \in \argmax_{\pi} \Bar{U}^{\texttt{A}}(\pi;\varepsilon)$ satisfies that
\begin{equation}\label{eq:value-function}
        \pi^\varepsilon (S, l) \in \argmax_{\pi} V^\varepsilon_\pi(S,l) \quad \text{where} \quad V^\varepsilon_\pi(S,l) = \EE_{\pi}\left[ \sum_{t=l}^T r^\varepsilon(S_t,n_t,S_{t+1}) \middle| S_l=S\right].
\end{equation}
In the above equation, the function $V^\varepsilon_\pi$ is often referred as the value function. Moreover, note that $\Bar{U}^{\texttt{A}}(\pi;\varepsilon) = V^\varepsilon_\pi(\alpha_0,\beta_0,0,0)$.

In this context, a natural question is how the anticipated utility $\Bar{U}^{\texttt{A}}(\pi;\varepsilon)$ relates to the agent's true utility $U^{\texttt{A}}(\pi; \varepsilon)$ defined in Eq.~\ref{eq:true-utilities}.
The following proposition shows that the anticipated utility is equal to true agent utility averaged over the agent's initial belief about the true efficacy $\theta^*$ of their product at the start of the approval process. 
\begin{proposition}\label{prop:MDP-equiv}
    Let $\varepsilon \in [0,\varepsilon^{\texttt{max}}]$ be any subsidy set by the principal, and $\pi$ be any agent policy. Then,
        \begin{equation}
                \Bar{U}^{\texttt{A}}(\pi;\varepsilon) = \EE_{\theta^* \sim B_0} \left[U^{\texttt{A}}(\pi; \varepsilon)\right].
        \end{equation}
\end{proposition}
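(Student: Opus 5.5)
The plan is to show that the belief MDP's trajectory distribution coincides with the marginal distribution of the true process under the prior, and that along every trajectory the summed MDP rewards equal the random quantity inside the expectation of $U^{\texttt{A}}$ in Eq.~\ref{eq:true-utilities}. Concretely, I will construct both processes on a common probability space. In the "true" model, draw $\theta^*\sim B_0$ and then, given $\theta^*$, generate $n_t\sim\pi(\alpha_t,\beta_t,C_t,t)$ and $X_t\sim\mathrm{Bin}(n_t,\theta^*)$. Integrating out $\theta^*$ gives the joint law of $(n_0,X_0,n_1,X_1,\dots)$ under $\EE_{\theta^*\sim B_0}[\EE_\pi[\,\cdot\,|\theta^*]]$. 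Tower property and Fubini (all quantities are bounded since costs, $T$ and $n^{\texttt{max}}$ are finite) justify exchanging the order.

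\textbf{Step 1 (equality of laws).} By induction on $t$, I will show that the joint law of the history $h_t=(n_0,X_0,\dots,n_{t-1},X_{t-1})$ under the mixture model equals its law in $\Mcal^\varepsilon$ started at $S_0=(\alpha_0,\beta_0,0)$. I will do this up to the stopping time, noting that after opting out or approval the MDP freezes and contributes zero reward. The key fact is Beta–Binomial conjugacy: conditional on $h_t$ in the mixture model, the posterior of $\theta^*$ is $\mathrm{Beta}(\alpha_t,\beta_t)$ by Eq.~\ref{eq:bayes-belief}. This holds because the policy's choice of $n_t$ depends only on $(\alpha_t,\beta_t,C_t,t)$, hence is conditionally independent of $\theta^*$ given $h_t$, so the action choices do not alter the likelihood in $\theta^*$. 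Therefore the predictive law of $X_t$ given $h_t$ and $n_t$ is $\int \mathrm{Bin}(n_t,\theta)\,d\mathrm{Beta}(\alpha_t,\beta_t)(\theta)$. This is exactly the transition in Eq.~\ref{eq:transition-dynamics} with $\theta\sim\mathrm{Beta}(\alpha,\beta)$. The state $(\alpha_t,\beta_t,C_t)$ is a deterministic function of $h_t$ in both models, so the induction closes.

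\textbf{Step 2 (pathwise reward identity).} Fix a trajectory, and let $\tau$ be as in Eq.~\ref{eq:true-utilities}. By Proposition~\ref{prop:test-process}, $M_{t}=f(\alpha_t,\beta_t)=f(S_t)$, so the approval event $M_{\tau+1}\ge1/\kappa$ is the event $f(S_{\tau+1})\ge1/\kappa$. For $t<\tau$ we have $f(S_t)<1/\kappa$, $n_t>0$ and $f(S_{t+1})<1/\kappa$, so the reward is $-c(n_t)$. At $t=\tau$ there are three cases. If $n_\tau=0$, the reward is $-c(0)=0$ and there is no approval. If approval happens at $\tau$, the reward is $-c(n_\tau)+\rho^{\texttt{A}}+\varepsilon(C_\tau+c(n_\tau))$ with $C_\tau+c(n_\tau)=\sum_{t\le\tau}c(n_t)$. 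If $\tau=T$ without approval, the reward is $-c(n_T)$. For $t>\tau$ the state is $S^{\texttt{out}}$ or has $f\ge1/\kappa$, so the reward is $0$. Summing gives exactly $(\rho^{\texttt{A}}+\varepsilon\sum_{t\le\tau}c(n_t))\mathds{1}\{M_{\tau+1}\ge1/\kappa\}-\sum_{t\le\tau}c(n_t)$.

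Combining Steps 1 and 2, $\Bar U^{\texttt{A}}(\pi;\varepsilon)$ is the expectation of this functional under the MDP law, which equals its expectation under the mixture law, i.e.\ $\EE_{\theta^*\sim B_0}[U^{\texttt{A}}(\pi;\varepsilon)]$. I expect Step 1 to be the main obstacle, specifically handling the (possibly randomized) policy rigorously. The plan there is to write the joint density of $(\theta^*,h_t)$ as the prior times $\prod_k \pi(n_k\mid\cdot)\binom{n_k}{X_k}\theta^{X_k}(1-\theta)^{n_k-X_k}$. The policy factors do not depend on $\theta$, so they cancel in the posterior. Step 2 is routine bookkeeping, though care is needed at the edge case where the horizon is reached.
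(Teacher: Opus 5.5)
Your proof is correct, but it is organized differently from the paper's.

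\textbf{The paper's route.} The paper proves a state-indexed statement by induction on the remaining horizon. For every start time $l$ and state $S=(\alpha,\beta,C)$ it introduces the prior-averaged true utility $\bar{U}^{\texttt{A}}_{l,t}(\pi\mid S)$ of a process started from belief $\mathrm{Beta}(\alpha,\beta)$, accumulated cost $C$ and test-process value $f(\alpha,\beta)$. It then shows $\bar{U}^{\texttt{A}}_{l,t}(\pi\mid S)=V_\pi(S,l,t)$ in three moves:
\begin{enumerate}
\item it splits off the first-step reward from the continuation;
\item it reorders the $\theta^*$, $n$, $X$ expectations with an auxiliary lemma (prior $\times$ binomial $=$ Beta--Binomial $\times$ posterior);
\item it identifies the continuation of the true process with a fresh process from the updated state, via an equality-in-distribution ``trajectory shift''.
\end{enumerate}

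\textbf{Your route.} You instead run a forward induction showing that the stopped action--outcome history has the same law under the Bayesian mixture model and under $\Mcal^\varepsilon$. You then check pathwise that the summed MDP rewards coincide with the integrand of $U^{\texttt{A}}$.

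\textbf{What each buys.} Both arguments rest on the same conjugacy fact. Your decomposition has several advantages:
\begin{enumerate}
\item It separates the probabilistic content from the reward bookkeeping.
\item It avoids the auxiliary family $\bar{U}^{\texttt{A}}_{l,t}$ and the shift argument.
\item It gives equality of expectations for every functional of the stopped trajectory. The analogous identities for the approval probability, for $A_\pi$, or for the social utility therefore come for free.
\item It uses the test process only through the fact that $M_t$ is a deterministic function of $(\alpha_t,\beta_t)$. It does not need the multiplicative identity $M_{t+1}=E(X_t,n_t)\cdot M_t$, on which the paper's shift argument relies. Your proof would therefore carry over essentially verbatim to the mixture process $M^{\texttt{mix}}$ of the appendix.
\end{enumerate}
The paper's recursion, for its part, is phrased directly in terms of value functions at arbitrary $(S,l)$, matching the Bellman-style arguments used later. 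Your construction recovers that version as well: start the mixture model at $(S,l)$ with prior $\mathrm{Beta}(\alpha,\beta)$.
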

The above proposition reveals that an agent implementing the decision policy $\pi^\varepsilon$ is, in fact, a Bayesian decision maker maximizing their expected utility~\citep{Ghavamzadeh_2015,russo2020tutorialthompsonsampling,buening2023minimaxbayesreinforcementlearning}.

Next, we formalize the principal's strategy within the process.
Let $Q$ be a prior distribution charac\-te\-ri\-zing the principal's prior knowledge about the agent's initial belief $(\alpha_0, \beta_0)$. Then, the principal aims to select a subsidy $\varepsilon$ maximizing the \emph{anticipated social utility}
\begin{equation}\label{eq:adaptive-utility-regulator}
    \Bar{U}^{\texttt{S}}(\varepsilon;\pi)
    = \mathbb{E}_{(\alpha_0,\beta_0)\sim Q}\left[\mathbb{E}_{\pi} \left[
    \sum_{t=0}^T \left( \rho^{\texttt{S}} - \varepsilon \cdot C_{t+1} \right) \cdot
    \mathds{1}\left\{ 0 < f(S_t) < 1/\kappa \leq f(S_{t+1}) \right\}
    \;\middle|\; S_0 = (\alpha_0,\beta_0,0)
    \right]\right],
\end{equation}
where the inner expectation is taken over the state transitions that the principal anticipates to occur throughout the approval process, given the prior knowledge about the agent's initial belief.

In the next sections, we analyze how the agent and the principal maximize their anticipated utilities $\Bar{U}^{\texttt{A}}(\pi;\varepsilon)$ and $\Bar{U}^{\texttt{S}}(\varepsilon;\pi)$ within a natural Stackelberg setting~\citep{Von_Stackelberg2010}. 
In this setting, the principal (the leader) first commits to a subsidy level $\varepsilon^*$ maximizing the social utility $\Bar{U}^{\texttt{S}}$, accounting for the fact that, given any $\varepsilon$, the agent (the follower) will implement the optimal policy $\pi^\varepsilon$ that maximizes their utility $\Bar{U}^{\texttt{A}}$. 
Then, after observing this commitment, the agent implements the optimal policy. 
Formally, the subsidy level $\varepsilon^{*}$ is the solution to the following optimization problem:
\begin{equation}\label{eq:stackelberg-equilibrium}
    \varepsilon^*
    =
    \argmax_{\varepsilon \in [0,\varepsilon^{\texttt{max}}]}
    \;
    \Bar{U}^{\texttt{S}}\big(\varepsilon;\pi^\varepsilon\big)
    \quad \text{subject to} \quad
    \pi^\varepsilon
    \in
    \argmax_{\pi\in\Pi}\, \Bar{U}^{\texttt{A}}(\pi;\varepsilon).
\end{equation}
The above Stackelberg setting fits a variety of real-world applications. 
For example, in clinical trials, funding agencies act as leaders who commit to and announce a funding program; drug developers are the followers who apply to the program and, if successful, run a trial~\citep{usc26_45c,nih2026smallbusiness}.

%% file: 040agent.tex
In this section, we characterize the agent's optimal policy $\pi^\varepsilon$ along with its corresponding value function
\begin{equation}\label{eq:optimal-policy-value}
     V^\varepsilon(S,l) = \max_{\pi\in\Pi}\, V^\varepsilon_\pi(S,l)\quad \forall\, S \in \mathcal{S},\, 
    l \in [T].
\end{equation}
%
To this end, we first note that, once the process $\Mcal^{\varepsilon}$ reaches a state where $f(S)\geq 1/\kappa$, or the state $\Scal^{\texttt{out}}$, 
the action $n=0$ is always optimal. 
This is implied by the transition dynamics and the reward definition in Eqs.~\ref{eq:transition-dynamics} and~\ref{eq:MDP-reward-def}. 
Therefore, to find the policy $\pi^{\varepsilon}$, it suffices to consider states $S$ where $0<f(S)<1/\kappa$. 
Further, we show that, in the process $\Mcal^{\varepsilon}$, the number of (unique) reachable states $S$ where $f(S)\geq 1/\kappa$ is finite:
\begin{proposition}\label{prop:reacheable-set}
    For any realization of the process $\Mcal^\varepsilon$, any reached state $S$ where $0<f(S)<1/\kappa$ belongs to a finite set $S^{\texttt{r}} \subset \Scal$. 
    Moreover, under linear cost $c(n)$, the set $S^{\texttt{r}}$ has size $\Ocal((n^{\texttt{max}})^2 \cdot T^3)$.
\end{proposition}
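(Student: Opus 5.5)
We prove the statement by parametrizing every reachable in-progress state by a small number of integers and counting the admissible values of each. By the transition dynamics in Eq.~\ref{eq:transition-dynamics}, a state $S_t=(\alpha_t,\beta_t,C_t)$ reached at step $t$ with $0<f(S_t)<1/\kappa$ must have been reached only through the upper case at every earlier step. The middle case never leaves a state with $f\geq 1/\kappa$, and $S^{\texttt{out}}$ is absorbing with $f=0$. Hence $S_t$ is determined by the chosen sample sizes $n_0,\dots,n_{t-1}\in\{1,\dots,n^{\texttt{max}}\}$ and the outcomes $X_0,\dots,X_{t-1}$ with $0\le X_k\le n_k$. Concretely,
\begin{equation*}
\alpha_t=\alpha_0+K_t,\quad \beta_t=\beta_0+N_t-K_t,\quad C_t=\sum_{k<t}c(n_k),
\end{equation*}
where $N_t=\sum_{k<t}n_k$ and $K_t=\sum_{k<t}X_k$, with $t\le T$.

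Finiteness then follows immediately. There are at most $T+1$ values of $t$, and for each $t$ the tuple $(n_k,X_k)_{k<t}$ ranges over a finite set, of size at most $(n^{\texttt{max}}(n^{\texttt{max}}+1))^t$. The set $S^{\texttt{r}}$ is therefore contained in the finite union over $t$ of the images of these tuples.

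For the sharper bound under linear cost $c(n)=c\cdot n$ (any affine form with $c(0)=0$ and $c$ non-decreasing works the same way), we have $C_t=c\,N_t$. The state is thus a function of the pair $(N_t,K_t)$ alone. The number of distinct states is at most the number of admissible integer pairs $(N,K)$ with $0\le K\le N\le T\,n^{\texttt{max}}$ (summing over $t\le T$ only adds a constant factor, and we can even drop $t$ because the state does not record it). This gives $\Ocal((n^{\texttt{max}})^2T^2)$ distinct pairs. If instead we count pairs together with the time index at which they are reached, as the policy's domain $\Scal\times[T]$ requires, we pick up one more factor of $T$, giving $\Ocal((n^{\texttt{max}})^2T^3)$, which matches the stated bound. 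The constraint $0<f<1/\kappa$ only removes pairs, so the upper bound holds a fortiori.

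The main subtlety is the bookkeeping: the claim is stated in terms of states, while the bound $\Ocal((n^{\texttt{max}})^2T^3)$ corresponds to state–time pairs. I would state explicitly that the count is over pairs $(S,l)$ needed for dynamic programming and note that the bound on states alone is even smaller. I would also justify that the middle case does not create new states, and that states with $f(S)\ge 1/\kappa$ are excluded by hypothesis, so no other transitions contribute.
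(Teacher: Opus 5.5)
Your finiteness argument and your counting strategy (encode an in-progress state by the cumulative sample size $N$ and cumulative successes $K$, then count lattice points) are the same as the paper's. The problem is the counting step: it mishandles the cost coordinate under the notion of ``linear cost'' the paper actually uses. In the paper this means $c(n)=c_0+c_1 n$ for $n\in\{1,\dots,n^{\texttt{max}}\}$, with $c(0)=0$ imposed separately for opting out. This is the form used in the proof, in Algorithm~\ref{alg:value_iteration}, and in the experiments, where $c_0=\$48.9\,\text{M}$. Then $C_t=t\,c_0+c_1 N_t$, so your step ``$C_t=c\,N_t$, hence the state is a function of $(N_t,K_t)$ alone'' fails. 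Two histories with the same $(N,K)$ but a different number of trials lead to different states whenever $c_0\neq 0$. Your parenthetical that any affine form with $c(0)=0$ works the same way only covers $c_0=0$, i.e., the purely proportional case. There your $\mathcal{O}((n^{\texttt{max}})^2T^2)$ bound on states is indeed correct and sharper.

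As a consequence, the bookkeeping remark you plan to add is false in the paper's setting. That remark says $\mathcal{O}((n^{\texttt{max}})^2T^3)$ really counts state--time pairs and that the number of states alone is smaller. In fact the states themselves are indexed (injectively, for $c_0\neq0$) by triples $(t,N,K)$, because the accumulated fixed cost records how many trials were run. Their number is genuinely of order $(n^{\texttt{max}})^2T^3$. The repair is to parametrize by $(t,N,K)$ from the start. After $t$ trials, $t\le N\le t\,n^{\texttt{max}}$ and $0\le K\le N$, which gives $\mathcal{O}((n^{\texttt{max}})^2t^2)$ states for each $t$ and $\mathcal{O}((n^{\texttt{max}})^2T^3)$ after summing over $t\le T+1$. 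This is exactly the paper's computation, and it coincides numerically with your state--time count. So your final bound survives, but the factor $T^3$ comes from the state space itself rather than from indexing by time.
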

As an immediate consequence, we can find an optimal (deterministic) policy $\pi^{\varepsilon}$ using standard planning methods.
In particular, in Algorithm~\ref{alg:value_iteration}, we provide an adaptation of the classical value iteration algorithm~\cite{HernndezLerma1996}, which is guaranteed to find an optimal policy $\pi^{\varepsilon}$ in $\Ocal((n^{\texttt{max}})^4 \cdot T^3)$.
%
In the remainder of this section, we derive several key structural insights about the optimal policy $\pi^{\varepsilon}$ and the optimal value function $V^{\varepsilon}$, which will be helpful to efficiently find the principal's optimal subsidy $\varepsilon^{*}$.

Our starting point is the observation that, in light of Proposition~\ref{prop:test-process}, the value of the test process $M$ at a state $S=(\alpha, \beta,C)$ increases with $\alpha$ and decreases with $\beta$.
This suggests that states with larger $\alpha$ and smaller $\beta$ are more favorable.
The following proposition formalizes this intuition by showing that the optimal value function $V^{\varepsilon}$ satisfies a monotonicity property in both the belief parameters and the accumulated cost.
\begin{proposition}\label{prop:value-monotonicity}
    For any time step $l \in [T]$ and pair of states $S = (\alpha, \beta, C)$ and $S' = (\alpha', \beta', C')$ such that $f(S) < 1/\kappa$ and $f(S') < 1/\kappa$, the following holds:
    \begin{enumerate}[label=(\roman*)]
        \item[1.] $V^\varepsilon(\alpha,\beta,C,l) \leq V^\varepsilon(\alpha',\beta,C,l)$ if $\alpha \leq \alpha'$; 
        \item[2.] $V^\varepsilon(\alpha,\beta,C,l) \geq V^\varepsilon(\alpha,\beta',C,l)$ if $\beta  \leq \beta'$;
        \item[3.] $V^\varepsilon(\alpha,\beta,C,l) \leq V^\varepsilon(\alpha,\beta,C',l)$ if $C \leq C'$.
    \end{enumerate}
\end{proposition}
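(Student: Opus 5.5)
We prove all three parts by backward induction on $l$, using a coupling argument. The induction hypothesis is the joint statement: $V^\varepsilon(\cdot,l)$ is nondecreasing in $\alpha$, nonincreasing in $\beta$, and nondecreasing in $C$ on states with $f(S)<1/\kappa$. We need all three together because a single Bayesian update moves $\alpha$ and $\beta$ simultaneously.

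\textbf{Continuation value.} To handle approval cleanly, for a post-transition state $S'=(\alpha',\beta',C')$ at step $l+1$ we define
\[
G^\varepsilon(S',l+1)=\bigl(\rho^{\texttt{A}}+\varepsilon C'\bigr)\mathds{1}\{f(S')\ge 1/\kappa\}+V^\varepsilon(S',l+1)\,\mathds{1}\{f(S')<1/\kappa\}.
\]
Since all rewards after approval vanish, the Bellman equation reads
\[
V^\varepsilon(S,l)=\max\Bigl\{0,\ \max_{n\ge1}\bigl(-c(n)+\EE\bigl[G^\varepsilon(\alpha+X,\beta+n-X,C+c(n),l+1)\bigr]\bigr)\Bigr\},
\]
with $X\sim \mathrm{Bin}(n,\theta)$ and $\theta\sim \mathrm{Beta}(\alpha,\beta)$. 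The base case is trivial, since there is no reward beyond $T$. It then suffices to show that, for each fixed $n$, the expectation inside the maximum is monotone in the desired directions; the outer maxima preserve monotonicity.

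\textbf{Part 3 (cost).} The transition law of $(\alpha,\beta)$ does not depend on $C$. Take any policy from $(\alpha,\beta,C)$ and run it from $(\alpha,\beta,C')$ by shifting its cost argument by $C-C'$. Along every path the two rewards coincide, except that upon approval the second policy earns an extra $\varepsilon(C'-C)\ge0$. Hence $V^\varepsilon(\alpha,\beta,C',l)\ge V^\varepsilon(\alpha,\beta,C,l)$. This argument needs no induction.

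\textbf{Parts 1 and 2 (belief parameters).} These use three ingredients.
\begin{itemize}
\item[(a)] The function $f$ is increasing in $\alpha$ and decreasing in $\beta$. Indeed, $L=\log(1+\theta^{\texttt{b}}(e-1))\in(0,1)$, so the exponent in $f$ is $(1-L)\alpha-L\beta$ plus a constant. Consequently the approval event $\{f\ge1/\kappa\}$ is an upper set in the order ``larger $\alpha$, smaller $\beta$''.
\item[(b)] The family $\mathrm{Beta}(\alpha',\beta)$ with $\alpha'\ge\alpha$ dominates $\mathrm{Beta}(\alpha,\beta)$ in likelihood ratio, since the ratio of densities is proportional to $\theta^{\alpha'-\alpha}$. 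Symmetrically, $\mathrm{Beta}(\alpha,\beta')$ with $\beta'\ge\beta$ is dominated. Since $\mathrm{Bin}(n,\theta)$ is stochastically increasing in $\theta$, the beta-binomial predictive $X$ is stochastically larger under $(\alpha',\beta)$. We can therefore couple the two predictives so that $X'\ge X$ almost surely. Then the post-transition states are ordered: $\alpha'+X'\ge\alpha+X$ and $\beta+n-X'\le\beta+n-X$, with the same cost $C+c(n)$.
\item[(c)] $G^\varepsilon(\cdot,l+1)$ is monotone in this order. When both coupled states are unapproved, this is the induction hypothesis. By (a), the only other case is when the better state is approved and the worse one is not. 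Then we need $\rho^{\texttt{A}}+\varepsilon C'\ge V^\varepsilon(S',l+1)$ for any unapproved $S'$ with cost $C'$.
\end{itemize}

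\textbf{Main obstacle: the boundary case in (c).} This is the step I expect to require the most care. I would prove the bound $V^\varepsilon(S,l)\le \rho^{\texttt{A}}+\varepsilon C$ by a direct pathwise argument. Along any trajectory, if the future cost spent is $D\ge0$, the realized reward is at most $\rho^{\texttt{A}}+\varepsilon(C+D)-D\le\rho^{\texttt{A}}+\varepsilon C$, because $\varepsilon\le1$. On trajectories without approval the reward is $-D\le0$, which is also below this bound.

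Combining (b) and (c), the expectation $\EE[G^\varepsilon]$ is nondecreasing in $\alpha$; the symmetric argument shows it is nonincreasing in $\beta$. This closes the induction and yields items 1 and 2.
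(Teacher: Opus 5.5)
Your proof is correct, and for items 1 and 2 it follows the paper's route. The paper also uses backward induction on the Bellman equation and first-order stochastic dominance of the beta-binomial predictive. For the case where only the better post-transition state is approved, it uses the same bound $V^\varepsilon(S,l)\le\rho^{\texttt{A}}+\varepsilon C$ (Lemma~\ref{lemma:bound-value-function}), proved by the same pathwise argument.

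The one structural difference is that you induct jointly in $(\alpha,\beta)$ and use a monotone coupling $X'\ge X$. The paper instead compares the two continuations at the same $X$ and then switches from $\mathrm{BB}(n,\alpha',\beta)$ to $\mathrm{BB}(n,\alpha,\beta)$. That switch needs $X\mapsto r^\varepsilon(S,n,S(n,X))+V^\varepsilon(S(n,X),l+1)$ to be nondecreasing, i.e.\ monotonicity in $\alpha$ and $\beta$ simultaneously at step $l+1$. The paper leaves this implicit, running the $\alpha$ induction alone and deferring $\beta$ to symmetry; your joint hypothesis makes it explicit.

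Two small details are worth writing out. First, when both coupled states are unapproved but differ in both coordinates, pass through an intermediate corner such as $(\alpha+X,\beta+n-X')$. It is unapproved because $f$ there is at most $f(\alpha'+X',\beta+n-X')$. Second, the case where both states are approved is trivial, since they share the cost $C+c(n)$.

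For item 3 you take a genuinely different route from the paper, which runs a second backward induction. You shift the cost argument of an arbitrary policy, which shows pathwise that the shifted policy from $(\alpha,\beta,C')$ earns exactly $\varepsilon(C'-C)$ times the approval probability more than $\pi$ earns from $(\alpha,\beta,C)$. This avoids induction and the Bellman recursion entirely. It also gives an exact identity for each fixed policy rather than only an inequality between optimal values, in the spirit of the linear decomposition in Proposition~\ref{prop:linear-value}.
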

Leveraging the above proposition, we can characterize the conditions under which the optimal policy $\pi^{\varepsilon}$ opts out of the approval process by selecting the action $n=0$. 
In particular, the following proposition shows that $\pi^{\varepsilon}$ exhibits a threshold structure: 
at a state with belief $(\alpha, \beta)$, it selects $n=0$ if and only if $\beta$ exceeds a threshold that depends on $\alpha$ (see Figure~\ref{fig:geometry} in Appendix~\ref{app:geometry} for an illustration). 
%
\begin{proposition}\label{prop:opt-out-mono}
For any time step $t \in [T]$ and total cost $C_t$, there exists a non-decreasing function $\tilde{\beta} \,\colon\, \mathbb{R}_{+} \to \mathbb{R}_{+}$ such that, for any state $S_t = (\alpha_t,\beta_t, C_t) \in \Scal^{\texttt{in}}$, the optimal policy $\pi^{\varepsilon}$ opts out of the approval process if $\beta_t > \tilde{\beta}(\alpha_t)$, and it does not opt out if $\beta_t < \tilde{\beta}(\alpha_t)$.
\end{proposition}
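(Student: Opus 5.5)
The idea is to reduce opting out to a sign condition on a "continuation advantage" and then use the monotonicity of Proposition~\ref{prop:value-monotonicity} to show that this advantage crosses zero only once in $\beta$, and in a way compatible with $\alpha$.

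Fix $t$ and $C_t$, and consider in-process states $S=(\alpha,\beta,C_t)$ with $0<f(S)<1/\kappa$. The action $n=0$ sends the process to $S^{\texttt{out}}$ with zero reward, so its value is $0$. Define the best continuation value
\begin{equation*}
G(\alpha,\beta) = \max_{n\ge 1} \EE\left[ r^\varepsilon(S,n,S') + V^\varepsilon(S',t+1) \right],
\end{equation*}
with $G\equiv 0$ beyond the horizon, or the appropriate terminal convention at $t=T$. Then $V^\varepsilon(S,t)=\max\{0,G(\alpha,\beta)\}$. The optimal policy opts out when $G<0$ and continues when $G>0$. At $G=0$ ties are allowed, which is why the statement only uses strict inequalities.

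Next I would show that $G$ is non-increasing in $\beta$ and non-decreasing in $\alpha$. I would not work with $G$ directly. Instead I would revisit the induction proof of Proposition~\ref{prop:value-monotonicity}, since the same backward-induction/coupling argument proves monotonicity of the $Q$-value of each fixed $n\ge 1$. The key ingredients are as follows.
\begin{itemize}
\item The predictive distribution of $X$ under $\mathrm{Beta}(\alpha,\beta)$ is stochastically increasing in $\alpha$ and decreasing in $\beta$, which is the beta-binomial monotone likelihood ratio property.
\item The next state is $(\alpha+X,\beta+n-X,C_t+c(n))$. It is monotone in $X$ in the favorable direction, and it shifts favorably as $\alpha$ grows or $\beta$ shrinks.
\item The approval indicator $\mathds{1}\{f(S')\ge 1/\kappa\}$ is monotone in the same direction, because $f$ increases in $\alpha$ and decreases in $\beta$ by Proposition~\ref{prop:test-process}.
\item The approval payoff $\rho^{\texttt{A}}+\varepsilon(C+c(n))$ is larger than the continuation value of any non-approved state, since that continuation value is at most the same payoff. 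So moving probability mass toward approval only helps.
\end{itemize}
Taking the maximum over $n$ preserves monotonicity, so $G$ inherits both properties.

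With monotone $G$ in hand, I would define
\begin{equation*}
\tilde\beta(\alpha)=\sup\{\beta\ge 0 : G(\alpha,\beta)\ge 0\},
\end{equation*}
using the conventions $\sup\emptyset=0$ and a cap at $+\infty$ handled as needed, or equivalently by restricting to the finite reachable set of Proposition~\ref{prop:reacheable-set}. Then:
\begin{itemize}
\item If $\beta>\tilde\beta(\alpha)$, then $G<0$ and the policy opts out.
\item If $\beta<\tilde\beta(\alpha)$, there is some $\beta''>\beta$ with $G(\alpha,\beta'')\ge 0$. By monotonicity $G(\alpha,\beta)\ge 0$, and the policy does not opt out, with ties broken toward continuing. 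Strictness needs care; I would fix the tie-breaking so that $\pi^\varepsilon$ continues when $G=0$, or define $\tilde\beta$ via $G>0$ and argue symmetrically.
\item Since $G$ is non-decreasing in $\alpha$, the set $\{\beta : G(\alpha,\beta)\ge 0\}$ grows with $\alpha$, so $\tilde\beta$ is non-decreasing.
\end{itemize}
States where $f(S)\ge1/\kappa$ are excluded because they are absorbing and already approved. Monotonicity of $f$ means that raising $\alpha$ could push a state into the approved region, so I would restrict $\alpha$ accordingly or extend $G$ there by the approval payoff, which preserves monotonicity.

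The main obstacle is the monotonicity of the per-action $Q$-value, specifically the step where a one-step approval reward is traded against the continuation value. This is where the bound $V^\varepsilon(S',t+1)\le \rho^{\texttt{A}}+\varepsilon(C_t+c(n))$ for non-approved $S'$ is needed. That bound holds because the agent's total reward along any path is at most the approval payoff plus the subsidy, minus nonnegative costs. I would then verify that with this bound the integrand in $X$ is monotone, and apply the stochastic ordering.
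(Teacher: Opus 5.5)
Your proposal is correct and follows essentially the same route as the paper. The paper also reduces opting out to the sign of $H(\alpha,\beta,C)=\max_{n\ge 1}$ of the $Q$-values, takes its monotonicity in $\alpha$ and $\beta$ from the per-action inequality (Eq.~\ref{eq:app-monotonicity-action-value}) proved in the induction for Proposition~\ref{prop:value-monotonicity} (Beta--Binomial stochastic dominance plus Lemma~\ref{lemma:bound-value-function}), obtains a non-decreasing $\tilde\beta$ as a supremum via set inclusion, and uses exactly your fallback definition $\tilde\beta(\alpha)=\sup\{\beta>0 : H(\alpha,\beta,C)>0\}$, which makes ``does not opt out'' strict and puts the tie $H=0$ on the opt-out side, matching the tie-breaking of Algorithm~\ref{alg:value_iteration}.
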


Furthermore, we can also characterize how the agent's anticipated utility $\Bar{U}^{\texttt{A}}(\pi^\varepsilon;\varepsilon)$ under the optimal policy $\pi^\varepsilon$ depends on the subsidy. 
To this end, we first show that, for any policy $\pi$, the value function $V^\varepsilon_\pi$ is linear in the subsidy $\varepsilon$:
\begin{proposition}\label{prop:linear-value}
For any policy $\pi$, state $S \in \Scal$, and time step $l \in [T]$, the value function $V^\varepsilon_\pi$ admits a linear decomposition
\begin{equation}\label{eq:linear-value-decomposition}
    V^{\varepsilon}_{\pi}(S, l) = V^{0}_{\pi}(S, l) + \varepsilon \cdot A_{\pi}(S, l),
\end{equation}
where $V^{0}_{\pi}(S, l)$ is the value function in the unsubsidized process $\Mcal^0$, and $A_{\pi}(S, l)\geq 0$ denotes the expected total cost incurred conditional on approval, starting from state $S$ at time $l$ (see Eq.~\ref{eq:rejection-cost-conditional} in Appendix~\ref{app:proof-linear-values}).
\end{proposition}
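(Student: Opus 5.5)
The decomposition follows from linearity of expectation once we separate, inside the reward, the part that is independent of $\varepsilon$ from the part that is multiplied by it. Fix a policy $\pi$, a state $S$ and a time $l$. The key point is that neither the transition distribution $P$ in Eq.~\ref{eq:transition-dynamics} nor the policy $\pi$ depends on $\varepsilon$. Only the reward $r^\varepsilon$ in Eq.~\ref{eq:MDP-reward-def} does. So the law of the trajectory $(S_t,n_t)_{t\ge l}$ started from $S_l=S$ is the same in every process $\Mcal^\varepsilon$.

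\textbf{Splitting the reward.} Write $r^\varepsilon(S,n,S') = r^0(S,n,S') + \varepsilon\cdot g(S,n,S')$, where
\[
g(S,n,S') = (C+c(n))\cdot\mathds{1}\{f(S')\ge 1/\kappa\}
\]
if $S\neq S^{\texttt{out}}$ and $f(S)<1/\kappa$, and $g=0$ otherwise. In the first case this is just the algebraic split of $(\rho^{\texttt{A}}+\varepsilon(C+c(n)))\mathds{1}\{\cdot\}$. In the second case both sides are zero.

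\textbf{Taking expectations.} Summing over $t=l,\dots,T$ (finitely many terms) and using linearity of expectation under the common trajectory law gives
\[
V^\varepsilon_\pi(S,l) = V^0_\pi(S,l) + \varepsilon\,\EE_\pi\Big[\textstyle\sum_{t=l}^T g(S_t,n_t,S_{t+1})\,\Big|\,S_l=S\Big].
\]
We then define $A_\pi(S,l)$ as the last expectation. All quantities are finite: by Proposition~\ref{prop:reacheable-set} the relevant states form a finite set, and costs are bounded by $T\,c(n^{\texttt{max}})$ plus the initial $C$.

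\textbf{Interpretation and sign.} It remains to show $A_\pi\ge 0$ and that it equals the expected total cost on approval. Along any trajectory, at most one index $t$ has $f(S_t)<1/\kappa\le f(S_{t+1})$. Once approval is reached, the state is frozen by the middle case of Eq.~\ref{eq:transition-dynamics} and $g$ vanishes, and $S^{\texttt{out}}$ is absorbing with $f=0$. At that index, $C_t+c(n_t)=C_{t+1}$. Hence
\[
\sum_t g = C_{\tau+1}\,\mathds{1}\{\text{approval}\},
\]
so $A_\pi(S,l)=\EE_\pi[C_{\tau+1}\mathds{1}\{\text{approval}\}\mid S_l=S]$. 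This is nonnegative since $c\ge 0$ and $C\ge 0$ on reachable states.

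\textbf{Main obstacle.} The only real care needed is the bookkeeping that the indicator fires at most once and that the $\varepsilon$-term carries exactly the running cost $C_{t+1}$ at approval. Both follow directly from the absorbing and frozen transition cases.
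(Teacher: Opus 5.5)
Your proposal is correct and follows essentially the same route as the paper. Both split the reward into its $\varepsilon$-free part plus $\varepsilon\cdot C_{t+1}$ times the approval indicator, apply linearity of expectation, and collapse the sum to the single approval step $\tau$, giving $A_\pi(S,l)$ as the expected cost at $\tau+1$ weighted by approval. Your explicit observation that the trajectory law under a fixed $\pi$ does not depend on $\varepsilon$ (only the reward does) is a step the paper leaves implicit.
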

Building upon this result, the agent's anticipated utility $\Bar{U}^{\texttt{A}}(\pi^\varepsilon;\varepsilon)$ under the optimal policy $\pi^\varepsilon$ admits a concise structural characterization, as formalized by the following proposition:
\begin{proposition}\label{prop:utility-piecewise-convex}
    %
    The agent’s optimal anticipated utility $\bar{U}^{\texttt{A}}(\pi^\varepsilon; \varepsilon)$, and its expectation $\mathbb{E}_{(\alpha_0, \beta_0) \sim Q}[\bar{U}^{\texttt{A}}(\pi^\varepsilon; \varepsilon)]$ according to the principal's belief $Q$, are piecewise linear, continuous, and convex functions of the subsidy $\varepsilon$ over a partition $\mathcal{P} = \{ \varepsilon_0, \varepsilon_1, \dots, \varepsilon_L \}$ of the interval $[0, \varepsilon^{\max}]$, with $0 = \varepsilon_0 < \varepsilon_1 < \cdots < \varepsilon_L = \varepsilon^{\max}$. Moreover, for each interval of $\Pcal$, the agent's optimal policy is constant, \ie, $\pi^\varepsilon = \pi_i$ for all $\varepsilon \in [\varepsilon_{i}, \varepsilon_{i+1})$.
\end{proposition}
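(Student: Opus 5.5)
The plan is to express the optimal utility as a pointwise maximum of finitely many affine functions of $\varepsilon$. Convexity, continuity and piecewise linearity then follow immediately, and the breakpoints supply the partition $\Pcal$.

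First, I restrict the policy class to a finite set. By the remark preceding Proposition~\ref{prop:reacheable-set}, an optimal policy can be taken deterministic and Markov in the augmented state $(S,l)$. By Proposition~\ref{prop:reacheable-set}, the relevant states, those with $0<f(S)<1/\kappa$, lie in a finite set $S^{\texttt{r}}$. At all other states the action $n=0$ is optimal. Hence, for every $\varepsilon$, some optimal policy lies in the finite set $\Pi^{\texttt{d}}$ of deterministic maps from $S^{\texttt{r}}\times[T]$ to $\Acal$, extended by $0$ elsewhere. One point must be checked: $S^{\texttt{r}}$ must not depend on $\varepsilon$. The transitions in Eq.~\ref{eq:transition-dynamics} do not involve $\varepsilon$, so the reachable set is the same for every subsidy. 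It follows that
\begin{equation*}
\bar U^{\texttt{A}}(\pi^\varepsilon;\varepsilon)=\max_{\pi\in\Pi^{\texttt{d}}} V^\varepsilon_\pi(\alpha_0,\beta_0,0,0)=\max_{\pi\in\Pi^{\texttt{d}}}\big(V^0_\pi(S_0,0)+\varepsilon\, A_\pi(S_0,0)\big),
\end{equation*}
where the second equality is Proposition~\ref{prop:linear-value}.

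A maximum of finitely many affine functions is convex, continuous and piecewise linear with finitely many breakpoints. I take $\Pcal$ to be the breakpoints in $(0,\varepsilon^{\max})$ together with the endpoints $0$ and $\varepsilon^{\max}$. On each open piece, a single affine function $V^0_{\pi_i}+\varepsilon A_{\pi_i}$ attains the maximum, so $\pi_i$ is optimal throughout the piece. This gives the constant-policy claim. At a breakpoint $\varepsilon_i$, the policy optimal on the piece to its right is also optimal at $\varepsilon_i$ by continuity. We may therefore choose $\pi^\varepsilon=\pi_i$ on $[\varepsilon_i,\varepsilon_{i+1})$, breaking ties consistently. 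The statement asserts only that such a selection of $\pi^\varepsilon$ exists.

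For the expectation over $Q$, I first treat $Q$ with finite support. The expectation is then a finite nonnegative combination of convex piecewise linear continuous functions, which is again of the same type. Its partition is the union of the individual partitions, and on each cell each type's policy is constant. If $Q$ is a general distribution, finiteness of the partition can fail. I would then assume, as in the experiments, that $Q$ is finitely supported, or that $(\alpha_0,\beta_0)$ ranges over a finite grid. Convexity and continuity survive regardless, since an expectation of convex functions is convex and dominated convergence gives continuity.

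I expect the main obstacle to be the first step: making rigorous that a finite, $\varepsilon$-independent family of policies suffices. The reduction to deterministic Markov policies on the finite set $S^{\texttt{r}}$ must be justified, and it must be verified that states outside $S^{\texttt{r}}$ contribute nothing that depends on the policy. I would carry this out by backward induction over $l$, using the Bellman recursion.
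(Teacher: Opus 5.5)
Your proposal is correct and shares the paper's core idea. Both restrict to a finite, $\varepsilon$-independent family of deterministic policies on the reachable set, apply Proposition~\ref{prop:linear-value}, and read off convexity, continuity and piecewise linearity from a pointwise maximum of finitely many affine functions.

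The routes differ in how the partition is built. The paper collects all crossing points of $\varepsilon\mapsto V^0_\pi(S,l)+\varepsilon A_\pi(S,l)$ over pairs of policies and over all $(S,l)\in\Scal^{\texttt{r}}\times[T]$. On each resulting cell, one policy is then optimal at every state simultaneously. You take only the kinks of the value at $S_0$, so your $\pi_i$ is only guaranteed to maximize $\bar U^{\texttt{A}}(\cdot;\varepsilon)$.

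That suffices for the statement, since $\pi^\varepsilon$ is defined through $\argmax_{\pi}\bar U^{\texttt{A}}(\pi;\varepsilon)$ in Eq.~\ref{eq:stackelberg-equilibrium}. Your key step is also easy to justify: on an affine piece, one of the finitely many lines attains the maximum at two points, hence along the whole piece. The paper's pruning, by contrast, requires every retained pair to cross exactly once at every $(S,l)$, which needs care when two policies coincide at some states but not at others. Your partition (the kinks at $S_0$, or of their $Q$-average) is also the one Algorithm~\ref{alg:epsilon} actually reconstructs, because that algorithm only evaluates values at the initial state. The paper's all-pairs partition generally contains crossings that are not kinks.

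What the paper's construction buys is state-wise optimality on each cell, and your argument recovers it too. For fixed $\pi$ and $(S,l)$, the set of $\varepsilon$ where $\pi$ is optimal at $(S,l)$ is the zero set of the nonnegative convex function $\varepsilon\mapsto V^\varepsilon(S,l)-V^\varepsilon_\pi(S,l)$, hence a closed interval. Intersecting over the finitely many $(S,l)$ gives finitely many closed intervals. These cover $[0,\varepsilon^{\max}]$ because some Bellman-optimal policy lies in $\Pi^{\texttt{d}}$ for every $\varepsilon$, and their endpoints form the partition.

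Your caveat about $Q$ is warranted and sharper than the paper's treatment. The paper's final step (linear on each cell, hence the expectation is linear) reuses one partition and one family $\pi_i$ for every $(\alpha_0,\beta_0)$ in the support of $Q$. Yet both depend on the initial belief, through the Beta--Binomial transition probabilities, the reachable set, and $f$, which is anchored at $(\alpha_0,\beta_0)$. For finitely supported $Q$, including the $Q=\delta_{(1,1)}$ used in the experiments, your common refinement closes this gap. For a continuously distributed $Q$, the kinks generally move with the type and averaging smooths them out; compare $\int_0^1\max(0,\varepsilon-c)\,dc=\varepsilon^2/2$ on $[0,1]$. So, as you say, only convexity and continuity survive in that case.
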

In the next section, we leverage the above characterization of the agent's anticipated optimal utility to develop an algorithm that computes the principal’s optimal subsidy $\varepsilon^*$, as defined in Eq.~\ref{eq:stackelberg-equilibrium}.\footnote{In principle, Eq.~\ref{eq:stackelberg-equilibrium} may allow for multiple solutions. However, our objective is not to characterize the full set of solutions, but rather to provide an algorithmic procedure to compute one such solution.} 

%% file: 050regulator.tex
To find the optimal subsidy $\varepsilon^{*}$, 
our starting point is the observation that, for any policy $\pi$, the anticipated social utility $\Bar{U}^{\texttt{S}}(\varepsilon;\pi)$ is a decreasing function of $\varepsilon$. 
This is because, as the subsidy $\varepsilon$ increases, the principal covers a higher fraction of the agent's cost, but the probability that the product receives approval remains unchanged.
More formally, we have the following proposition:
\begin{proposition}\label{prop:regulator-utility-linear}
For any policy $\pi$ and subsidy $\varepsilon\in[0,\varepsilon^{\texttt{max}}]$, the anticipated social utility $\Bar{U}^{\texttt{S}}(\varepsilon;\pi)$ admits a linear decomposition
    \begin{equation}
        \Bar{U}^{\texttt{S}}(\varepsilon;\pi) = \rho^{\texttt{S}}\cdot  \EE_{(\alpha_0, \beta_0) \sim Q} \left[ P_{\pi}(\left\{\exists t \in [T] \colon f(S_{t+1})\geq 1/\kappa\right\} \,|\, S_0 ) \right] - \varepsilon \cdot \EE_{(\alpha_0, \beta_0) \sim Q} \left[ A_{\pi}(S_0,0)\right],
    \end{equation}
\end{proposition}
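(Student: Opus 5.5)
The plan is to expand the definition of $\Bar{U}^{\texttt{S}}(\varepsilon;\pi)$ in Eq.~\ref{eq:adaptive-utility-regulator} and split the integrand into two terms by linearity of expectation.

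Fix $(\alpha_0,\beta_0)$ and write $I_t = \mathds{1}\{0<f(S_t)<1/\kappa\leq f(S_{t+1})\}$. The inner expectation becomes
\begin{equation*}
\rho^{\texttt{S}}\,\EE_\pi\Big[\textstyle\sum_{t=0}^T I_t \,\Big|\, S_0\Big] - \varepsilon\,\EE_\pi\Big[\textstyle\sum_{t=0}^T C_{t+1} I_t \,\Big|\, S_0\Big].
\end{equation*}
The key observation is that the events $\{I_t=1\}$ are disjoint across $t$: at most one index $t$ can mark the first crossing of $1/\kappa$. To see this, note from Eq.~\ref{eq:transition-dynamics} that once $f(S_{t+1})\geq 1/\kappa$, every later state equals $S_{t+1}$ (if $n>0$) or is $S^{\texttt{out}}$ with $f=0$. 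In either case $0<f(S_{t'})<1/\kappa$ fails for all $t'>t$; in particular $S^{\texttt{out}}$ is absorbing with $f=0$, so it can never lead to a crossing. Hence $\sum_t I_t = \mathds{1}\{\exists t\in[T]: 0<f(S_t)<1/\kappa\le f(S_{t+1})\}$.

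This event coincides with $\{\exists t: f(S_{t+1})\geq 1/\kappa\}$, provided the initial state satisfies $f(S_0)=f(\alpha_0,\beta_0)=1<1/\kappa$, which holds because $\kappa\in(0,1)$. Indeed, the first index at which $f$ reaches $1/\kappa$ must be preceded by a state with $f<1/\kappa$. That state cannot be $S^{\texttt{out}}$, since the out state is absorbing and never reaches $f\ge 1/\kappa$, so it is in $\Scal^{\texttt{in}}$ with $f>0$ (as $f$ is an exponential there). This gives the first term of the statement.

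For the second term, I would identify $\EE_\pi[\sum_t C_{t+1} I_t \mid S_0]$ with $A_\pi(S_0,0)$ from Proposition~\ref{prop:linear-value}. That proposition writes $V^\varepsilon_\pi = V^0_\pi + \varepsilon A_\pi$, and by Eq.~\ref{eq:MDP-reward-def} the $\varepsilon$-dependent part of the reward is $\varepsilon(C_t+c(n_t))\mathds{1}\{f(S_{t+1})\ge 1/\kappa\}$ on steps with $S_t\neq S^{\texttt{out}}$ and $f(S_t)<1/\kappa$, and zero otherwise. Here $C_t+c(n_t)=C_{t+1}$ by the transition dynamics, so this coefficient is exactly $C_{t+1}I_t$. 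Therefore $A_\pi(S_0,0)=\EE_\pi[\sum_t C_{t+1}I_t\mid S_0]$, either by the definition in Eq.~\ref{eq:rejection-cost-conditional} or by reading off the $\varepsilon$-coefficient in the linear decomposition.

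Finally, I would take the outer expectation over $(\alpha_0,\beta_0)\sim Q$ using linearity. Both terms are bounded, since $C\le T c(n^{\texttt{max}})$, so interchanging sums and expectations is unproblematic.

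The only delicate point is the bookkeeping that makes the event rewriting and the $A_\pi$ identification exact. It requires checking the boundary conditions $0<f(S_t)$ and $f(S_0)<1/\kappa$, which follow from $f(\alpha_0,\beta_0)=1$ and $\kappa<1$, and checking that the indicator inside $A_\pi$ matches $I_t$.
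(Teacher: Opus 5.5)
Your proposal is correct and follows essentially the same route as the paper's proof: split by linearity of expectation, use that the first-crossing indicator $\mathds{1}\{0<f(S_t)<1/\kappa\le f(S_{t+1})\}$ fires at most once (at $t=\tau$) to turn the first sum into the approval probability, and identify the second sum with $A_\pi(S_0,0)$ as defined in Eq.~\ref{eq:rejection-cost-conditional}. You also check explicitly that $f(S_0)=1<1/\kappa$ and $f>0$ on $\Scal^{\texttt{in}}$, which justifies replacing the crossing event by $\{\exists t\in[T]\colon f(S_{t+1})\ge 1/\kappa\}$; the paper leaves this step implicit, and the only slip is that the cost bound should be $(T+1)\,c(n^{\texttt{max}})$, since there are $T+1$ trials.
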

In the above expression, the first term corresponds to the principal's anticipated probability of rejecting $H_0$ 
under policy $\pi$, and the second term corresponds to the anticipated total cost borne by the principal through the subsidy.

As a consequence, and in light of Proposition~\ref{prop:utility-piecewise-convex}, in each interval $[\varepsilon_i, \varepsilon_{i+1})$ of the partition $\Pcal$ where a fixed policy $\pi_i$ is optimal, 
the anticipated social utility $\Bar{U}^{\texttt{S}}(\varepsilon;\pi_i)$ is a decreasing (linear) function of the subsidy over $[\varepsilon_i, \varepsilon_{i+1})$, and therefore it is maximized at the left point $\varepsilon_i$:
\begin{equation}
    \max_{\varepsilon \in [\varepsilon_i, \varepsilon_{i+1})} \Bar{U}^{\texttt{S}}(\varepsilon;\pi_i) = \Bar{U}^{\texttt{S}}(\varepsilon_i;\pi_i).
\end{equation}
Crucially, the principal can compute each policy $\pi_i$ \emph{without knowing the agent’s initial belief} $(\alpha_0, \beta_0)$, since the optimal policy for the process $\Mcal^{\varepsilon_i}$ does not depend on the initial state. This stands in contrast to many settings in the literature on (Bayesian) Stackelberg games, where the principal typically must anticipate the agent's best response by averaging over the agent's private information~\cite{Conitzer,paruchuri2008playing}.

\begin{algorithm}[t]
\caption{Finds the Principal's Optimal Subsidy}
\label{alg:epsilon}

\small

\begin{algorithmic}[1]

\State \textbf{Input:} MDP solver $\texttt{SolveMDP}$, maximum subsidy $\varepsilon^{\texttt{max}}$, principal's belief $Q$
\State \textbf{Initialize:} $\mathcal{I} \gets \emptyset, \quad \mathcal{U} \gets\emptyset$
\State $(\pi_L, V^0_L, A_L) \gets \texttt{SolveMDP}(\mathcal{M}^{0}), \quad (\pi_R, V^0_R, A_R) \gets \texttt{SolveMDP} (\mathcal{M}^{\varepsilon^{\texttt{max}}})$ \label{line:initial-solve}
\Comment{Compute optimal policies for $\varepsilon=0$ and $\varepsilon=\varepsilon^{\texttt{max}}$ using the decomposition in Proposition~\ref{prop:linear-value}}
\State $\mathcal{U} \gets \mathcal{U} \cup \{(0, \Bar{U}^{\texttt{S}}(0; \pi_L))\}$\label{line:initial-U}
\State $\Bar{V}^0_{L} \gets \EE_{(\alpha_0, \beta_0)\sim Q} [V^0_{L}(\alpha_0,\beta_0,0,0)]$, $\Bar{A}_{L} \gets \EE_{(\alpha_0, \beta_0)\sim Q} [A_{L}(\alpha_0,\beta_0,0,0)]$
\State $\Bar{V}^0_{R} \gets \EE_{(\alpha_0, \beta_0)\sim Q} [V^0_{R}(\alpha_0,\beta_0,0,0)]$, $\Bar{A}_{R} \gets \EE_{(\alpha_0, \beta_0)\sim Q} [A_{R}(\alpha_0,\beta_0,0,0)]$ \Comment{Compute the average agent value using the principal's belief.}
\State Push $(\varepsilon_L, \pi_L, \Bar{V}^0_L, \Bar{A}_L, \varepsilon_R, \pi_R, \Bar{V}^0_R, \Bar{A}_R)$ into $\mathcal{I}$ \label{line:initial-interval}

\While{$\mathcal{I}$ is not empty}
    \State Pop $(\varepsilon_L, \pi_L, \Bar{V}^0_L, \Bar{A}_L, \varepsilon_R, \pi_R, \Bar{V}^0_R, \Bar{A}_R)$ from $\mathcal{I}$    
    \If{$A_L \neq A_R$}
        \State $\varepsilon_{int} \gets (\Bar{V}^0_L - \Bar{V}^0_R) / (\Bar{A}_R - \Bar{A}_L)$ \label{line:subsidy-intersection}
        \Comment{Compute the candidate subsidy to evaluate}
        \State $(\pi_{int}, V^0_{int}, A_{int}) \gets \texttt{SolveMDP}(\mathcal{M}^{\varepsilon_{int}})$ \Comment{Compute the optimal policy for the candidate subsidy} \label{line:mdp-int}
        \State $\Bar{V}^0_{int} \gets \EE_{(\alpha_0, \beta_0)\sim Q} [V^0_{int}(\alpha_0,\beta_0,0,0)]$, $\Bar{A}_{int} \gets \EE_{(\alpha_0, \beta_0)\sim Q} [A_{int}(\alpha_0,\beta_0,0,0)]$ \Comment{Compute the average agent value using the principal's belief.}

        \If{$\Bar{V}^0_{int} + \varepsilon_{int}\cdot \Bar{A}_{int} \leq \Bar{V}^0_L +\cdot \varepsilon_{int} \Bar{A}_L$} 
            \Comment{Verify if the optimal policy improves over $\pi_R$ and $\pi_L$}
            \State $\mathcal{U} \gets \mathcal{U} \cup \{(\varepsilon_{int}, \Bar{U}^{\texttt{S}}(\varepsilon_{int}; \pi_R))\}$ \label{line:vertex-found}
            \Comment{Found a vertex; compute social utility using Eq.~\ref{eq:adaptive-utility-regulator} and principal's belief $Q$}
        \Else 
            \Comment{Split the subsidy interval}
            \State Push $\{ (\varepsilon_L, \pi_L, \Bar{V}^0_L, \Bar{A}_L, \varepsilon_{int}, \pi_{int}, \Bar{V}^0_{int}, \Bar{A}_{int}), (\varepsilon_{int}, \pi_{int}, \Bar{V}^0_{int}, \Bar{A}_{int}, \varepsilon_R, \pi_R, \Bar{V}^0_R, \Bar{A}_R) \}$ into $\mathcal{I}$ \label{line:new-intervals}
        \EndIf
    \EndIf
\EndWhile
\State \Return $\varepsilon^*$ where $(\varepsilon^*, u^*) = \argmax_{(\varepsilon, u) \in \mathcal{U}} u$ 
\Comment{Return the optimal subsidy}

\end{algorithmic}
\end{algorithm}

Leveraging the above results, we derive an efficient divide-and-conquer 
procedure to find the optimal subsidy $\varepsilon^*$. The procedure (i) explicitly constructs the partition $\Pcal$ and (ii) applies Proposition~\ref{prop:regulator-utility-linear} to determine the optimal anticipated social utility on each interval $[\varepsilon_i, \varepsilon_{i+1})$ by evaluating it at the left endpoint $\varepsilon_i$. 
Algorithm~\ref{alg:epsilon} summarizes the overall procedure, and the following proposition establishes its correctness.

\begin{proposition}\label{prop:algorithm}
    Algorithm~\ref{alg:epsilon} is guaranteed to find an optimal subsidy $\varepsilon^{*}$ in a finite number of iterations.
\end{proposition}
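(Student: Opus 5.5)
The plan is to view the algorithm as the classical procedure for recovering the upper envelope of finitely many lines. By Proposition~\ref{prop:linear-value}, each deterministic policy $\pi$ induces, after averaging over $Q$, an affine function
\[
g_\pi(\varepsilon)=\bar V^0_\pi+\varepsilon\,\bar A_\pi .
\]
The set of deterministic Markov policies on the finite reachable set of Proposition~\ref{prop:reacheable-set} is finite. Hence $G(\varepsilon)=\max_\pi g_\pi(\varepsilon)=\EE_Q[\bar U^{\texttt A}(\pi^\varepsilon;\varepsilon)]$ is the maximum of finitely many lines. This gives the piecewise-linear convex structure of Proposition~\ref{prop:utility-piecewise-convex}, with finitely many breakpoints $\varepsilon_1<\dots<\varepsilon_{L-1}$. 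Calling $\texttt{SolveMDP}(\Mcal^{\varepsilon})$ returns a line $g_\pi$ that is tangent to $G$ at $\varepsilon$, that is, $g_\pi(\varepsilon)=G(\varepsilon)$.

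The first step is an invariant: every interval stored in $\mathcal I$ has endpoints $\varepsilon_L<\varepsilon_R$ together with lines $g_L,g_R$ that are tangent to $G$ at $\varepsilon_L$ and $\varepsilon_R$ respectively. Consider such an interval with $\bar A_L\ne\bar A_R$. Convexity forces $\bar A_L<\bar A_R$, so the intersection point $\varepsilon_{int}$ of the two lines lies in $[\varepsilon_L,\varepsilon_R]$ and satisfies $G(\varepsilon_{int})\ge g_L(\varepsilon_{int})$.

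This leads to a dichotomy, argued by convexity:
\begin{itemize}
\item If equality holds, i.e.\ the check in the algorithm succeeds, then $G=\max(g_L,g_R)$ on $[\varepsilon_L,\varepsilon_R]$. The reason is that $G$ is convex, lies above both lines, and every line of $G$ that lies strictly above $\max(g_L,g_R)$ somewhere in the interval must exceed it at $\varepsilon_{int}$. Therefore $\varepsilon_{int}$ is exactly the unique breakpoint of $\Pcal$ inside the interval.
\item Otherwise the new line $g_{int}$ is distinct from $g_L$ and $g_R$, is tangent at $\varepsilon_{int}$, and both subintervals again satisfy the invariant.
\end{itemize}
If $\bar A_L=\bar A_R$, convexity forces $g_L=g_R=G$ on the interval, so there is no breakpoint there.

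For finiteness, I will show that every split discovers a linear piece of $G$ (a line that is optimal on a set of positive length, or at least a distinct $\bar A$ value) that has not been found before. Strict inequality at $\varepsilon_{int}$ means $g_{int}$ differs from every line already attached to the endpoints. Since slopes are nondecreasing along the envelope, the pieces discovered in disjoint subintervals are distinct. The number of pieces is at most $L$, so the total number of pops is $\Ocal(L)$.

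For completeness, each breakpoint $\varepsilon_i$ lies in some interval whose two tangent lines are the adjacent pieces. That interval is eventually popped and its intersection is recorded in $\mathcal U$. Together with $\varepsilon_0=0$, which is added initially, this shows that $\mathcal U$ contains every left endpoint of $\Pcal$. By Proposition~\ref{prop:regulator-utility-linear} and the discussion after it, the social utility on each $[\varepsilon_i,\varepsilon_{i+1})$ is maximized at $\varepsilon_i$. Therefore the argmax over $\mathcal U$ is an optimal $\varepsilon^*$, provided the recorded utility at a breakpoint uses the policy that is optimal on the right-hand interval, namely $\pi_R$, as the algorithm does.

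The main obstacle is the tie-handling at the breakpoints and among policies with equal $(\bar V^0,\bar A)$. $\texttt{SolveMDP}$ may return a different optimal policy than the one defining $\Pcal$, and tied policies can yield different social utilities. I would try to handle this by identifying policies with their line $(\bar V^0,\bar A)$, assuming a consistent tie-breaking rule in the solver. It may also be necessary to use right-continuity of the partition, i.e.\ take $\pi_i$ as the optimizer on the open interval, so that the recorded value is attained at $\varepsilon_i$ by a Stackelberg-consistent best response.
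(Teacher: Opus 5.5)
Your proposal is correct and follows essentially the paper's proof. Your convexity dichotomy is Lemma~\ref{lemma:convex}, termination comes from the finitely many affine pieces, and your \emph{eventually popped} completeness step is what the paper formalizes as the stack invariant of Eq.~\ref{eq:stack-invariant}, proved by induction on the number of pops; the tie-breaking issue you flag is not resolved in the paper either (it keeps one policy per affine function and breaks ties arbitrarily), so it is a shared caveat rather than a defect of your route.
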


Algorithm~\ref{alg:epsilon} maintains a stack $\mathcal{I}$ of intervals, alongside their corresponding optimal policies at the endpoints and the linear value function decompositions provided by Proposition~\ref{prop:linear-value}. 
We abstract the computation of these policies and value functions into a procedure, \texttt{SolveMDP}, which may be implemented using a value-iteration algorithm (e.g., Algorithm~\ref{alg:value_iteration_decomposition}).
The stack $\Ical$ is initialized to the full interval $[0,\varepsilon^{\texttt{max}}]$ (lines~\ref{line:initial-solve} to~\ref{line:initial-interval}). At each iteration, the algorithm calculates the intersection point $\varepsilon_{int}$ between the endpoint value functions, $\Bar{V}_L^0 + \varepsilon \cdot \Bar{A}_L$ and $\Bar{V}_R^0 + \varepsilon \cdot \Bar{A}_R$ (line~\ref{line:subsidy-intersection}), where the bar denotes evaluation at $S_0$ and averaging over the principal’s belief $Q$.
Then, it computes the optimal value function at $\varepsilon_{int}$ (line~\ref{line:mdp-int}). 
Due to the convexity of the optimal value function, if the optimal value function at $\varepsilon_{int}$ coincides with the value $\Bar{V}_L^0 + \varepsilon_{int}\cdot \Bar{A}_L = \Bar{V}_R^0 + \varepsilon_{int}\cdot \Bar{A}_R$, the policy $\pi_L$ is optimal in $[\varepsilon_L, \varepsilon_{int})$, and the policy $\pi_R$ is optimal in $[\varepsilon_{int}, \pi_R]$. In this case, $\varepsilon_{int}$ is a vertex of $\Pcal$, and the algorithm stores the social utility at $\varepsilon_{int}$ (line~\ref{line:vertex-found}). Conversely, if the optimal policy at $\varepsilon_{int}$ strictly improves over the policy $\pi_L$, then the algorithm has found a new interval of the partition, and the stack $\Ical$ is updated (line~\ref{line:new-intervals}). The algorithm then iterates the same steps over all intervals in the stack $\Ical$.\footnote{We report runtime measurements of Algorithm~\ref{alg:epsilon} in Appendix~\ref{app:experimental-details}.}

%% file: 060experiments.tex
Antimicrobial resistance is a major global threat, projected to cause 10 million deaths annually by 2050~\citep{GBD_2021_Antimicrobial_Resistance_Collaborators2024, Shallcross2015-tg}. Yet, FDA antibiotic approvals have dropped from $13\%$ of all drugs in 1980 to $4\%$ in the 2000s~\citep{Outterson2013-qz}, largely for economic reasons: treatments are short, prices must stay low to ensure availability, use is restricted to limit resistance, and competition from existing or generic drugs is intense~\citep{Shlaes2019-fe, Gargate2025}. Consequently, many major pharmaceutical companies have exited or reduced antibiotic pipelines~\citep{Plackett2020-qi}, while small biotech firms often struggle financially~\citep{Courtemanche2021-up, Piddock2024-dn, Wells2024-fs}. To counter this, public and private efforts have focused on incentivizing and subsidizing development~\citep{Outterson2016-et, Anderson2023-gl, HR7352_119th_Congress_2026}. 
In this section, we conduct a series of experiments to demonstrate the effectiveness of our approval protocol in optimally subsidizing antibiotic development.

\begin{figure*}[t]
    \vspace{0cm}
    \centering
    \subfloat[Social utility vs. subsidy $(\rho^{\texttt{S}}= \$2000\,\text{M})$ ]{
    \includegraphics[width=0.48\linewidth]{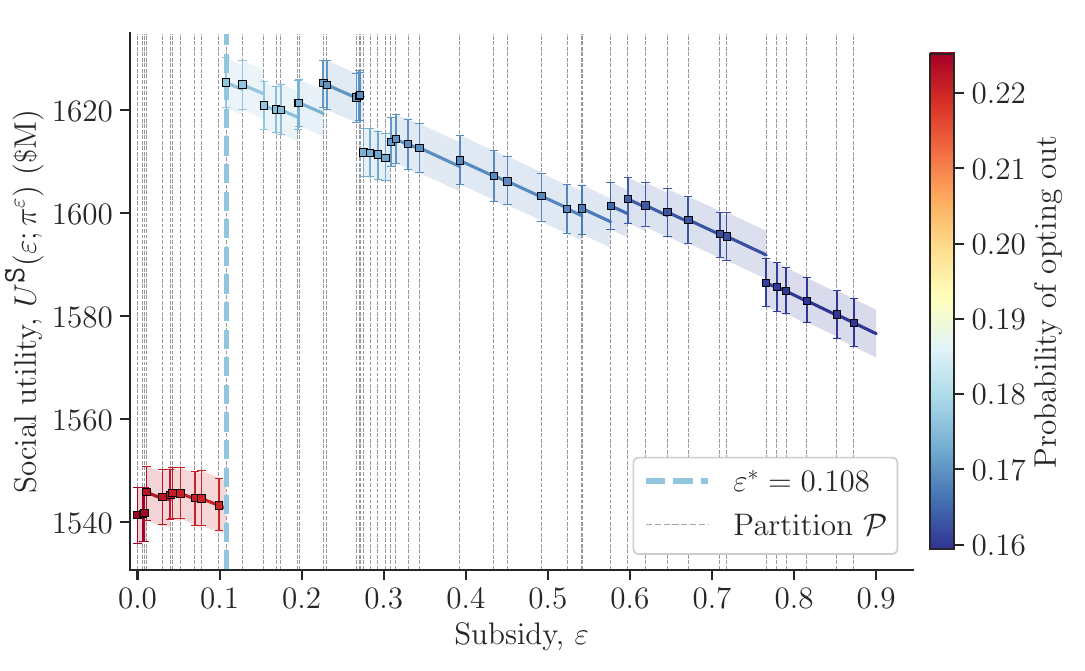}
    }
    \hspace{0mm}
    \subfloat[Optimal subsidy and social utility gain vs. $\rho^{\texttt{S}} / \rho^{\texttt{A}}$]{
    \includegraphics[width=0.48\linewidth]{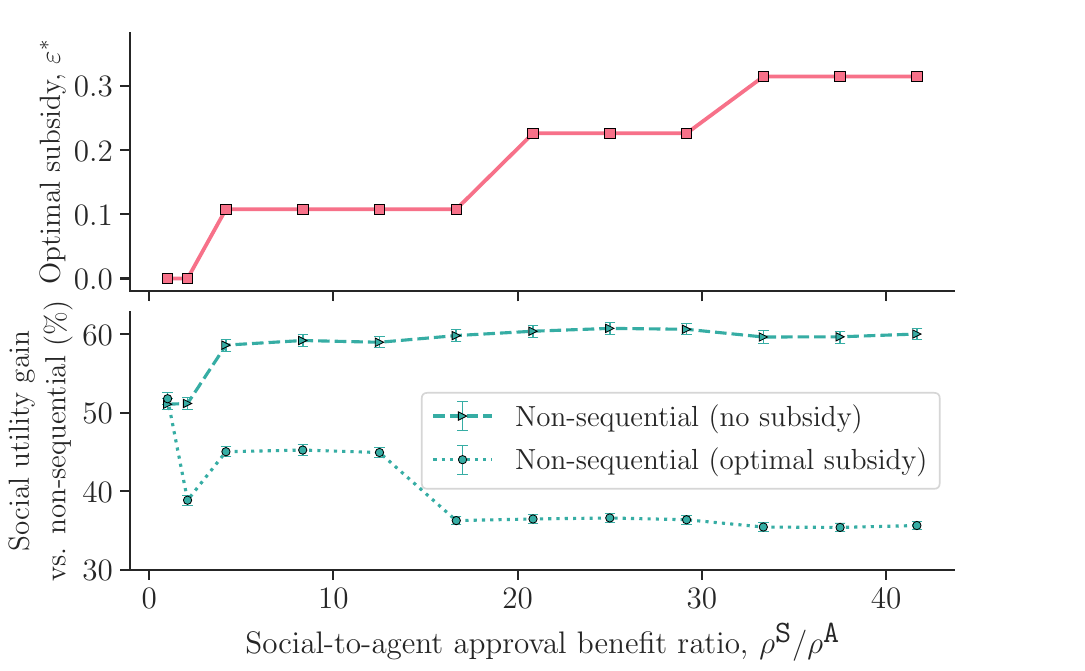}
    }
    \caption{\textbf{Subsidizing antibiotic development.}
    The figure shows the results of the approval process for an antibiotic with true (unknown) efficacy $\theta^* = 0.65$. Panel~(a) shows the result of running Algorithm~\ref{alg:epsilon} to compute the optimal subsidy for the principal $\varepsilon^* = 0.108$ when the social benefit of approval is $\rho^{\texttt{S}}= \$2000\,\text{M}$. The dashed vertical lines correspond to the intervals of the partition $\mathcal{P}$ where the agent's optimal policy is constant (Proposition~\ref{prop:utility-piecewise-convex}), and the colors indicate the probability that the agent opts out by selecting $n=0$ during the approval process (before the drug is approved).
    Panel~(b) shows, as a function of the social-to-agent approval benefit ratio, the optimal subsidy, together with the percentage increase in social utility of the sequential approval protocol relative to a non-sequential approval protocol in which the agent is restricted to a single trial with $n^{\texttt{max}}=800$.
    The error bars represent $95\%$ bootstrapped confidence intervals.
    }
    \label{fig:main}
\end{figure*}

\xhdr{Experimental setup}
The principal subsidizes the development of an antibiotic and conducts a hypothesis test with $\kappa = 0.05$ (\ie, a false positive rate of at most $0.05$) to determine whether the antibiotic’s (unknown) efficacy $\theta^* = 0.65$ exceeds the benchmark $\theta^{\texttt{b}} = 0.5$. 
The agent can conduct up to four trials ($T=3$), each with a maximum sample size of $n^{\texttt{max}} = 200$ patients. 
Although data on the economic cost and sales of antibiotic development are mostly private, recent reports estimate that the average present value of sales is approximately $\$240 \, \text{M}$~\citep{Rahman2021-zk}, the per-patient Phase III cost is $\$66\,\text{k}$~\citep{Stergiopoulos2018-fp}, and the average fixed cost per trial is $\$48.9\,\text{M}$~\citep{Moore2018-ls}. 
Therefore, we set $\rho^{\texttt{A}} = \$240\,\text{M}$, and $c(n) = \$48.9\,\text{M} + \$0.066\,\text{M} \cdot n$ for any $n\in \{1,\dots,n^{\texttt{max}}\}$. 
Further, we assume the agent has a non-informative (uniform) prior with $\alpha_0=1$, $\beta_0=1$, known to the principal, and vary $\rho^{\texttt{S}}$. 
Refer to Appendix~\ref{app:experimental-details} for additional details regarding our experimental setup, and to Appendix~\ref{app:additional-experimental-results} for results under alternative parameter choices and extensive sensitivity analyses.

\xhdr{Results}
%
%
For an antibiotic with a social benefit upon approval of $\rho^{\texttt{S}} = \$2000\,\text{M}$ (a ratio $\rho^{\texttt{S}} / \rho^{\texttt{A}} \approx 8.3$), Panel~(a) of Figure~\ref{fig:main} shows (i) the social utility $U^{\texttt{S}}(\varepsilon;\pi^\varepsilon)$, (ii) the partition $\mathcal{P}$, which consists of 49 intervals, and (iii) the agent's opt-out probability, over the entire range of subsidy levels.
%
We find that, under the optimal subsidy $\varepsilon^{*} = 0.108$, the social utility increases by $\sim$$5.5 \%$ and the agent's opt-out probability decreases by $\sim$$22\%$ compared to a scenario with no subsidies.
Interestingly, we also find that the optimal subsidy $\varepsilon^*$---which maximizes the anticipated  social utility $\bar{U}^{\texttt{S}}(\varepsilon;\pi^\varepsilon)$---also maximizes the true social utility $U^{\texttt{S}}(\varepsilon;\pi^\varepsilon)$. 
Further, Panel~(b) of Figure~\ref{fig:main} shows the optimal subsidy $\varepsilon^{*}$ and the social utility gain compared to two non-sequential baselines for different values of the social-to-agent approval benefit ratio $\rho^{\texttt{S}} / \rho^{\texttt{A}}$.
We find that the optimal subsidy increases with the social-to-agent approval benefit ratio, reflecting that higher societal utility strengthens the principal’s incentive to subsidize experimentation.
We also find that, compared to two non-sequential baselines in which the agent is restricted to conducting at most a single clinical trial (with a larger maximum sample size of $n^{\texttt{max}} = 800$), our approval protocol yields substantial gains in social utility. 
Specifically, relative to a non-sequential protocol without subsidies, our protocol increases social utility by approximately $50\%$–$60\%$, depending on the ratio $\rho^{\texttt{S}} / \rho^{\texttt{A}}$ and, relative to a non-sequential protocol with optimal subsidies, our protocol still achieves gains exceeding $35\%$.

%% file: 070discussion.tex
In this section, we highlight several limitations of our work and discuss avenues for future research.

\xhdr{Methodology}
In our work, we have considered a Beta–Binomial model, which is particularly natural in the context of RCTs. 
However, in other application domains, it may be desirable to consider more general models where the agent's beliefs, the experimental outcomes, and the e-values exhibit greater complexity---for instance, through dependence across trials or through the inclusion of safety characteristics. 
In Appendix~\ref{app:general-model}, we outline how to extend our framework to these more general settings; however, this extension introduces significant computational challenges.
Furthermore, although our results do not require the principal to know the agent’s initial belief (\ie, its private information $B_0$), observing the agent's experimental actions over time may provide information about its prior, suggesting that the principal could, in principle, dynamically update its belief about the agent's prior. 
It would be interesting to incorporate such learning and elicitation into a sequential approval protocol in future work~\citep{shi2025instance}. 
In addition, it would also be valuable to extend our protocol to allow for uncertainty in the agent's benefit $\rho^{\texttt{A}}$, which in practice may vary across product developers and contexts.
Finally, real-world regulatory settings may involve additional non-economic factors that are not explicitly captured in our model. For instance, clinical trial duration, patient follow-up requirements, and other operational constraints can affect both the feasibility and optimality of sequential experimentation protocols, and accounting for these factors remains an important direction for future work.

\xhdr{Implementation and Evaluation}
We have conducted a case study applying our approval protocol to antibiotic development, a well-known setting characterized by underinvestment and market failure, which is particularly suited to studying the effects of subsidies (see Appendix~\ref{app:additional-experimental-results} for further experimental results using different parameter values). However, it would be interesting to extend this analysis to other settings---such as orphan drugs or rare disease treatments---where trial costs and sample sizes may differ substantially from those in standard antibiotic development and may therefore lead to different optimal subsidies.
We also empirically evaluated the computational cost of computing the agent's optimal decision policy and the optimal subsidy. 
In our experiments, both procedures were efficient and typically completed within a few minutes (see Appendix~\ref{app:experimental-details} for more details). However, as shown theoretically in Section~\ref{sec:agent}, the complexity of computing the optimal agent policy scales polynomially with the number of actions and trials. Consequently, extending our methodology to more general settings---such as those described in Appendix~\ref{app:general-model}---might require developing approximate algorithms, as computing (Stackelberg) equilibria in general Bayesian games is known to be computationally intractable in many cases~\citep{Conitzer}.

\xhdr{Broader Impact}
Regulatory agencies and product developers are progressively exploring more flexible and data-driven approval methodologies, including Bayesian approaches, to better balance safety and innovation~\citep{fda2026use}.
Our work contributes to this direction by providing a principled framework for designing subsidy mechanisms that improve social utility in approval processes, which may inform future policy discussions and be of interest to both public and private regulators and decision makers.

%% file: 080conclusions.tex
How can approval protocols be designed to incentivize experimentation? In this work, 
we have addressed this question by introducing a sequential approval protocol that allows the agent to continuously refine its knowledge about the product and the principal to subsidize a fraction of the agent's experimental costs---all while maintaining anytime-valid guarantees on the false positive rate.
Along the way, we have shown that the agent can efficiently compute the optimal experimental policy, and the principal can find the subsidy that maximizes social utility, even when anticipating that the agent selects its policy strategically.
Finally, using real-world data on antibiotic development, we have demonstrated that our sequential, subsidized protocol can substantially improve social utility, yielding gains of up to $60\%$ relative to non-sequential designs without subsidies.
More broadly, we hope our work provides insights for designing approval protocols that better align agents' incentives with social objectives.

%% file: 090appendix.tex

\section{Summary of Notation}
In Table~\ref{tab:notation} we summarize the key symbols used in the main body of the paper.

\begin{table}[H]
\centering
\caption{\textbf{Summary of notation.}}
\label{tab:notation}
\begin{tabular}{cl}
\toprule
\textbf{Symbol} & \textbf{Description} \\
\midrule
    $\kappa$ & Principal’s false positive rate bound \\
    $\theta^*$ & True (unknown) product efficacy \\
    $\theta^{\texttt{b}}$ & Baseline efficacy \\
    $H_0, H_1$ & Null and alternative hypotheses \\
    $t$ & Time index \\
    $l$ & Initial time index (if not $0$) \\
    $T$ & Maximum number of trials (time horizon of the belief MDP) \\
    $\varepsilon$ & Principal’s subsidy level \\
    $\varepsilon^{\texttt{max}}$ & Maximum allowable subsidy \\
    $\varepsilon^*$ & Principal’s optimal subsidy \\
    $B_t$ & Agent’s belief at time $t$ \\
    $\alpha_t, \beta_t$ & Parameters of the agent’s belief at time $t$ \\
    $Q$ & Principal’s belief over the agent’s prior \\
    $n_t$ & Agent’s action (sample size) at time $t$ \\
    $n^{\texttt{max}}$ & Maximum sample size \\
    $\pi$ & Agent’s policy \\
    $\pi^\varepsilon$ & Optimal policy under subsidy $\varepsilon$ \\
    $V^\varepsilon$ & Optimal value function under subsidy $\varepsilon$ \\
    $V^\varepsilon_{\pi}$ & Value function of policy $\pi$ under subsidy $\varepsilon$ \\
    $A_{\pi}$ & Expected total cost conditional on approval under policy $\pi$ \\
    $c$ & Cost function \\
    $X_t$ & Experimental outcome at time $t$ \\
    $E_t$ & e-value at time $t$ \\
    $M_t$ & Principal’s test process at time $t$ \\
    $f$ & Mapping from beliefs to the test process \\
    $C_t$ & Cumulative experimental cost at time $t$ \\
    $\Mcal^\varepsilon$ & Belief MDP under subsidy $\varepsilon$ \\
    $r^\varepsilon$ & Reward function in the belief MDP under subsidy $\varepsilon$ \\
    $S_t$ & State of $\Mcal^\varepsilon$ at time $t$ \\
    $\tau$ & Stopping time (last non-absorbing state of $\Mcal^\varepsilon$) \\
    $\Pcal$ & Partition associated with fixed policies \\
    $\rho^{\texttt{A}}$ & Agent’s approval benefit \\
    $\rho^{\texttt{S}}$ & Social approval benefit \\
    $U^{\texttt{A}}$ & Agent’s utility under $\theta^*$ \\
    $U^{\texttt{S}}$ & Social utility under $\theta^*$ \\
    $\bar{U}^{\texttt{A}}$ & Agent’s anticipated utility computed via the belief MDP \\
    $\bar{U}^{\texttt{S}}$ & Principal's anticipated social utility computed via the belief MDP \\
\bottomrule
\end{tabular}
\end{table}

\clearpage
\newpage

\section{Background on Sequential Hypothesis Testing Using e-values}\label{app:background-seq}

In this section, we provide a brief overview of sequential hypothesis testing using e-values. For a detailed exposition, we refer the reader to Ramdas et al.~\citep{ramdas2025hypothesistestingevalues}.

A central object in classical hypothesis testing is the \emph{p-value}, used to assess a null hypothesis $H_0$ against an alternative $H_1$. Despite their widespread use across scientific disciplines, p-values suffer from important limitations when used without appropriate precautions. A canonical example is \emph{p-hacking} (also called \emph{sampling to a foregone conclusion}): if one repeatedly collects data, computes a p-value, and checks whether it falls below a fixed threshold (\eg, $0.05$), rejection of $H_0$ is eventually guaranteed even when $H_0$ is true~\citep{B2005-xl,Simmons2011-ny}.

Alongside classical methods based on p-values, a rich line of work has developed methods based on a game-theoretic formulation of statistics that is better suited to sequential hypothesis testing~\citep{ramdas2023gametheoreticstatisticssafeanytimevalid} and mitigates issues such as p-hacking. The key objects in this framework are \emph{e-values}, which are nonnegative random variables $E$ that satisfy:
\begin{equation}\label{app:e-values-def}
    \EE_{H_0}[E] \leq 1,
\end{equation}
where the expectation is taken under any distribution in $H_0$. Typically, as with p-values, e-values are computed from some observed data $X$ that is intended to provide evidence to reject $H_0$; in the sequential approval protocol considered in Section~\ref{sec:model}, we make this dependence explicit by writing $E(n, X)$, where $n$ is the sample size selected by the agent and $X$ the number of successes in the control trial. The intuition behind e-values is straightforward: since $E$ has expectation at most $1$ under $H_0$, it can take large values only with small probability. Hence, observing a large e-value can be interpreted as evidence against $H_0$---and the larger the e-value, the stronger the evidence. Consequently, one can construct a statistical test that rejects $H_0$ whenever the e-value takes high values, \ie, whenever $E \geq 1/\kappa$ for a certain threshold $\kappa\in (0,1)$. More precisely, by Markov's inequality, this threshold automatically controls the false positive rate:
\begin{equation}
    P_{H_0}(E \geq 1/\kappa) \leq \kappa \cdot \EE_{H_0}[E] \leq \kappa.
\end{equation}

\begin{proposition}
    If $E$ is an e-value for $H_0$, \ie, $\EE_{H_0}[E] \leq 1$, then the non-sequential
    hypothesis test
    \begin{equation*}
        \phi = \mathds{1}\{ E \geq 1/\kappa\}
    \end{equation*}
    is a level-$\kappa$ test for $H_0$ for any $\kappa \in (0,1)$: its false positive rate is at most $\kappa$.
\end{proposition}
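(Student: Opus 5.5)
The claim follows directly from Markov's inequality applied to the non-negative random variable $E$, under an arbitrary distribution in $H_0$. Since the null may be composite, I will fix an arbitrary $P \in H_0$ and show the bound for it. Taking the supremum over $P \in H_0$ then gives the level-$\kappa$ guarantee.

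\emph{Step 1: translate the rejection probability.} The false positive rate of $\phi$ under $P$ is
\[
\EE_P[\phi] = P(E \geq 1/\kappa),
\]
since $\phi$ is an indicator.

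\emph{Step 2: bound it with Markov.} Because $E \geq 0$, we have the pointwise inequality
\[
\mathds{1}\{E \geq 1/\kappa\} \leq \kappa E.
\]
To see this, note that on the event $\{E \geq 1/\kappa\}$ the right-hand side is at least $1$. Off this event, the left-hand side is $0$ and the right-hand side is non-negative.

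Taking expectations under $P$ and using the e-value property $\EE_P[E] \leq 1$ gives
\[
P(E \geq 1/\kappa) \leq \kappa\, \EE_P[E] \leq \kappa.
\]
This holds for every $P \in H_0$, so $\sup_{P\in H_0} P(E\geq 1/\kappa)\leq \kappa$, as claimed.

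There is no real obstacle here. The only point needing care is that the e-value condition must hold uniformly over the composite null $H_0$, which is exactly how it is defined in Eq.~\ref{app:e-values-def}. The non-negativity of $E$ is also essential, since Markov's inequality requires it.
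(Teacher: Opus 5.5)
Your proposal is correct and is the same argument the paper gives: a one-line application of Markov's inequality, $P_{H_0}(E \geq 1/\kappa) \leq \kappa\,\EE_{H_0}[E] \leq \kappa$. Writing out the pointwise bound $\mathds{1}\{E \geq 1/\kappa\} \leq \kappa E$ and fixing an arbitrary $P \in H_0$ just makes that Markov step and the composite-null uniformity explicit.
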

Informally, e-values are as general as p-values in the sense that they exist under essentially the same technical conditions and can be transformed into one another. Their main advantage, however, arises in sequential settings---\ie, when data arrive over time~\citep{shin_ramdas_rinaldo_2024,shekhar2024Two,xu2024Online, velasco2026auditing,gauthier2026bettingequilibriummonitoringstrategic}---as in clinical trials, which is the primary reason our formalism in Section~\ref{sec:model} and~\ref{sec:MDP} is built on e-values.
One way to understand this advantage is to note that the defining condition of e-values (Eq.~\ref{app:e-values-def}) is preserved under a wide range of operations. For example, convex combinations of e-values and multiplications of (independent) e-values remain valid e-values. In contrast, analogous operations do not generally preserve the validity of p-values. 

To formalize the use of e-values in a sequential setting, let $\Fcal = (\Fcal_t)_{t=0}^\infty$ be a filtration on a given sample space, where each $\Fcal_t$ is a $\sigma$-algebra representing the information available at time $t$. For instance, $\Fcal$ may be the filtration generated by the observations $X_0, \dots, X_t$.
Then, given a sequential data stream $X_0, X_1, \dots$, the goal of sequential hypothesis testing is to maintain and update a running measure of evidence against $H_0$ as new observations arrive. In the e-value framework, this can be achieved by constructing a sequence $E_0, E_1, \dots$ adapted to $\Fcal$---so that each $E_t$ depends only on data observed up to time $t$---and satisfying
\begin{equation}\label{app:e-values-sequential}
    \EE_{H_0}[E_t \mid \Fcal_{t-1}] \leq 1.
\end{equation}
The above condition is the sequential analogue of the defining property in Eq.~\ref{app:e-values-def}. For instance, if the observations $X_t$ are independent, one may construct each $E_t$ from $X_t$ alone, in which case the e-values $E_0,\dots,E_t$ are mutually independent. In Section~\ref{sec:agent}, we adopt this construction, but we emphasize that more complex e-values can be defined if the experimental protocol requires it. For example, in Eq.~\ref{eq:e-values-conditional} we consider a data-dependent construction in which new evidence is collected only if a prior experiment satisfies certain conditions.

By interpreting the quantity $E_t$ as the new evidence against $H_0$ obtained at time $t$, a canonical way to construct a sequential test is to define a stochastic test process $M = (M_t)_{t\geq0}$ as
\begin{equation}
M_t =
\begin{dcases}
1 & t = 0 \\
E_t\cdot M_{t-1} & t \geq 1,
\end{dcases}
\end{equation}
which simply corresponds to aggregating the previous e-values multiplicatively.\footnote{In Eq.~\ref{eq:mixture-process} of Appendix~\ref{app:general-model} we discuss a different method to construct a test process $M$ without multiplying e-values.} By contrast, combining p-values in a sequential or dependent setting typically requires specialized corrections to maintain validity~\cite{GrunwaldSafe}.
Whenever the e-values satisfy Eq.~\ref{app:e-values-sequential}, the above process $M$ forms a (non-negative) supermartingale under $H_0$, \ie,
\begin{equation}
    \EE_{H_0}[M_t \mid \Fcal_{t-1}] \leq M_{t-1}.
\end{equation}

Intuitively, this property means that if $H_0$ is true, the value of $M_t$ does not, in expectation, increase over time. Conversely, sustained growth of the process $M_t$ provides evidence against $H_0$. Thus, a natural sequential test to reject $H_0$ is to monitor whether $M_t \geq 1/\kappa$ for a chosen $\kappa \in (0,1)$, and to reject $H_0$ as soon as this condition is met. Ville's inequality~\citep{ville1939étude}, a sequential extension of Markov's inequality, guarantees that this procedure controls the false positive rate uniformly over time:
\begin{theorem}[Ville's inequality]\label{thm:ville}
    If the process $M$ is a non-negative supermartingale, then,
    \begin{equation}
        P_{H_0}\!\left( \exists t \in \mathbb{N} : M_t \geq 1/\kappa \right) = P_{H_0}\!\left( \sup_{t\geq 0} M_t \geq 1/\kappa \right) \leq \kappa.
    \end{equation}
\end{theorem}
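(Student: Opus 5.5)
The natural approach is to reduce Ville's inequality to the optional stopping theorem for non-negative supermartingales, combined with Markov's inequality. We fix $\kappa\in(0,1)$ and, for each finite horizon $N\in\mathbb{N}$, introduce the stopping time
\begin{equation*}
    \sigma_N = N \wedge \min\{ t \geq 0 \colon M_t \geq 1/\kappa \},
\end{equation*}
which is bounded by $N$ and adapted to the filtration $\Fcal$ because the event $\{M_t \geq 1/\kappa\}$ is $\Fcal_t$-measurable.

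The first step is optional stopping for a bounded stopping time. Since $M$ is a supermartingale under $H_0$ and $\sigma_N\le N$, we have $\EE_{H_0}[M_{\sigma_N}] \le \EE_{H_0}[M_0] = 1$. This can be verified by writing $M_{\sigma_N} = M_0 + \sum_{t=1}^{N} \mathds{1}\{\sigma_N \ge t\}(M_t - M_{t-1})$. Each indicator $\mathds{1}\{\sigma_N\ge t\}$ is $\Fcal_{t-1}$-measurable, so the tower property together with $\EE_{H_0}[M_t-M_{t-1}\mid\Fcal_{t-1}]\le 0$ makes every summand have non-positive expectation. This requires $M_t$ to be integrable, which holds because $M_t\ge 0$ and $\EE_{H_0}[M_t] \le 1$ by induction.

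The second step is Markov's inequality. On the event $\{\max_{t\le N} M_t \ge 1/\kappa\}$ the process crosses the threshold at or before $N$, so $M_{\sigma_N}\ge 1/\kappa$ there. Non-negativity of $M$ then gives
\begin{equation*}
    P_{H_0}\Big(\max_{t\le N} M_t \ge 1/\kappa\Big) \le \kappa\, \EE_{H_0}\big[M_{\sigma_N}\big] \le \kappa .
\end{equation*}

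The final step is passing to the limit $N\to\infty$, and this is where I expect the only real subtlety, since it is the step that removes the finite horizon. The events $\{\max_{t\le N} M_t\ge 1/\kappa\}$ increase in $N$ to $\{\exists t\in\mathbb{N}\colon M_t\ge 1/\kappa\}$, so continuity of probability from below yields the bound $\kappa$ for the infinite-horizon event. The identity with $\{\sup_t M_t \ge 1/\kappa\}$ needs separate care. The inclusion $\{\exists t\colon M_t\ge 1/\kappa\}\subseteq\{\sup_t M_t\ge 1/\kappa\}$ is trivial, but the reverse can fail only when the supremum equals $1/\kappa$ without being attained. I would handle this by applying the bound at every threshold $c<1/\kappa$, which gives $P_{H_0}(\sup_t M_t \ge 1/\kappa)\le \inf_{c<1/\kappa} 1/c = \kappa$, using $\{\sup_t M_t\ge 1/\kappa\}\subseteq\{\exists t\colon M_t > c\}$ for each such $c$. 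This establishes the stated inequality for both forms of the event. If $H_0$ is composite, I would run the whole argument separately for each distribution in $H_0$.
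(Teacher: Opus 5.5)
The paper does not prove this theorem; it quotes it as a classical result of Ville, so there is no in-paper argument to compare against. Your proof is the standard one: optional stopping at the bounded first-passage time $\sigma_N$, Markov's inequality applied to $M_{\sigma_N}$, then continuity from below as $N\to\infty$. Every step is correct, including the $\mathcal{F}_{t-1}$-measurability of $\mathds{1}\{\sigma_N\ge t\}$ and the integrability-by-induction remark.

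Two comments concern the statement rather than your argument.

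First, you use $M_0=1$, which comes from the paper's definition of the test process. The theorem as stated needs this hypothesis (or $\mathbb{E}_{H_0}[M_0]\le 1$). Without it, the constant process $M_t\equiv 2/\kappa$ is a non-negative supermartingale that violates the bound.

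Second, you bound $P_{H_0}(\exists t\colon M_t\ge 1/\kappa)$ and $P_{H_0}(\sup_t M_t\ge 1/\kappa)$ separately and never prove the middle equality of the display. That is the right choice, because the equality is false in general. Here is a counterexample:
\begin{itemize}
\item Set $a_t=\kappa^{-1}-(\kappa^{-1}-1)2^{-t}$, so $a_0=1$ and $a_t\uparrow 1/\kappa$ strictly from below.
\item Let $M$ jump from $a_t$ to $a_{t+1}$ with probability $a_t/a_{t+1}$, and to the absorbing value $0$ otherwise.
\item This is a non-negative martingale with $M_0=1$ that never reaches $1/\kappa$.
\item With probability $\prod_{t\ge 0}a_t/a_{t+1}=a_0\kappa=\kappa$ it survives forever and has $\sup_t M_t=1/\kappa$.
\end{itemize}
Hence $P(\exists t\colon M_t\ge 1/\kappa)=0<\kappa=P(\sup_t M_t\ge 1/\kappa)$. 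So your separate treatment of the supremum event via thresholds $c<1/\kappa$ is genuinely needed, not a technicality, and the bound it yields is attained.

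In the paper's protocol the test process only moves during the finite horizon $[T]$ and can be extended by $E_t\equiv 1$ afterwards. There the supremum is a maximum, so the two events coincide and the equality does hold.
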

Moreover, validity is preserved if the test is stopped at any stopping time
$\tau$ adapted to $\Fcal$ (but possibly depending on the observed data):
\begin{equation}
    P_{H_0}(M_\tau \geq 1/\kappa) \leq \kappa,
\end{equation}
which is a property known as \emph{any-time validity} and provides a principled solution to the problem of p-hacking.

Although the above results hold for any choice of e-values satisfying Eq.~\ref{app:e-values-sequential}, a natural practical question is which specific form to adopt. We discuss this, and how different choices affect the approval process, in Appendix~\ref{app:general-model}.

\clearpage
\newpage

\section{Extension to Arbitrary Belief Functions and e-values}\label{app:general-model}

The approval protocol introduced in Section~\ref{sec:model} can be formulated in a more general setting (at the expense of losing tractability), as we now outline.

\xhdr{Generalization of the approval process}
In a general setting, the agent may begin the approval process with an arbitrary initial belief $B_0 \in \Delta(\Theta)$, where $\Theta$ denotes a general parameter space for the unknown parameter $\theta^*$ characterizing the product. At each step of the approval process, the agent selects an action $n_t \in \Acal^{\texttt{gen}} \cup \{0\}$, where $\Acal^{\texttt{gen}}$ is a (potentially infinite) set describing the design of the next randomized controlled trial (\eg, sample size, participant characteristics, etc.), and $n_t = 0$ again denotes the option to opt out. If $n_t \neq 0$, an outcome $X_t \sim P(X_t \mid n_t, \theta^*)$ is observed and the agent incurs a cost $c(n_t)$, where $P(\cdot \mid n_t, \theta^*)$ is a likelihood function characterizing the data-collection process specific to each experimental setting and $c \,\colon\, \Acal^{\texttt{gen}} \to \mathbb{R}_{+}$ is a cost function. The agent then updates its belief via the Bayesian posterior:
\begin{equation*}
    B_{t+1} (\theta) = \frac{B_t(\theta) \cdot P(X_t \mid n_t, \theta)}{\int_\Theta B_t(\theta) \cdot P(X_t \mid n_t, \theta)\, d\theta}.
\end{equation*}

To decide on approval, the principal partitions the parameter space as $\Theta = \Theta_0 \sqcup \Theta_1$ and defines the following null and alternative hypotheses:
\begin{equation}
    \begin{dcases}
        H_0=\left\{ {\theta^* \in \Theta_0} \right\} & (\text{null})\\
        H_1=\left\{ \theta^* \in \Theta_1 \right\} & (\text{alternative}).
    \end{dcases}
\end{equation}
This formulation allows for richer principal objectives. For instance, the principal may be concerned not only with a drug's efficacy but also with its safety, in which case $H_0$ may correspond to treatments that are either insufficiently effective or unsafe. To conduct the above hypothesis test sequentially, at each time step, the principal may compute any test process value $M_t$, subject only to two conditions: (i) the test process $M$ must be predictable with respect to the filtration $\Fcal = (\Fcal_t)_{t=0}^\infty$ generated by the data and agent actions $X_0, n_0, X_1, n_1, \dots$, and (ii) the test process $M$ must be a supermartingale under $H_0$. The principal can then reject $H_0$ whenever $M_t \geq 1/\kappa$ at any time step, as described in Appendix~\ref{app:background-seq}. A particular construction of the test process that generalizes Eq.~\ref{eq:test-process} proceeds multiplicatively as
\begin{equation}
M_{t}=
\begin{dcases}
    1 & t=0,\\
    E_{t-1}(X_{t-1}, n_{t-1}) \cdot M_{t-1} & t\geq1,
\end{dcases}
\end{equation}
where each $E_{t-1}$ is an e-value adapted to $\Fcal$ that may depend on all previously observed experimental outcomes and actions, \ie, $X_0, n_0, X_1, n_1, \dots, X_{t-1}, n_{t-1}$, and satisfies:
\begin{equation*}
    \EE_{H_0}[E_t(n_t,X_t) \mid \Fcal_{t-1}] \leq 1.
\end{equation*}
This allows modeling adaptive experimental designs in which future trials depend on past outcomes. Such settings arise naturally in multi-stage clinical trials~\cite{fda2019effectiveness}, where progression to a subsequent phase may be contingent on earlier success. This can be represented by defining:
\begin{equation}\label{eq:e-values-conditional}
    E_t(n_t, X_t) =
    \begin{dcases}
        E(n_t, X_t) & \text{if } g(n_0,X_0,\dots,n_{t-1},X_{t-1}) = 1, \\
        0 & \text{if } g(n_0,X_0,\dots,n_{t-1},X_{t-1}) = 0,
    \end{dcases}
\end{equation}
where $E(\cdot,\cdot)$ is a fixed e-value and $g$ is a decision rule determining whether the agent is permitted to conduct the next experiment. If $g(n_0,X_0,\dots,n_{t-1},X_{t-1}) = 0$ at any time $t-1$, the test process stops accumulating evidence and approval becomes impossible after time $t$.

Given a test process $M$, the utilities of both the agent and the principal can be defined analogously to Eq.~\ref{eq:true-utilities}, and a belief Markov decision process $\Mcal^\varepsilon$ can be formulated as in Section~\ref{sec:MDP}, with the difference that the state space now includes any possible belief and any possible value of the test process, \ie,\footnote{If the e-values depend on all past observations, as in Eq.~\ref{eq:e-values-conditional}, the state space may need to be augmented to explicitly track the history of experimental outcomes.}
\begin{equation*}
    \Scal^{\texttt{gen}} = \Delta(\Theta) \times \underbrace{\mathbb{R}_{+}}_{\text{total cost } C} \times \underbrace{\mathbb{R}_{+}}_{\text{test process } M} \cup \quad S^{\texttt{out}},
\end{equation*}
with a state $S_t$ transitioning to $S_{t+1}$ after taking action $n_t$ and observing outcome $X_t$ according to:
\begin{equation}
S_{t+1} =
    \begin{dcases}
    ( B_{t+1},\; C_t + c(n_t),\; M_{t+1}(n_t,X_t))
    & \text{if } n_t \neq 0 \text{ and } 0<f(S_t) < 1/\kappa,\\
    S
    & \text{if } f(S_t) \geq 1/\kappa,\\
    S^{\texttt{out}}
    & \text{if } n_t = 0 \text{ or } S_t = S^{\texttt{out}},
\end{dcases}
\end{equation}
where for $S_t=(B_t, C_t, M_t) \neq S^{\texttt{out}}$, $B_{t+1}(\theta) \propto B_t(\theta) \cdot P(X_t \mid n_t, \theta)$. In the above, $f$ is the function that maps a state to the test process value, namely $f(S_t) = M_t$ for $S_t=(B_t, C_t, M_t) \neq S^{\texttt{out}}$ and $f(S^{\texttt{out}}) = 0$. While this formulation is fully general, it also significantly complicates the subsequent analysis of the optimal agent policy.

\xhdr{Alternative e-values and statistical power}
For a given choice of test process $M$, it is important to note that the statistical power of the test---verifying whether $M_t \geq 1/\kappa$ at any time step---need not be $1$. That is, with nonzero probability it may occur that, even for $\theta^* \geq \theta^{\texttt{b}}$, the approval process is unsuccessful.

To quantify the statistical power of sequential tests based on e-values (or supermartingales), there exists a canonical notion called \emph{e-power}. For a given e-value $E$ for the null $H_0$, its e-power against an alternative $L \in H_1$ is defined as $\EE_{L}[\log E]$. Under this definition, likelihood ratios are the optimal e-values for simple nulls and alternatives, \ie, when $H_0$ and $H_1$ each correspond to a single probability distribution. In contrast, when $H_0$ and $H_1$ are composite, as in the approval process of Section~\ref{sec:agent}, the choice of e-value becomes more intricate~\citep{waudbysmith2025universallogoptimalitygeneralclasses}. For concreteness, in Section~\ref{sec:agent} we therefore adopt a general strategy that constructs e-values by exponentiating random variables~\citep{PenaExp, waudbysmith2022estimatingmeansboundedrandom}. The rationale is as follows: if we wish to test whether an arbitrary random variable $X$ has mean $\EE[X] < \theta^{\texttt{b}}$, we consider the quantity $X - \theta^{\texttt{b}}$, which we expect to be large when $\EE[X] > \theta^{\texttt{b}}$ and small otherwise. To form an e-value, we consider the positive quantity $\exp(X - \theta^{\texttt{b}})$; since this can have expectation exceeding $1$ even when $\EE[X] < \theta^{\texttt{b}}$, we obtain a valid e-value satisfying Eq.~\ref{app:e-values-def} by shifting the argument by an appropriately chosen constant $\lambda$, \ie, $\exp(X - \theta^{\texttt{b}} + \lambda)$, as given by Proposition~\ref{prop:binomial-e-value}. This e-value is particularly simple and admits a closed-form expression that simplifies the exposition in Section~\ref{sec:MDP}; however, in the context of the approval protocol in Section~\ref{sec:model}, other choices of e-values may achieve better statistical power for certain values of the efficacy parameter $\theta^*$. Indeed, for $E$ as defined in Proposition~\ref{prop:binomial-e-value} and using the notation therein, the e-power is:
\begin{equation*}
    \EE_{\theta^*}[\log E_t] = n_t \cdot \theta^* - n_t \cdot \log(1+\theta^{\texttt{b}}(e-1)).
\end{equation*}
Thus, $E_t$ has positive power $\EE_{\theta^*}[\log E_t] \geq 0$ if and only if $\theta^* \geq \log(1+\theta^{\texttt{b}}(e-1)) > \theta^{\texttt{b}}$. This means that if $\theta^{\texttt{b}} < \theta^* < \log(1+\theta^{\texttt{b}}(e-1))$, it may be that $E_t \not\to \infty$ even as $n \to \infty$. In other words, the process $M_t$ might never exceed the threshold $1/\kappa$ if $\theta^*$ is sufficiently close to---yet strictly above---$\theta^{\texttt{b}}$, even for RCTs of arbitrary size. The standard remedy (see Chapter 3 of~\citep{ramdas2025hypothesistestingevalues}) is to form a \emph{mixture} of e-values or supermartingales that generalizes Proposition~\ref{prop:binomial-e-value}. We now outline how our approval protocol can be extended to use such mixtures, and refer the reader to Appendix~\ref{app:mixture-results} for additional experimental results.

Our starting point is the simple observation that the e-value $E(X_t,n_t)$ in Proposition~\ref{prop:binomial-e-value} can be equivalently written as a likelihood ratio between a Binomial distribution evaluated at $\theta^{\texttt{b}}$ and one evaluated at a particular alternative:
\begin{equation*}
    E(X_t,n_t) = \left( \frac{\tilde{\theta}}{\theta^{\texttt{b}}}\right)^{X_t} \cdot \left( \frac{1-\tilde{\theta}}{1-\theta^{\texttt{b}}}\right)^{n_t - X_t},
\end{equation*}
where $\tilde{\theta} \geq \theta^{\texttt{b}}$ is the unique efficacy value satisfying:
\begin{equation*}
    e = \frac{\tilde{\theta}\cdot(1- \theta^{\texttt{b}})}{(1-\tilde{\theta})\cdot \theta^{\texttt{b}}}.
\end{equation*}

To generalize this construction, one can mix over other values of the alternative parameter. This is achieved by choosing any (smooth) distribution $P^{\texttt{mix}}$ over $[\theta^{\texttt{b}}, 1]$ and defining the mixture test process as:\footnote{An argument analogous to that of Proposition~\ref{prop:binomial-e-value} shows that $M^{\texttt{mix}}_{t}$ is a supermartingale under $H_0$.}
\begin{equation}\label{eq:mixture-process}
    M^{\texttt{mix}}_{t+1} = \int_{\theta^{\texttt{b}}}^1 P^{\texttt{mix}}(\theta) \cdot \left( \frac{{\theta}}{\theta^{\texttt{b}}}\right)^{\sum_{i=0}^t X_i} \cdot \left( \frac{1-{\theta}}{1-\theta^{\texttt{b}}}\right)^{\sum_{i=0}^t n_i - \sum_{i=0}^t X_i} d\theta.
\end{equation}
It can be shown that, if the true efficacy $\theta^* > \theta^{\texttt{b}}$ lies in the support of $P^{\texttt{mix}}$, then $M^{\texttt{mix}}_{t+1} \to \infty$ as $\sum_{i=0}^t n_i \to \infty$, \ie, the mixture achieves asymptotic power $1$~\citep{ramdas2023gametheoreticstatisticssafeanytimevalid}. In this context, Algorithm~\ref{alg:value_iteration_decomposition} can be used to compute the optimal agent policy for any subsidy with minimal modifications: it suffices to replace the test process function $f$ with
\begin{equation*}
    f^{\texttt{mix}}(\alpha,\beta) = \int_{\theta^{\texttt{b}}}^1 P^{\texttt{mix}}(\theta) \cdot \left( \frac{{\theta}}{\theta^{\texttt{b}}}\right)^{\alpha-\alpha_0} \cdot \left( \frac{1-{\theta}}{1-\theta^{\texttt{b}}}\right)^{\beta - \beta_0} d\theta,
\end{equation*}
and Algorithm~\ref{alg:epsilon} can then be used to compute the principal's optimal subsidy with no further modification.\footnote{Note that the fact that Algorithm~\ref{alg:epsilon} returns an optimal subsidy only relies on the agent's optimal utility being a piecewise linear and convex function of the subsidy (Proposition~\ref{prop:utility-piecewise-convex}), which holds independently of the specific form of $f$, as can be seen from its proof in Appendix~\ref{app:proof-utility-piecewise-convex}}
\begin{figure}[H]
    \centering
    \includegraphics[width=0.5\linewidth]{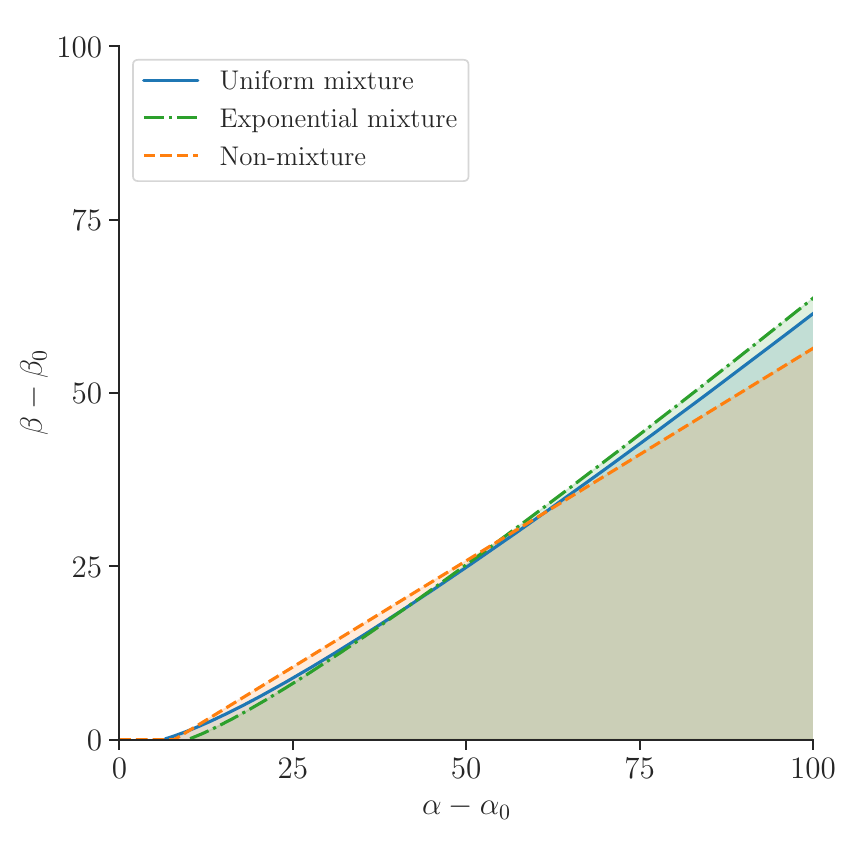}
    \vspace{-0.5cm}
    \caption{\textbf{Rejection regions in belief space.} The figure shows, for each agent belief with parameters $(\alpha, \beta)$, whether the condition $f(\alpha,\beta) \geq 1/\kappa$ is satisfied (\ie, whether $H_0$ is rejected; shaded region), under different test processes: the test process defined using the non-mixed e-values in Proposition~\ref{prop:binomial-e-value} (orange), a test process defined in Eq.~\ref{eq:mixture-process} with a uniform mixture $P^{\texttt{mix}} = \mathrm{U}(\theta^{\texttt{b}},1)$ (blue), and a test process defined in Eq.~\ref{eq:mixture-process} with an exponential mixture $P^{\texttt{mix}} = \mathrm{Exp}(10)$ restricted to $(\theta^{\texttt{b}},1)$ (green).
    Here, we set $\kappa = 0.05$, $\theta^{\texttt{b}} = 0.5$, and $\alpha_0 = \beta_0 = 1$.
    }
    \label{fig:mixture-rejection-regions}
\end{figure}
We have presented the non-mixture process $M$ defined in Section~\ref{sec:model} in the main text for two reasons. First, it yields closed-form expressions which simplify the exposition and provide a more transparent (geometric) intuition for the belief Markov decision process (see Figure~\ref{fig:geometry}). Second, the multiplicative property in Eq.~\ref{eq:test-process} is exploited extensively in the theoretical proofs in Appendix~\ref{app:proofs}, whereas the mixture process $M^{\texttt{mix}}$ does not satisfy this property in non-trivial cases where $P^{\texttt{mix}}$ is not a point mass. However, we conjecture that all our theoretical results in Section~\ref{sec:MDP} and~\ref{sec:agent} carry over to the mixture setting---in particular, because the belief $(\alpha_t,\beta_t)$ remains a sufficient statistic for $M^{\texttt{mix}}_t$ via the monotone function $f^{\texttt{mix}}$, and the Bayesian update of the agent is unchanged---but adapting the proofs would require routing arguments through the Markov property of the belief MDP rather than through the algebraic identity $M_{t+1} = E_t \cdot M_t$. We leave a formal treatment of this extension for future work.

In Appendix~\ref{app:mixture-results}, we present experimental results using the above mixture test process with a uniform mixture $P^{\texttt{mix}} = \mathrm{U}(\theta^{\texttt{b}},1)$. Further, in Figure~\ref{fig:mixture-rejection-regions}, we compare the regions of the belief space where $f(\alpha,\beta) \geq 1/\kappa$, \ie, where $H_0$ is rejected, for different test processes. The test process without mixture (based on Proposition~\ref{prop:binomial-e-value}) yields a linear rejection region, while the mixture processes yield nonlinear regions whose boundary slope increases for larger values of $(\alpha,\beta)$.

\clearpage
\newpage

\section{Geometry of the Belief Markov Decision Process}\label{app:geometry}

In Figure~\ref{fig:geometry} we illustrate the geometry of the belief MDP $\Mcal^\varepsilon$ in the $(\alpha, \beta)$-plane corresponding to all the possible beliefs for the agent. In blue, we represent the beliefs to which the agent may transition after selecting an action $n_t>0$ (the figure shows $n_t=3$ for concreteness). The exact next state depends on the realized outcome $X_t$ of the experiment, as described in Eq.~\ref{eq:transition-dynamics}.
Further, in light of Proposition~\ref{prop:test-process}, note that each pair $(\alpha, \beta)$ is associated with a value $f(\alpha,\beta)$ for the test process $M$, with pairs such that $f(\alpha,\beta) \geq 1/\kappa$ corresponding to a state where $H_0$ as been rejected by the principal. The condition

\begin{equation}
    f(\alpha, \beta) \geq 1/\kappa
    \iff 
    \alpha - \alpha_0
    - (\alpha + \beta - \alpha_0 - \beta_0)\log(1+\theta^{\texttt{b}}(e-1)) \geq \log(1/\kappa)
\end{equation}
corresponds to a linear region in the $(\alpha, \beta)$-plane, represented in green in Figure~\ref{fig:geometry} and labelled ``Reject $H_0$''. Note that the reward function defined in Eq.~\ref{eq:MDP-reward-def} only includes the positive term $\rho^{\texttt{A}}$ for a transition that crosses the boundary of this region. 

Similarly, the red region corresponds to beliefs for which---given a fixed total cost $C\geq 0$ and initial time step $l\in[T]$--- it is optimal for the agent to opt out of the approval process by choosing $n=0$. This region is bounded by the curve $\alpha \mapsto \tilde{\beta}(\alpha)$ in Proposition~\ref{prop:opt-out-mono}. Refer to Appendix~\ref{app:additional-experimental-results} for concrete numerical solutions to the MDP $\Mcal^\varepsilon$.

\begin{figure}[H]
    \centering
    \begin{tikzpicture}[scale=0.6]
    \draw[->] (0,0) -- (11,0) node[right] {$\alpha-\alpha_0$};
    \draw[->] (0,0) -- (0,11) node[above] {$\beta-\beta_0$};
    \foreach \i in {0,...,10} {
        \foreach \j in {0,...,10} {
            \fill[black] (\i,\j) circle (1.5pt);
        }
    }
    \fill[blue] (0,0) circle (2pt);
    \fill[blue] (3,0) circle (2pt);
    \fill[blue] (2,1) circle (2pt);
    \fill[blue] (1,2) circle (2pt);
    \fill[blue] (0,3) circle (2pt);
    \draw[decorate, decoration={brace, amplitude=8pt, mirror}, thick, blue] 
        (0,-0.2) -- (3,-0.2) 
        node[midway, below=8pt, blue] {$n$};
    \draw[dashed, blue] (0,3) -- (3,0);
    \foreach \i/\j/\b in {0/3/15,1/2/12,2/1/10,3/0/8} {
        \draw[->, blue, thick] (0,0) to[bend left=\b] (\i,\j);
    }
    \draw[thick, red!60!black, -stealth]
          (0,4) -- (1,4) -- (1,5) -- (2,5) -- (3,5) -- (3,6)
        -- (4,6) -- (4,7) -- (5,7) -- (5,8) -- (6,8)
        -- (7,8) -- (7,9) -- (8,9) -- (9,9) -- (10.2,9);
    \foreach \x/\y in {
        0.5/4,
        1.5/5,
        2.5/5,
        3.5/6,
        4.5/7,
        5.5/8,
        6.5/8,
        7.5/9,
        8.5/9,
        9.5/9
    } {
        \draw[-stealth, thick, red!60!black] (\x-0.001,\y) -- (\x+0.001,\y);
    }
    \foreach \x/\y in {
        1/4.5,
        3/5.5,
        4/6.5,
        5/7.5,
        7/8.5
    } {
        \draw[-stealth, thick, red!60!black] (\x,\y-0.001) -- (\x,\y+0.001);
    }
    \fill[red!20, opacity=0.5]
        (0,4) -- (1,4) -- (1,5) -- (2,5) -- (3,5) -- (3,6)
        -- (4,6) -- (4,7) -- (5,7) -- (5,8) -- (6,8)
        -- (7,8) -- (7,9) -- (8,9) -- (9,9) -- (10.2,9)
        -- (10.2,10.2) -- (0,10.2) -- cycle;
    \node[red!60!black] at (2.5,7.5) {Opt out region};
    \node[red!60!black] at (11,9) {$\tilde{\beta}(\alpha)$};
    \draw[thick, green!60!black] (5,0) -- (10.2,5.2);
    \fill[green!20, opacity=0.5]
        (5,0) -- (10.2,5.2) -- (10.2,0) -- cycle;
    \fill[green!60!black] (5,0) circle (2pt)
        node[below] {$\frac{\log 1/\kappa}{1-\log(1+\theta^{\texttt{b}}(e-1))}$};
    \node[green!60!black] at (8.5,1.5) {Reject $H_0$};
    \end{tikzpicture}
    \caption{\textbf{Illustration of the geometry of the state space in the MDP $\Mcal^\varepsilon$.}
    }
    \label{fig:geometry}
\end{figure}

\clearpage
\newpage

\section{Value Iteration in the Belief Markov Decision Process}\label{app:mdp-algorithms}

In Algorithm~\ref{alg:value_iteration}, we present, for completeness, an adaptation of the value-iteration algorithm~\citep{Sutton1998} to exactly compute an optimal policy $\pi^\varepsilon$ for the MDP $\Mcal^\varepsilon$.\footnote{$\mathrm{BB}(n,\alpha,\beta)$ denotes the Beta–Binomial distribution.}

In the proof of Proposition~\ref{prop:reacheable-set} in Appendix~\ref{app:proof-reacheable-set}, we showed that the set of accessible states in $t$ time steps, which we denoted by $\Scal^{\texttt{r}}(t)$, satisfies:
\begin{equation*}
    |\Scal^{\texttt{r}}(t)| = \Ocal\left( (n^{\texttt{max}})^2\cdot t^2 \right).
\end{equation*}

Building on this bound, it is straightforward to verify that the value-iteration method in Algorithm~\ref{alg:value_iteration} has a time complexity $\Ocal\left( (n^{\texttt{max}})^4 \cdot T^3 \right)$. Indeed, for a given $l$, the set $|\Scal^{\texttt{r}}(l) \setminus \{S^{\texttt{out}}\}|=\Ocal\left( (n^{\texttt{max}})^2\cdot l^2 \right)$, and for each such state, the inner loops (line~\ref{line:value-iteration-state} and~\ref{line:value-iteration-action}) iterate over all actions $n$ and sums over possible outcomes $x$. That is, for each $l$, the algorithm performs $\Ocal\left( (n^{\texttt{max}})^4\cdot l^2 \right)$
iterations. Summing over $l$, we conclude that the total complexity is $\Ocal\left( (n^{\texttt{max}})^4 \cdot 
T^3 \right)$. We refer the reader to Appendix~\ref{app:experimental-details} for further details regarding the implementation of the algorithm.

\begin{algorithm}
\caption{Finds the Optimal Agent Policy $\pi^\varepsilon$ for the belief MDP $\Mcal^\varepsilon$}
\label{alg:value_iteration}
\begin{algorithmic}[1]
\State \textbf{Input} Subsidy $\varepsilon$, horizon $T$, max trials $n^{max}$, approval benefit $\rho^A$, cost function $c(n) = c_0 + c_1 n$, threshold $\kappa$, prior parameters $(\alpha_0, \beta_0)$, test process function $f$

\State \textbf{Initialize} $V^\varepsilon(S, T+1) \gets 0$ \textbf{for all} states $S\in \Scal^{\texttt{r}}$

\For{$l = T$ \textbf{down to} $0$}
    \For{each $S = (\alpha, \beta, C) \in \Scal^{\texttt{r}}(l) \setminus \{S^{\texttt{out}}\}$ such that $f(\alpha, \beta) < 1/\kappa$} \label{line:value-iteration-state}
        
        \For{each action $n \in \{1, \dots, n^{max}\}$}\label{line:value-iteration-action}
            \State $Q^\varepsilon(S, n,l) \gets -(c_0 + c_1 n) + \sum_{x=0}^n \text{BB}(n, \alpha, \beta)(x) \cdot \text{NextValue}(x, n)$
            \State \textbf{where} $\text{NextValue}(x, n) =$
            \State \hspace{\algorithmicindent}$\begin{cases} 
                \rho^A + \varepsilon(C + c_0 + c_1 n) & \text{if } f(\alpha+x, \beta+n-x) \geq 1/\kappa \\
                V^\varepsilon(\alpha+x, \beta+n-x, C+c_0+c_1 n, l+1) & \text{otherwise}
            \end{cases}$
        \EndFor
        
        \State $V^\varepsilon(S, l) \gets \max \left\{ 0, \max_{n \in \{1, \dots, n^{max}\}} Q^\varepsilon(S, n,l) \right\}$
        
        \If{$V^\varepsilon(S, l) > 0$}
            \State $\pi^\varepsilon(S, l) \gets \arg\max_{n \in \{1, \dots, n^{max}\}} Q^\varepsilon(S, n,l)$ \LineComment{Ties broken arbitrarily}
        \Else
            \State $\pi^\varepsilon(S, l) \gets 0$ 
        \EndIf
        
    \EndFor
\EndFor

\State \textbf{return} $V^\varepsilon, \pi^\varepsilon$
\end{algorithmic}
\end{algorithm}
Note that the above algorithm can readily be modified to return the linear decomposition of the optimal value function at a subsidy $\varepsilon$ in Proposition~\ref{prop:linear-value}, namely $V^\varepsilon_{\pi^\varepsilon} = V^0_{\pi^\varepsilon} + \varepsilon \cdot A_{\pi^\varepsilon}$, as shown in Algorithm~\ref{alg:value_iteration_decomposition}.

\begin{algorithm}
\caption{Finds the Optimal Agent Policy $\pi^\varepsilon$ and Value Function Decomposition}
\label{alg:value_iteration_decomposition}
\begin{algorithmic}[1]
\State \textbf{Input} Subsidy $\varepsilon$, horizon $T$, max trials $n^{max}$, approval benefit $\rho^A$,
       cost $c(n) = c_0 + c_1 n$, threshold $\kappa$, prior $(\alpha_0, \beta_0)$, test process function $f$

\State \textbf{Initialize} $V^\varepsilon(S, T+1) \gets 0$,\; $V^0(S, T+1) \gets 0$,\;
       $A(S, T+1) \gets 0$ \textbf{ for all } $S\in \Scal^{\texttt{r}}$

\For{$l = T$ \textbf{down to} $0$}
    \For{each $S = (\alpha, \beta, C) \in \Scal^{\texttt{r}}(l) \setminus \{S^{\texttt{out}}\}$
         s.t.\ $f(\alpha, \beta) < 1/\kappa$}
        
        \For{each action $n \in \{1, \dots, n^{max}\}$}

            \State $c_n \gets c_0 + c_1 n$,\quad
                   $(\alpha'_x, \beta'_x) \gets (\alpha{+}x,\; \beta{+}n{-}x)$

            \State $Q^\varepsilon(S,n,l) \gets -c_n +
                   \displaystyle\sum_{x=0}^{n} \mathrm{BB}(n,\alpha,\beta)(x)
                   \cdot v^\varepsilon(x,n)$
            \State $Q^0(S,n,l) \gets -c_n +
                   \displaystyle\sum_{x=0}^{n} \mathrm{BB}(n,\alpha,\beta)(x)
                   \cdot v^0(x,n)$
            \State $Q^A(S,n,l) \gets
                   \displaystyle\sum_{x=0}^{n} \mathrm{BB}(n,\alpha,\beta)(x)
                   \cdot a(x,n)$

            \State \textbf{where} (writing $\mathbf{approved}$ for $f(\alpha'_x,\beta'_x)\ge 1/\kappa$):
            \[
              v^\varepsilon(x,n) = \begin{cases}
                \rho^A + \varepsilon(C+c_n) & \text{if approved} \\
                V^\varepsilon(\alpha'_x,\beta'_x,C{+}c_n,l{+}1) & \text{otherwise}
              \end{cases}
            \]
            \[
              v^0(x,n) = \begin{cases}
                \rho^A & \text{if approved} \\
                V^0(\alpha'_x,\beta'_x,C{+}c_n,l{+}1) & \text{otherwise}
              \end{cases}
            \]
            \[
              a(x,n) = \begin{cases}
                C + c_n & \text{if approved} \\
                A(\alpha'_x,\beta'_x,C{+}c_n,l{+}1) & \text{otherwise}
              \end{cases}
            \]
            \Comment{Note: $Q^\varepsilon = Q^0 + \varepsilon \cdot Q^A$ by construction}
        \EndFor

        \State $n^* \gets \arg\max_{n \in \{1,\dots,n^{max}\}} Q^\varepsilon(S,n,l)$
               \Comment{Ties broken arbitrarily}

        \If{$Q^\varepsilon(S, n^*, l) > 0$}
            \State $V^\varepsilon(S,l) \gets Q^\varepsilon(S,n^*,l)$,\quad
                   $V^0(S,l) \gets Q^0(S,n^*,l)$,\quad
                   $A(S,l) \gets Q^A(S,n^*,l)$
            \State $\pi^\varepsilon(S,l) \gets n^*$
        \Else
            \State $V^\varepsilon(S,l) \gets 0$,\quad $V^0(S,l) \gets 0$,\quad $A(S,l) \gets 0$
            \State $\pi^\varepsilon(S,l) \gets 0$
        \EndIf

    \EndFor
\EndFor

\State \textbf{return} $V^\varepsilon,\; V^0,\; A,\; \pi^\varepsilon$
\end{algorithmic}
\end{algorithm}

\clearpage
\newpage

\section{Proofs}\label{app:proofs}

\subsection{Proof of Proposition~\ref{prop:binomial-e-value}}

Fix $\theta^*\in [0,1]$ and $n_t>0$. Let $X_t\sim \mathrm{Bin}(n_t,\theta^*)$ and define
\begin{equation*}
    E(X_t, n_t) = \exp\left( X_t - n_t\cdot \log(1+\theta^{\texttt{b}}(e-1)) \right).
\end{equation*}
We explicitly show that the expectation of the above random variable is upper-bounded by $1$ if $ \theta^* \in H_0 = \left\{ {\theta^* \, \colon \, \theta^* < \theta^{\texttt{b}}} \right\}$:

\begin{align*}
            \EE_{X_t \sim \mathrm{Bin}(n_t,\theta^*)}[E(X_t, n_t)]&=\EE_{X_t \sim \mathrm{Bin}(n_t,\theta^*)}[\exp\left( X_t - n_t\cdot \log(1+\theta^{\texttt{b}}(e-1)) \right)] \\
            &= \frac{1}{\left( 1+\theta^{\texttt{b}}(e-1) \right)^{n_t}} \cdot \EE_{X_t \sim \mathrm{Bin}(n_t,\theta^*)}[\exp\left( X_t \right)]\\
            &\overset{(*)}{=} \frac{1}{\left( 1+\theta^{\texttt{b}}(e-1) \right)^{n_t}} \cdot (1-\theta^* + \theta^*\cdot e)^{n_t}\\
            &\overset{(**)}{\leq } \frac{1}{\left( 1+\theta^{\texttt{b}}(e-1) \right)^{n_t}} \cdot (1-\theta^{\texttt{b}} + \theta^{\texttt{b}}\cdot e)^{n_t}\\
            &=1
\end{align*}
where in $(*)$ we have used the formula for the moment-generating function of the Binomial distribution, and in $(**)$ we have used that, by definition, if $\theta^* \in H_0$, then $\theta^* < \theta^{\texttt{b}}$.

\clearpage
\newpage

\subsection{Proof of Proposition~\ref{prop:test-process}}
We begin by noting that the definition of the test process in Eq.~\ref{eq:test-process} together with the e-value in Eq.~\ref{eq:binomial-e-value} implies that, if the agent has continued the approval process up to time $t\leq T$ by selecting non-null $n_0,\dots,n_t$, then:

\begin{align}\label{app:aux-test-process-simplification}
    M_{t+1} &= \prod_{s=0}^t \exp\left( X_s - n_s\cdot \log(1+\theta^{\texttt{b}}(e-1)) \right)\\
    &=\exp\left( \sum_{s=0}^t X_s - \sum_{s=0}^t n_s \cdot \log(1+\theta^{\texttt{b}}(e-1)) \right)
\end{align}

Now, unfolding Eq.~\ref{eq:bayes-belief} for $t$ time steps, we can write the parameters $\alpha_t$ and $\beta_t$ that characterize the belief of the agent at time $t$ as:
\begin{align}\label{eq:bijection-belief-outcomes}
&\begin{dcases}
        \alpha_{t+1} = \alpha_0 + \sum_{s=0}^t X_s\\
        \beta_{t+1} = \beta_0 + \sum_{s=0}^t n_s - \sum_{s=0}^t X_s,
\end{dcases}\\
\iff&
\begin{dcases}
        \sum_{s=0}^t X_s = \alpha_{t+1} - \alpha_0 \\
        \sum_{s=0}^t n_s = \beta_{t+1} - \beta_0 + \sum_{s=0}^t X_s = \alpha_{t+1} - \alpha_0 + \beta_{t+1} - \beta_0,
\end{dcases}
\end{align}
Thus, substituting the above in Eq.~\ref{app:aux-test-process-simplification} we readily obtain:
\begin{align*}
    M_{t+1} &= \exp\left( \sum_{s=0}^t X_s - \sum_{s=0}^t n_s \cdot \log(1+\theta^{\texttt{b}}(e-1)) \right)\\
    &=\exp\left( \alpha_{t+1} - \alpha_0 - ( \alpha_{t+1} - \alpha_0 + \beta_{t+1} - \beta_0) \cdot \log(1+\theta^{\texttt{b}}(e-1)) \right)\\
\end{align*}

\clearpage
\newpage

\subsection{Auxiliary Lemma to Proposition~\ref{prop:MDP-equiv}}
Here, we prove a lemma that will be used later in the proof of Proposition~\ref{prop:MDP-equiv}.

\begin{lemma}\label{lemma:app-total-expectation}
    Let $\pi\in\Pi$ be a policy, $S\in\Scal$ be a state such that $S\neq S^{\texttt{out}}$, denote $S=(\alpha,\beta, C)$ and fix an integrable function $G \,\colon\, \mathbb{R}^3 \to \mathbb{R}$. Further, denote by $\pi(S,t)(\bullet)$ the density over $\Acal$ given by $\pi$ at time $t$, by $\mathrm{Beta}(\alpha, \beta)(\bullet)$ the density of the Beta distribution, by $\mathrm{Bin}(n,\theta)(\bullet)$ the density of a Binomial distribution with parameters $n$ and $\theta$, and by $\mathrm{BB}(n, \alpha,\beta)(\bullet)$ the density of a Beta-Binomial distribution. Consider any random variables $\theta^*$, $n_t$ and $X_t$ with a joint density $P$ such that $n_t$ and $\theta^*$ are independent and:\footnote{We adopt the notation $\theta^*$, $n_t$, and $X_t$ to align with the proof of Proposition~\ref{prop:MDP-equiv}, where this lemma is applied to the efficacy, sample size, and outcomes of the approval process.}
    \begin{equation}
        \begin{dcases}
            P(n_t) = \pi(S,t)(n_t)\\
            P(\theta^*) = \mathrm{Beta}(\alpha, \beta)(\theta^*)\\
            P(X_t \,|\, n_t, \theta^*) = \mathrm{Bin}(n_t,\theta^*)(X_t).
        \end{dcases}
    \end{equation}
    Then, it holds that:
    \begin{multline}
        \EE_{\theta^* \sim \mathrm{Beta}(\alpha,\beta)} \EE_{n_t \sim \pi(S,t)} \EE_{X_t \sim \mathrm{Bin}(n_t, \theta^*)}[G(n_t, \theta^*, X_t)] =\\ \EE_{n_t \sim \pi(S,t)} \EE_{X_t \sim \mathrm{BB}(n_t, \alpha,\beta)} \EE_{\theta^* \sim \mathrm{Beta}(\alpha + X_t,\beta +n_t - X_t)} [G(n_t, \theta^*, X_t)].
    \end{multline}
\end{lemma}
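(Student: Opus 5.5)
The plan is to write both sides as sums and integrals against the joint density $P$ and to use Beta–Binomial conjugacy to swap the order of integration. Since $n_t$ is independent of $\theta^*$, the left-hand side equals
\begin{equation*}
\sum_{n\in\Acal} \pi(S,t)(n) \int_0^1 \mathrm{Beta}(\alpha,\beta)(\theta) \sum_{x=0}^{n} \mathrm{Bin}(n,\theta)(x)\, G(n,\theta,x)\, d\theta .
\end{equation*}
Both $\Acal$ and $\{0,\dots,n\}$ are finite, and $G$ is integrable. Fubini therefore lets me move the outer sum over $n$ and the inner finite sum over $x$ outside the $\theta$-integral. This reduces the claim to an identity for each fixed $n$ and $x$.

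The key step is to factor the joint density of $(\theta, x)$ given $n$ the other way round:
\begin{equation*}
\mathrm{Beta}(\alpha,\beta)(\theta)\,\mathrm{Bin}(n,\theta)(x) = \mathrm{BB}(n,\alpha,\beta)(x)\,\mathrm{Beta}(\alpha+x,\beta+n-x)(\theta).
\end{equation*}
I would verify this directly. The left side equals $\binom{n}{x}\theta^{\alpha+x-1}(1-\theta)^{\beta+n-x-1}/B(\alpha,\beta)$. Multiplying and dividing by $B(\alpha+x,\beta+n-x)$ gives the Beta posterior density times $\binom{n}{x}B(\alpha+x,\beta+n-x)/B(\alpha,\beta)$, and this last factor is exactly the Beta–Binomial mass function. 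This is Bayes' rule for the conjugate update of Eq.~\ref{eq:bayes-belief}.

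Substituting this factorization and reassembling the sums and integral in the order $n$, then $x$, then $\theta$ yields the right-hand side.

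The edge case $n=0$ needs separate treatment. There $\mathrm{Bin}(0,\theta)$ and $\mathrm{BB}(0,\alpha,\beta)$ are both point masses at $0$, and the posterior is $\mathrm{Beta}(\alpha,\beta)$, so the identity holds trivially.

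The only genuinely delicate point is justifying the interchange of integration order. I would handle it by noting that the sums are finite, so integrability of $G$ with respect to the joint law is all that is required. Beyond that, the argument is a routine conjugacy computation.
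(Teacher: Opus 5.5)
Your proposal is correct and follows essentially the same route as the paper, which also derives the Beta--Binomial marginal and the $\mathrm{Beta}(\alpha+X_t,\beta+n_t-X_t)$ posterior, rewrites the joint density $P(n_t)P(\theta^*)P(X_t\mid n_t,\theta^*)$ as $P(n_t)P(X_t\mid n_t)P(\theta^*\mid n_t,X_t)$, and concludes by the law of total expectation; this is the same as your pointwise conjugacy factorization followed by exchanging the finite sums with the $\theta$-integral. Your separate treatment of $n=0$ is harmless but unnecessary, since the general conjugacy formula already covers that case.
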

\begin{proof}
    We first note that $X_t |n_t$ follows a Beta-Binomial distribution with parameters $(n_t, \alpha,\beta)$. Indeed, denoting by $B$ the beta function and by $\Gamma$ the gamma function:
    \begin{align*}
        P(X_t = k | n_t) &= \int_0^1 P(X_t = k | n_t, \theta^*) P(\theta^* ) \, d\theta^* \\
        &= \int_0^1 \left[ \binom{n_t}{k} (\theta^*)^k (1-\theta^*)^{n_t-k} \right] \left[ \frac{1}{B(\alpha, \beta)} (\theta^*)^{\alpha-1} (1-\theta^*)^{\beta-1} \right] d\theta^* \\
        &= \frac{\binom{n_t}{k}}{B(\alpha, \beta)} \int_0^1 (\theta^*)^{k+\alpha-1} (1-\theta^*)^{n_t-k+\beta-1} d\theta^* \\
        &= \binom{n_t}{k} \frac{B(k+\alpha, n_t-k+\beta)}{B(\alpha, \beta)} \\
        &= \binom{n_t}{k} \frac{\Gamma(k+\alpha)\Gamma(n_t-k+\beta)}{\Gamma(n_t+\alpha+\beta)} \frac{\Gamma(\alpha+\beta)}{\Gamma(\alpha)\Gamma(\beta)},
    \end{align*}
    which is precisely the density of a Beta-Binomial distribution with parameters $(n_t, \alpha,\beta)$. Similarly, $\theta^* | n_t, X_t$ follows a Beta distribution with parameters $(\alpha + X_t, \beta + n_t - X_t)$:
    \begin{align*}
        P(\theta^* | X_t, n_t) &= \frac{P(X_t | \theta^*, n_t) P(\theta^* )}{P(X_t | n_t)} \\
        &= \frac{\left[ \binom{n_t}{X_t} (\theta^*)^{X_t} (1-\theta^*)^{n_t-X_t} \right] \left[ \frac{1}{B(\alpha, \beta)} (\theta^*)^{\alpha-1} (1-\theta^*)^{\beta-1} \right]}{\binom{n_t}{X_t} \frac{B(\alpha+X_t, \beta+n_t-Xt0)}{B(\alpha, \beta)}} \\
        &= \frac{(\theta^*)^{X_t} (1-\theta^*)^{n_t-X_t} (\theta^*)^{\alpha-1} (1-\theta^*)^{\beta-1}}{B(\alpha+X_t, \beta+n_t-Xt0)} \\
        &= \frac{(\theta^*)^{\alpha + X_t - 1} (1-\theta^*)^{\beta + n_t - X_t - 1}}{B(\alpha + X_t, \beta + n_t - X_t)}.
    \end{align*}
    Since $n_t$ and $\theta^*$ are independent, the joint distribution can be written as:
    \begin{equation}
        P(\theta^*,n_t,X_t) = P(n_t) \cdot P(\theta^*) \cdot P(X_t | \theta^*, n_t)
    \end{equation}
    Alternatively, we can factor the distribution $P$ as:
    \begin{equation}
        P(\theta^*,n_t,X_t) = P(n_t) \cdot P(X_t | n_t) \cdot P(\theta^* | n_t, X_t).
    \end{equation}
    Then, we can conclude using the law of total expectations for the expression $\EE_{(n_t, \theta^*, X_t) \sim P} [G(n_t, \theta^*, X_t)]$ to obtain:
    \begin{multline}
          \EE_{\theta^* \sim \mathrm{Beta}(\alpha,\beta)} \EE_{n_t \sim \pi(S,t)} \EE_{X_t \sim \mathrm{Bin}(n_t, \theta^*)}[G(n_t, \theta^*, X_t)]\\
         =\EE_{n_t \sim \pi(S,t)} \EE_{X_t \sim \mathrm{BB}(n_t, \alpha,\beta)} \EE_{\theta^* \sim \mathrm{Beta}(\alpha + X_t,\beta +n_t - X_t)} [G(n_t, \theta^*, X_t)]
    \end{multline}

\end{proof}

\clearpage
\newpage

\subsection{Proof of Proposition~\ref{prop:MDP-equiv}}

We fix a subsidy $\varepsilon\in[0,\varepsilon^{\texttt{max}}]$ and a policy $\pi\in\Pi$. Since $\varepsilon$ is fixed, for clarity, we omit it from the notation for the remainder of the proof, and we will simply use the notation $r(\bullet)$ to denote the (agent's) reward in $\Mcal^\varepsilon$. Similarly, $V_\pi$ will denote the value function in $\Mcal^\varepsilon$ for policy $\pi$. To show that Proposition~\ref{prop:MDP-equiv} holds, we will proceed by induction over the time steps of the MDP $\Mcal^\varepsilon$. We first introduce some additional notation.

\xhdr{Preliminaries} For any $t\leq T$ (representing a duration) and any $l\leq T - t$ (representing the initial time step), we define the following value function of $\pi$ in the MDP $\Mcal^\varepsilon$:

\begin{equation}\label{eq:value-function-horizon}
    V_{\pi}(S,l,t) =  \EE_{\pi} \left[ \sum_{k=l}^{l+t} r(S_k, n_k, S_{k+1})  \,\middle|\, S_l = S\right],
\end{equation}
which is the total expected reward obtained by policy $\pi$ starting from state $S$ at time $l$ and taking $t$ steps.

Moreover, given a state $S=(\alpha,\beta,C)\in\Scal\setminus \{S^{\texttt{out}}\}$ of the MDP such that $f(S) < 1/\kappa$, $t\leq T$ and $l \leq T - t$, consider the approval process described in Section~\ref{sec:model} when: (i) the initial time index is $l$, (ii) the agent has initial belief $B_0=\mathrm{Beta}(\alpha,\beta)$, (iii) the test process $M$ starts at $M_l = f(\alpha,\beta) < 1/\kappa$ and (iv) the subsidy includes an additional total cost $C$. Then, we define the agent's utility $\Bar{U}_{l,t}^{\texttt{A}}(\pi | S)$ (averaged over its initial belief) for the first $t$ steps as:
\begin{align}\label{eq:app-utility-general-state}
    &\Bar{U}_{l,t}^{\texttt{A}}(\pi | S) = \EE_{\theta^* \sim \mathrm{Beta}(\alpha,\beta)   } \Bigg[ \EE_{ \pi_l} \Bigg[\nonumber\\
    &\left( \rho^{\texttt{A}} + \varepsilon \cdot \left( C + \sum_{j=0}^{\tau(S,l,t)} c(n_j)\right) \right) \cdot \mathds{1}\{\exists j\in[\tau(S,l,t)] \colon f(\alpha,\beta) \cdot M_{j+1} \geq 1/\kappa\} \\
    &- \left( \sum_{j=0}^{\tau(S,l,t)} c(n_j)\right)  \Bigg| \theta^*, B_0 = \mathrm{Beta}(\alpha, \beta), C_0 = C \Bigg]\Bigg]\nonumber
\end{align}
where $\tau(S,l,t) = t \wedge \min\{j \in \{0,\dots,t \} \colon n_j =0\, \text{ or }\, f(\alpha,\beta) \cdot M_{j+1} \geq 1/\kappa \}$ is the last step of the approval process using the shifted policy $\pi_l$ defined as $\pi_l(\bullet,j) = \pi(\bullet,j+l)$. Here, $\EE_{\pi_l}[\bullet|\theta^*, B_0 = \mathrm{Beta}(\alpha, \beta), C_0 = C]$ indicates that: (i) all outcomes $X_t$ are drawn with a fixed efficacy $\theta^*$, and (ii) the agent has initial belief $\mathrm{Beta}(\alpha,\beta)$ and cumulated cost $C$. For instance, in the above expectation, the first action at index $j=0$ taken by the agent is sampled from the distribution $\pi(\alpha, \beta, C, l)$. Alternatively, for a state $S$ such that $S = S^{\texttt{out}}$ or $f(S) \geq 1/\kappa$, we simply define:
\begin{equation}\label{eq:app-utility-terminal-state}
    \Bar{U}^{\texttt{A}}_{l,t}(\pi | S) = 0.
\end{equation}

Importantly, note that, by definition:
\begin{align*}
    &\Bar{U}^{\texttt{A}}_{0,T}(\pi \mid (\alpha_0,\beta_0,0)) \\
    =& \EE_{\theta^* \sim \mathrm{Beta}(\alpha_0,\beta_0)} \Bigg[
        \EE_{\pi_0} \Bigg[
            \left( \rho^{\texttt{A}} + \varepsilon \sum_{j=0}^{\tau(\alpha_0,\beta_0,0,0,0)} c(n_j) \right)
            \cdot \mathds{1}\{\exists j \in [\tau(\alpha_0,\beta_0,0,0,0)] : M_{j+1} \geq 1/\kappa\} \\
    &\qquad
            - \sum_{j=0}^{\tau(\alpha_0,\beta_0,0,0,0)} c(n_j)
         \Bigg| \theta^*, B_0 = \mathrm{Beta}(\alpha_0, \beta_0), C_0 = 0
    \Bigg]\Bigg], \\
    &= \EE_{\theta^* \sim B_0} \big[ U^{\texttt{A}}(\pi,\varepsilon) \big].
\end{align*}
In light of the above, our goal will be to show by induction over $t$ that:

\[
\boxed{
\Bar{U}^{\texttt{A}}_{l,t}(\pi \mid S) = V_{\pi}(S,l,t)
\text{ for any } S \in \Scal,\ t \leq T,\ l \leq T - t.
}
\]

Then, by particularizing to $t=T$, $l=0$ and $S=(\alpha_0,\beta_0,0)$ and using Eq.~\ref{eq:adaptive-utility}, we will obtain that
\begin{equation*}
     \Bar{U}^{\texttt{A}}(\pi;\varepsilon)=V_{\pi}(\alpha_0,\beta_0,0,0,T) = \Bar{U}^{\texttt{A}}_{0,T}(\pi |\, \alpha_0,\beta_0,0 \,) = \EE_{\theta^*\sim B_0} [U^{\texttt{A}}(\pi,\varepsilon)],
\end{equation*}
which is the statement in Proposition~\ref{prop:MDP-equiv}.

\xhdr{Base case $t=0$}
We will show that $\Bar{U}^{\texttt{A}}_{l,0}(\pi |S) = V_{\pi}(S,l,0)$ holds for any $S\in\Scal$ and $l\leq T$. We first consider the case where $S = S^{\texttt{out}}$ or $f(S) \ge 1/\kappa$. In this case, note that the transition dynamics of the MDP (Eq.~\ref{eq:transition-dynamics}) imply that at any time step $k\geq l$, $S_k=S^{\texttt{out}}$ or $S_k=S$. Thus, the expression in Eq.~\ref{eq:value-function-horizon} only contains rewards that are null (see Eq.~\ref{eq:MDP-reward-def}), and hence:
\begin{equation*}
    V_\pi(S,l,0) = 0.
\end{equation*}
Similarly, by definition (Eq.~\ref{eq:app-utility-terminal-state}), $\Bar{U}^{\texttt{A}}_{l,0}(\pi | S) = 0$. Thus, $\Bar{U}^{\texttt{A}}_{l,0}(\pi | S) = V_\pi(S,l,0)$.

Consider now the non-trivial case where $S \neq S^{\texttt{out}}$ and $f(S) < 1/\kappa$, and write $S=(\alpha,\beta,C)$. In this case, the value function starting at time $l$ for $t=0$ can be expanded as (note that $t=0$ corresponds to a single step in the MDP):
\begin{align*}
    V_\pi(S,l,0) &= \EE_{\pi} \left[ r(S_l, n_l, S_{l+1}) \middle| S_l=S\right]\\
    &= \EE_{\pi} \left[ -c(n_l) + (\rho^A +\varepsilon\cdot (C + c(n_l)))\cdot \mathds{1}\{f(S_{l+1}) \geq 1/\kappa) \} \middle| S_l=S\right]\\
    &=\EE_{n_l \sim \pi(S,l)} \EE_{X_l \sim \mathrm{BB}(n_l, \alpha, \beta)} \big[ -c(n_l)\\
    &\quad + (\rho^A +\varepsilon\cdot (C + c(n_l)))\cdot \mathds{1}\{ f(S) \cdot E(X_l, n_l) \geq 1/\kappa) \} \big]
\end{align*}
In the above, we denote by $X_l \sim \mathrm{BB}(n_l, \alpha, \beta)$ a sample from the Beta-Binomial distribution, corresponding to sampling $\theta_l \sim \mathrm{Beta}(\alpha,\beta)$, and then $X_l \sim \mathrm{Bin}(n_l, \theta_l)$, and we have used the definition of the test process in Eq.~\ref{eq:test-process}. On the other hand, let $t=0$ in Eq.~\ref{eq:app-utility-general-state}. Then, $\tau(S,l,0) = 0$, and we obtain:\footnote{Whenever we write nested expectations, such as $\EE_{Y}\EE_{Z}[\bullet]$ for arbitrary random variables $Y$ and $Z$, the inner expectation is understood to be conditional on the outer variable; that is, $\EE_{Y}\big[\EE_{Z}[\bullet \mid Y]\big]$. To simplify notation, we may omit the explicit conditioning when no confusion is likely to arise.}

\begin{align*}
    &\Bar{U}_{l,0}^{\texttt{A}}(\pi | S) \\
    &= \EE_{\theta^* \sim \mathrm{Beta}(\alpha,\beta)   } \Bigg[ \EE_{ \pi_l} \Bigg[ -  c(n_0) + \left( \rho^{\texttt{A}} + \varepsilon \cdot \left( C + c(n_0)\right) \right) \cdot \mathds{1}\{  f(\alpha,\beta) \cdot M_1 \geq 1/\kappa\}\\
    &\quad \Bigg| \theta^*, B_0 = \mathrm{Beta}(\alpha, \beta), C_0 = C\Bigg]\Bigg] \\
    &= \EE_{\theta^* \sim \mathrm{Beta}(\alpha,\beta)   } \Bigg[ \EE_{ \pi_l} \Bigg[ -  c(n_0) + \left( \rho^{\texttt{A}} + \varepsilon \cdot \left( C + c(n_0)\right) \right) \cdot \mathds{1}\{  f(S) \cdot M_1 \geq 1/\kappa\}\\
    &\quad \Bigg| \theta^*, B_0 = \mathrm{Beta}(\alpha, \beta), C_0 = C\Bigg]\Bigg] \\
    &=\EE_{\theta^* \sim \mathrm{Beta}(\alpha,\beta)   }  \EE_{n_0 \sim \pi(S,l)} \EE_{X_0 \sim \mathrm{Bin}(n_0, \theta^*)} \Bigg[ -  c(n_0)\\
    &\quad + \left( \rho^{\texttt{A}} + \varepsilon \cdot \left( C + c(n_0)\right) \right) \cdot \mathds{1}\{  f(S) \cdot M_1 \geq 1/\kappa\}  \Bigg]\\
    &\overset{(*)}{=} \EE_{n_0 \sim \pi(S,l)} \EE_{\theta^* \sim \mathrm{Beta}(\alpha,\beta)   }  \EE_{X_0 \sim \mathrm{Bin}(n_0, \theta^*)} \Bigg[ -  c(n_0)\\
    &\quad + \left( \rho^{\texttt{A}} + \varepsilon \cdot \left( C + c(n_0)\right) \right) \cdot \mathds{1}\{  f(S) \cdot E(n_0, X_0) \geq 1/\kappa\}  \Bigg]\\
    &= \EE_{n_l \sim \pi(S,l)} \EE_{X_l \sim \mathrm{BB}(n_l, \alpha,\beta)} \Bigg[ -  c(n_l) + \left( \rho^{\texttt{A}} + \varepsilon \cdot \left( C + c(n_l)\right) \right) \cdot \mathds{1}\{  f(S) \cdot E(n_l, X_l) \geq 1/\kappa\}  \Bigg]\\
    &=V_\pi(S,l,0),
\end{align*}
where in $(*)$ we have used that $\theta^*$ and $n_0$ are independent (since $n_0$ is sampled independently from $\pi(S,l)$). This concludes the base case.

\xhdr{Inductive step $t \to t+1$} For the inductive step, assume that given a $t < T$, for any $l\leq T -t$ and for any state $S \in \Scal$ it holds that:
\begin{equation*}
    \Bar{U}^{\texttt{A}}_{l,t}(\pi \mid S) = V_{\pi}(S,l,t).
\end{equation*}

We will show that $\Bar{U}^{\texttt{A}}_{l,  t+1}(\pi |S) = V_{\pi}(S,l,t+1)$ for any $l \leq T - (t+1)$ and $S \in \Scal$.  To this end, fix any such $S$ and $l$.

Firstly, we consider the case where $S = S^{\texttt{out}}$ or $f(S) \ge 1/\kappa$, where the transition dynamics of the MDP (Eq.~\ref{eq:transition-dynamics}) imply that at any time step $k \geq l$, $S_k=S^{\texttt{out}}$ or $S_k=S$. Then, similarly to the base case, $V_\pi(S,l,t)=0$ and $\Bar{U}^{\texttt{A}}_{l,t}(\pi |S)=0$ for any $t \geq 0$, so the equality holds.

We focus now on the non-trivial case where $S \neq S^{\texttt{out}}$ and $f(S) < 1/\kappa$, and write $S = (\alpha,\beta,C)$. We begin by expanding the value for the first $t+1$ time steps in Eq.~\ref{eq:value-function-horizon}:
\begin{align}\label{eq:app-aux-iterated-exp}\nonumber
    V_{\pi}(S, l, t+1) &= \EE_{\pi} \left[ \sum_{k=l}^{l+t+1} r(S_k, n_k, S_{k+1})  \,\middle|\, S_l = S\right] \\\nonumber
    & = \EE_{\pi} \left[  r(S_{l}, n_{l}, S_{l+1}) + \sum_{k=l+1}^{l+t+1} r(S_k, n_k, S_{k+1}) \,\middle|\, S_{l} = S\right] \\\nonumber
    & = \EE_{\pi} \left[  r(S_{l}, n_{l}, S_{l+1}) + V_{\pi}(S_{l+1},l+1, t) \,\middle|\, S_{l} = S\right] \\\nonumber
    & \overset{(*)}{=} \EE_{\pi} \left[  r(S_{l}, n_{l}, S_{l+1}) + \Bar{U}^{\texttt{A}}_{l+1,t}(\pi|S_{l+1}) \,\middle|\, S_l = S\right] \\
    & \overset{(**)}{=} \underbrace{\EE_{\pi} \left[  r(S_{l}, n_{l}, S_{l+1}) \,\middle|\, S_l = S\right]}_{\dagger} +  \underbrace{\EE_{\pi} \left[\Bar{U}^{\texttt{A}}_{l+1,t}(\pi|S_{l+1}) \cdot \mathds{1}\{ f(S_{l+1}) < 1/\kappa \} \,\middle|\, S_l = S\right]}_{\ddagger}
\end{align}
where $(*)$ follows from the induction hypothesis and $(**)$ because $\Bar{U}^{\texttt{A}}_t(\pi|S_{l+1}) = 0$ if $f(S_{l+1}) \geq 1/\kappa$.
Analogously to the base case, the term $\dagger$ above can be written as:

\begin{align}\label{eq:app-induction-immediate-reward}\nonumber
    \dagger &= \EE_{\pi} \left[  r(S_l,n_l, S_{l+1})\middle| S_l = S \right]\\\nonumber
            &= \EE_{\pi} \left[  -c(n_l) + (\rho^{\texttt{A}} + \varepsilon\cdot(C + c(n_l)))\cdot\mathds{1}\{ f(S_{l+1}) \ge 1/\kappa \}  \middle| S_l = S \right]\\\nonumber
            &= \EE_{n_l \sim \pi(S,l)} \EE_{\theta^* \sim \mathrm{Beta}(\alpha, \beta)} \EE_{X_l \sim \mathrm{Bin}(n_l,\theta^*)} \Big[  -c(n_l)\\\nonumber
            &\quad + (\rho^{\texttt{A}} + \varepsilon\cdot(C + c(n_l)))\cdot\mathds{1}\{ f(S_{l+1}) \geq 1/\kappa \} \Big]\\\nonumber
            &= \EE_{\theta^* \sim \mathrm{Beta}(\alpha, \beta)} \EE_{n_l \sim \pi(S,l)}  \EE_{X_l \sim \mathrm{Bin}(n_l,\theta^*)} \Big[  -c(n_l)\\\nonumber
            &\quad + (\rho^{\texttt{A}} + \varepsilon\cdot(C + c(n_l)))\cdot\mathds{1}\{ f(\alpha+X_l, \beta + n_l - X_l) \geq 1/\kappa \} \Big]\\\nonumber
            &= \EE_{\theta^* \sim \mathrm{Beta}(\alpha, \beta)} \EE_{n_0 \sim \pi_l(S,0)}  \EE_{X_0 \sim \mathrm{Bin}(n_0,\theta^*)} \Big[  -c(n_0)\\\nonumber
            &\quad + (\rho^{\texttt{A}} + \varepsilon\cdot(C + c(n_0)))\cdot\mathds{1}\{ f(\alpha+X_0, \beta + n_0 - X_0) \geq 1/\kappa \} \Big]\\\nonumber
            &=\EE_{\theta^* \sim \mathrm{Beta}(\alpha, \beta)} \EE_{n_0 \sim \pi_l(S,0)}  \EE_{X_0 \sim \mathrm{Bin}(n_0,\theta^*)} \Big[  -c(n_0)\\\nonumber
            &\quad + (\rho^{\texttt{A}} + \varepsilon\cdot(C + c(n_0)))\cdot\mathds{1}\{ f(\alpha,\beta) \cdot E(n_0, X_0) \geq 1/\kappa \} \Big]\\\nonumber
            &=\EE_{\theta^* \sim \mathrm{Beta}(\alpha,\beta)   } \Big[ \EE_{ \pi_l} \Big[ -c(n_0) + \left( \rho^{\texttt{A}} + \varepsilon \cdot \left( C +  c(n_0)\right) \right) \cdot  \mathds{1}\{ f(\alpha, \beta)\cdot M_1 \geq 1/\kappa\}\\
            &\quad \Big| \theta^*, B_0 = \mathrm{Beta}(\alpha, \beta), C_0 = C\Big]\Big] 
\end{align}

We now focus on the term $\ddagger$ in Eq.~\ref{eq:app-aux-iterated-exp}. Given $S_l = S$, the state $S_{l+1}$ is fully determined by the action $n_l$ and the value $X_l$, namely (see Eq.~\ref{eq:transition-dynamics}),
\begin{equation*}
    S_{l+1}=
    \begin{dcases}
         (\alpha + X_l, \beta + n_l - X_l, C+c(n_l)) & \text{if} \quad n_l >0\\
        S^{\texttt{out}}  & \text{if} \quad n_l = 0.
    \end{dcases}
\end{equation*}
Thus, conditional on the policy $\pi(S,l)$ selecting action $n_l=0$, we have  $S_{l+1} = S^{\texttt{out}}$ and $\Bar{U}^{\texttt{A}}_{l+1,t}(\pi|S_{l+1}) = 0$. That is, conditioning on the the event $\{n_l = 0\}$ inside the expectations $\dagger$ and $\ddagger$ results in:

\begin{equation*}
    \begin{dcases}
        \EE_{\pi} \left[  r(S_{l}, n_{l}, S_{l+1}) \,\middle|\, S_l = S, n_l = 0\right]=0 \\
        \EE_{\pi} \left[\Bar{U}^{\texttt{A}}_{l+1,t}(\pi|S_{l+1}) \cdot \mathds{1}\{ f(S_{l+1}) < 1/\kappa \} \,\middle|\, S_l = S, n_l = 0\right]=0.
    \end{dcases}
\end{equation*}
Similarly, note that conditional on the policy $\pi(\alpha,\beta,C,l)$ selecting action $n_l=0$, we also have:
\begin{multline*}
    \EE_{\theta^* \sim \mathrm{Beta}(\alpha,\beta)   } \Bigg[ \EE_{ \pi_l} \Bigg[ \left( \rho^{\texttt{A}} + \varepsilon \cdot \left( C + \sum_{j=0}^{\tau(S,l,t+1))} c(n_j)\right) \right)\\
    \cdot \mathds{1}\{\exists j\in[\tau(S,l,t+1)] \colon f(\alpha,\beta) \cdot M_{j+1} \geq 1/\kappa\} \\
    - \left( \sum_{j=0}^{\tau(S,l,t+1)} c(n_j)\right)  \Bigg| \theta^*, B_0 = \mathrm{Beta}(\alpha, \beta), C_0 = C , n_0 = 0\Bigg]\Bigg] =0
\end{multline*}
As a consequence, in the following, we assume without loss of generality that the action $0$ is not in the support of the distribution $\pi(S,l)$, and therefore $S_{l+1} \neq S^{\texttt{out}}$.

Then, under the above simplification, we expand the expectation in the term $\ddagger$ to average over the possible values of $n_l$ and $X_l$, and then substitute the expression for the utility \textcolor{blue}{$\Bar{U}^{\texttt{A}}_{l+1,t}(\pi|S_{l+1})$} defined in Eq.~\ref{eq:app-utility-general-state} (emphasized below in blue for clarity):\footnote{We also change notation $n_l \to n_l'$ and $X_l \to X_l'$ to avoid confusion with the actions and outcomes that appear when expanding $\Bar{U}^{\texttt{A}}_{l+1,t}(\pi|S_{l+1})$.}

\begin{align*}
    \ddagger &= \EE_{\pi} \left[ \textcolor{blue}{\Bar{U}^{\texttt{A}}_{l+1,t}(\pi|S_{l+1})} \cdot \mathds{1}\{ f(S_{l+1}) < 1/\kappa \} \,\middle|\, S_l = S\right]\\
    &=\EE_{n'_l \sim \pi(S,l)}\EE_{X_l' \sim \mathrm{BB}(n'_l, \alpha , \beta )} \textcolor{blue}{ \EE_{\theta^* \sim \mathrm{Beta}(\alpha + X'_l, \beta + n'_l -X'_l)   } \Bigg[   \EE_{ \pi_{l+1}} \Bigg[ } \mathds{1}\{ f(\alpha + X'_l, \beta + n'_l - X'_l) < 1/\kappa \} \cdot \\
    &\textcolor{blue}{\left( \rho^{\texttt{A}} + \varepsilon \cdot \left( C + c(n_l') + \sum_{j=0}^{\tilde{\tau}} c(n_j)\right) \right) \cdot \mathds{1}\{\exists j\in[\tilde{\tau}] \colon f(\alpha + X_l', \beta + n_l' -X_l') \cdot M_{j+1} \geq 1/\kappa\} }\\
    & - \mathds{1}\{ f(\alpha + X_l', \beta + n_l' - X_l') < 1/\kappa \} \cdot \textcolor{blue}{\left( \sum_{j=0}^{\tilde{\tau}} c(n_j)\right)}\\
    &\quad \textcolor{blue}{\Bigg| \theta^*, B_0= \mathrm{Beta}(\alpha + X_l', \beta + n_l' -X_l'), C_0 = C + c(n'_l)\Bigg]\Bigg]},
\end{align*}
where we have defined $\tilde{\tau} = \tau(S(n'_l, X'_l), l+1, t) $, and we use $S(n'_l,X'_l)$ to denote the state to which $S$ transitions after selecting action $n'_l$ and observing the outcome $X'_l$, as given by the transition dynamics (Eq.~\ref{eq:transition-dynamics}).

We now leverage Lemma~\ref{lemma:app-total-expectation} to reorder the first three expectations as:
\begin{align}\nonumber
    \ddagger &= \EE_{\theta^* \sim \mathrm{Beta}(\alpha, \beta )   }  \EE_{n'_l \sim \pi(S,l)}\EE_{X_l' \sim \mathrm{Bin}(n'_l, \theta^*)}  \Bigg[   \EE_{ \pi_{l+1}} \Bigg[  \mathds{1}\{ f(\alpha + X'_l, \beta + n'_l - X'_l) < 1/\kappa \} \cdot \\\nonumber
    &\left( \rho^{\texttt{A}} + \varepsilon \cdot \left( C + c(n_l') + \sum_{j=0}^{\tilde{\tau}} c(n_j)\right) \right) \cdot \mathds{1}\{\exists j\in[\tilde{\tau}] \colon f(\alpha + X_l', \beta + n_l' -X_l') \cdot M_{j+1} \geq 1/\kappa\} \\\nonumber
    & - \mathds{1}\{ f(\alpha + X_l', \beta + n_l' - X_l') < 1/\kappa \} \cdot \left( \sum_{j=0}^{\tilde{\tau}} c(n_j)\right) \\\nonumber
    &\quad \Bigg| \theta^*, B_0 = \mathrm{Beta}(\alpha + X_l', \beta + n_l' -X_l'), C_0 = C + c(n'_l)\Bigg]\Bigg]\\\nonumber
    &= \EE_{\theta^* \sim \mathrm{Beta}(\alpha , \beta )} \EE_{n_0 \sim \pi(S,l)}\EE_{X_0 \sim \mathrm{Bin}(n_0, \theta^*)}  \Bigg[  \mathds{1}\{ f(\alpha + X_0, \beta + n_0 - X_0) < 1/\kappa \} \cdot \EE_{ \pi_{l+1}} \Bigg[  \\\nonumber
    &\left( \rho^{\texttt{A}} + \varepsilon \cdot \left( C + c(n_0) + \sum_{j=0}^{\tilde{\tau}} c(\tilde{n}_j)\right) \right) \cdot \mathds{1}\{\exists j\in[\tilde{\tau}] \colon f(\alpha + X_0, \beta + n_0 -X_0) \cdot \tilde{M}_{j+1} \geq 1/\kappa\} \\\nonumber
    & - \left( \sum_{j=0}^{\tilde{\tau}} c(\tilde{n}_j)\right) \Bigg] \Bigg| \theta^*, B_0 = \mathrm{Beta}(\alpha + X_0, \beta + n_0 -X_0), C_0 = C+c(n_0)\Bigg],\\\label{eq:app-ddagger}
\end{align}
where, in the last equality, we have renamed the dummy variables appearing in the inner expectation using tildes, \ie, $\tilde{n}_j$, $\tilde{X}_j$, and $\tilde{M}_j$.\footnote{Here, note that $\tilde{\tau}(S(n_0, X_0), l+1, t) \overset{d}{=} \tilde{\tau}(S(n'_l, X'_l), l+1, t)$, because $n_0 \overset{d}{=} n'_l$ and $X_0 \overset{d}{=} X'_l$ by definition, and so we kept the notation $\tilde{\tau}$ to denote $\tau(S(n_0, X_0), l+1, t)$} This is purely notational at this stage, but will be useful later in the proof when it becomes important to distinguish between different sets of variables.
%

To keep our objective in view, recall that the goal of the inductive step is to show that 
\begin{equation*}
    \Bar{U}^{\texttt{A}}_{l,t+1}(\pi \mid S) = \dagger + \ddagger.
\end{equation*}
Our goal will now be to expand the term $\Bar{U}^{\texttt{A}}_{l,t+1}(\pi \mid S)$ and verify that this equality indeed holds. 
To this end, we will use the identity
\begin{equation*}
    1 = \mathds{1} \{f(\alpha,\beta) \cdot M_1 \geq 1/\kappa \} + \mathds{1} \{f(\alpha,\beta) \cdot M_1 < 1/\kappa \},
\end{equation*}
which holds for any $\alpha, \beta$ and $M_1$ since the two events in the indicator functions are complementary, and substitute it in the definition of $\Bar{U}^{\texttt{A}}_{l,t+1}(\pi \mid S)$ in Eq.~\ref{eq:app-utility-general-state}:

\begin{align*}
    &\Bar{U}_{l,t+1}^{\texttt{A}}(\pi | S)\\
    &=\EE_{\theta^* \sim \mathrm{Beta}(\alpha,\beta)   } \Bigg[ \EE_{ \pi_l} \Bigg[ \left( \rho^{\texttt{A}} + \varepsilon \cdot \left( C + \sum_{j=0}^{\tau(S,l,t+1)} c(n_j)\right) \right) \\
    &\quad \cdot \mathds{1}\{\exists j\in[\tau(S,l,t+1)] \colon f(\alpha,\beta) \cdot M_{j+1} \geq 1/\kappa\} \\
    &\quad- \left( \sum_{j=0}^{\tau(S,l,t+1)} c(n_j)\right) \Bigg| \theta^*, B_0 = \mathrm{Beta}(\alpha, \beta), C_0 = C\Bigg] \Bigg]\\
    &= \EE_{\theta^* \sim \mathrm{Beta}(\alpha,\beta)   } \Bigg[ \EE_{ \pi_l} \Bigg[\left( \rho^{\texttt{A}} + \varepsilon \cdot \left( C + \sum_{j=0}^{\tau(S,l,t+1)} c(n_j)\right) \right) \\
    &\quad\cdot\mathds{1}\{f(\alpha,\beta)\cdot M_1\geq 1/\kappa\}\cdot \mathds{1}\{\exists j\in[\tau(S,l,t+1)] \colon f(\alpha,\beta) \cdot M_{j+1} \geq 1/\kappa\} \\
    &\quad- \mathds{1}\{f(\alpha,\beta)\cdot M_1\geq 1/\kappa\}\cdot \left( \sum_{j=0}^{\tau(S,l,t+1)} c(n_j)\right) \Bigg| \theta^*, B_0 = \mathrm{Beta}(\alpha, \beta), C_0 = C\Bigg]\Bigg] \\
    &\quad+ \EE_{\theta^* \sim \mathrm{Beta}(\alpha,\beta)   } \Bigg[ \EE_{ \pi_l} \Bigg[ \left( \rho^{\texttt{A}} + \varepsilon \cdot \left( C + \sum_{j=0}^{\tau(S,l,t+1)} c(n_j)\right) \right)  \\
    &\quad\cdot \mathds{1}\{f(\alpha,\beta)\cdot M_1 < 1/\kappa\}\cdot \mathds{1}\{\exists j\in[\tau(S,l,t+1)] \colon f(\alpha,\beta) \cdot M_{j+1} \geq 1/\kappa\} \\
    &\quad- \mathds{1}\{f(\alpha,\beta)\cdot M_1< 1/\kappa\}\cdot \left( \sum_{j=0}^{\tau(S,l,t+1)} c(n_j)\right)  \Bigg| \theta^*, B_0 = \mathrm{Beta}(\alpha, \beta), C_0 = C\Bigg]\Bigg]\\
\end{align*}
We now observe that if $f(\alpha,\beta)\cdot M_1 \geq 1/\kappa$, then $\tau(S,l,t+1)=0$, that is, the approval process stops after the first step, which simplifies the first summand and yields:
\begin{align*}
    &\Bar{U}_{l,t+1}^{\texttt{A}}(\pi | S)\\
    &= \EE_{\theta^* \sim \mathrm{Beta}(\alpha,\beta)   } \Bigg[ \EE_{ \pi_l} \Bigg[ -c(n_0) + \left( \rho^{\texttt{A}} + \varepsilon \cdot \left( C +  c(n_0)\right) \right) \cdot  \mathds{1}\{ f(\alpha,\beta)\cdot M_1\geq 1/\kappa\} \\
    &\hphantom{ = \EE_{\theta^* \sim \mathrm{Beta}(\alpha,\beta)   } \Bigg[ \EE_{ \pi_l} \Bigg[ }\quad\Bigg| \theta^*, B_0 = \mathrm{Beta}(\alpha, \beta), C_0 = C\Bigg]\Bigg]\\
    &\quad+ \EE_{\theta^* \sim \mathrm{Beta}(\alpha,\beta)   } \Bigg[ \EE_{ \pi_l} \Bigg[\left( \rho^{\texttt{A}} + \varepsilon \cdot \left( C + \sum_{j=0}^{\tau(S,l,t+1)} c(n_j)\right) \right)\\
    &\quad\cdot \mathds{1}\{f(\alpha,\beta)\cdot M_1 < 1/\kappa\}\cdot \mathds{1}\{\exists j\in[\tau(S,l,t+1)] \colon f(\alpha,\beta) \cdot M_{j+1} \geq 1/\kappa\} \\
    &\quad- \mathds{1}\{f(\alpha,\beta)\cdot M_1< 1/\kappa\}\cdot \left( \sum_{j=1}^{\tau(S,l,t+1)} c(n_j)\right)  \Bigg| \theta^*, B_0 = \mathrm{Beta}(\alpha, \beta), C_0 = C\Bigg]\Bigg]\\
\end{align*}

Further, we can identify that the first summand in the above expression corresponds to the form of the term $\dagger$ derived in Eq.~\ref{eq:app-induction-immediate-reward}, and replacing it, we obtain:

\begin{align*}
    &\Bar{U}_{l,t+1}^{\texttt{A}}(\pi | S)\\
    &=  \dagger\\
    & \quad \textcolor{orange}{+\EE_{\theta^* \sim \mathrm{Beta}(\alpha,\beta)   } \Bigg[ \EE_{ \pi_l} \Bigg[} \textcolor{orange}{\left( \rho^{\texttt{A}} + \varepsilon \cdot \left( C + \sum_{j=0}^{\tau(S,l,t+1)} c(n_j)\right) \right)}\\
    &\quad  \textcolor{orange}{\cdot  \mathds{1}\{f(\alpha,\beta)\cdot M_1 < 1/\kappa\} \cdot \mathds{1}\{\exists j\in[\tau(S,l,t+1)] \colon f(\alpha,\beta) \cdot M_{j+1} \geq 1/\kappa\} }\\
    &\quad \textcolor{orange}{- \mathds{1}\{f(\alpha,\beta)\cdot M_1< 1/\kappa\}\cdot \left( \sum_{j=1}^{\tau(S,l,t+1)} c(n_j)\right)  \Bigg| \theta^*, B_0 = \mathrm{Beta}(\alpha, \beta), C_0 = C\Bigg]\Bigg]}\\
\end{align*}
We now focus on the orange term in the expression above, which we denote by $\textcolor{orange}{\square}$. Our goal will be to show that $\textcolor{orange}{\square} = \ddagger$.
We start by factoring out the term $\mathds{1}\{f(\alpha,\beta)\cdot M_1< 1/\kappa\}$, and using that by definition $M_1 = E(n_0, M_0)$, we obtain:

\begin{align*}
    \textcolor{orange}{\square} &= \EE_{\theta^* \sim \mathrm{Beta}(\alpha,\beta)   } \Bigg[ \EE_{ \pi_l} \Bigg[  \mathds{1}\{f(\alpha,\beta)\cdot E(n_0, X_0)< 1/\kappa\} \cdot \Bigg(\\
    &  \quad\left( \rho^{\texttt{A}} + \varepsilon \cdot \left( C + c(n_0) + \sum_{j=1}^{\tau(S,l,t+1)} c(n_j)\right) \right)   \mathds{1}\{\exists j\in[\tau(S,l,t+1)] \colon f(\alpha,\beta) \cdot M_{j+1} \geq 1/\kappa\} \\
    & \quad - \sum_{j=1}^{\tau(S,l,t+1)} c(n_j) \Bigg)  \Bigg| \theta^*, B_0 = \mathrm{Beta}(\alpha, \beta), C_0 = C\Bigg]\Bigg]\\
\end{align*}
In the above expectation $\EE_{ \pi_l} \left[ \bullet \middle| \theta^*, B_0 = \mathrm{Beta}(\alpha, \beta), C_0 = C\right]$, we can use the law of iterated expectations by conditioning on the first samples $n_0$ and $X_0$. In particular, note that $n_0 \sim \pi(\alpha,\beta,C,l)$ and that $X_0 \sim \mathrm{Bin}(n_0,\theta^*)$, and thus:
\begin{align}\nonumber
    \textcolor{orange}{\square} &= \EE_{\theta^* \sim \mathrm{Beta}(\alpha,\beta)} \EE_{n_0 \sim \pi(S,l )} \EE_{X_0 \sim \mathrm{Bin}(n_0, \theta^*)} \Bigg[ \EE_{ \pi_l} \Bigg[  \mathds{1}\{f(\alpha,\beta)\cdot E(n_0, X_0)< 1/\kappa\} \cdot \Bigg(\\\nonumber
    &  \left( \rho^{\texttt{A}} + \varepsilon \cdot \left( C + c(n_0) + \sum_{j=1}^{\tau(S,l,t+1)} c(n_j)\right) \right)   \mathds{1}\{\exists j\in[\tau(S,l,t+1)] \colon f(\alpha,\beta) \cdot M_{j+1} \geq 1/\kappa\} \\\nonumber
    & - \sum_{j=1}^{\tau(S,l,t+1)} c(n_j) \Bigg)  \Bigg| \theta^*, B_0 = \mathrm{Beta}(\alpha, \beta), C_0 = C, n_0, X_0\Bigg]\Bigg] \\\nonumber
    &= \EE_{\theta^* \sim \mathrm{Beta}(\alpha,\beta)} \EE_{n_0 \sim \pi(S,l )}  \EE_{X_0 \sim \mathrm{Bin}(n_0, \theta^*)} \Bigg[
    \mathds{1}\{f(\alpha + X_0,\beta + n_0 + X_0) < 1/\kappa\} \cdot \EE_{ \pi_l} \Bigg[   \\\nonumber
    &  \left( \rho^{\texttt{A}} + \varepsilon \cdot \left( C + c(n_0) + \sum_{j=1}^{\tau(S,l,t+1)} c(n_j)\right) \right)   \mathds{1}\{\exists j\in[\tau(S,l,t+1)] \colon f(\alpha,\beta) \cdot M_{j+1} \geq 1/\kappa\} \\\nonumber
    & - \sum_{j=1}^{\tau(S,l,t+1)} c(n_j) \Bigg]  \Bigg| \theta^*, B_0 = \mathrm{Beta}(\alpha, \beta), C_0 = C, n_0, X_0\Bigg]\\\label{eq:app:orange-square}
\end{align}
Here, note that the inner expectation $\EE_{ \pi_l} \left[ \bullet \middle| \theta^*, B_0 = \mathrm{Beta}(\alpha, \beta), C_0 = C, n_0, X_0\right]$ is conditioned on $n_0$ and $X_0$. Therefore, the first action $n_1$ is distributed according to $\pi(\alpha + X_0, \beta + n_0 - X_0,  C + c(n_0), l + 1)$, and any subsequent action $n_j$ is distributed according to:

\begin{equation*}
    \begin{dcases}
        n_1 | n_0, X_0\sim \pi \left( \alpha + X_0, \beta + n_0 - X_0, C + c(n_0), l + 1 \right) \\
        X_1 | n_0, X_0\sim \operatorname{Bin}(n_1, \theta^*) \\[1ex]
        n_2 | n_0, X_0\sim \pi \left( \alpha + X_0 + X_1, \beta + n_0 + n_1 - X_0 - X_1, C + c(n_0) + c(n_1), l + 2 \right) \\
        X_2 | n_0, X_0\sim \operatorname{Bin}(n_2, \theta^*) \\
        \quad \vdots \\
        \begin{aligned}[t]
            &n_j | n_0, X_0\sim \pi \big( \alpha + X_0 + \dots + X_{j-1}, \beta + n_0 + \dots + n_{j-1} - X_0 - \dots - X_{j-1},\\
            &\hphantom{n_j | n_0, X_0\sim \pi \big(\,} C + c(n_0) + \dots + c(n_{j-1}), l + j \big)
        \end{aligned} \\
        X_j | n_0, X_0\sim \operatorname{Bin}(n_j, \theta^*) \\
        \quad \vdots
    \end{dcases}
\end{equation*}
These are equal in distribution to the sequence of actions and experimental outcomes $\tilde{n}_1, \tilde{X}_1,\dots$ in the inner expectation of Eq.~\ref{eq:app-ddagger}, \ie, we have the following equalities in distribution
\begin{equation}\label{eq:app-trajectory-shift}
    \begin{dcases}
        n_1  | n_0, X_0\overset{d}{=} \tilde{n}_0 \\
        X_1 | n_0, X_0\overset{d}{=} \tilde{X}_0\\
        E(n_1,X_1) | n_0, X_0\overset{d}{=} E(\tilde{n}_0, \tilde{X}_0)\\
        \quad \vdots \\
        n_j | n_0, X_0\overset{d}{=} \tilde{n}_{j-1} \\
        X_j | n_0, X_0\overset{d}{=} \tilde{X}_{j-1}\\
        E(n_j,X_j) | n_0, X_0\overset{d}{=} E(\tilde{n}_{j-1},\tilde{X}_{j-1})\\
        \quad \vdots
    \end{dcases}
    \implies
        \begin{dcases}
        M_1 | n_0, X_0\overset{d}{=} E(n_0, X_0) \cdot \tilde{M}_0\\
        M_2 | n_0, X_0\overset{d}{=} E(n_0, X_0) \cdot \tilde{M}_1\\        \quad \vdots \\
        M_j | n_0, X_0\overset{d}{=} E(n_0, X_0) \cdot \tilde{M}_{j-1}\\
        \quad \vdots
    \end{dcases}. 
\end{equation}
Based on the above, we can also conclude the following about the stopping times under the event $\{f(\alpha,\beta)\cdot E(n_0 ,X_0)< 1/\kappa \}$:
\begin{align}\nonumber
    \tau(S,l,t+1) | n_0, X_0 & \overset{d}{=} (t+1) \wedge \min\{ j \in \{1,\dots,t+1\} \colon n_j=0 \, \text{or}\, f(\alpha,\beta)\cdot M_{j+1} \geq 1/\kappa \}\\\nonumber
    & \overset{d}{=} (t+1) \wedge \Big( 1 + \min\{ j \in \{0,\dots,t\} \colon \tilde{n}_j = 0\, \text{or}\,\\\nonumber
    &\qquad f(\alpha,\beta)\cdot E(n_0, X_0)\cdot \tilde{M}_{j+1} \geq 1/\kappa \}\Big)\\
    & \overset{d}{=} 1 + \underbrace{\tau(S(n_0, X_0), l+1, t)}_{\tilde{\tau}}\label{eq:app-stopping-shift}
\end{align}

As a consequence, in Eq.~\ref{eq:app:orange-square}, we can change the summation index $j \to j-1$ and use the equalities in distribution in Eq.~\ref{eq:app-trajectory-shift} and Eq.~\ref{eq:app-stopping-shift} to finally conclude:

\begin{align*}
    \textcolor{orange}{\square}&= \EE_{\theta^* \sim \mathrm{Beta}(\alpha,\beta)} \EE_{n_0 \sim \pi(S,l )}  \EE_{X_0 \sim \mathrm{Bin}(n_0, \theta^*)} \Bigg[
    \mathds{1}\{f(\alpha + X_0,\beta + n_0 + X_0) < 1/\kappa\} \cdot  \EE_{ \pi_{l+1}} \Bigg[  \\\nonumber
    &\left( \rho^{\texttt{A}} + \varepsilon \cdot \left( C + c(n_0) + \sum_{j=0}^{\tilde{\tau}} c(\tilde{n}_j)\right) \right) \cdot \mathds{1}\{\exists j\in[\tilde{\tau}] \colon f(\alpha + X_0, \beta + n_0 -X_0) \cdot \tilde{M}_{j+1} \geq 1/\kappa\} \\\nonumber
    & - \left( \sum_{j=0}^{\tilde{\tau}} c(\tilde{n}_j)\right) \Bigg| \theta^*, B_0 = \mathrm{Beta}(\alpha + X_0, \beta + n_0 -X_0), C_0 = C+c(n_0)\Bigg]\Bigg] \\
    &= \ddagger
\end{align*}
That is, we have shown that $\Bar{U}_{l,t+1}^{\texttt{A}}(\pi | S) = \dagger + \ddagger = V_{\pi}(S, l, t+1) $. This concludes the induction step and the proof.

\clearpage
\newpage

\clearpage
\newpage

\subsection{Proof of Proposition~\ref{prop:reacheable-set}}\label{app:proof-reacheable-set}

Consider the state space $\Scal$ of the MDP $\Mcal^\varepsilon$. We define the set $\Scal^{\texttt{r}}$ of \emph{reachable} states from the initial state $(\alpha_0,\beta_0,0)$ to be any state that can be reached with non-negative probability by a policy $\pi$. More formally, $S \in \Scal^{\texttt{r}}$ if an only if $S = S^{\texttt{out}}$ or $S=(\alpha,\beta,C)$ and there exist a $0\leq t \leq T$ and a sequence of actions and outcomes $n_0,X_0,\dots n_t,X_t$ such that 
\begin{equation}\label{eq:app-accessible-state-condition}
    \begin{dcases}
            0 \leq X_k \leq n_k, \quad \text{for } k = 0, \dots, t \\
            \alpha = \alpha_0 + \sum_{k=0}^{t} X_k \\
            \beta = \beta_0 + \sum_{k=0}^{t} (n_k - X_k)\\
        C = \sum_{k=0}^t c(n_t).
     \end{dcases}
\end{equation}
In words, the sequence $(n_0,X_0,\dots n_t,X_t)$ allows the initial state to eventually transition to the state $S$ according to the transition dynamics (Eq.~\ref{eq:transition-dynamics}). Observe that then, any state visited under any realization of the MDP for any policy $\pi$ is contained in the set $\Scal^{\texttt{r}}$. Then, $\Scal^{\texttt{r}}$ is finite because, for a fixed $t$, any sequence $(n_0,X_0,\dots n_t,X_t)$ can only take finitely many values (since $|\Acal|=n^{\texttt{max}} + 1$), and the MDP has only finitely many steps, namely, $T+1$.

We focus now on the case where the cost function $c$ is linear. Let $c(n) = c_0 + c_1\cdot n$, and consider any state that can be reached at time step $t$ with a total number $N=\sum_{k=0}^t n_t$, $X = \sum_{k=0}^t X_t$. Then, 
\begin{equation*}
    \sum_{k=0}^t c(n_t) = (t+1)\cdot c_0 + N\cdot c_1.
\end{equation*}
In particular, in light of Eq.~\ref{eq:app-accessible-state-condition}, the states in $\Scal^{\texttt{r}}$ (except $S^{\texttt{out}}$) are in a bijection with the triplets $(t,N,X)$. This is because $\alpha$ and $\beta$ are uniquely determined by $X$ and $N$, and if $c_0,c_1 \neq 0$, then $t$ and $N$ uniquely determine $C$.

Next, observe that if the policy never opts out, given a time step $0\leq t \leq T$, the minimum value that $N$ can take is $N=t+1$, which corresponds to a sequence of actions $n_0=1,\dots,n_t=1$. On the other hand, the maximum value that $N$ can take is $N=(t+1)\cdot n^{\texttt{max}}$, which corresponds to a sequence of actions $n_0=n^{\texttt{max}},\dots,n_t=n^{\texttt{max}}$. Further, for a fixed value of $N$, the total positive outcomes $X$ can take exactly $N+1$ values, \ie, $X\in \{0,\dots,N \}$. Thus,
\begin{align*}
    |\Scal^{\texttt{r}}| &= \underbrace{1}_{S^{\texttt{out}}} + \sum_{t=0}^T \sum_{N=t+1}^{(t+1)\cdot n^{\texttt{max}}} (N+1)\\
    &= 1 + \sum_{t=0}^T \frac{(t+1)\cdot n^{\texttt{max}} - t}{2} \left((t+2) + ((t+1)\cdot n^{\texttt{max}} + 1)\right)\\
    &= 1 + \sum_{t=0}^T \left( \frac{(n^{\texttt{max}})^2 - 1}{2}\cdot (t+1)^2 + \frac{3 n^{\texttt{max}} - 1}{2}\cdot (t+1) + 1\right)\\
    &= 1 + \frac{(n^{\texttt{max}})^2 - 1}{12}\cdot (T+1)(T+2)(2T+3) + \frac{3 n^{\texttt{max}} - 1}{4}\cdot (T+1)(T+2) + (T+1)\\
    &= \Ocal\left( (n^{\texttt{max}})^2\cdot T^3 \right)
\end{align*}
where we have used standard summation formulas for the arithmetic progression $\sum_{t=0}^T (t+1)$ and the quadratic progression $\sum_{t=0}^T (t+1)^2$. In particular, the above also shows that the set of reachable states in $t$ time steps, which we denote by $\Scal^{\texttt{r}}(t)$, satisfies $|\Scal^{\texttt{r}}(t)| = \Ocal\left( (n^{\texttt{max}})^2 \cdot t^2 \right)$.

\clearpage
\newpage

\subsection{Auxiliary Lemma for Proposition~\ref{prop:value-monotonicity}}

\begin{lemma}\label{lemma:bound-value-function}
    Consider a state $S \in \Scal$ such that $S\neq S^{\texttt{out}}$ and $f(S) < 1/\kappa$, any policy $\pi$ and $0\leq l \leq T$. Write $S=(\alpha,\beta,C)$. Then,
    \begin{equation}
        \EE_{\pi} \left[ \sum_{t=l}^T r^\varepsilon(S_t,n_t,S_{t+1}) \middle| S_l = S\right] \leq \rho^{A} +\varepsilon\cdot C.
    \end{equation}
\end{lemma}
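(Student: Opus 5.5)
The plan is to bound the realized sum of rewards pathwise, for every trajectory, and then take expectations. Fix a realization $S_l=S, n_l, S_{l+1}, \dots, S_{T+1}$ generated by $\pi$ from $S_l=S$. By the transition dynamics (Eq.~\ref{eq:transition-dynamics}) and the reward (Eq.~\ref{eq:MDP-reward-def}), nonzero rewards occur only at steps $t$ with $S_t\neq S^{\texttt{out}}$ and $f(S_t)<1/\kappa$. Once the process enters $S^{\texttt{out}}$ or a state with $f\geq 1/\kappa$, it stays in that set forever, and all later rewards vanish. So there is a last \emph{active} index $\tau'\in\{l-1,\dots,T\}$ such that every step $l\le t\le \tau'$ is active and every later step contributes $0$.

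Next I will write out the rewards along the active steps. For $l\le t\le\tau'$, each active step gives $-c(n_t)$. When $n_t=0$ this is $-c(0)=0$ and the process moves to $S^{\texttt{out}}$. The approval bonus $\rho^{\texttt{A}}+\varepsilon(C_t+c(n_t))$ can appear only at the single step where $f(S_{t+1})\ge 1/\kappa$ is first reached, and this must be the step $\tau'$. Along the trajectory the cost coordinate evolves as $C_{t+1}=C_t+c(n_t)$ with $C_l=C$, so $C_{\tau'}+c(n_{\tau'}) = C + \sum_{t=l}^{\tau'} c(n_t)$. The pathwise sum therefore equals one of two expressions, depending on whether approval occurs:
\begin{equation*}
\rho^{\texttt{A}} + \varepsilon C - (1-\varepsilon)\sum_{t=l}^{\tau'} c(n_t) \quad\text{or}\quad -\sum_{t=l}^{\tau'} c(n_t).
\end{equation*}

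To finish, I use $c\ge 0$, $\varepsilon\le\varepsilon^{\texttt{max}}\le 1$ and $\rho^{\texttt{A}},C\ge0$. These show that both expressions are at most $\rho^{\texttt{A}}+\varepsilon C$. Taking the expectation over the randomness of $\pi$ and the transitions preserves the inequality, which gives the lemma.

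The main obstacle is the bookkeeping in the second step. It requires showing that the bonus term is collected at most once, and that the cost it refunds is exactly $C$ plus the costs accumulated from time $l$, rather than being double counted. I would handle this by a short induction on $t$, using the invariant that, while the process is active, $S_t=(\alpha_t,\beta_t,C+\sum_{k=l}^{t-1}c(n_k))$. I would also note that after approval the state is frozen (the middle case of Eq.~\ref{eq:transition-dynamics}), so the bonus indicator cannot fire again.
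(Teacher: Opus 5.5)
Your proposal is correct and follows essentially the same route as the paper: a pathwise bound in which all rewards vanish once the process is absorbed, the approval bonus $\rho^{\texttt{A}}+\varepsilon\,(C_t+c(n_t))$ is collected at most once with $C_t+c(n_t)=C+\sum_{k=l}^{t}c(n_k)$, and $\varepsilon\le\varepsilon^{\texttt{max}}\le 1$ absorbs the refunded costs before taking expectations. The only cosmetic difference is that you split into the approval and no-approval cases instead of bounding the approval indicator by $1$. This also makes the horizon case (last active step $T$ with no absorption) explicit, which the paper's $\sum_{t=l}^{\tau-1}$ bookkeeping glosses over.
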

Note that as an immediate consequence of the above, it also holds for the optimal value function (Eq.~\ref{eq:app-optimal-value}):

\begin{equation*}
    V^\varepsilon(S,l) \leq \rho^{A} +\varepsilon\cdot C
\end{equation*}

\begin{proof}
    Fix any such $S,l$ and policy $\pi$. Denote by $\tau$ the corresponding stopping time when $S_l=S$, \ie,
    \begin{equation*}
        \tau = T \wedge \min\{ t \in \{l,\dots,T\} \colon S_t = S^{\texttt{out}} \,\text{or}\, f(S_t) \geq 1/\kappa \mid S_l = S  \}.
    \end{equation*}
    Then, $r^\varepsilon(S_t,n_t,S_{t+1})=0$ for $t\geq \tau$ by definition of the rewards (Eq.~\ref{eq:MDP-reward-def}). Thus,
    \begin{align*}
        \sum_{t=l}^T r^\varepsilon(S_t,n_t,S_{t+1}) &= \sum_{t=l}^{\tau-1}r^\varepsilon(S_t,n_t,S_{t+1})\\
        &=\sum_{t=l}^{\tau-1} \left(  -c(n_t) + (\rho^{\texttt{A}} + \varepsilon\cdot(C_t+c(n_t)))\cdot\mathds{1}\{ f(S_{t+1}) \ge 1/\kappa \} \right)\\
        &= -\sum_{t=l}^{\tau-1} c(n_t) + \sum_{t=l}^{\tau-1} (\rho^{\texttt{A}}+\varepsilon\cdot(C_t+c(n_t)))\cdot\mathds{1}\{ f(S_{t+1}) \ge 1/\kappa \}.
    \end{align*}

Note that by definition of $\tau$, the second term is at most non-null for the summand with $t=\tau-1$. Thus:
\begin{align*}
        \sum_{t=l}^T r^\varepsilon(S_t,n_t,S_{t+1}) &\leq -\sum_{t=l}^{\tau-1} c(n_t) +  (\rho^{\texttt{A}}+\varepsilon\cdot(C_{\tau-1}+c(n_{\tau-1})))\\
        &= -\sum_{t=l}^{\tau-1} c(n_t) +  \rho^{\texttt{A}} + \varepsilon\cdot(C_{\tau-1}+c(n_{\tau-1}))\\
        &= -\sum_{t=l}^{\tau-1} c(n_t) +  \rho^{\texttt{A}} + \varepsilon\cdot C_\tau\\
        &=\rho^{\texttt{A}}  -\sum_{t=l}^{\tau-1} c(n_t) + \varepsilon \cdot \left( C + \sum_{t=l}^{\tau-1}c(n_t) \right)\\
        &\leq \rho^{\texttt{A}} +\varepsilon\cdot C,
\end{align*}
where in the last step we have used $\varepsilon \leq \varepsilon^{\texttt{max}}\leq 1$. This concludes the proof.

\end{proof}

\clearpage
\newpage

\subsection{Proof of Proposition~\ref{prop:value-monotonicity}}\label{app:proof-prop-value-monotonicity}
We fix any subsidy $\varepsilon\in [0,\varepsilon^{\texttt{max}}]$. We will use the Bellman optimality equation~\cite{Sutton1998}, which states that the optimal value function $V^\varepsilon$ for $\Mcal^\varepsilon$, defined by:

\begin{equation}\label{eq:app-optimal-value}
V^\varepsilon(S,l) =
\begin{dcases}
     \sup_{\pi \in \Pi} \EE_{\pi}\left[ \sum_{t=l}^T r^\varepsilon(S_t,n_t,S_{t+1}) \middle| S_l = S \right] & \text{if}\quad 0\leq l \leq T\\
    0 & \text{if} \quad l = T+1,
\end{dcases}
\end{equation}
satisfies the following recursive condition:
\begin{equation*}
    V^\varepsilon(S,l) = \max_{n \in \Acal} \left\{ \sum_{S' \in \Scal} P(S'|S,n) \cdot  \left(r^\varepsilon(S,n,S') + V^\varepsilon(S',l+1) \right) \right\},
\end{equation*}
where $P$ are the transition dynamics of the MDP, defined implicitly in Eq.~\ref{eq:transition-dynamics}.
We can now particularize to the action space $\Acal$ in $\Mcal^\varepsilon$. Since the action $n=0$ results in a null reward (see Eq.~\ref{eq:MDP-reward-def}), we obtain for $l\leq T$ and any state $S=(\alpha, \beta,C)$ such that $f(S) < 1/\kappa$ (and $S\neq S^{\texttt{out}}$):

\begin{equation}\label{eq:app-bellman-value-optimality}
    V^\varepsilon(S,l) = 
        \max \left\{0, \max_{n\in \{1,\dots,n^{\texttt{max}}\}}  \EE_{X\sim \mathrm{BB}(n,\alpha,\beta)}\left[ r^\varepsilon(S,n,S(n,X)) + V^\varepsilon(S(n,X),l+1)  \right] \right\} 
\end{equation}
while $V^\varepsilon(S,l) = 0$ if $f(S)\geq 1/\kappa$  or $S=S^{\texttt{out}}$. In the above, to simplify notation, we denote by $S(n,X)$ the state to which the MDP transitions from  $S$ after selecting $n>0$ and observing outcome $X$, \ie,
\begin{equation*}
   S(n,X) = (\alpha + X,\beta + n - X,C+c(n)).
\end{equation*}

\subsubsection{Monotonicity on the belief}

We focus on proving the monotonicity property in Proposition~\ref{prop:value-monotonicity} of $V^\varepsilon$ in the parameter $\alpha$.\footnote{The monotonicity in the parameter $\beta$ follows using a completely symmetric argument.} More precisely, fix a state $S=(\alpha,\beta,C)$ such that $f(\alpha,\beta) < 1/\kappa$ and $\alpha'\geq \alpha$ such that $f(\alpha',\beta) < 1/\kappa$. Denote $S'=(\alpha',\beta,C)$. We want to show that
\[\boxed{
    V^\varepsilon(\alpha,\beta,C,l) \leq V^\varepsilon(\alpha',\beta,C,l)
}
\]
for any $0 \leq l \leq T$. We proceed by induction over $l$.

\xhdr{Base case $l=T$}
Using Eq.~\ref{eq:app-bellman-value-optimality} we obtain:

\begin{align*}
    &V^\varepsilon(\alpha',\beta,C,T)\\
    &= \max \left\{0, \max_{n\in \{1,\dots,n^{\texttt{max}}\}}  \EE_{X\sim \mathrm{BB}(n,\alpha',\beta)}\left[ r^\varepsilon(S',n,S'(n,X))  \right] \right\} \\
    &=\max \bigg\{0, \max_{n\in \{1,\dots,n^{\texttt{max}}\}}  \EE_{X\sim \mathrm{BB}(n,\alpha',\beta)}\Big[ 
    -c(n)\\
    &\quad + (\rho^{\texttt{A}} + \varepsilon\cdot(C+c(n)))\cdot\mathds{1}\{ f(\alpha',\beta)\cdot E(n,X) \ge 1/\kappa \}
    \Big] \bigg\}\\
    &\overset{(*)}{\geq}\max \bigg\{0, \max_{n\in \{1,\dots,n^{\texttt{max}}\}}  \EE_{X\sim \mathrm{BB}(n,\alpha',\beta)}\Big[ 
    -c(n)\\
    &\quad + (\rho^{\texttt{A}} + \varepsilon\cdot(C+c(n)))\cdot\mathds{1}\{ f(\alpha,\beta)\cdot E(n,X) \ge 1/\kappa \}
    \Big] \bigg\}\\
    &\overset{(**)}{\geq}\max \bigg\{0, \max_{n\in \{1,\dots,n^{\texttt{max}}\}}  \EE_{X\sim \mathrm{BB}(n,\alpha,\beta)}\Big[ 
    -c(n)\\
    &\quad + (\rho^{\texttt{A}} + \varepsilon\cdot(C+c(n)))\cdot\mathds{1}\{ f(\alpha,\beta)\cdot E(n,X) \ge 1/\kappa \}
    \Big] \bigg\}\\
    &= V^\varepsilon(\alpha,\beta,C,T)
\end{align*}
where in $(*)$ we have used that the function $f(\bullet,\bullet)$ defined in Eq.~\ref{eq:test-process} is non-decreasing in its first component\footnote{This is immediate to verify since $\theta^{\texttt{b}}\in (0,1) \implies \log(1+\theta^{\texttt{b}}(e-1)) \in (0,1)$.}. In step $(**)$ we have used that the distribution $\mathrm{BB}(\alpha',\beta,n)$ stochastically dominates (in the first-order sense) $\mathrm{BB}(\alpha,\beta,n)$ if $\alpha' \geq \alpha$.

\xhdr{Induction step $l+1 \to l$} Suppose now that  $V^\varepsilon(\alpha,\beta,C,l+1) \leq V^\varepsilon(\alpha',\beta,C,l+1)$ holds for a certain $l+1 \leq T$ and for all $\alpha,\beta,\alpha'$ such that $\alpha'\geq \alpha$, $f(\alpha,\beta)< 1/\kappa$ and $f(\alpha',\beta)< 1/\kappa$. Using Eq.~\ref{eq:app-bellman-value-optimality} again, note that:

\begin{align*}
    V^\varepsilon(\alpha',\beta,C,l) &=\max \left\{0, \max_{n\in \{1,\dots,n^{\texttt{max}}\}}  \underbrace{\EE_{X\sim \mathrm{BB}(n,\alpha',\beta)}\left[ r^\varepsilon(S',n,S'(n,X)) + V^\varepsilon(S'(n,X),l+1)  \right] }_{\dagger}\right\}
\end{align*}
Consider a given $n>0$ and expand the term $\dagger$ above:

\begin{align*}
    \dagger &= \EE_{X\sim \mathrm{BB}(n,\alpha',\beta)}\left[ r^\varepsilon(S',n,S'(n,X)) + V^\varepsilon(S'(n,X),l+1)  \right]\\
    &=\EE_{X\sim \mathrm{BB}(n,\alpha',\beta)}\Big[ 
    \mathds{1}\{ f(S'(n,X))\geq 1/\kappa\}
    \left( r^\varepsilon(S',n,S'(n,X)) + V^\varepsilon(S'(n,X),l+1) \right)\\
    &+
    \mathds{1}\{ f(S'(n,X))< 1/\kappa\}
    \left( r^\varepsilon(S',n,S'(n,X)) + V^\varepsilon(S'(n,X),l+1) \right)
    \Big]\\
    &\overset{(*)}{\geq}\EE_{X\sim \mathrm{BB}(n,\alpha',\beta)}\Big[ 
    \mathds{1}\{ f(S'(n,X))\geq 1/\kappa\}
    \left( r^\varepsilon(S',n,S'(n,X)) + V^\varepsilon(S'(n,X),l+1) \right)\\
    &+
    \mathds{1}\{ f(S'(n,X))< 1/\kappa\}
    \left( r^\varepsilon(S',n,S'(n,X)) + V^\varepsilon(S(n,X),l+1) \right)
    \Big]\\
    &\overset{(**)}{=}\EE_{X\sim \mathrm{BB}(n,\alpha',\beta)}\Big[ 
    \mathds{1}\{ f(S'(n,X))\geq 1/\kappa, f(S(n,X))\geq 1/\kappa\}\\
    &\quad \cdot \left( r^\varepsilon(S',n,S'(n,X)) + V^\varepsilon(S(n,X),l+1) \right)\\
    &+ \mathds{1}\{ f(S'(n,X))\geq 1/\kappa, f(S(n,X))< 1/\kappa\}\\
    &\quad \cdot \left( r^\varepsilon(S',n,S'(n,X)) + V^\varepsilon(S'(n,X),l+1) \right)\\
    &+
    \mathds{1}\{ f(S'(n,X))< 1/\kappa\}
    \left( r^\varepsilon(S',n,S'(n,X)) + V^\varepsilon(S(n,X),l+1) \right)
    \Big]\\
\end{align*}
where in $(*)$ we have used that if $f(S'(n,X))< 1/\kappa$, then $f(S(n,X))< 1/\kappa$ and hence the induction hypothesis applies $V^\varepsilon(S'(n,X),l+1) \geq V^\varepsilon(S(n,X),l+1)$. In $(**)$ we have used that if $f(S'(n,X)) \geq 1/\kappa$ and $f(S(n,X)) \geq 1/\kappa$, then $V^\varepsilon(S(n,X),l+1)=V^\varepsilon(S'(n,X),l+1)=0$.

Lastly, in $\dagger$, consider the case $f(S'(n,X)) \geq 1/\kappa$ and $f(S(n,X)) < 1/\kappa$. Then, 
\begin{equation*}
    r^\varepsilon(S',n,S'(n,X)) + V^\varepsilon(S'(n,X),l+1)= -c(n) + (\rho^{\texttt{A}} + \varepsilon\cdot(C+c(n)))
\end{equation*}
and,
\begin{equation*}
    r^\varepsilon(S,n,S(n,X)) + V^\varepsilon(S(n,X),l+1)= -c(n) + V^\varepsilon(S(n,X),l+1)\overset{\diamond}{\leq} -c(n) +\rho^{\texttt{A}}+\varepsilon\cdot(C+c(n))),
\end{equation*}
where $\diamond$ follows from Lemma~\ref{lemma:bound-value-function}.

Finally, noting that $r^\varepsilon(S',n,S'(n,X)) \geq r^\varepsilon(S,n,S(n,X))$ and using the first-order stochastic dominance for the Beta-Binomial again, we conclude that:
\begin{align*}
    \dagger &\geq \mathbb{E}_{X\sim\mathrm{BB}(n,\alpha',\beta)}\Big[ \mathds{1}\{f(S'(n,X))\geq 1/\kappa,\, f(S(n,X))\geq 1/\kappa\} \\
    &\hphantom{{}\geq \mathbb{E}_{X\sim\mathrm{BB}(n,\alpha',\beta)}\Big[\,}{}\cdot \big(r^\varepsilon(S,n,S(n,X)) + V^\varepsilon(S(n,X),l+1)\big) \\
    &\hphantom{{}\geq \mathbb{E}_{X\sim\mathrm{BB}(n,\alpha',\beta)}\Big[}{}+ \mathds{1}\{f(S'(n,X))\geq 1/\kappa,\, f(S(n,X))< 1/\kappa\} \\
    &\hphantom{{}\geq \mathbb{E}_{X\sim\mathrm{BB}(n,\alpha',\beta)}\Big[\,}{}\cdot \big(r^\varepsilon(S,n,S(n,X)) + V^\varepsilon(S(n,X),l+1)\big) \\
    &\hphantom{{}\geq \mathbb{E}_{X\sim\mathrm{BB}(n,\alpha',\beta)}\Big[}{}+ \mathds{1}\{f(S'(n,X))< 1/\kappa\} \\
    &\hphantom{{}\geq \mathbb{E}_{X\sim\mathrm{BB}(n,\alpha',\beta)}\Big[\,}{}\cdot \big(r^\varepsilon(S,n,S(n,X)) + V^\varepsilon(S(n,X),l+1)\big) \Big] \\
    &\geq \mathbb{E}_{X\sim\mathrm{BB}(n,\alpha,\beta)}\big[r^\varepsilon(S,n,S(n,X)) + V^\varepsilon(S(n,X),l+1)\big].
\end{align*}
That is,
\begin{multline}\label{eq:app-monotonicity-action-value}
    \EE_{X\sim \mathrm{BB}(n,\alpha',\beta)}\left[ r^\varepsilon(S',n,S'(n,X)) + V^\varepsilon(S'(n,X),l+1)  \right] \geq\\ \EE_{X\sim \mathrm{BB}(n,\alpha,\beta)}\left[ r^\varepsilon(S,n,S(n,X)) + V^\varepsilon(S(n,X),l+1)  \right],
\end{multline}
and taking maximum over the action $n$ and using Eq.~\ref{eq:app-bellman-value-optimality}:
\begin{equation*}
    V^\varepsilon(\alpha',\beta,C,l)  \geq V^\varepsilon(\alpha,\beta,C,l) .
\end{equation*}
This concludes the induction step and thus the proof of the monotonicity in the belief.

\subsubsection{Monotonicity on the cost}
We now focus on proving the monotonicity property in Proposition~\ref{prop:value-monotonicity} of $V^\varepsilon$ in the cumulated cost $C$. More precisely, fix a state $S=(\alpha,\beta,C)$ such that $f(\alpha,\beta) < 1/\kappa$ and consider $C' \geq C$. Denote $S'=(\alpha,\beta,C')$. We want to show that
\[\boxed{
    V^\varepsilon(\alpha,\beta,C,l) \leq V^\varepsilon(\alpha,\beta,C',l)
}
\]
for any $0 \leq l \leq T$. Again, we proceed by induction over $l$.

\xhdr{Base $l=T$}
Using Eq.~\ref{eq:app-bellman-value-optimality} we obtain:

\begin{align*}
    V^\varepsilon(\alpha,\beta,C',T) &= \max\!\left\{0,\, \max_{n\in\{1,\ldots,n^{\texttt{max}}\}} \mathbb{E}_{X\sim\mathrm{BB}(n,\alpha,\beta)}\!\left[ r^\varepsilon(S',n,S'(n,X)) \right] \right\} \\
    &= \max\bigg\{0,\, \max_{n\in\{1,\ldots,n^{\texttt{max}}\}} \mathbb{E}_{X\sim\mathrm{BB}(n,\alpha,\beta)}\Big[ -c(n) \\
    &\qquad {}+ \big(\rho^{\texttt{A}} + \varepsilon(C'+c(n))\big)\cdot\mathds{1}\{ f(\alpha,\beta)\cdot E(n,X) \ge 1/\kappa \} \Big]\bigg\} \\
    &\geq \max\bigg\{0,\, \max_{n\in\{1,\ldots,n^{\texttt{max}}\}} \mathbb{E}_{X\sim\mathrm{BB}(n,\alpha,\beta)}\Big[ -c(n) \\
    &\qquad {}+ \big(\rho^{\texttt{A}} + \varepsilon(C+c(n))\big)\cdot\mathds{1}\{ f(\alpha,\beta)\cdot E(n,X) \ge 1/\kappa \} \Big]\bigg\} \\
    &= V^\varepsilon(\alpha,\beta,C,T).
\end{align*}

\xhdr{Induction step $l+1 \to l$} Suppose now that  $V^\varepsilon(\alpha,\beta,C,l+1) \leq V^\varepsilon(\alpha,\beta,C',l+1)$ holds for a certain $l+1 \leq T$ and for all $\alpha,\beta$ such that $f(\alpha,\beta)< 1/\kappa$ and $0\leq C \leq C'$. Using Eq.~\ref{eq:app-bellman-value-optimality} again, note that:

\begin{align*}
    V^\varepsilon(\alpha,\beta,C',l) &=\max \left\{0, \max_{n\in \{1,\dots,n^{\texttt{max}}\}}  \underbrace{\EE_{X\sim \mathrm{BB}(n,\alpha,\beta)}\left[ r^\varepsilon(S',n,S'(n,X)) + V^\varepsilon(S'(n,X),l+1)  \right] }_{\dagger}\right\}
\end{align*}
Consider a given $n>0$ and expand the term $\dagger$ above:

\begin{align*}
    \dagger &= \EE_{X\sim \mathrm{BB}(n,\alpha,\beta)}\left[ r^\varepsilon(S',n,S'(n,X)) + V^\varepsilon(S'(n,X),l+1)  \right]\\
    &=\EE_{X\sim \mathrm{BB}(n,\alpha,\beta)}\Big[ 
    \mathds{1}\{ f(S'(n,X))\geq 1/\kappa\}
    \left( r^\varepsilon(S',n,S'(n,X)) + V^\varepsilon(S'(n,X),l+1) \right)\\
    &+
    \mathds{1}\{ f(S'(n,X))< 1/\kappa\}
    \left( r^\varepsilon(S',n,S'(n,X)) + V^\varepsilon(S'(n,X),l+1) \right)
    \Big]\\
    &\overset{(*)}{\geq}\EE_{X\sim \mathrm{BB}(n,\alpha,\beta)}\Big[ 
    \mathds{1}\{ f(S'(n,X))\geq 1/\kappa\}
    \left( r^\varepsilon(S',n,S'(n,X)) + V^\varepsilon(S'(n,X),l+1) \right)\\
    &+
    \mathds{1}\{ f(S'(n,X))< 1/\kappa\}
    \left( r^\varepsilon(S',n,S'(n,X)) + V^\varepsilon(S(n,X),l+1) \right)
    \Big]\\
    &\overset{(**)}{=}\EE_{X\sim \mathrm{BB}(n,\alpha,\beta)}\Big[ 
    \mathds{1}\{ f(S'(n,X))\geq 1/\kappa\}
    \left( r^\varepsilon(S',n,S'(n,X)) + V^\varepsilon(S(n,X),l+1) \right)\\
    &+
    \mathds{1}\{ f(S'(n,X))< 1/\kappa\}
    \left( r^\varepsilon(S',n,S'(n,X)) + V^\varepsilon(S(n,X),l+1) \right)
    \Big]\\
\end{align*}
where in $(*)$ we have used that if $f(S'(n,X))< 1/\kappa$, then the induction hypothesis applies $V^\varepsilon(S'(n,X),l+1) \geq V^\varepsilon(S(n,X),l+1)$ because the total cost in state $S'(n,X)$ is $C' +c(n) \geq C + c(n)$, which equals the total cost in $S(n,X)$. In $(**)$ we have used that if $f(S'(n,X)) \geq 1/\kappa$, then $f(S(n,X)) \geq 1/\kappa$ and hence $V^\varepsilon(S'(n,X),l+1)=V^\varepsilon(S(n,X),l+1)=0$.

Finally, noting that $r^\varepsilon(S',n,S'(n,X)) \geq r^\varepsilon(S,n,S(n,X))$, we conclude that:
\begin{align*}
    \dagger &\geq \EE_{X\sim \mathrm{BB}(n,\alpha,\beta)}\Big[ 
    \mathds{1}\{ f(S'(n,X))\geq 1/\kappa\}
    \left( r^\varepsilon(S,n,S(n,X)) + V^\varepsilon(S(n,X),l+1) \right)\\
    &+
    \mathds{1}\{ f(S'(n,X))< 1/\kappa\}
    \left( r^\varepsilon(S,n,S(n,X)) + V^\varepsilon(S(n,X),l+1) \right)
    \Big]\\
    &=\EE_{X\sim \mathrm{BB}(n,\alpha,\beta)}\left[ r^\varepsilon(S,n,S(n,X)) + V^\varepsilon(S(n,X),l+1)  \right]
\end{align*}
That is,
\begin{multline*}
    \EE_{X\sim \mathrm{BB}(n,\alpha,\beta)}\left[ r^\varepsilon(S',n,S'(n,X)) + V^\varepsilon(S'(n,X),l+1)  \right] \geq\\ \EE_{X\sim \mathrm{BB}(n,\alpha,\beta)}\left[ r^\varepsilon(S,n,S(n,X)) + V^\varepsilon(S(n,X),l+1)  \right],
\end{multline*}
and taking maximum over the action $n$ and using Eq.~\ref{eq:app-bellman-value-optimality}:
\begin{equation*}
    V^\varepsilon(\alpha,\beta,C,l)  \geq V^\varepsilon(\alpha,\beta,C',l) .
\end{equation*}
This concludes the induction step and thus the proof of the monotonicity in the cost.

\clearpage
\newpage

\subsection{Proof of Proposition~\ref{prop:opt-out-mono}}\label{app:proof-opt-out-mono}

Fix any subsidy level $\varepsilon\in [0,\varepsilon^{\texttt{max}}]$, total cost $C\geq0$, and an initial time step $0\leq l \leq T$.
Consider any state $S\neq S^{\texttt{out}}$ and write $S=(\alpha,\beta,C)$. Then, the Bellman optimality condition~\citep{Sutton1998} establishes that the action $n=0$ is optimal if and only if the value function coincides with the value of action $n=0$ (which is $0$ by definition of the reward function in Eq.~\ref{eq:MDP-reward-def}), that is,

\begin{equation*}
    V^\varepsilon(S,l) = 0,
\end{equation*}
where $V^\varepsilon(S,l)$ is the optimal value function defined in Eq.~\ref{eq:app-optimal-value}. Using Eq.~\ref{eq:app-bellman-value-optimality}, this is equivalent to the condition:

\begin{align*}
    &\max \left\{0, \underbrace{\max_{n\in \{1,\dots,n^{\texttt{max}}\}}  \EE_{X\sim \mathrm{BB}(n,\alpha,\beta)}\left[ r^\varepsilon(S,n,S(n,X)) + V^\varepsilon(S(n,X),l+1)  \right]}_{= H(\alpha,\beta,C)} \right\} = 0\\
    \iff& H(\alpha,\beta,C) \leq 0.
\end{align*}
Now, we observe that in proving Proposition~\ref{prop:value-monotonicity} in Appendix~\ref{app:proof-prop-value-monotonicity} (see Eq.~\ref{eq:app-monotonicity-action-value}), we precisely showed by induction that the function $H(\alpha,\beta, C)$ is non-decreasing in $\alpha$, and an analogous argument shows that it is non-increasing in $\beta$. We leverage this property in what follows.

For any $\alpha>0$, define the quantity
    \begin{equation*}
        \tilde{\beta}(\alpha) = \sup \{ \beta \,\colon\, H(\alpha,\beta,C)>0 \text{ and } \beta>0\}.
    \end{equation*}
As a consequence of the monotonicity of $H(\alpha,\beta,C)$, if $\alpha'\geq\alpha$, then
\begin{equation*}
        H(\alpha,\beta,C)>0 \implies H(\alpha',\beta,C)>0,
\end{equation*}
and thus the following set inclusion holds:
\begin{equation*}
        \{ \beta \,\colon\, H(\alpha,\beta,C)>0 \text{ and } \beta>0\} \subseteq \{ \beta \,\colon\, H(\alpha',\beta,C)>0 \text{ and } \beta>0\}.
\end{equation*}
Taking the supremum, the above implies:

\begin{equation*}
    \underbrace{\sup \{ \beta \,\colon\, H(\alpha,\beta,C)>0 \text{ and } \beta>0\} }_{\tilde{\beta}(\alpha)} \leq \underbrace{\sup \{ \beta \,\colon\, H(\alpha',\beta,C)>0 \text{ and } \beta>0\}}_{\tilde{\beta}(\alpha')},
\end{equation*}
which shows that $\tilde{\beta}(\alpha)$ is non-decreasing in $\alpha$.

Next, observe that:
\begin{itemize}
        \item If $\beta>\tilde{\beta}(\alpha)$, then by definition of the supremum we have that $H(\alpha,\beta,C)\leq 0$, and thus the action $n=0$ is optimal.
        \item If $\beta<\tilde{\beta}(\alpha)$, then again by the definition of the supremum, there exists $\beta'$ such that $\beta < \beta' \leq \tilde{\beta}(\alpha)$ with $H(\alpha,\beta',C)>0$. Then, using the monotonicity, $H(\alpha,\beta, C) \geq H(\alpha,\beta', C)>0$, which means that opting out is strictly sub-optimal: there exists an action $n>0$ that leads to a strictly higher expected value if taken at state $S$ and time $l$.
\end{itemize}
This proves the claim in Proposition~\ref{prop:opt-out-mono}.

\clearpage
\newpage

\subsection{Proof of Proposition~\ref{prop:linear-value}}\label{app:proof-linear-values}
Recall that the value function in the MDP $\Mcal^\varepsilon$ is defined by:
\begin{equation}\label{eq:app-value-function}
        V^\varepsilon_\pi(S,l)  = \EE_{\pi}\left[ \sum_{t=l}^T r^\varepsilon(S_t,n_t,S_{t+1}) \middle| S_l=S\right],
\end{equation}
with $r^\varepsilon$ defined by:
\begin{equation}\label{eq:app-reward-def}
r^\varepsilon(S,n,S') =
\begin{dcases}
    -c(n) + (\rho^{\texttt{A}} + \varepsilon\cdot(C+c(n)))\cdot\mathds{1}\{ f(S') \ge 1/\kappa \} & 
    \text{if } 0<f(S)<1/\kappa\\
    0 & \text{if } S = S^{\texttt{out}} \text{ or } f(S) \ge 1/\kappa.
\end{dcases}
\end{equation}

Firstly, note that all rewards in Eq.~\ref{eq:app-value-function} become $0$ as soon as $S_t = S^{\texttt{out}}$ or $f(S_t) \ge 1/\kappa$. This is because, from the transition dynamics in Eq.~\ref{eq:transition-dynamics}, any such state $S_t$ is absorbing, \ie, $S_{t+1}=S_t$ if $S_t = S^{\texttt{out}}$ or $f(S_t) \geq 1/\kappa$, and satisfies $r^\varepsilon(S_t,n_t,S_{t+1})=0$. Then, the terms in Eq.~\ref{eq:app-value-function} that are non-zero correspond to time steps $t \leq \tau$, where $\tau$ is the stopping time defined by:

\begin{equation*}
    \tau = T \wedge \min \{t\in\{l,\dots,T \} \colon n_t = 0 \,\text{or}\, f(S_{t+1}) \geq 1/\kappa\}.
\end{equation*}
For $l\leq t \leq \tau $, the first case of the reward function in Eq.~\ref{eq:app-reward-def} applies:
\begin{align*}
V_\pi^\varepsilon(S,l)
&= \mathbb{E}_\pi\!\left[ \sum_{t=l}^{\tau} \left( -c(n_t) + (\rho^{\texttt{A}} + \varepsilon C_{t+1}) \mathds{1}\{f(S_{t+1}) \ge 1/\kappa\} \right) \;\middle|\; S_l = S \right] \\
&= \mathbb{E}_\pi\Bigg[ \sum_{t=l}^{\tau}\big({-}c(n_t) + \rho^{\texttt{A}} \mathds{1}\{f(S_{t+1}) \ge 1/\kappa\}\big) \\
&\hphantom{{}= \mathbb{E}_\pi\Bigg[}{} + \varepsilon \sum_{t=l}^{\tau} C_{t+1} \mathds{1}\{f(S_{t+1}) \ge 1/\kappa\} \;\Bigg|\; S_l = S \Bigg]
\end{align*}
Note that the first part of the expectation corresponds exactly to the value function under no subsidy, $V_\pi^0(S,l)$, and thus,

\begin{equation*}
    V_\pi^\varepsilon(S,l)  = V_\pi^0(S,l) + \varepsilon\cdot \mathbb{E}_\pi \left[ \sum_{t=l}^{\tau} C_{t+1} \cdot \mathds{1}\{f(S_{t+1}) \ge 1/\kappa\} \;\middle|\; S_l = S \right]
\end{equation*}
Now, define:
\begin{align}\label{eq:rejection-cost-conditional}\nonumber
        A_{\pi}(S,l) &= \mathbb{E}_\pi \left[ \sum_{t=l}^{\tau} C_{t+1} \cdot \mathds{1}\{f(S_{t+1}) \ge 1/\kappa\} \;\middle|\; S_l = S \right]\\
        &= \mathbb{E}_\pi \left[ C_{\tau+1} \cdot \mathds{1}\{f(S_{\tau+1}) \ge 1/\kappa\} \;\middle|\; S_l = S \right]
\end{align}
This term corresponds to the (expected) total cost incurred by the agent conditional on the product being approved, when the MDP starts from state $S$ at time $l$.\footnote{We use the term ``conditional on approval'' informally and for didactic purposes, since $A_\pi(S,l)$ is the quantity that naturally appears in the value decomposition $V_\pi^\varepsilon(S,l) = V_\pi^0(S,l) + \varepsilon \cdot A_\pi(S,l)$, representing the expected subsidy paid by the principal. Strictly speaking, however, $A_\pi(S,l)$ is the expected cost weighted by the indicator of approval, rather than a conditional expectation in the measure-theoretic sense.}
Indeed, by the definition of the stopping time $\tau$, approval (\ie, $f(S_{t+1}) \ge 1/\kappa$) can only happen at exactly $t = \tau$. If the agent opts out or the horizon $T$ is reached without approval, the indicator $\mathds{1}\{f(S_{t+1}) \ge 1/\kappa\}$ is $0$ for all $t$. 
We conclude that 
\begin{equation*}
    V_\pi^\varepsilon(S,l)  = V_\pi^0(S,l) + \varepsilon \cdot A_{\pi}(S,l).
\end{equation*}

\clearpage
\newpage

\subsection{Proof of Proposition~\ref{prop:utility-piecewise-convex}}\label{app:proof-utility-piecewise-convex}

Denote by $\Pi^{\texttt{r}} \subset \Pi$ the set of all deterministic policies that select action $n=0$ at $S^{\texttt{out}}$, any state such that $f(S)\geq 1/\kappa$, or any state such that $S \notin S^{\texttt{r}}$, and observe that $\Pi^{\texttt{r}}$ is finite. Recall from Proposition~\ref{prop:linear-value} that, for any policy $\pi \in \Pi^{\texttt{r}}$,
\begin{equation}
    V^{\varepsilon}_{\pi}(S,l) = V^{0}_{\pi}(S,l) + \varepsilon \cdot A_{\pi}(S,l).
\end{equation}
In particular, at the initial state $S_0 = (\alpha_0, \beta_0,0)$ and time $l=0$, we have
\begin{equation}
    \Bar{U}^{\texttt{A}}(\pi;\varepsilon)
    = V^{\varepsilon}_{\pi}(S_0,0)
    = V^{0}_{\pi}(S_0,0) + \varepsilon \cdot A_{\pi}(S_0,0).
\end{equation}

Since $\Pi^{\texttt{r}}$ is finite, for every $(S,l)\in \Scal^{\texttt{r}}\times [T]$ the optimal value function
\begin{equation}
    V^{\varepsilon}(S,l)
    = \max_{\pi \in \Pi^{\texttt{r}}} \big\{ V^0_{\pi}(S,l) + \varepsilon \cdot A_\pi(S,l) \big\}
\end{equation}
is the point-wise maximum of finitely many affine functions in $\varepsilon$. It is well-known that such a point-wise maximum is convex, continuous, and piecewise linear~\cite{rockafellar1970convex}. Here, we particularize to our problem, with the goal of constructing a single partition of $[0,\varepsilon^{\texttt{max}}]$ on which a single optimal policy (for $\Mcal^\varepsilon$ at every state $S \in \Scal^{\texttt{r}}$ and time $l \in [T]$ simultaneously) is optimal in each interval of the partition.

For any two distinct policies $\pi, \pi' \in \Pi^{\texttt{r}}$ and any $(S,l) \in \Scal^{\texttt{r}}\times [T]$, consider the difference
\begin{equation*}
    (V^0_{\pi}(S,l) - V^0_{\pi'}(S,l)) + \varepsilon (A_\pi(S,l) - A_{\pi'}(S,l)).
\end{equation*}
If $(V^0_{\pi}(S,l), A_\pi(S,l)) = (V^0_{\pi'}(S,l), A_{\pi'}(S,l))$ for all $(S,l)$, then the two policies yield identical values for all $\varepsilon$ and all $(S,l)$, and we may break ties arbitrarily and retain only one of them. 
Similarly, if $V^0_{\pi}(S,l) + \varepsilon \cdot A_\pi(S,l) \ge V^0_{\pi'}(S,l) + \varepsilon \cdot A_{\pi'}(S,l)$ for all $\varepsilon \in [0,\varepsilon^{\texttt{max}}]$ and all $(S,l)$, then policy $\pi'$ can be removed without changing the optimal value function at any $(S,l)$. Thus, without loss of generality, we restrict our attention to a subset $\tilde{\Pi}^{\texttt{r}} \subset \Pi^{\texttt{r}}$ such that for any distinct $\pi, \pi' \in \tilde{\Pi}^{\texttt{r}}$ and any $(S,l)$, the corresponding affine functions intersect exactly once in $[0,\varepsilon^{\texttt{max}}]$, and each policy is optimal for some value of $\varepsilon$.

Then, let
\begin{multline*}
    \Kcal = \big\{ \varepsilon \in [0,\varepsilon^{\texttt{max}}] : \exists\, \pi \neq \pi' \in \tilde{\Pi}^{\texttt{r}},\, \exists\, (S,l) \in \Scal^{\texttt{r}}\times [T] \\ \text{ such that } V^0_{\pi}(S,l) + \varepsilon \cdot A_\pi(S,l) = V^0_{\pi'}(S,l) + \varepsilon \cdot A_{\pi'}(S,l) \big\}.
\end{multline*}
Since $\tilde{\Pi}^{\texttt{r}}$, $\Scal^{\texttt{r}}$, and $[T]$ are all finite, and each quadruple $(\pi,\pi',S,l)$ contributes at most one point to $\Kcal$, the set $\Kcal$ is finite. Ordering its elements and adding the endpoints if necessary, we obtain a partition
\begin{equation*}
    0 = \varepsilon_0 < \varepsilon_1 < \cdots < \varepsilon_L = \varepsilon^{\texttt{max}}.
\end{equation*}

By construction, no two affine functions $\varepsilon \mapsto V^0_{\pi}(S,l) + \varepsilon \cdot A_\pi(S,l)$ intersect in any open interval $(\varepsilon_i, \varepsilon_{i+1})$ at any $(S,l)$. Hence, for every $(S,l)$, the ordering of $\{V^0_{\pi}(S,l) + \varepsilon \cdot A_\pi(S,l)\}_{\pi \in \tilde{\Pi}^{\texttt{r}}}$ is constant on each such interval. It follows that there exists a single policy $\pi_i \in \tilde{\Pi}^{\texttt{r}}$ that is optimal for $\Mcal^\varepsilon$ at every $(S,l)$ simultaneously, and such that
\begin{equation*}
    V^{\varepsilon}(S,l)
    = V^0_{\pi_i}(S,l) + \varepsilon \cdot A_{\pi_i}(S,l)
    \quad \text{for all } \varepsilon \in [\varepsilon_i, \varepsilon_{i+1}) \text{ and all } (S,l).
\end{equation*}
In particular, specializing to $(S,l)=(S_0,0)$,
\begin{equation*}
    \Bar{U}^{\texttt{A}}(\pi^\varepsilon;\varepsilon)
    = V^0_{\pi_i}(S_0,0) + \varepsilon \cdot A_{\pi_i}(S_0,0)
    \quad \text{for all } \varepsilon \in [\varepsilon_i, \varepsilon_{i+1}).
\end{equation*}
This establishes the result for $\Bar{U}^{\texttt{A}}(\pi^\varepsilon;\varepsilon)$. Note that $\EE_{(\alpha_0,\beta_0)\sim Q}[\Bar{U}^{\texttt{A}}(\pi^\varepsilon;\varepsilon)]$ is then also convex and continuous because it is an average of convex continuous functions that are uniformly bounded (observe that by Lemma~\ref{lemma:bound-value-function}, we have $0 \leq \Bar{U}^{\texttt{A}}(\pi^\varepsilon;\varepsilon)\leq \rho^{\texttt{A}}$ uniformly for any initial belief parameters $(\alpha_0,\beta_0)$, where the lower bound follows from the fact that the agent can always opt out at no cost); continuity of the expectation then follows from the dominated convergence theorem. Lastly, in each interval $[\varepsilon_i, \varepsilon_{i+1})$, $\Bar{U}^{\texttt{A}}(\pi^\varepsilon;\varepsilon)$ is linear, which implies that $\EE_{(\alpha_0,\beta_0)\sim Q}[\Bar{U}^{\texttt{A}}(\pi^\varepsilon;\varepsilon)]$ is also linear:

\begin{equation*}
    \EE_{(\alpha_0,\beta_0)\sim Q}[\Bar{U}^{\texttt{A}}(\pi^\varepsilon;\varepsilon)]
    = \EE_{(\alpha_0,\beta_0)\sim Q}[V^0_{\pi_i}(S_0,0)] + \varepsilon \cdot \EE_{(\alpha_0,\beta_0)\sim Q}[A_{\pi_i}(S_0,0)]
    \quad \text{for all } \varepsilon \in [\varepsilon_i, \varepsilon_{i+1}).
\end{equation*}
This concludes the proof.

\clearpage
\newpage

\subsection{Proof of Proposition~\ref{prop:regulator-utility-linear}}

Recall that the social utility $\Bar{U}^{\texttt{S}}(\varepsilon;\pi)$ is defined as (Eq.~\ref{eq:adaptive-utility-regulator}):
\begin{multline*}
    \Bar{U}^{\texttt{S}}(\varepsilon;\pi)
    = \mathbb{E}_{\pi, (\alpha_0,\beta_0)\sim Q} \Bigg[
    \sum_{t=0}^T \left( \rho^{\texttt{S}} - \varepsilon \cdot C_{t+1} \right) \cdot
    \mathds{1}\left\{ 0 < f(S_t) < 1/\kappa \leq f(S_{t+1}) \right\}
    \;\Bigg|\; S_0 = (\alpha_0,\beta_0,0)
    \Bigg].
\end{multline*}
Let $\tau$ be the stopping time defined by:

\begin{equation*}
    \tau = T \wedge \min \{t\in\{0,\dots,T \} \colon n_t = 0 \,\text{ or }\, f(S_{t+1}) \geq 1/\kappa\},
\end{equation*}
that is, the last time step before reaching an absorbing state---either $S^{\texttt{out}}$, for which $f(S^{\texttt{out}})=0$, or any state $S$ such that $f(S) \geq 1/\kappa$. Then, using the linearity of the expectation:

\begin{align*}
    &\Bar{U}^{\texttt{S}}(\varepsilon;\pi) \\
    &= \mathbb{E}_{\pi, (\alpha_0,\beta_0)\sim Q} \left[ \sum_{t=0}^T \left( \rho^{\texttt{S}} - \varepsilon \cdot C_{t+1} \right) \cdot\mathds{1}\left\{ 0 < f(S_t) < 1/\kappa \leq f(S_{t+1}) \right\}\;\middle|\; S_0 = (\alpha_0,\beta_0,0) \right]\\
    &=\rho^{\texttt{S}}\cdot \mathbb{E}_{\pi, (\alpha_0,\beta_0)\sim Q} \left[ \sum_{t=0}^T \mathds{1}\left\{ 0 < f(S_t) < 1/\kappa \leq f(S_{t+1}) \right\}\;\middle|\; S_0 = (\alpha_0,\beta_0,0) \right]\\
    &\quad -\varepsilon \cdot \mathbb{E}_{\pi, (\alpha_0,\beta_0)\sim Q} \left[ \sum_{t=0}^\tau  C_{t+1} \cdot\mathds{1}\left\{ 0 < f(S_t) < 1/\kappa \leq f(S_{t+1}) \right\}\;\middle|\; S_0 = (\alpha_0,\beta_0,0) \right]\\
    &=\rho^{\texttt{S}}\cdot \mathbb{E}_{ (\alpha_0,\beta_0)\sim Q} \mathbb{E}_{\pi} \left[  \mathds{1}\left\{ \exists t \in [T] \colon 0 < f(S_t) < 1/\kappa \leq f(S_{t+1}) \right\}\;\middle|\; S_0 = (\alpha_0,\beta_0,0) \right]\\
    &\quad- \varepsilon \cdot \mathbb{E}_{ (\alpha_0,\beta_0)\sim Q} \mathbb{E}_{\pi}  \left[ C_{\tau+1} \cdot\mathds{1}\left\{ 1/\kappa \leq f(S_{\tau+1}) \right\}\;\middle|\; S_0 = (\alpha_0,\beta_0,0) \right]\\
    &=\rho^{\texttt{S}}\cdot \mathbb{E}_{ (\alpha_0,\beta_0)\sim Q} \left[P_{\pi} \left( \exists t \in [T] \colon 0 < f(S_t) < 1/\kappa \leq f(S_{t+1}) \;\middle|\; S_0 = (\alpha_0,\beta_0,0) \right) \right]\\
    &\quad- \varepsilon \cdot \mathbb{E}_{ (\alpha_0,\beta_0)\sim Q} \mathbb{E}_{\pi}  \left[ C_{\tau+1} \cdot\mathds{1}\left\{ 1/\kappa \leq f(S_{\tau+1}) \right\}\;\middle|\; S_0 = (\alpha_0,\beta_0,0) \right]\\
    &=\rho^{\texttt{S}}\cdot \mathbb{E}_{ (\alpha_0,\beta_0)\sim Q} \left[P_{\pi} \left( \exists t \in [T] \colon 1/\kappa \leq f(S_{t+1}) \;\middle|\; S_0 = (\alpha_0,\beta_0,0) \right) \right]\\
    &\quad- \varepsilon \cdot \mathbb{E}_{ (\alpha_0,\beta_0)\sim Q}  \left[ A_{\pi}(\alpha_0,\beta_0,0,0) \right]
\end{align*}
where we have used the definition of $A_{\pi}$ in Eq.~\ref{eq:rejection-cost-conditional}, and the fact that the condition $0 < f(S_t) < 1/\kappa \leq f(S_{t+1})$ can occur at most once at time step $t=\tau$.

\clearpage
\newpage

\subsection{Proof of Proposition~\ref{prop:algorithm}}
In this section, we will show that Algorithm~\ref{alg:epsilon} recovers (in a finite number of steps) the partition of the interval $[0, \varepsilon^{\texttt{max}}]$ given by Proposition~\ref{prop:utility-piecewise-convex}, which we denote by $\mathcal{P}$:\footnote{We denote by $\mathcal{P}(\varepsilon)$ the interval of $\mathcal{P}$ containing $\varepsilon$. }

\begin{equation*}
    \mathcal{P} = \{ 0=\varepsilon_0 < \varepsilon_1 < \dots < \varepsilon_L = \varepsilon^{\texttt{max}} \},
\end{equation*}
where for each interval $[\varepsilon_i, \varepsilon_{i+1})$, there exists a (deterministic) policy $\pi_i$ that is optimal for any $\varepsilon \in [\varepsilon_i, \varepsilon_{i+1})$, that is (writing $S_0 = (\alpha_0, \beta_0,0)$),

\begin{equation*}
\begin{dcases}
    \Bar{U}^{\texttt{A}}(\pi^\varepsilon;\varepsilon)
    = V^0_{\pi_i}(S_0,0) + \varepsilon \cdot A_{\pi_i}(S_0,0)
    &\, \forall \varepsilon \in [\varepsilon_i, \varepsilon_{i+1})\\
    \EE_{(\alpha_0,\beta_0)\sim Q}[\Bar{U}^{\texttt{A}}(\pi^\varepsilon;\varepsilon)]
    = \EE_{(\alpha_0,\beta_0)\sim Q}[V^0_{\pi_i}(S_0,0)] + \varepsilon \cdot \EE_{(\alpha_0,\beta_0)\sim Q}[A_{\pi_i}(S_0,0)]
    &\, \forall \varepsilon \in [\varepsilon_i, \varepsilon_{i+1}).
\end{dcases}
\end{equation*}
Since the above partition does not depend on the belief $Q$ of the principal (Proposition~\ref{prop:utility-piecewise-convex}), we assume without loss of generality that the principal knows the agent's initial belief $\mathrm{Beta}(\alpha_0, \beta_0)$. Consequently, the expectations $\EE_{(\alpha_0,\beta_0)\sim Q}[\bullet]$ simply correspond to evaluating the integrand at the true belief parameters $(\alpha_0, \beta_0)$. We begin by proving the following lemma.\footnote{For convenience, we use $L$, $R$ and $int$ as subscripts instead of using $\pi_L$, $\pi_R$ and $\pi_{int}$.}

\begin{lemma}\label{lemma:convex}
Let $\varepsilon_L < \varepsilon_R$ be two subsidy levels with, respectively, optimal policies $\pi_L$ and $\pi_R$, and value functions (evaluated at the initial state $(\alpha_0, \beta_0,0)$ and initial time step $l=0$, which we omit for notational convenience) $V^0_L + \varepsilon \cdot A_L$ and $V^0_R + \varepsilon \cdot A_R$ (see Proposition~\ref{prop:linear-value}). The following holds:
\begin{enumerate}
    \item If $A_L = A_R$, then $\Bar{U}^{\texttt{A}}(\pi^\varepsilon;\varepsilon) = V^0_L + \varepsilon \cdot A_L$ for all $\varepsilon \in [\varepsilon_L, \varepsilon_R]$.
    \item If $A_L < A_R$, let $\varepsilon_{int} = \frac{V^0_L - V_R^0}{A_R - A_L}$. Then, $\Bar{U}^{\texttt{A}}(\pi^\varepsilon;\varepsilon) = \max(V^0_L + \varepsilon \cdot A_L, V^0_R + \varepsilon \cdot A_R)$ for all $\varepsilon \in [\varepsilon_L, \varepsilon_R]$ if and only if $\Bar{U}^{\texttt{A}}(\pi^{\varepsilon_{int}};\varepsilon_{int}) = V^0_L + \varepsilon_{int} \cdot A_L$.
\end{enumerate}
\end{lemma}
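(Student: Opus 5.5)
The plan is to argue entirely through the scalar function $g(\varepsilon) := \Bar{U}^{\texttt{A}}(\pi^\varepsilon;\varepsilon)$ on $[\varepsilon_L,\varepsilon_R]$, using two facts. The first is that $g$ is convex by Proposition~\ref{prop:utility-piecewise-convex}; as noted above, we take $Q$ to be a point mass, so $g$ is just the optimal value at $S_0$.

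The second is a family of affine lower bounds on $g$. Write $\ell_L(\varepsilon) = V^0_L + \varepsilon A_L$ and $\ell_R(\varepsilon) = V^0_R + \varepsilon A_R$. By Proposition~\ref{prop:linear-value}, $\ell_L(\varepsilon)$ is exactly the anticipated utility of the fixed policy $\pi_L$ under any subsidy $\varepsilon$, and the same holds for $\ell_R$ and $\pi_R$. Since $g$ is a maximum over policies, we get $g(\varepsilon) \ge \max(\ell_L(\varepsilon),\ell_R(\varepsilon))$ for every $\varepsilon$. Optimality of $\pi_L$ at $\varepsilon_L$ and of $\pi_R$ at $\varepsilon_R$ gives the endpoint identities $g(\varepsilon_L)=\ell_L(\varepsilon_L)$ and $g(\varepsilon_R)=\ell_R(\varepsilon_R)$.

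The tool used in both parts is the chord bound from convexity. Whenever $g$ agrees with a single affine function $\ell$ at two points $a<b$, we have $g\le \ell$ on $[a,b]$. Combined with $g\ge\ell$ when $\ell$ is one of our lower bounds, this gives $g=\ell$ on $[a,b]$.

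For part 1, assume $A_L=A_R$. Applying the lower bound at the right endpoint gives $\ell_R(\varepsilon_R)=g(\varepsilon_R)\ge \ell_L(\varepsilon_R)$, so $V^0_R\ge V^0_L$. Applying it at the left endpoint gives $\ell_L(\varepsilon_L)\ge\ell_R(\varepsilon_L)$, so $V^0_L\ge V^0_R$. Hence $\ell_L=\ell_R$, and $g$ agrees with $\ell_L$ at both endpoints. The chord bound together with $g\ge \ell_L$ then yields $g=\ell_L$ on the whole interval.

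For part 2, assume $A_L<A_R$. The difference $\ell_L-\ell_R$ is strictly decreasing, and the same endpoint inequalities give $\ell_L-\ell_R\ge0$ at $\varepsilon_L$ and $\le 0$ at $\varepsilon_R$. So its unique root $\varepsilon_{int}$ lies in $[\varepsilon_L,\varepsilon_R]$. Moreover, $\max(\ell_L,\ell_R)$ equals $\ell_L$ on $[\varepsilon_L,\varepsilon_{int}]$ and equals $\ell_R$ on $[\varepsilon_{int},\varepsilon_R]$. The ``only if'' direction is then immediate: evaluate the assumed identity at $\varepsilon_{int}$, where the maximum equals $\ell_L(\varepsilon_{int})$.

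For the ``if'' direction, assume $g(\varepsilon_{int})=\ell_L(\varepsilon_{int})$, which also equals $\ell_R(\varepsilon_{int})$. On $[\varepsilon_L,\varepsilon_{int}]$, $g$ agrees with $\ell_L$ at both ends, so the chord bound and $g\ge\ell_L$ give $g=\ell_L=\max(\ell_L,\ell_R)$ there. On $[\varepsilon_{int},\varepsilon_R]$, $g$ agrees with $\ell_R$ at both ends, so the same argument gives $g=\ell_R=\max(\ell_L,\ell_R)$.

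The only delicate step I expect is justifying the lower bound $g\ge\ell_L,\ell_R$ away from the endpoints. This needs $V^0_L+\varepsilon A_L$ to be the value of the fixed policy $\pi_L$ at every $\varepsilon$, and not only at $\varepsilon_L$. That is exactly what Proposition~\ref{prop:linear-value} guarantees, because $V^0_\pi$ and $A_\pi$ do not depend on $\varepsilon$. The remaining care is locating $\varepsilon_{int}$ inside the interval, which is the sign argument on $\ell_L-\ell_R$ given in part 2.
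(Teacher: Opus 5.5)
Your proof is correct, but for the key step it takes a different route from the paper. You run both parts through one tool: the supporting lines $g\ge \ell_L,\ell_R$, which follow from Proposition~\ref{prop:linear-value} and maximization over policies, combined with the chord bound for the convex function $g$.

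The paper proves part~1 via monotonicity of subgradients of $g$: $A_L$ and $A_R$ squeeze every slope in between. It proves the reverse direction of part~2 by contradiction at the level of policies. If some $\pi'$ beat both lines at some $\varepsilon'\le\varepsilon_{int}$, optimality of $\pi_L$ at $\varepsilon_L$ would force $A_{\pi'}>A_L$. Extrapolating the affine value of $\pi'$ to $\varepsilon_{int}$ then contradicts $g(\varepsilon_{int})=\ell_L(\varepsilon_{int})$.

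That contradiction argument needs only the linearity in $\varepsilon$ of each fixed policy's value and optimality at the endpoints, not the convexity result. It also names the obstruction concretely as a competing policy, which is how Algorithm~\ref{alg:epsilon} uses the lemma: a strictly better $\pi_{int}$ triggers a split.

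Your version is more uniform and more explicit. It derives $V^0_L=V^0_R$ in part~1 rather than asserting it, and it locates $\varepsilon_{int}\in[\varepsilon_L,\varepsilon_R]$ through the sign argument. It also states the lower bound $g\ge\max(\ell_L,\ell_R)$, which the paper uses only tacitly: both to make $A_L$ a subgradient at $\varepsilon_L$, and to pass from ``no policy beats the maximum'' to equality.
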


\begin{proof}
For part 1., if $A_L = A_R$, then since $\Bar{U}^{\texttt{A}}(\pi^\varepsilon;\varepsilon)$ is convex, its subgradient must be non-decreasing in $\varepsilon$. Thus, for any $\varepsilon \in (\varepsilon_L, \varepsilon_R)$, we must have $A_L \le A_{\pi^\varepsilon} \le A_R$, which implies $A_{\pi^\varepsilon} = A_L$. By continuity and the fact that $\Bar{U}^{\texttt{A}}(\pi^\varepsilon;\varepsilon)$ is the point-wise maximum of affine functions, it follows that $V^0_{\pi^\varepsilon} = V^0_L = V^0_R$, and the value function is a single affine segment on this interval.

For part 2., the direct implication follows because if $\Bar{U}^{\texttt{A}}(\pi^\varepsilon;\varepsilon) = \max(V^0_L + \varepsilon \cdot A_L, V^0_R + \varepsilon \cdot A_R)$ for all $\varepsilon \in [\varepsilon_L, \varepsilon_R]$, then evaluating at $\varepsilon_{int}$, and since $V^0_L + \varepsilon_{int}\cdot A_L =V^0_R + \varepsilon_{int}\cdot A_R$ by definition, we obtain $\Bar{U}^{\texttt{A}}(\pi^{\varepsilon_{int}};\varepsilon_{int}) = V^0_L + \varepsilon_{int} \cdot A_L$.
For the backward direction, we argue by contradiction and suppose $\Bar{U}^{\texttt{A}}(\pi^{\varepsilon_{int}};\varepsilon_{int}) = V^0_L + \varepsilon_{int} \cdot A_L$ but there exists some $\varepsilon' \in (\varepsilon_L, \varepsilon_R)$ and a policy $\pi'$ such that $V^0_{\pi'} + \varepsilon'\cdot A_{\pi'} > \max(V^0_L + \varepsilon' \cdot A_L, V^0_R + \varepsilon' \cdot A_R)$. 
Assume without loss of generality $\varepsilon' \leq \varepsilon_{int}$ (the argument for the case $\varepsilon' \geq \varepsilon_{int}$ is symmetric, with $\pi_R$ in place of $\pi_L$). Since $\pi_L$ is optimal at $\varepsilon_L$, we have $V^0_L + \varepsilon_L \cdot A_L \geq V^0_{\pi'} + \varepsilon_L \cdot A_{\pi'}$. Combined with the assumption $V^0_{\pi'} + \varepsilon' \cdot A_{\pi'} > V^0_L + \varepsilon' \cdot A_L$ and subtracting, we obtain $(\varepsilon' - \varepsilon_L)(A_{\pi'} - A_L) > 0$, which implies $A_{\pi'} > A_L$ since $\varepsilon' > \varepsilon_L$. Then, at $\varepsilon_{int} \geq \varepsilon'$,
\begin{align*}
V^0_{\pi'} + \varepsilon_{int} \cdot A_{\pi'} &= (V^0_{\pi'} + \varepsilon' \cdot A_{\pi'}) + (\varepsilon_{int} - \varepsilon') \cdot A_{\pi'}\\
&> (V^0_L + \varepsilon' \cdot A_L) + (\varepsilon_{int} - \varepsilon') \cdot A_L\\
&= V^0_L + \varepsilon_{int} \cdot A_L,
\end{align*}
where the strict inequality uses $V^0_{\pi'} + \varepsilon' \cdot A_{\pi'} > V^0_L + \varepsilon' \cdot A_L$ together with $A_{\pi'} > A_L$ and $\varepsilon_{int} \geq \varepsilon'$. This contradicts the assumption that $\bar{U}^{\texttt{A}}(\pi^{\varepsilon_{int}}; \varepsilon_{int}) = V^0_L + \varepsilon_{int} \cdot A_L$.
\end{proof}

To prove Proposition~\ref{prop:algorithm}, we begin by showing that any point added to the set $\Ucal$ in Algorithm~\ref{alg:epsilon} corresponds to a point of $\Pcal$. Consider any iteration of the algorithm where $V^0_{int} +\varepsilon_{int} \cdot A_{int} \leq V^0_L + \varepsilon_{int}\cdot A_L$ and $A_L \neq A_R$. Then, from Lemma~\ref{lemma:convex}, it follows that in this case, for any possible subsidy $\varepsilon\in [\varepsilon_L, \varepsilon_R]$, the optimal value function is given by $\Bar{U}^{\texttt{A}}(\pi^\varepsilon;\varepsilon) = \max( V^0_L + \varepsilon\cdot A_L, V^0_R + \varepsilon\cdot A_R)$, with a change in slope at their intersection, \ie, at $\varepsilon_{int}$. That is, $\varepsilon_{int}$ is a point of the partition $\mathcal{P}$, and the value $\Bar{U}^{\texttt{S}}(\varepsilon_{int};\pi_R) = \max_{\varepsilon\in \mathcal{P}(\varepsilon_{int})} \Bar{U}^{\texttt{S}}(\varepsilon;\pi^{\varepsilon})$ is stored in the set $\Ucal$.  

Reciprocally, consider any iteration of the algorithm where $V^0_{int} +\varepsilon_{int}\cdot A_{int} > V^0_L + \varepsilon_{int}\cdot A_L$. Then, Lemma~\ref{lemma:convex} implies that $\mathcal{P}(\varepsilon_L) \neq \mathcal{P}(\varepsilon_{int})$ and $\mathcal{P}(\varepsilon_R) \neq \mathcal{P}(\varepsilon_{int})$. That is, $\mathcal{P}(\varepsilon_{int})$ is a new interval in the partition $\mathcal{P}$ where the optimal value function is given by the linear component $V^0_{\pi_{int}} + \varepsilon \cdot A_{\pi_{int}}$.
The algorithm pushes the two sub-intervals $[\varepsilon_L, \varepsilon_{int}]$ and $[\varepsilon_{int}, \varepsilon_R]$ onto the stack (line~\ref{line:new-intervals}), along with their respective linear components $V^0_L + \varepsilon \cdot A_L$, $V^0_{\pi_{int}} + \varepsilon \cdot A_{\pi_{int}}$ and $V^0_R + \varepsilon \cdot A_R$. Since by Proposition~\ref{prop:utility-piecewise-convex}, there are a finite number of such linear components (or equivalently, $\mathcal{P}$ is finite), Algorithm~\ref{alg:epsilon} terminates in a finite number of steps.

To conclude, we argue that, upon termination, the set $\Ucal$ obtained from Algorithm~\ref{alg:epsilon} contains every point of $\mathcal{P}$. 
%
Let $N$ denote the total number of pop operations performed by the algorithm before $\mathcal{I}$ becomes empty (finite by the argument above). For $k = 0, 1, \dots, N$, denote by $\mathcal{I}_k$ the state of the stack after $k$ pops, and by $\mathcal{D}_k \subseteq \mathcal{P}$ the set of points of $\mathcal{P}$ found by the algorithm during the first $k$ iterations, with $\mathcal{D}_0=\{0,  \varepsilon^{\texttt{max}}\}$. Observe first that whenever Algorithm~\ref{alg:epsilon} pushes two new intervals onto $\Ical$ , the intersection point $\varepsilon_{int}$ lies strictly interior to the interval $\mathcal{P}(\varepsilon_{int})$ (since by Lemma~\ref{lemma:convex}, $\mathcal{P}(\varepsilon_L) \neq \mathcal{P}(\varepsilon_{int}) \neq \mathcal{P}(\varepsilon_R)$, and $\varepsilon_L < \varepsilon_{int} < \varepsilon_R$), so $\varepsilon_{int} \notin \mathcal{P}$. Consequently, every endpoint of an interval ever pushed onto $\mathcal{I}_k$ is either in $\mathcal{D}_k$ or not in $\mathcal{P}$. In light of this, we prove by induction on $k$ the following condition:\footnote{With a slight abuse of notation, we identify the elements of $\Ical$ with intervals, meaning that if $(\varepsilon_L, \pi_L, \Bar{V}^0_L, \Bar{A}_L, \varepsilon_R, \pi_R, \Bar{V}^0_R, \Bar{A}_R)\in \Ical$, we write that $[\varepsilon_L, \varepsilon_R] \in \Ical$.}
\begin{equation}\label{eq:stack-invariant}
    \mathcal{P} \;\subseteq\; \mathcal{D}_k \;\cup\; \bigcup_{[\varepsilon_L, \varepsilon_R] \in \mathcal{I}_k} (\varepsilon_L, \varepsilon_R).
\end{equation}

\xhdr{Base case $k=0$} $\mathcal{I}_0 = \{[0, \varepsilon^{\texttt{max}}]\}$ and $\mathcal{D}_0 = \{0, \varepsilon^{\texttt{max}}\}$. Every $\varepsilon \in \mathcal{P}$ is either in $\{0, \varepsilon^{\texttt{max}}\} = \mathcal{D}_0$ or in $(0, \varepsilon^{\texttt{max}})$, so  the condition in Eq.~\ref{eq:stack-invariant} holds for $k=0$.

\xhdr{Inductive step $k \to k+1$} Suppose the condition in Eq.~\ref{eq:stack-invariant} holds and consider the $(k+1)$-th pop of some $[\varepsilon_L, \varepsilon_R] \in \mathcal{I}_k$.

\begin{itemize}
    \item  If $V^0_{int} + \varepsilon_{int} \cdot A_{int} \leq V^0_L + \varepsilon_{int} \cdot A_L$, then $\varepsilon_{int}$ is added to $\mathcal{D}_{k+1}$, and the popped interval is not replaced. By Lemma~\ref{lemma:convex}, $(\varepsilon_L, \varepsilon_R) \cap \mathcal{P} \subseteq \{\varepsilon_{int}\} \subseteq \mathcal{D}_{k+1}$. Hence any $\varepsilon \in \mathcal{P}$ previously covered by $(\varepsilon_L, \varepsilon_R)$ is now in $\mathcal{D}_{k+1}$. The endpoints $\varepsilon_L$ and $\varepsilon_R$, if they belong to $\mathcal{P}$, were endpoints of the popped interval and hence are already contained in $\mathcal{D}_k \subseteq \mathcal{D}_{k+1}$. Thus, the condition in Eq.~\ref{eq:stack-invariant} holds for $k+1$.

    \item If $V^0_{int} +\varepsilon_{int}\cdot A_{int} > V^0_L + \varepsilon_{int}\cdot A_L$, then $[\varepsilon_L, \varepsilon_R]$ is replaced by $[\varepsilon_L, \varepsilon_{int}]$ and $[\varepsilon_{int}, \varepsilon_R]$, while $\mathcal{D}_{k+1} = \mathcal{D}_k$. If $\varepsilon_{int} \in \Dcal_k$, Eq.~\ref{eq:stack-invariant} holds for $k+1$, and if $\varepsilon_{int} \notin \mathcal{P}$, we have that $(\varepsilon_L, \varepsilon_R) \cap \mathcal{P} = \big[(\varepsilon_L, \varepsilon_{int}) \cup (\varepsilon_{int}, \varepsilon_R)\big] \cap \mathcal{P}$, and the condition in Eq.~\ref{eq:stack-invariant} also holds for $k+1$.
\end{itemize}

Lastly, at the final iteration, $\mathcal{I}_N = \emptyset$ implies $\mathcal{P} \subseteq \mathcal{D}_N$; combined with $\mathcal{D}_N \subseteq \mathcal{P}$ by construction, $\mathcal{D}_N = \mathcal{P}$. Since every $\varepsilon \in \mathcal{D}_N$ has its social utility saved in $\mathcal{U}$ (at line~\ref{line:initial-interval} for the initial endpoints and line~\ref{line:vertex-found} for each new point of $\Pcal$ found), $\mathcal{U}$ contains $\{(\varepsilon, \bar{U}^{\texttt{S}}(\varepsilon; \pi^\varepsilon)) : \varepsilon \in \mathcal{P}\}$. The maximizer of $\bar{U}^{\texttt{S}}$ over $[0, \varepsilon^{\texttt{max}}]$ is attained at some left endpoint $\varepsilon_i \in \mathcal{P}$, and $\arg\max_{(\varepsilon, u) \in \mathcal{U}} u$ returns it.

\clearpage
\newpage

\section{Additional Experimental Details}\label{app:experimental-details}
The complete code used for our experiments, including the implementation of Algorithm~\ref{alg:value_iteration_decomposition} and Algorithm~\ref{alg:epsilon}, is available as supplementary material. We will publicly release it with the final version of the paper.

\xhdr{Hardware setup} Our experiments are executed on a compute server equipped with 2 $\times$ Intel Xeon Gold 5317 CPU, $1{,}024$ GB main memory, and $2$ $\times$ H100 NVIDIA GPU ($80$ GB, Hopper Architecture). In each experiment, a single Nvidia H100 GPU is used.

\xhdr{Software setup}
All experiments are implemented in \texttt{Python} 3.13.5 using \texttt{PyTorch} 2.1.1 and \texttt{NumPy} 2.4.4. Computations are performed on an NVIDIA GPU with \texttt{CUDA} 13.0 support.

\xhdr{Runtime} For the setting used in Section~\ref{sec:experiments}, Algorithm~\ref{alg:value_iteration_decomposition} computes the optimal policy of the agent (for any given subsidy) in $\sim 2.4$s, and Algorithm~\ref{alg:epsilon} computes the optimal subsidy in $\sim 271$s, which involves solving $114$ times a different belief MDP $\Mcal^\varepsilon$. In Figure~\ref{fig:runtime}, we evaluate the runtime of Algorithm~\ref{alg:epsilon} across multiple configurations for the parameters $T$ and $n^{\texttt{max}}$, which determine the size of the state and action space (Proposition~\ref{prop:reacheable-set}).

\begin{figure}[h]
    \vspace{0cm}
    \centering
    \includegraphics[width=0.95\linewidth]{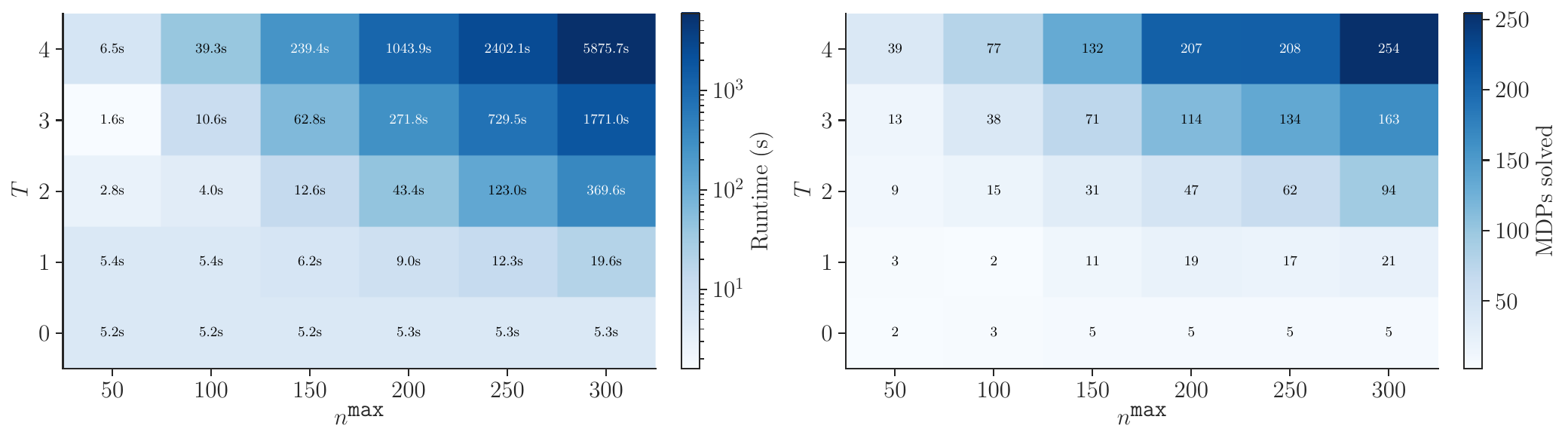}
    \caption{\textbf{Runtime of Algorithm~\ref{alg:epsilon}.} The figure shows, across multiple values of the maximum number of trials $T$ and the maximum sample size per trial $n^{\texttt{max}}$, the runtime of Algorithm~\ref{alg:epsilon}  (left panel) and the number of belief MDPs solved by the algorithm (right panel). All other parameters are fixed as specified in Tables~\ref{tab:non-econ-parameters} and~\ref{tab:econ-parameters}. The experiments are run on an NVIDIA H100 GPU.
    }
    \label{fig:runtime}
\end{figure}

\xhdr{Parameter details} Tables~\ref{tab:non-econ-parameters} and~\ref{tab:econ-parameters} report the values of all parameters required to specify the sequential approval protocol used in our fiducial setting (Section~\ref{sec:experiments}). Unless otherwise stated, all results are obtained using these values. When any parameter is varied (\eg, in Panel~(b) of Figure~\ref{fig:main} or in Appendix~\ref{app:additional-experimental-results}), we explicitly indicate it.

\begin{table}[H]
\centering
\caption{\textbf{Non-economic parameters}}
\label{tab:non-econ-parameters}
\begin{tabular}{cccccccc}
\toprule
  $T$  & $n^{\texttt{max}}$ & $\theta^{\texttt{b}}$ & $\kappa$  &$\theta^*$ & $\varepsilon^{\texttt{max}}$ & $(\alpha_0, \beta_0)$ & $Q$ \\
\midrule
 3     &   200              &              0.5       & 0.05      & 0.65      &     0.9                       & (1,1) & $\delta_{(1, 1)}$\\
\bottomrule
\end{tabular}
\end{table}

\begin{table}[H]
\centering
\caption{\textbf{Economic parameters}}\label{tab:econ-parameters}
\begin{tabular}{cccc}
\toprule
\multirow{2}{*}{$\rho^{\texttt{S}}$} & 
\multirow{2}{*}{$\rho^{\texttt{A}}$} & 
\multicolumn{2}{c}{$c(n) = c_0 + c_1 n$, $n\neq 0$} \\
\cmidrule(lr){3-4}
 &  & $c_0$ & $c_1$ \\
\midrule
\$2000\,\text{M} & \$240\,\text{M} & \$48.9\,\text{M} & \$0.066\,\text{M} \\
\bottomrule
\end{tabular}
\end{table}

\xhdr{Implementation details} Our Python implementation of Algorithm~\ref{alg:value_iteration_decomposition} leverages the bijection between pairs $(\alpha,\beta)$ and pairs $(X,N)$, where $X$ is the total number of successes and $N$ the total number of patients (the total sample size), as we detail in Eq.~\ref{eq:bijection-belief-outcomes}. Furthermore, since the cost function is linear, we also use the fact that the total cumulated cost $C$ at a given state $(X,N)$ at time $l$ can be written as $C = l \cdot c_0 + N \cdot c_1$. As a result, our implementation does not explicitly keep track of the cumulated cost $C$, which significantly reduces the computational overhead. A similar idea is used in the proof of Proposition~\ref{prop:reacheable-set}.

Our implementation of Algorithm~\ref{alg:epsilon} computes the optimal subsidy $\varepsilon^*$ (and solves each belief MDP $\Mcal^\varepsilon$) exactly. Consequently, there is no associated uncertainty in $\varepsilon^*$ or in any of the quantities that we report computed using the belief MDP, including $\Bar{U}^{\texttt{A}}$ and $\Bar{U}^{\texttt{S}}$. In contrast, the true quantities under the approval process using $\theta^*$ described in Section~\ref{sec:model}---such as the true utilities $U^{\texttt{A}}$ and $U^{\texttt{S}}$ (Eq.~\ref{eq:true-utilities}), as well as the probability that the agent opts out during the approval process---are estimated using $100{,}000-200{,}000$ Monte Carlo rollouts, and we report $95\%$ confidence error bars computed via bootstrapping with $1000$ resamples.

\clearpage
\newpage

\section{Additional Experimental Results}\label{app:additional-experimental-results}

\subsection{Additional results complementing Section~\ref{sec:experiments}}
In this section, we provide complementary results to the approval process considered in Section~\ref{sec:experiments}, whose parameters are given in Table~\ref{tab:non-econ-parameters} and Table~\ref{tab:econ-parameters}.

In Figure~\ref{fig:value_action_optimal_fiducial}, we show the optimal value function and policy for the MDP $\Mcal^{\varepsilon^*}$, with $\varepsilon^* = 1.08$, and in Figure~\ref{fig:trajectories_fiducial} we show how the belief of the agent evolves in 300 realizations of the approval process, for different values of the true efficacy $\theta^*$.

\begin{figure}[h]
    \vspace{0cm}
    \centering
    \includegraphics[width=0.99\linewidth]{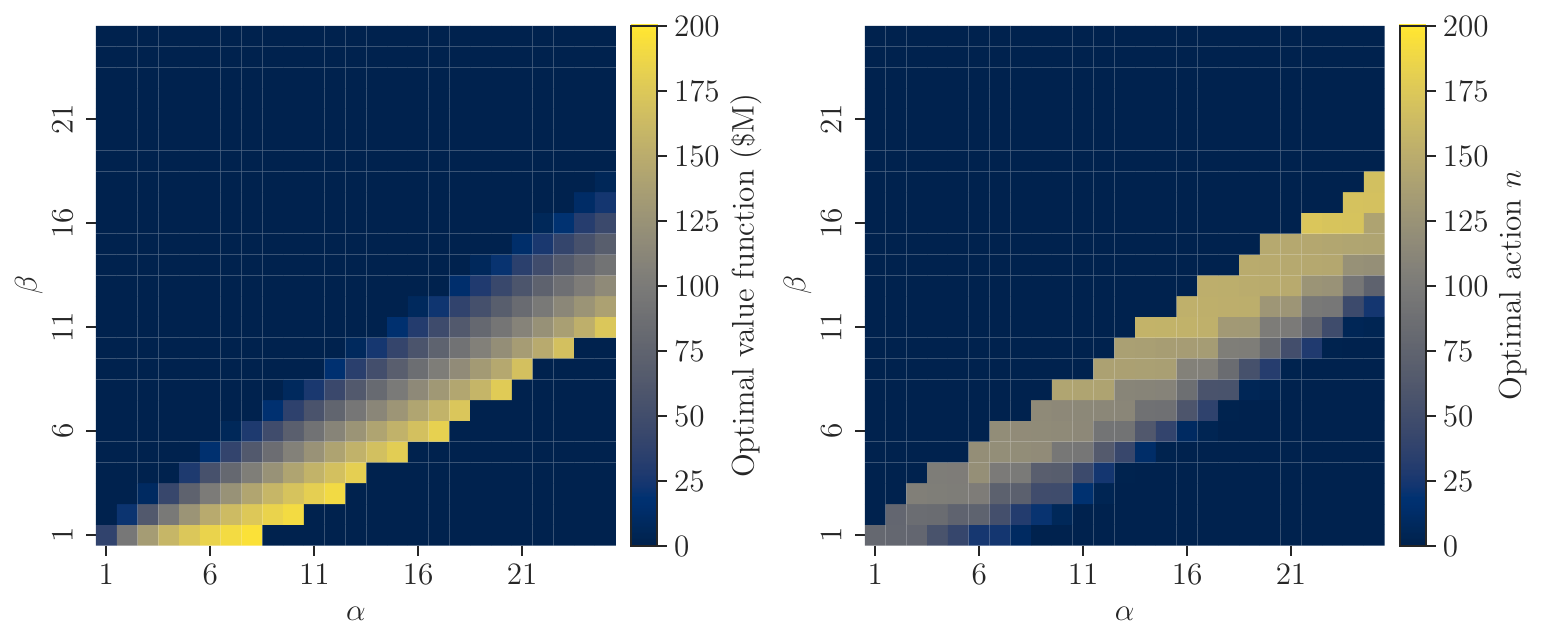}
    \caption{\textbf{Optimal value function and policy in the belief MDP $\Mcal^{\varepsilon^*}$ for the optimal subsidy $\varepsilon^* = 0.108$.}
    The left panel shows the optimal value function in the belief MDP, $V^{\varepsilon^*}(\alpha,\beta,C(\alpha,\beta),1)$, at time step $l=1$, where the cost of each state is given by $C(\alpha,\beta) = 1\cdot c_0 + (\alpha + \beta - \alpha_0 - \beta_0)\cdot c_1$ (see Eq.~\ref{eq:bijection-belief-outcomes}).
    The right panel shows the optimal action $n$ taken by the optimal policy at time step $l=1$ for each belief, \ie, $\pi^{\varepsilon^*}(\alpha,\beta,C(\alpha,\beta),1)$. The optimal action at time step $l=0$ (not shown here) is $n=79$.
    }
    \label{fig:value_action_optimal_fiducial}
\end{figure}

\begin{figure}[h]
    \vspace{0cm}
    \centering
    \subfloat[ $\theta^* = 0.3$ ]{
    \includegraphics[width=0.3\linewidth]{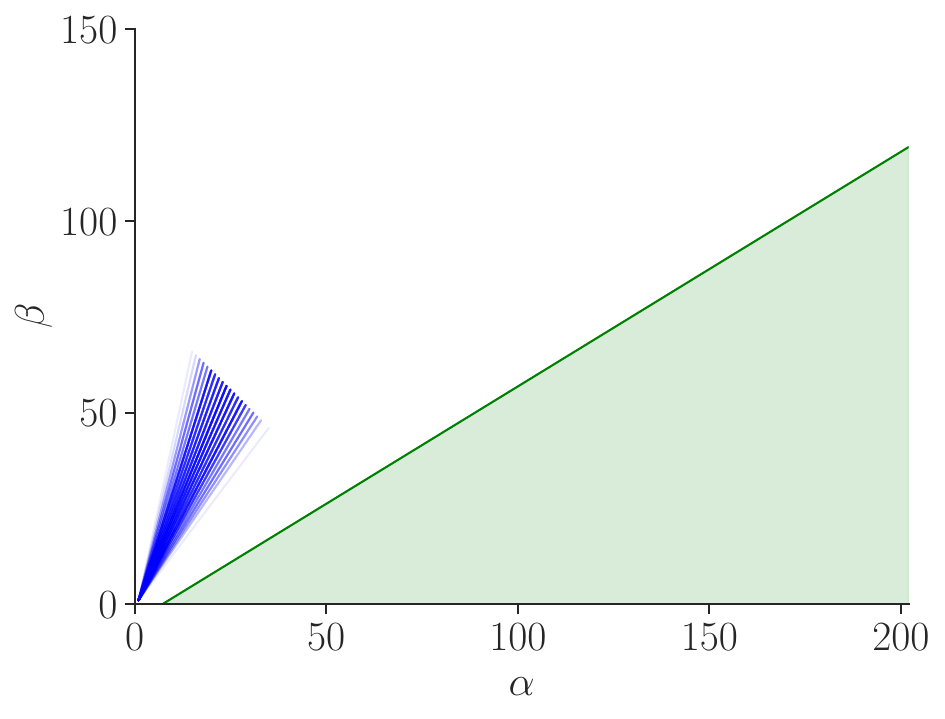}
    }
    \hspace{0mm}
    \subfloat[$\theta^* = 0.65$]{
    \includegraphics[width=0.3\linewidth]{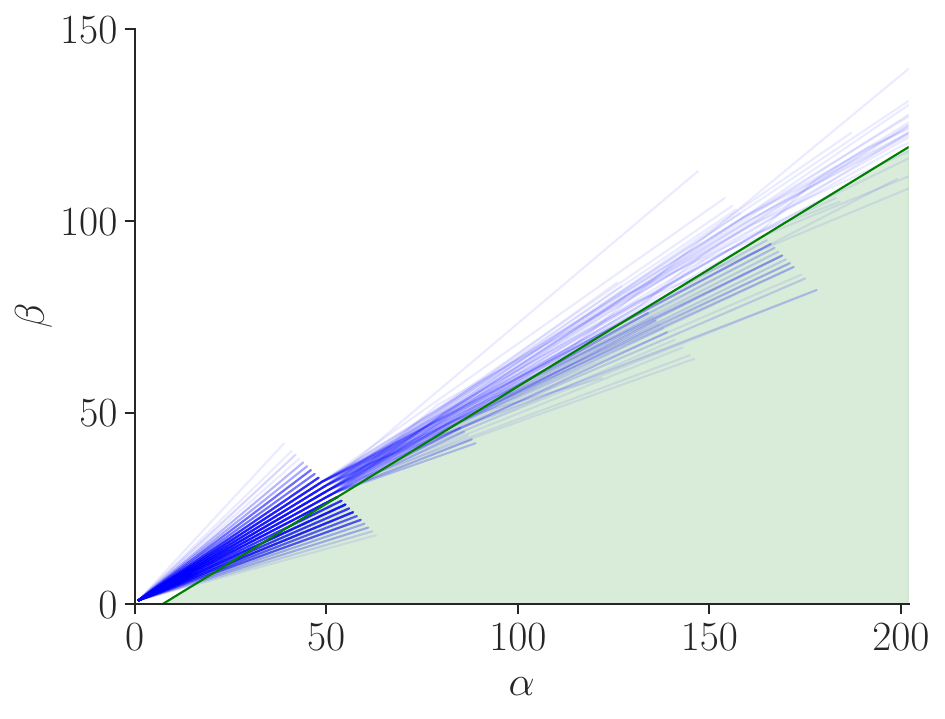}
    }
    \hspace{0mm}
    \subfloat[$\theta^* = 0.8$]{
    \includegraphics[width=0.3\linewidth]{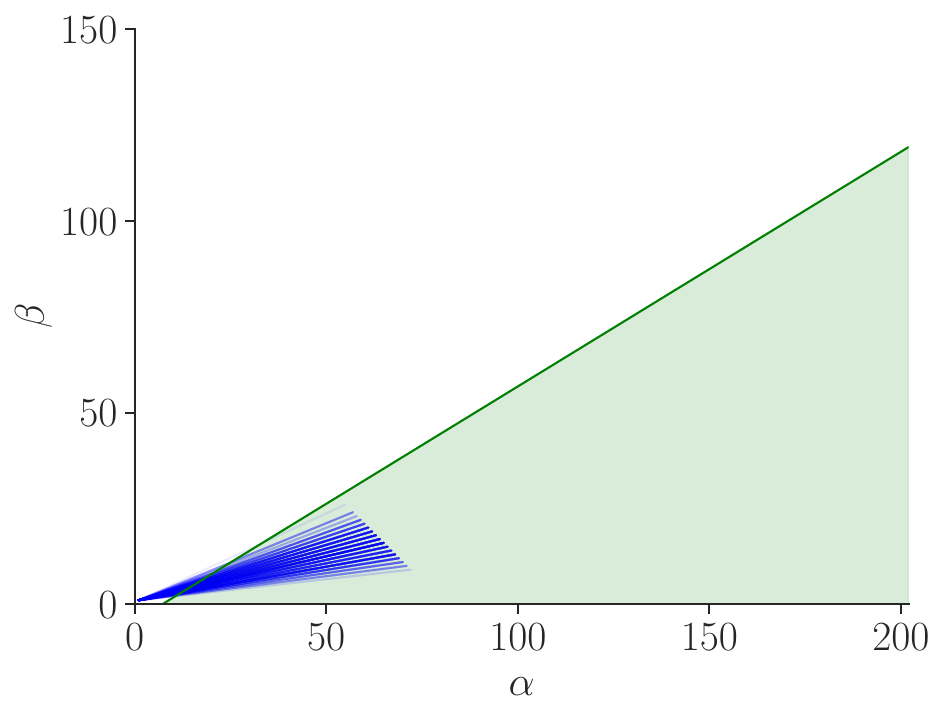}
    }
    \caption{\textbf{Trajectories of the approval process.}
    Each panel shows 300 realizations of the approval process for different true efficacies $\theta^*$ of the antibiotic. Each blue segment corresponds to the agent conducting a new trial and updating its belief (see Figure~\ref{fig:geometry} for an illustration of the geometry). In the left panel, the agent conducts a first trial and then always opts out; in the middle panel, the agent can conduct multiple trials, and in the right panel, the antibiotic is always approved after the first trial.
    }
    \label{fig:trajectories_fiducial}
\end{figure}

In Figure~\ref{fig:action_value_fiducial}, we show, for the initial action taken by the agent at time step $l=0$ (and belief $(\alpha_0,\beta_0)=(1,1)$), the expected cumulative future reward in $\Mcal^{\varepsilon^*}$ for the optimal subsidy $\varepsilon^* = 0.108$, defined for each $n>0$ as \mbox{$\EE_{\pi^{\varepsilon^*}}[\sum_{t=0}^T r^{\varepsilon^*}(S_t,n_t,S_{t+1})| S_0, n_0=n]$}. The sample size maximizing this curve is $n=79$, which is the size of the first trial conducted by the agent. Notably, although the expected reward is unimodal, it exhibits small-scale oscillations. These oscillations are not numerical artifacts. In fact, their approximate period is given by $1 / \log(1+\theta^{\texttt{b}}(e-1))$. The reason is that the agent’s actions are discrete, whereas the function $f$ defined in Proposition~\ref{prop:test-process}, which determines the approval condition, decreases by exactly $\log(1+\theta^{\texttt{b}}(e-1))$ whenever the agent selects a new action $n$ (recall that $\alpha_{t+1} + \beta_{t+1} - \alpha_t - \beta_t = n_t$). Since this change is not an integer quantity, the expected reward only exhibits a small decrease after $n$ increases by approximately $1 / \log(1+\theta^{\texttt{b}}(e-1))$, which gives rise to the observed oscillations.

\begin{figure}[h]
    \vspace{0cm}
    \centering
    \includegraphics[width=0.65\linewidth]{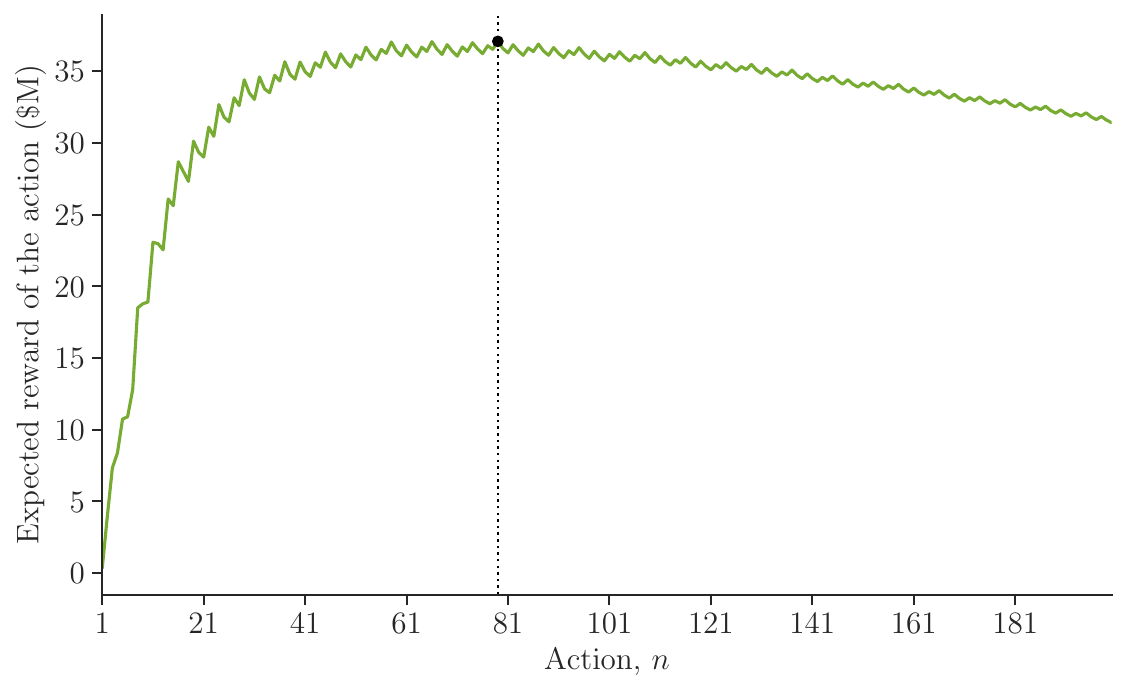}
    \caption{\textbf{Expected reward for each sample size.} The figure show, for the initial action taken by the agent at time step $l=0$ and state $(\alpha_0=1, \beta_0=1, 0)$, the total expected reward in the MDP $\Mcal^{\varepsilon^*}$ under the optimal subsidy $\varepsilon^* = 0.108$ when the agent takes action $n$ and then follows the optimal policy $\pi^{\varepsilon^*}$.
    }
    \label{fig:action_value_fiducial}
\end{figure}

\begin{figure}[h]
    \vspace{0cm}
    \centering
    \subfloat[ Opting out before approval vs. subsidy]{
    \includegraphics[width=0.47\linewidth]{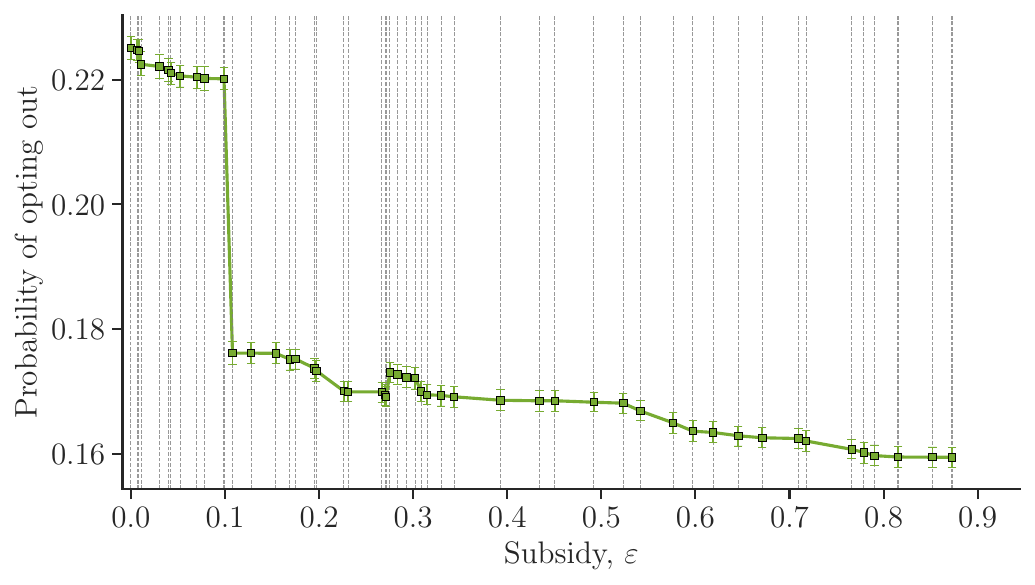}
    }
    \subfloat[ Approval probability vs subsidy]{
    \includegraphics[width=0.47\linewidth]{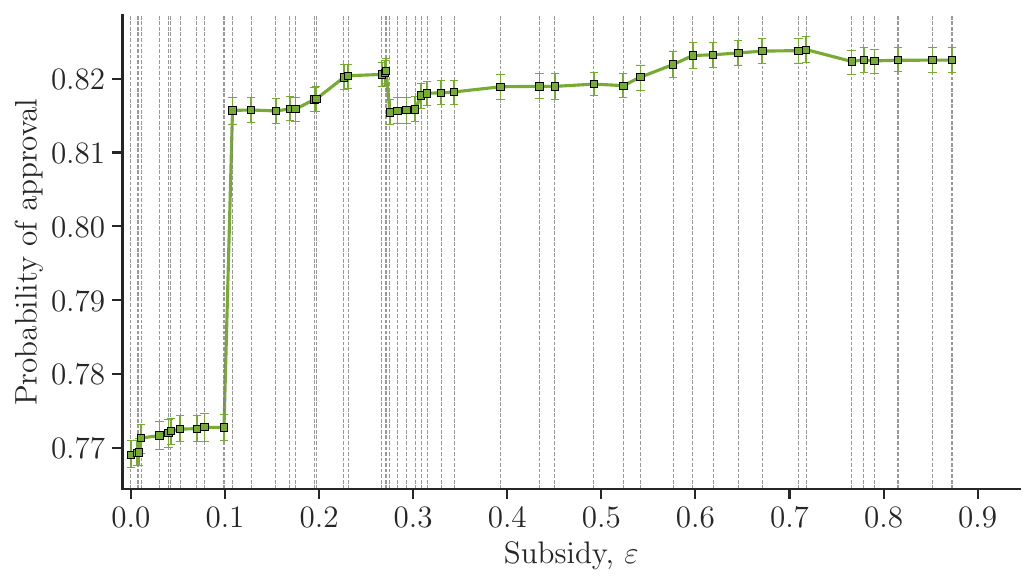}
    }
    \caption{\textbf{Opt out and approval probabilities.} The figure shows, for an antibiotic with $\theta^* = 0.65$, the probability that the agent opts out of the approval process by selecting $n=0$ before approval, as well as the probability that the antibiotic is ultimately approved. For each subsidy level, the agent follows the optimal policy. Note that, in principle, the agent may never opt out during the approval process; however, the antibiotic may still fail to be approved if the maximum number of trials is reached.
    }
    \label{fig:opt_out_prob_fiducial}
\end{figure}

In Figure~\ref{fig:agent_utilities_fiducial} and Figure~\ref{fig:social_utilities_fiducial}, we show, respectively, the utility of the agent and the social utility when the agent selects its optimal policy for each possible subsidy, both computed using the belief MDP, and the true realized utilities when the antibiotic has efficacy $\theta^*=0.65$ (Eq.~\ref{eq:true-utilities}).

\begin{figure}[h]
    \vspace{0cm}
    \centering
    \subfloat[ Agent utility $\Bar{U}^{\texttt{A}}(\pi^\varepsilon;\varepsilon)$ computed using $\Mcal^\varepsilon$ ]{
    \includegraphics[width=0.45\linewidth]{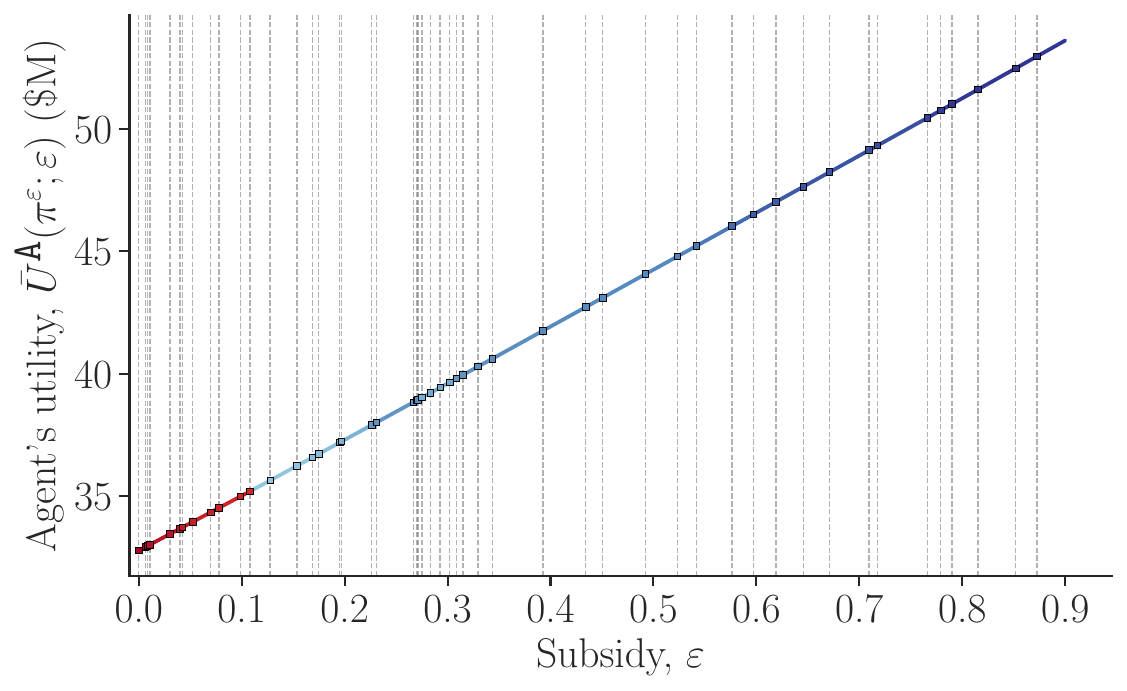}
    }
    \hspace{0mm}
    \subfloat[ Agent utility $U^{\texttt{A}}(\pi^\varepsilon;\varepsilon)$ in the approval process]{
    \includegraphics[width=0.45\linewidth]{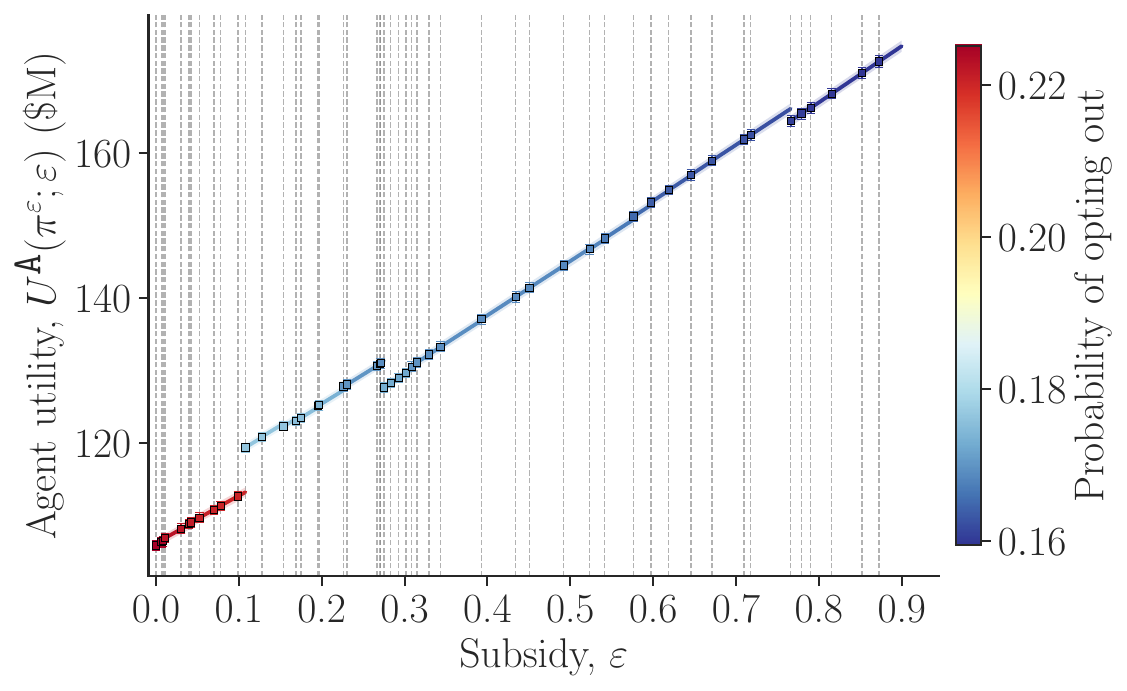}
    }
    \caption{\textbf{Agent utilities.}
    The left panel shows the agent's utility (Eq.~\ref{eq:adaptive-utility}) computed using the belief MDP $\Mcal^\varepsilon$ when the agent uses the optimal policy for each subsidy, which is a piece-wise linear, convex and continuous function in accordance with Proposition~\ref{prop:utility-piecewise-convex}. The right panel shows the true utility of the agent (Eq.~\ref{eq:true-utilities}) in the approval process when using the optimal policy $\pi^\varepsilon$ for each subsidy and $\theta^* = 0.65$.
    The dashed vertical lines correspond to the intervals of the partition $\mathcal{P}$ where the agent's optimal policy is constant (Proposition~\ref{prop:utility-piecewise-convex}).
    }
    \label{fig:agent_utilities_fiducial}
\end{figure}

\begin{figure}[h]
    \vspace{0cm}
    \centering
    \subfloat[ Social utility $\Bar{U}^{\texttt{S}}(\varepsilon;\pi^\varepsilon)$ computed using $\Mcal^\varepsilon$ ]{
    \includegraphics[width=0.45\linewidth]{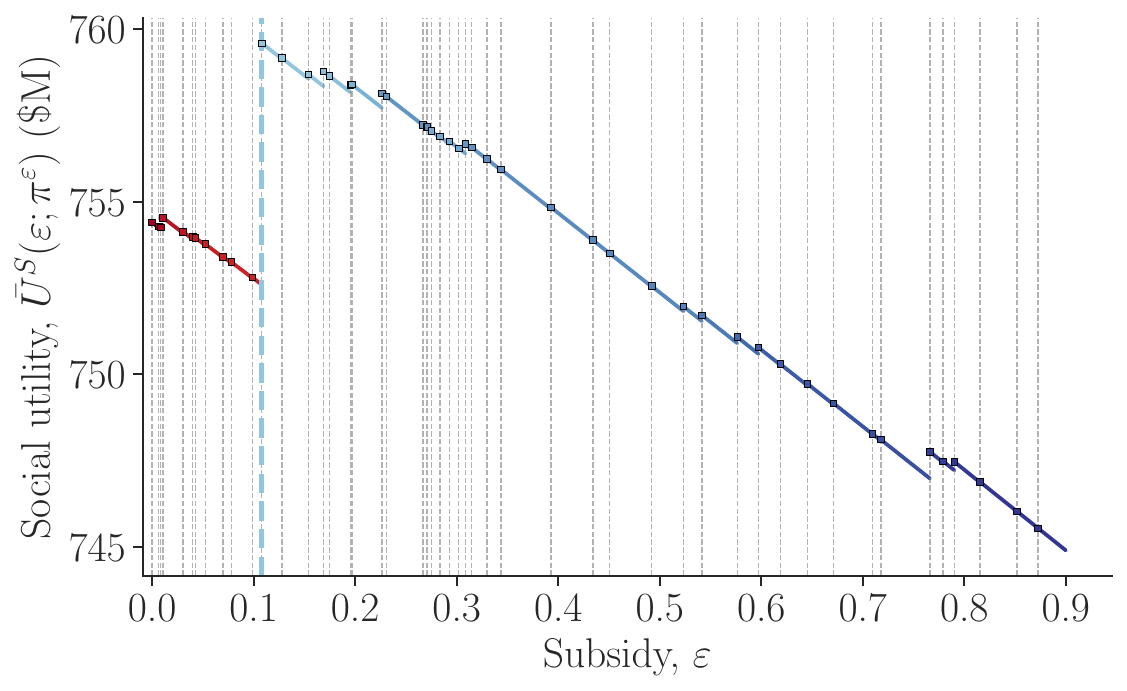}
    }
    \hspace{0mm}
    \subfloat[ Social utility $U^{\texttt{S}}(\varepsilon;\pi^\varepsilon)$ in the approval process]{
    \includegraphics[width=0.45\linewidth]{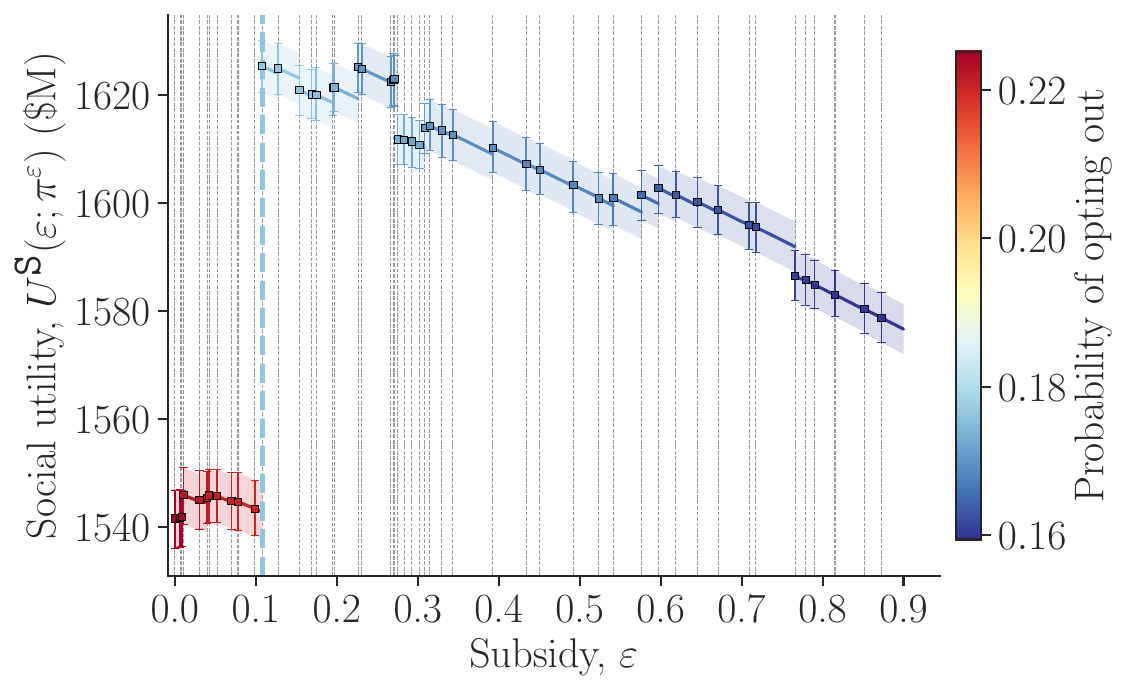}
    }
    \caption{\textbf{Social utilities.}
    The left panel shows the social utility (Eq.~\ref{eq:adaptive-utility-regulator}) computed using the belief MDP $\Mcal^\varepsilon$ when the agent uses the optimal policy for each subsidy. The right panel shows the true social utility (Eq.~\ref{eq:true-utilities}) in the approval process when the agent uses the optimal policy $\pi^\varepsilon$ for each subsidy and $\theta^* = 0.65$.
    The dashed vertical lines correspond to the intervals of the partition $\mathcal{P}$ where the agent's optimal policy is constant (Proposition~\ref{prop:utility-piecewise-convex}).
    }
    \label{fig:social_utilities_fiducial}
\end{figure}

Lastly, in Figure~\ref{fig:social_utility_efficacies} we show that the optimal social utility (that is, the social utility under the optimal subsidy) increases monotonically as the true efficacy $\theta^*$ increases.

\begin{figure}[h]
    \vspace{0cm}
    \centering
    \includegraphics[width=0.65\linewidth]{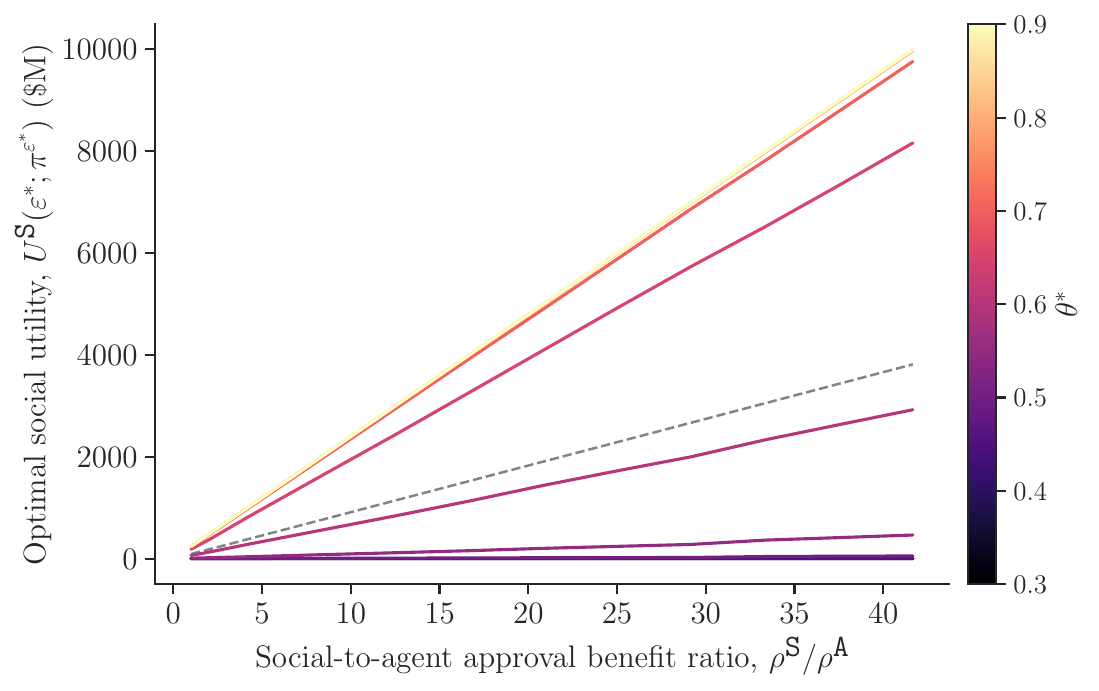}
    \caption{\textbf{Optimal social utility for different antibiotic efficacies.} The figure shows how social utility $U^{\texttt{S}}(\varepsilon^*;\pi^{\varepsilon^*})$---when the principal chooses the optimal subsidy $\varepsilon^*$ and the agent adopts the corresponding optimal policy $\pi^{\varepsilon^*}$---varies as a function of the ratio $\rho^{\texttt{S}} / \rho^{\texttt{A}}$ across different levels of efficacy $\theta^*$. The dashed line corresponds to the social utility $\Bar{U}^{\texttt{S}}(\varepsilon^*;\pi^{\varepsilon^*})$ computed using the belief MDP, which does not depend on the true efficacy $\theta^*$.
    }
    \label{fig:social_utility_efficacies}
\end{figure}

\clearpage
\newpage

\subsection{Additional results using different parameters}

In this section, we present further experimental results for the antibiotic approval process described in Section~\ref{sec:experiments}, where we vary selected parameters (see Tables~\ref{tab:non-econ-parameters} and~\ref{tab:econ-parameters}).

\subsubsection{Approval under increased experimental costs}

Here, we show the result of an antibiotic approval process with an increased experimental cost. In particular, we take the parameters in Table~\ref{tab:non-econ-parameters} and Table~\ref{tab:econ-parameters} but increase the fixed cost $c_0$ of a trial to $\$100\,\text{M}$ and the per-patient cost $c_1$ to $\$0.1\,\text{M}$. We find that the agent opts out at the beginning of the process unless the principal subsidizes a fraction higher than the optimal subsidy, $\varepsilon^* = 0.551$.

\begin{figure}[H]
    \vspace{0cm}
    \centering
    \subfloat[ Agent utility $\Bar{U}^{\texttt{A}}(\pi^\varepsilon;\varepsilon)$ computed using $\Mcal^\varepsilon$ ]{
    \includegraphics[width=0.45\linewidth]{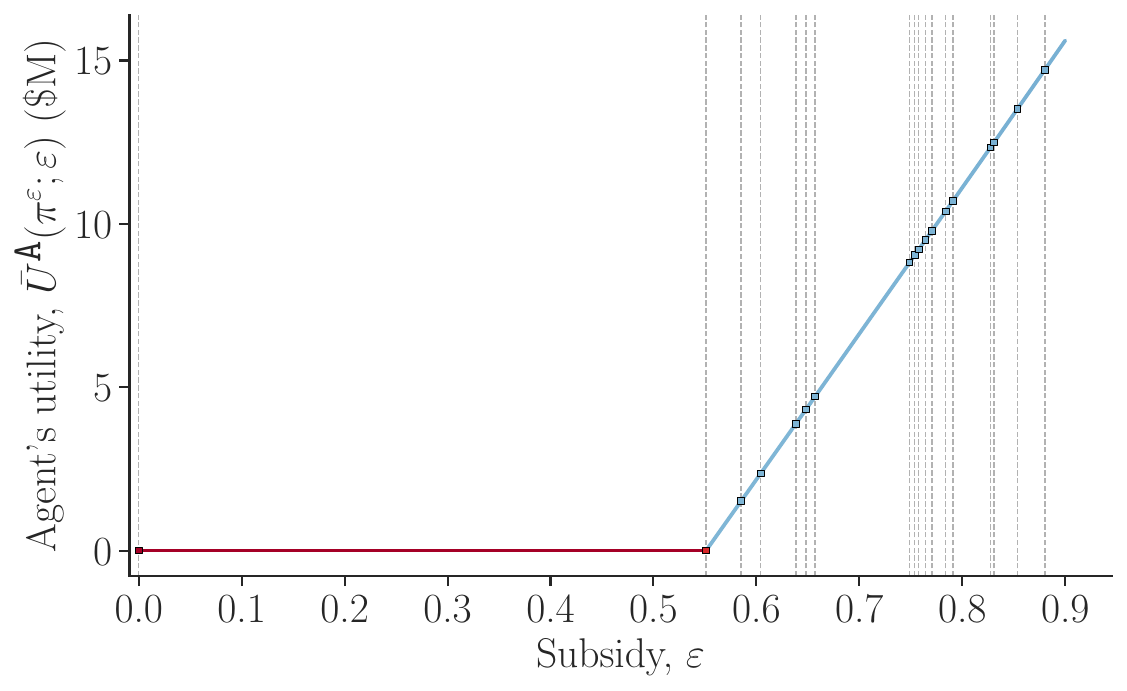}
    }
    \hspace{0mm}
    \subfloat[ Agent utility $U^{\texttt{A}}(\pi^\varepsilon;\varepsilon)$ in the approval process]{
    \includegraphics[width=0.45\linewidth]{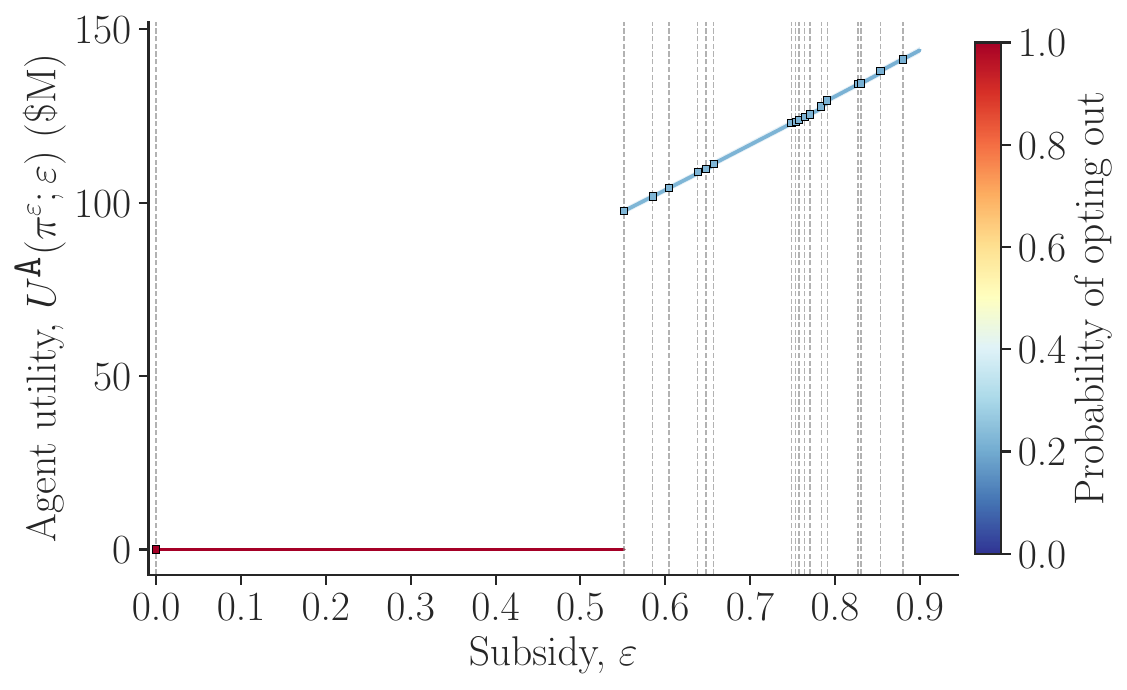}
    }
    \caption{\textbf{Agent utilities under increased experimental costs.}
    The left panel shows the agent's utility (Eq.~\ref{eq:adaptive-utility}) computed using the belief MDP $\Mcal^\varepsilon$ when the agent uses the optimal policy for each subsidy, which is a piece-wise linear, convex, and continuous function in accordance with Proposition~\ref{prop:utility-piecewise-convex}. The right panel shows the true utility of the agent (Eq.~\ref{eq:true-utilities}) in the approval process when using the optimal policy $\pi^\varepsilon$ for each subsidy and $\theta^* = 0.65$.
    The dashed vertical lines correspond to the intervals of the partition $\mathcal{P}$ where the agent's optimal policy is constant (Proposition~\ref{prop:utility-piecewise-convex}).
    }
    \label{fig:agent_utilities_costly}
\end{figure}

\begin{figure}[H]
    \vspace{0cm}
    \centering
    \subfloat[ Social utility $\Bar{U}^{\texttt{S}}(\varepsilon;\pi^\varepsilon)$ computed using $\Mcal^\varepsilon$ ]{
    \includegraphics[width=0.45\linewidth]{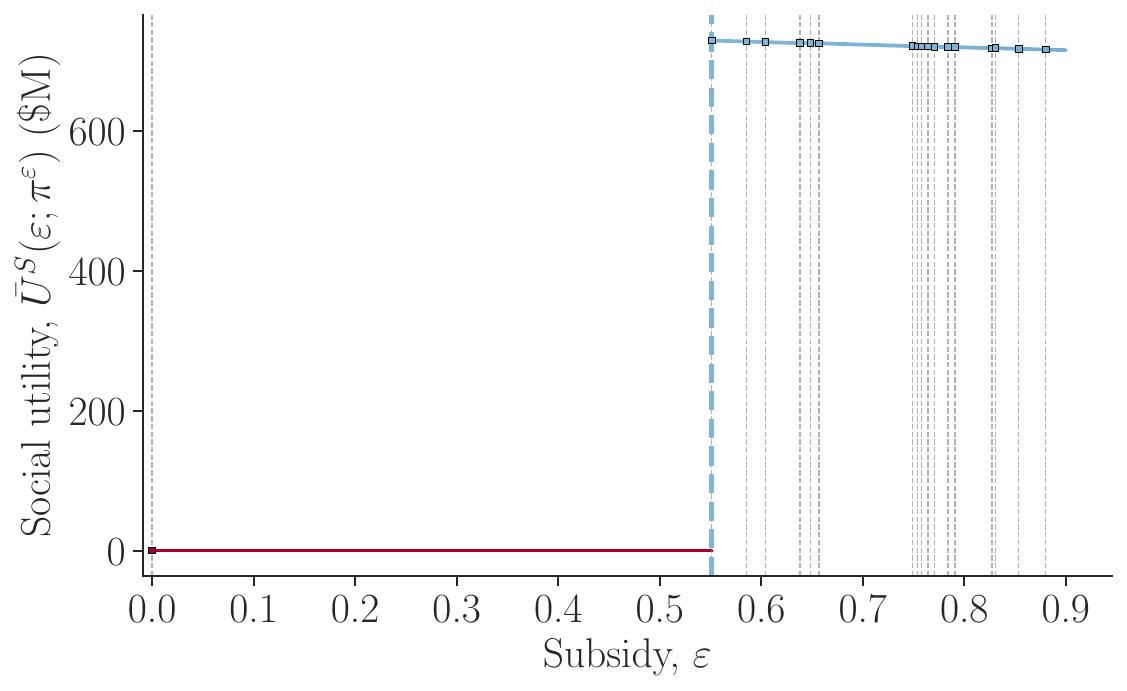}
    }
    \hspace{0mm}
    \subfloat[ Social utility $U^{\texttt{S}}(\varepsilon;\pi^\varepsilon)$ in the approval process]{
    \includegraphics[width=0.45\linewidth]{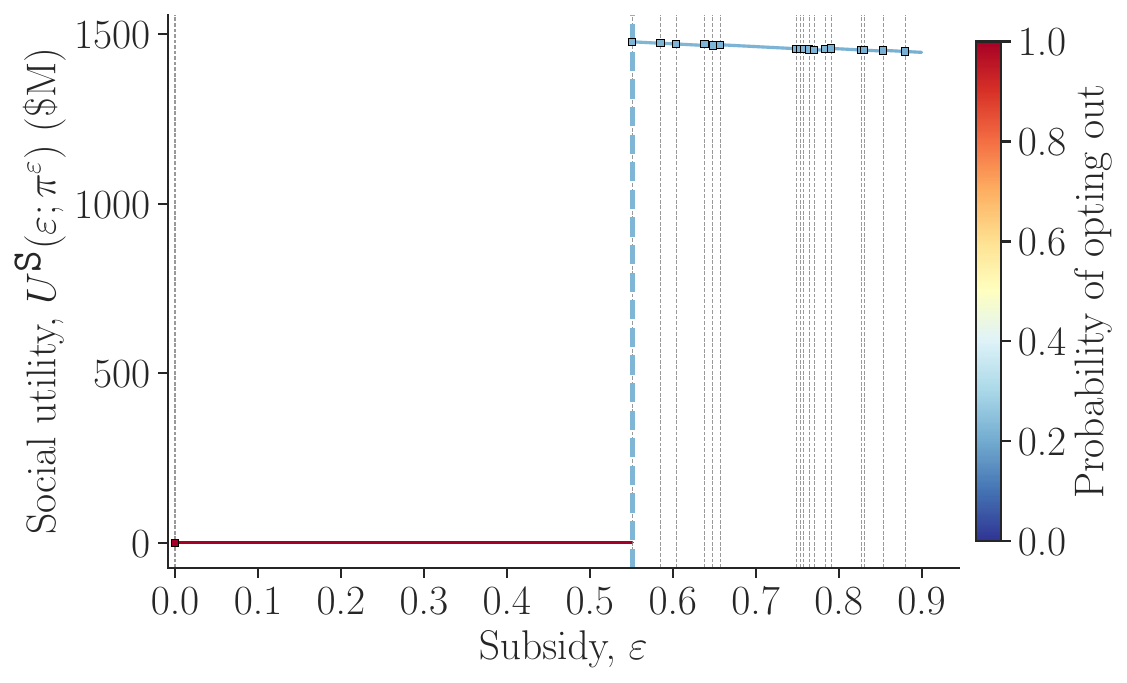}
    }
    \caption{\textbf{Social utilities under increased experimental costs.}
    The left panel shows the social utility (Eq.~\ref{eq:adaptive-utility-regulator}) computed using the belief MDP $\Mcal^\varepsilon$ when the agent uses the optimal policy for each subsidy. The right panel shows the true social utility (Eq.~\ref{eq:true-utilities}) in the approval process when the agent uses the optimal policy $\pi^\varepsilon$ for each subsidy and $\theta^* = 0.65$.
    The dashed vertical lines correspond to the intervals of the partition $\mathcal{P}$ where the agent's optimal policy is constant (Proposition~\ref{prop:utility-piecewise-convex}).
    }
\label{fig:social_utilities_costly}
\end{figure}

In Figure~\ref{fig:costly_subsidy} we show that this optimal subsidy is constant as $\rho^{\texttt{S}}$ increases, and in Figure~\ref{fig:costly_gain} that our sequential protocol yields social utility gains $>20\%$ relative to a non-sequential protocol.

\begin{figure}[H]
    \vspace{0cm}
    \centering
    \includegraphics[width=0.60\linewidth]{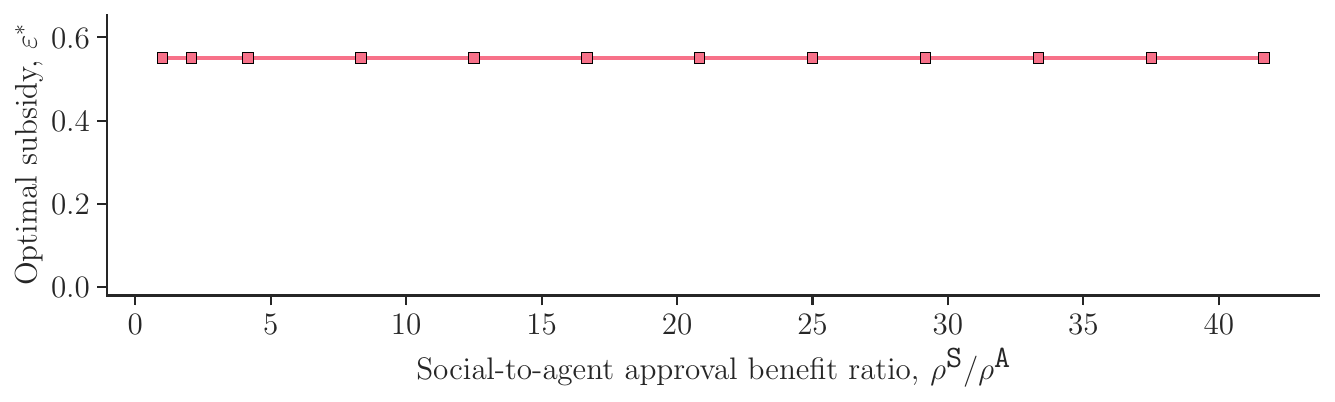}
    \caption{\textbf{Optimal subsidy vs. $\rho^{\texttt{S}} / \rho^{\texttt{A}}$.} The figure shows, as a function of the social-to-agent approval benefit ratio, the optimal subsidy obtained using Algorithm~\ref{alg:epsilon}.
    }
    \label{fig:costly_subsidy}
\end{figure}

\begin{figure}[H]
    \vspace{0cm}
    \centering
    \includegraphics[width=0.60\linewidth]{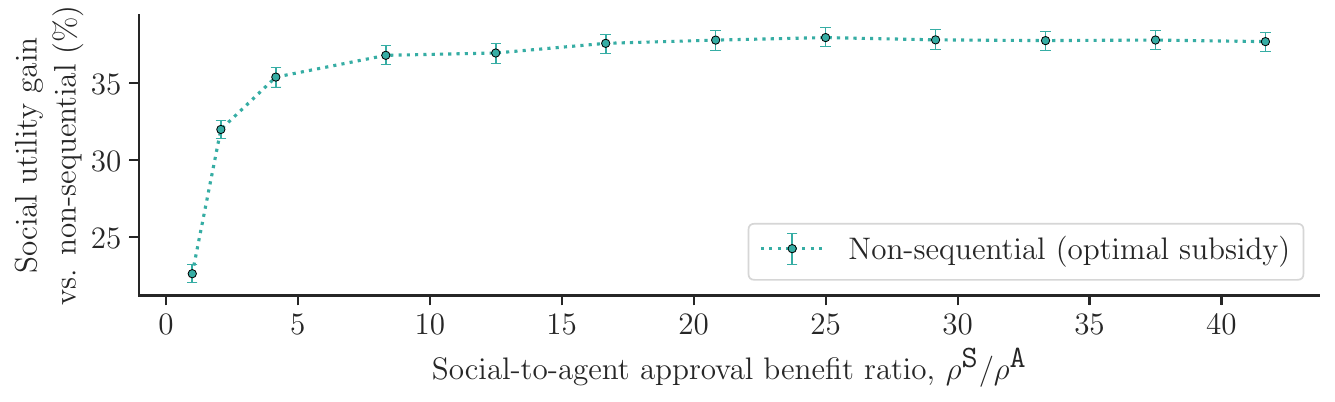}
    \caption{\textbf{Social utility gain vs. $\rho^{\texttt{S}} / \rho^{\texttt{A}}$.} The figure shows, as a function of the social-to-agent approval benefit ratio, the percentage increase in social utility of the sequential approval protocol relative to a non-sequential approval protocol in which the agent is restricted to a single trial with $n^{\texttt{max}}=800$, under the optimal subsidy computed using Algorithm~\ref{alg:epsilon} (in the non-sequential protocol without subsidy the agent always opts out, yielding zero social utility).
    }
    \label{fig:costly_gain}
\end{figure}

\clearpage
\newpage

\subsubsection{Approval under increased agent approval benefit}

Here, we show the result of an antibiotic approval process with an increased approval utility for the agent, $\rho^{\texttt{A}} = \$5000\,\text{M}$ (and the rest of the parameters in Table~\ref{tab:non-econ-parameters} and Table~\ref{tab:econ-parameters} fixed). In this case, the approval utility for the agent covers the expected cost of approval by a large margin, and we find that no subsidy is needed, namely, $\varepsilon^* =0$.

\begin{figure}[H]
    \vspace{0cm}
    \centering
    \subfloat[ Agent utility $\Bar{U}^{\texttt{A}}(\pi^\varepsilon;\varepsilon)$ computed using $\Mcal^\varepsilon$ ]{
    \includegraphics[width=0.45\linewidth]{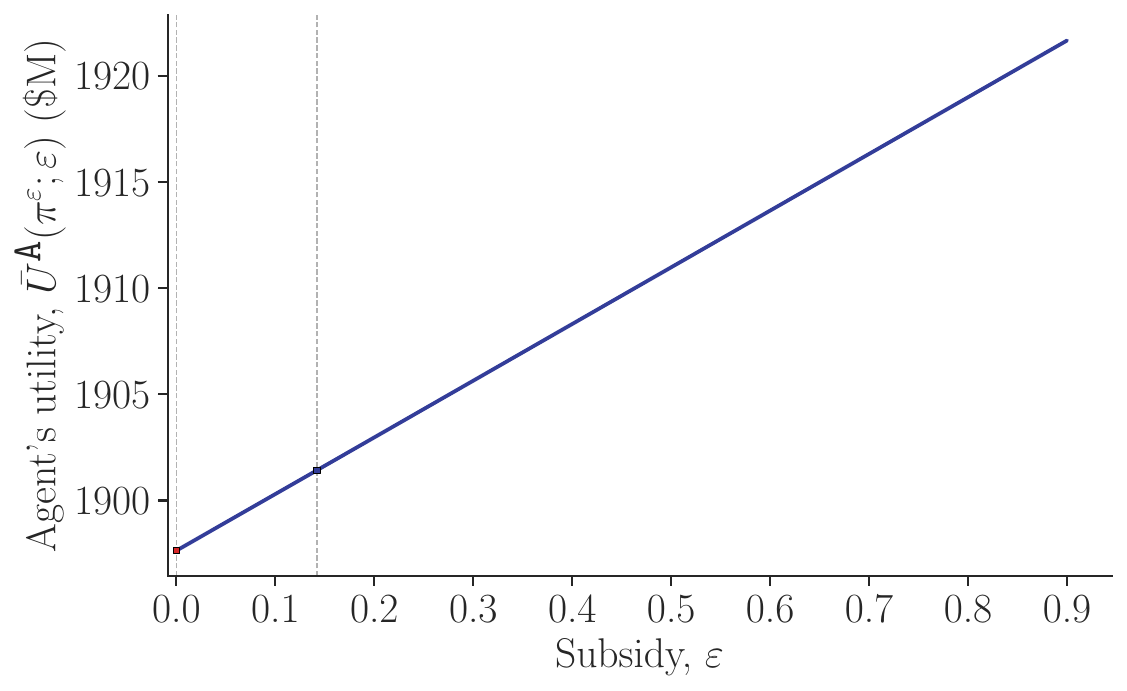}
    }
    \hspace{0mm}
    \subfloat[ Agent utility $U^{\texttt{A}}(\pi^\varepsilon;\varepsilon)$ in the approval process]{
    \includegraphics[width=0.45\linewidth]{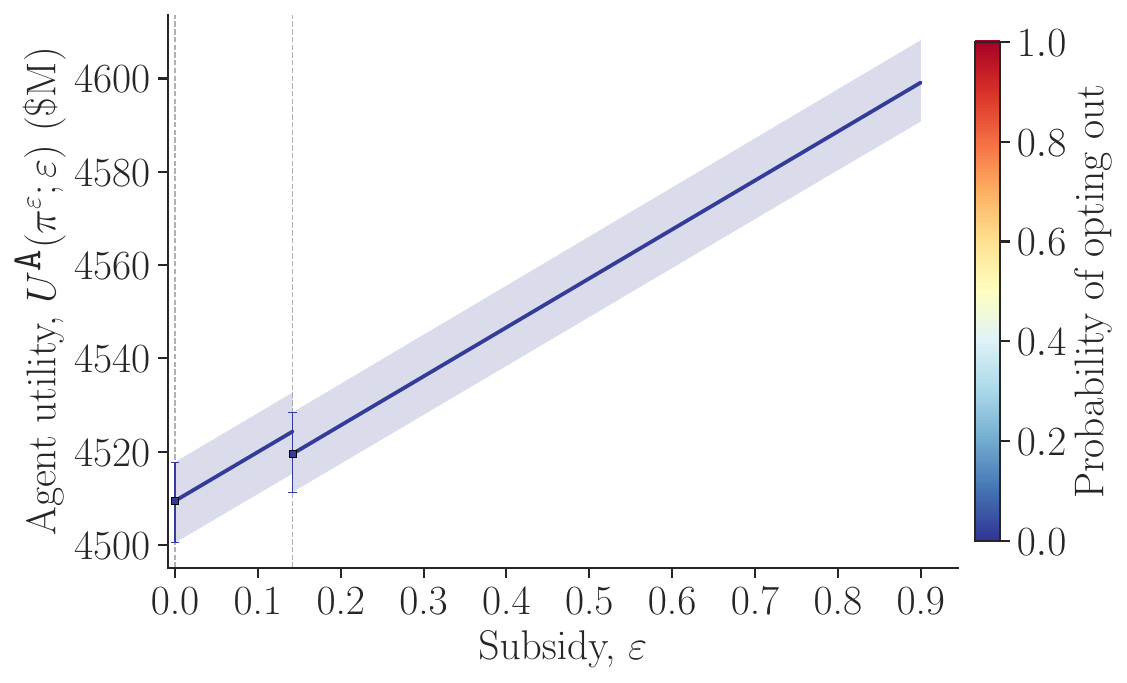}
    }
    \caption{\textbf{Agent utilities under increased approval utility.}
    The left panel shows the agent's utility (Eq.~\ref{eq:adaptive-utility}) computed using the belief MDP $\Mcal^\varepsilon$ when the agent uses the optimal policy for each subsidy, which is a piece-wise linear, convex and continuous function in accordance with Proposition~\ref{prop:utility-piecewise-convex}. The right panel shows the true utility of the agent (Eq.~\ref{eq:true-utilities}) in the approval process when using the optimal policy $\pi^\varepsilon$ for each subsidy and $\theta^* = 0.65$.
    The dashed vertical lines correspond to the intervals of the partition $\mathcal{P}$ where the agent's optimal policy is constant (Proposition~\ref{prop:utility-piecewise-convex}).
    }
    \label{fig:agent_utilities_greedy}
\end{figure}

\begin{figure}[H]
    \vspace{0cm}
    \centering
    \subfloat[ Social utility $\Bar{U}^{\texttt{S}}(\varepsilon;\pi^\varepsilon)$ computed using $\Mcal^\varepsilon$ ]{
    \includegraphics[width=0.45\linewidth]{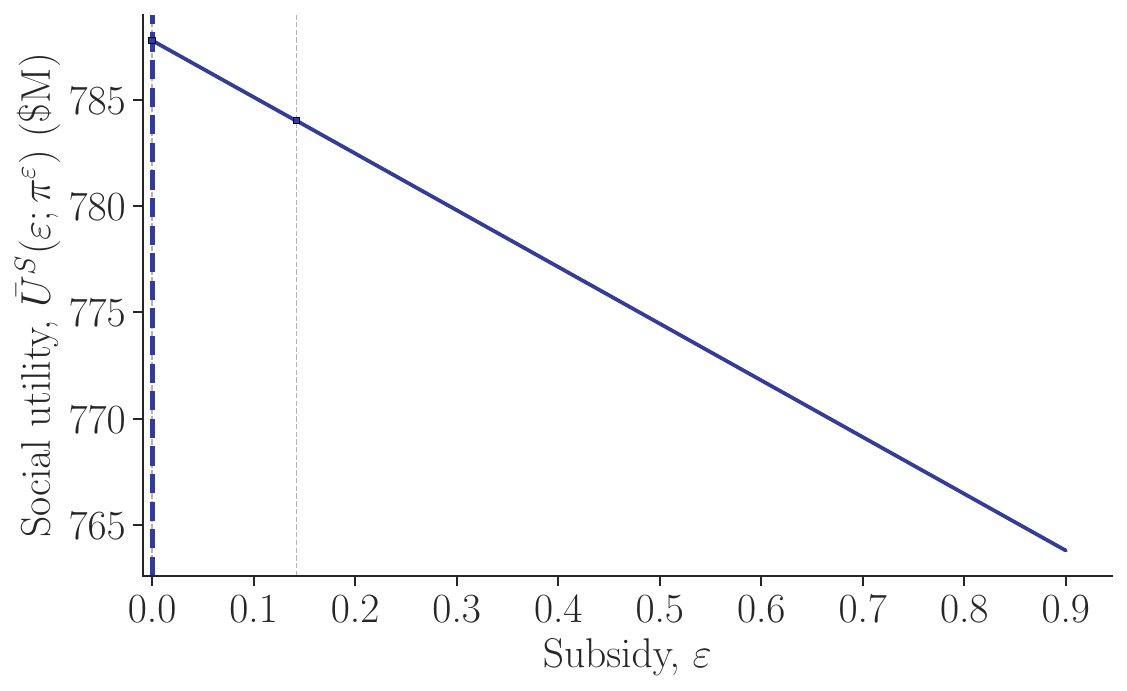}
    }
    \hspace{0mm}
    \subfloat[ Social utility $U^{\texttt{S}}(\varepsilon;\pi^\varepsilon)$ in the approval process]{
    \includegraphics[width=0.45\linewidth]{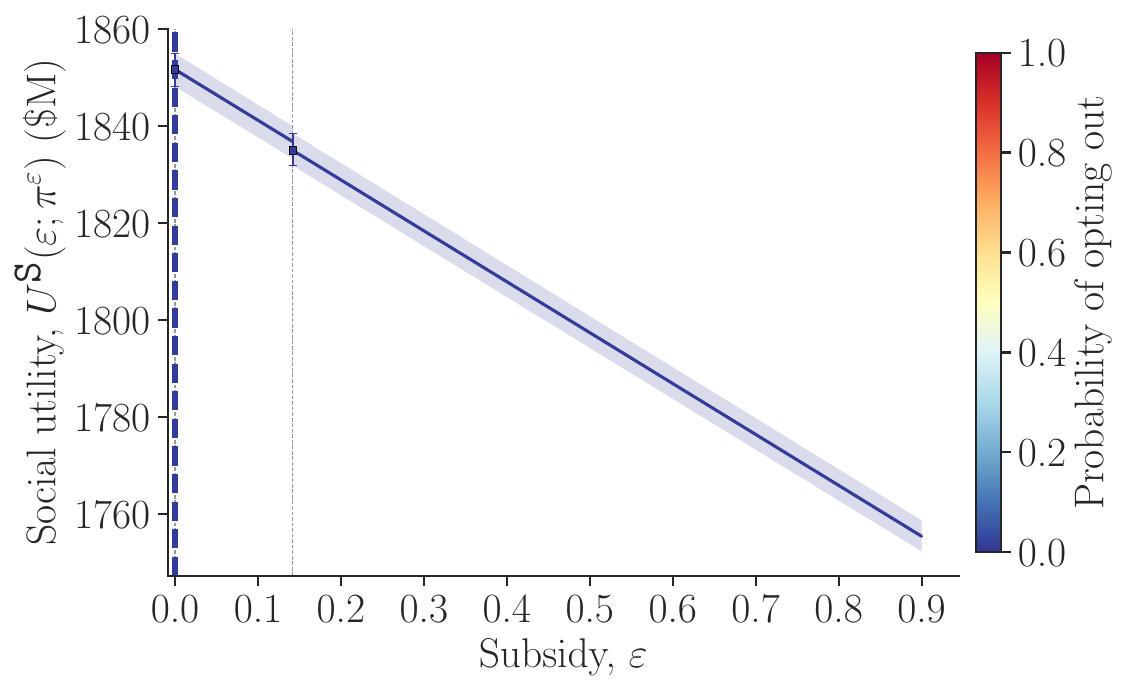}
    }
    \caption{\textbf{Social utilities under increased approval agent utility.}
    The left panel shows the social utility (Eq.~\ref{eq:adaptive-utility-regulator}) computed using the belief MDP $\Mcal^\varepsilon$ when the agent uses the optimal policy for each subsidy. The right panel shows the true social utility (Eq.~\ref{eq:true-utilities}) in the approval process when the agent uses the optimal policy $\pi^\varepsilon$ for each subsidy and $\theta^* = 0.65$.
    The dashed vertical lines correspond to the intervals of the partition $\mathcal{P}$ where the agent's optimal policy is constant (Proposition~\ref{prop:utility-piecewise-convex}).
    }
\label{fig:social_utilities_greedy}
\end{figure}

\begin{figure}[H]
    \vspace{0cm}
    \centering
    \includegraphics[width=0.60\linewidth]{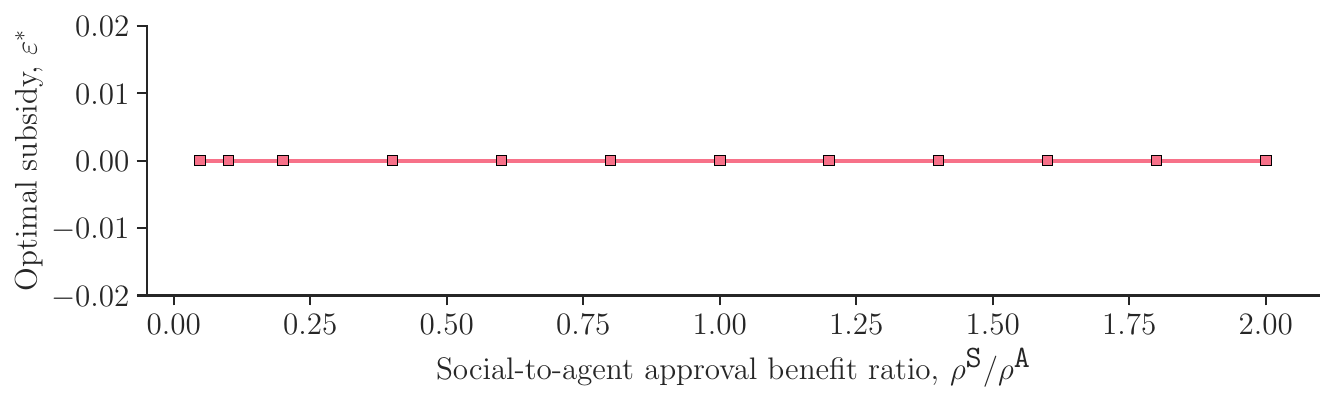}
    \caption{\textbf{Optimal subsidy vs. $\rho^{\texttt{S}} / \rho^{\texttt{A}}$.} The figure shows, as a function of the social-to-agent approval benefit ratio, the optimal subsidy obtained using Algorithm~\ref{alg:epsilon}.
    }
    \label{fig:greedy_subsidy}
\end{figure}

\begin{figure}[H]
    \vspace{0cm}
    \centering
    \includegraphics[width=0.60\linewidth]{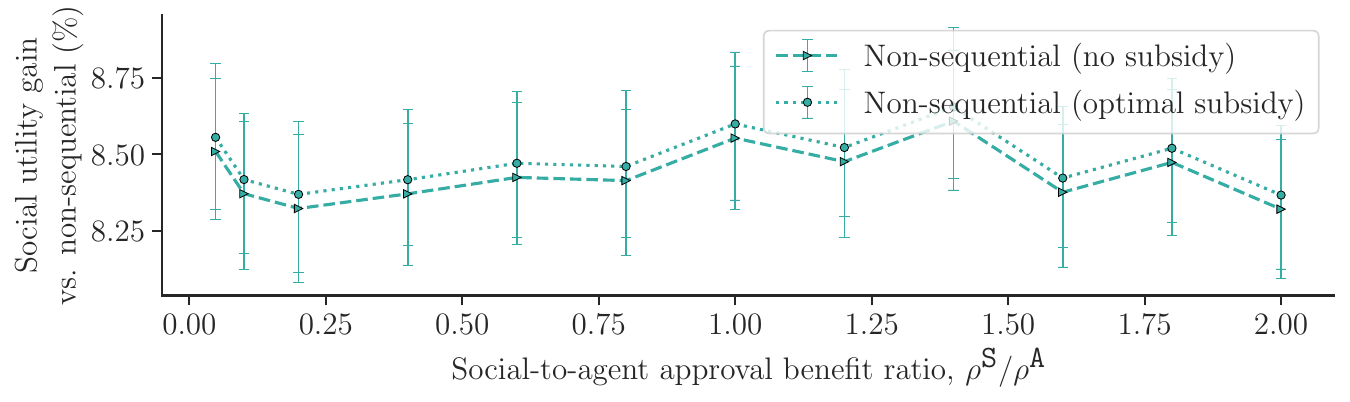}
    \caption{\textbf{Social utility gain vs. $\rho^{\texttt{S}} / \rho^{\texttt{A}}$.} The figure shows, as a function of the social-to-agent approval benefit ratio, the percentage increase in social utility of the sequential approval protocol relative to a non-sequential approval protocol in which the agent is restricted to a single trial with $n^{\texttt{max}}=800$, under (i) the optimal subsidy computed using Algorithm~\ref{alg:epsilon} and (ii) no subsidy ($\varepsilon=0$). In this case, the optimal non-sequential policy is the same for all subsidies, and we apply a small vertical jitter to improve visibility.
    }
    \label{fig:greedy_gain}
\end{figure}

\clearpage
\newpage

\subsubsection{Approval under pessimistic prior}

Here, we show the results of an antibiotic approval process where the agent's prior is $(\alpha_0,\beta_0)=(1,1.5)$, that is, the agent is slightly pessimistic about its product.\footnote{We select $\beta_0 = 1.5$ as higher values lead the agent to opt out at the beginning of the process, regardless of the subsidy.} Note that the mean efficacy of the drug according to its prior is then $0.4 < \theta^* = 0.65$. The principal knows such prior, and the rest of the parameters in Table~\ref{tab:non-econ-parameters} and Table~\ref{tab:econ-parameters} are fixed. In this case, we find that the agent opts out at the beginning of the approval process as long as the subsidy is $\varepsilon \lesssim 0.1$. The optimal subsidy is $\varepsilon^* \approx 0.4$.

\begin{figure}[H]
    \vspace{0cm}
    \centering
    \subfloat[ Agent utility $\Bar{U}^{\texttt{A}}(\pi^\varepsilon;\varepsilon)$ computed using $\Mcal^\varepsilon$ ]{
    \includegraphics[width=0.45\linewidth]{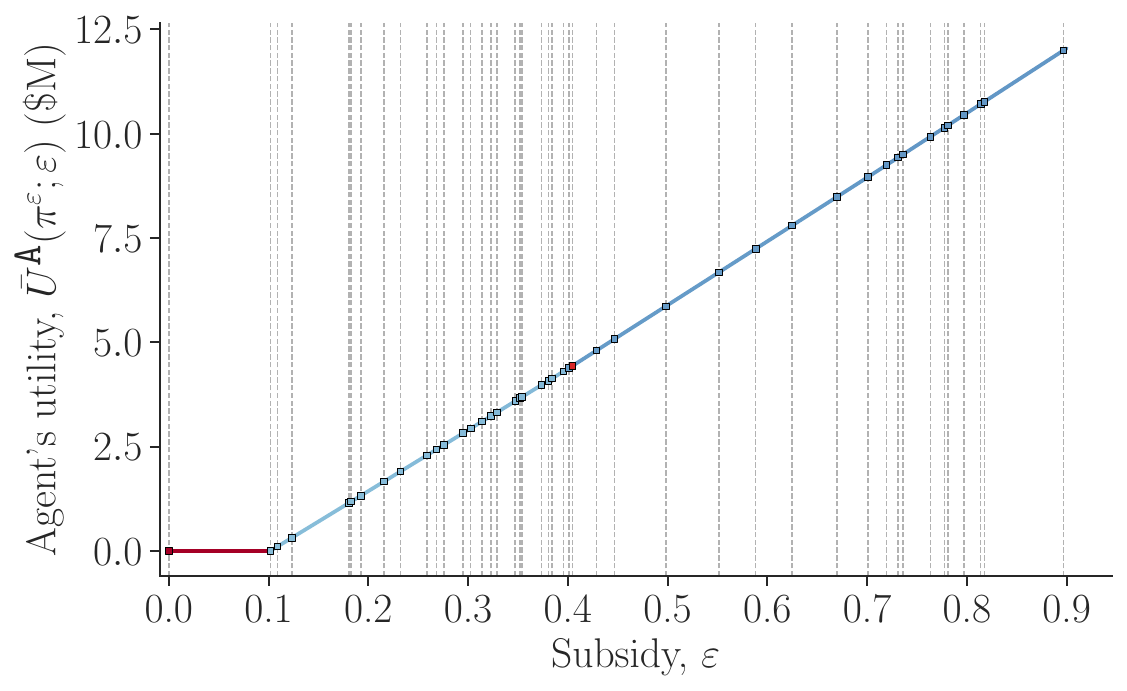}
    }
    \hspace{0mm}
    \subfloat[ Agent utility $U^{\texttt{A}}(\pi^\varepsilon;\varepsilon)$ in the approval process]{
    \includegraphics[width=0.45\linewidth]{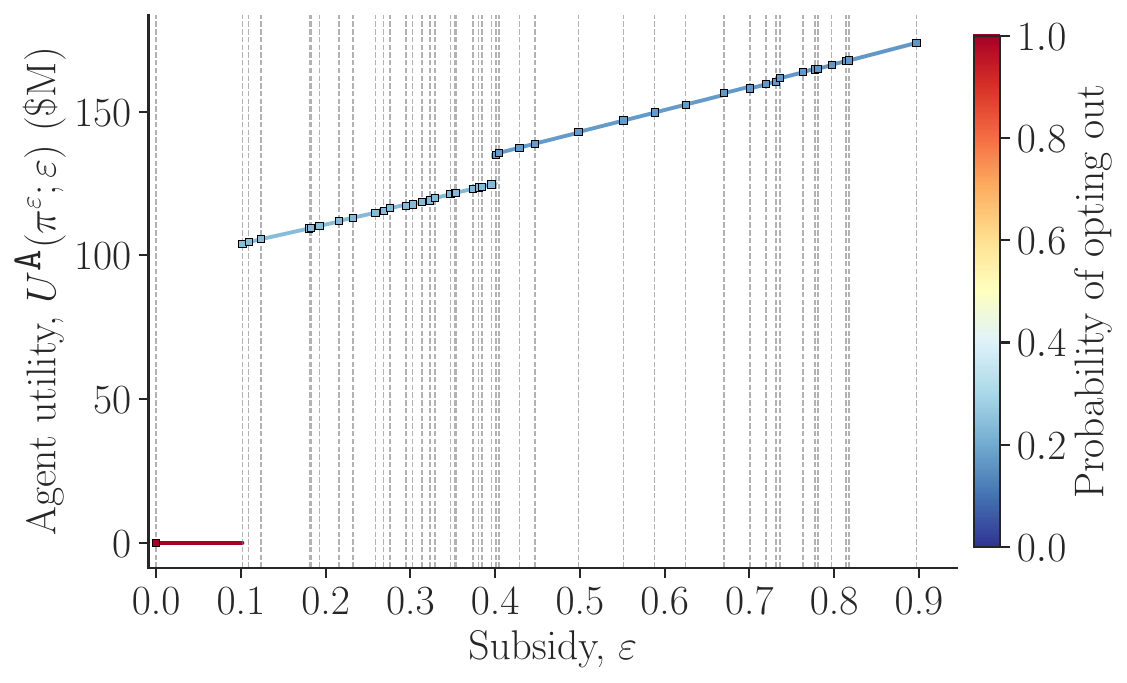}
    }
    \caption{\textbf{Agent utilities under a pessimistic prior.}
    The left panel shows the agent's utility (Eq.~\ref{eq:adaptive-utility}) computed using the belief MDP $\Mcal^\varepsilon$ when the agent uses the optimal policy for each subsidy, which is a piece-wise linear, convex and continuous function in accordance with Proposition~\ref{prop:utility-piecewise-convex}. The right panel shows the true utility of the agent (Eq.~\ref{eq:true-utilities}) in the approval process when using the optimal policy $\pi^\varepsilon$ for each subsidy and $\theta^* = 0.65$.
    The dashed vertical lines correspond to the intervals of the partition $\mathcal{P}$ where the agent's optimal policy is constant (Proposition~\ref{prop:utility-piecewise-convex}).
    }
    \label{fig:agent_utilities_pessimist}
\end{figure}

\begin{figure}[H]
    \vspace{0cm}
    \centering
    \subfloat[ Social utility $\Bar{U}^{\texttt{S}}(\varepsilon;\pi^\varepsilon)$ computed using $\Mcal^\varepsilon$ ]{
    \includegraphics[width=0.45\linewidth]{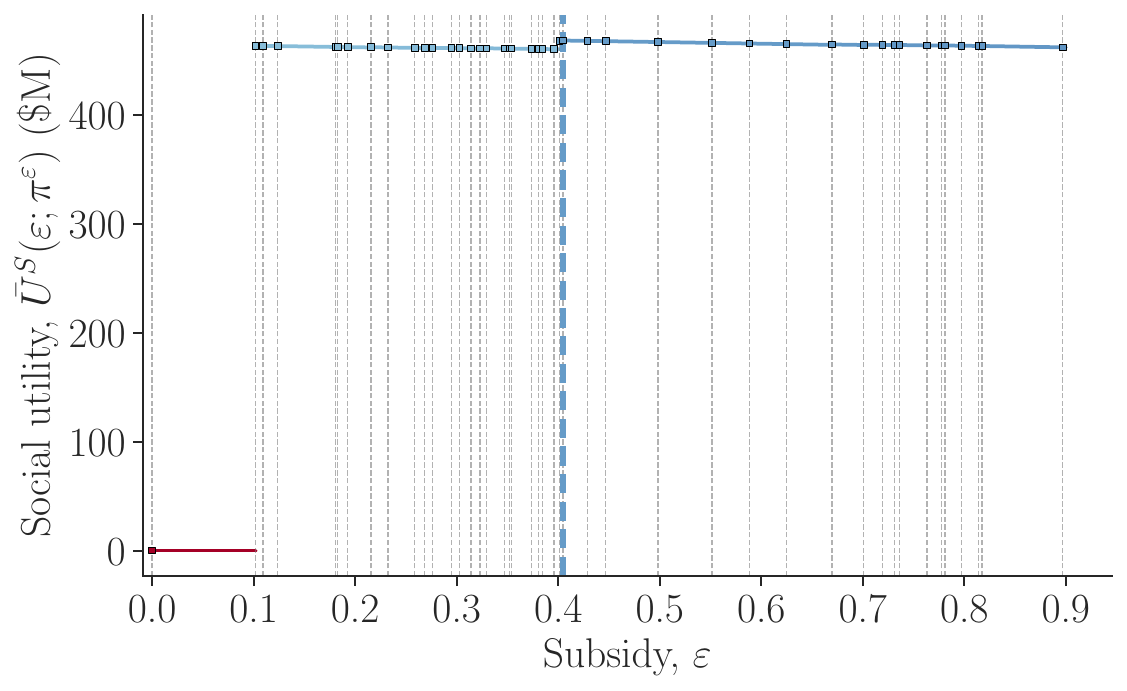}
    }
    \hspace{0mm}
    \subfloat[ Social utility $U^{\texttt{S}}(\varepsilon;\pi^\varepsilon)$ in the approval process]{
    \includegraphics[width=0.45\linewidth]{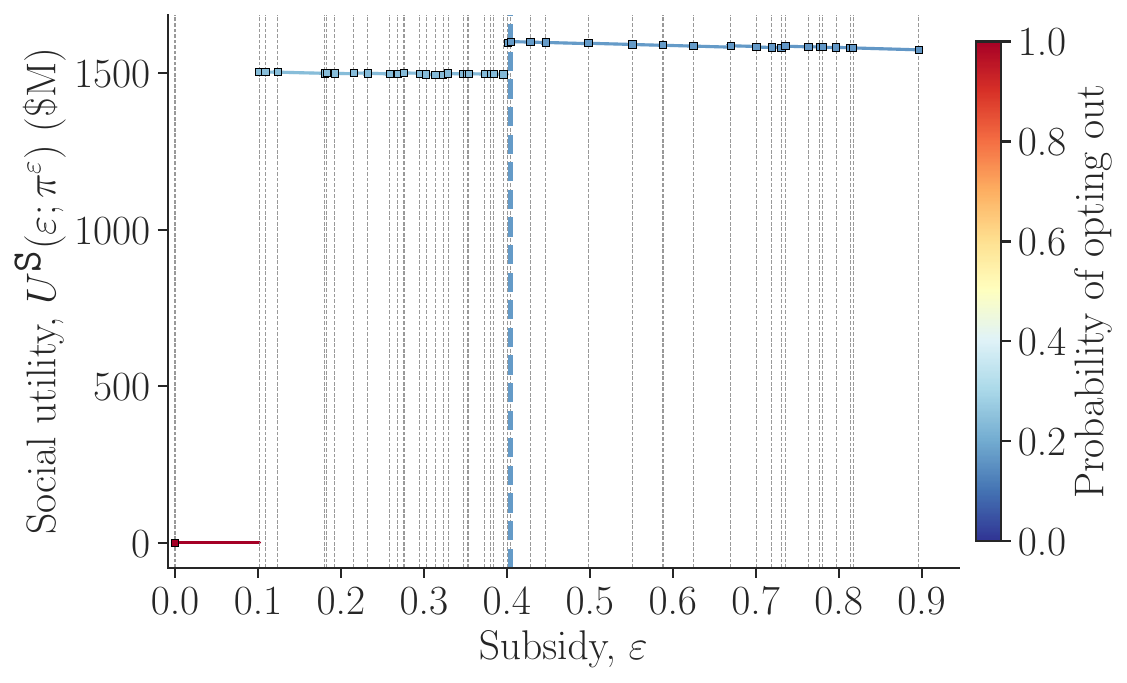}
    }
    \caption{\textbf{Social utilities under a pessimistic prior.}
    The left panel shows the social utility (Eq.~\ref{eq:adaptive-utility-regulator}) computed using the belief MDP $\Mcal^\varepsilon$ when the agent uses the optimal policy for each subsidy. The right panel shows the true social utility (Eq.~\ref{eq:true-utilities}) in the approval process when the agent uses the optimal policy $\pi^\varepsilon$ for each subsidy and $\theta^* = 0.65$.
    The dashed vertical lines correspond to the intervals of the partition $\mathcal{P}$ where the agent's optimal policy is constant (Proposition~\ref{prop:utility-piecewise-convex}).
    }
\label{fig:social_utilities_pessimist}
\end{figure}

In Figure~\ref{fig:pessimist_subsidy} we show that the optimal subsidy increases with $\rho^{\texttt{S}} / \rho^{\texttt{A}}$, and in Figure~\ref{fig:pessimist_gain} that our sequential protocol yields gains $>40\%$ in social utility relative to a non-sequential protocol.

\begin{figure}[H]
    \vspace{0cm}
    \centering
    \includegraphics[width=0.60\linewidth]{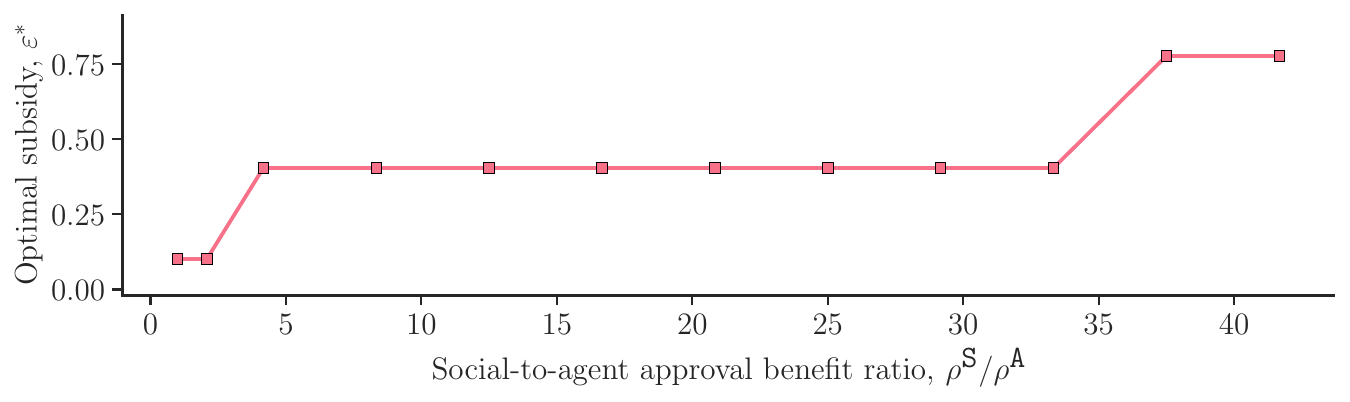}
    \caption{\textbf{Optimal subsidy vs. $\rho^{\texttt{S}} / \rho^{\texttt{A}}$.} The figure shows, as a function of the social-to-agent approval benefit ratio, the optimal subsidy obtained using Algorithm~\ref{alg:epsilon}.
    }
    \label{fig:pessimist_subsidy}
\end{figure}

\begin{figure}[H]
    \vspace{0cm}
    \centering
    \includegraphics[width=0.60\linewidth]{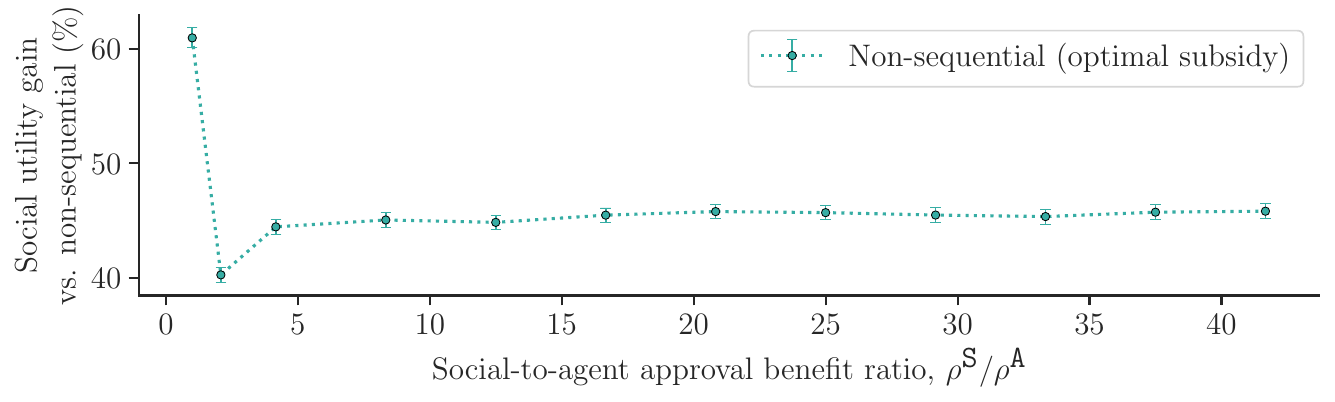}
    \caption{\textbf{Social utility gain vs. $\rho^{\texttt{S}} / \rho^{\texttt{A}}$.} The figure shows, as a function of the social-to-agent approval benefit ratio, the percentage increase in social utility of the sequential approval protocol relative to a non-sequential approval protocol in which the agent is restricted to a single trial with $n^{\texttt{max}}=800$, under (i) the optimal subsidy computed using Algorithm~\ref{alg:epsilon} (in the non-sequential protocol without subsidy the agent always opts out, yielding zero social utility).
    }
    \label{fig:pessimist_gain}
\end{figure}

\clearpage
\newpage

\subsubsection{Approval under optimistic prior}

Here, we show the results of an antibiotic approval process where the agent's prior is $(\alpha_0,\beta_0)=(4,1)$, that is, the agent is slightly optimistic about the antibiotic. Note that the mean efficacy of the drug according to its prior is then $0.8 > \theta^* = 0.65$. The principal knows such prior, and the rest of the parameters in Table~\ref{tab:non-econ-parameters} and Table~\ref{tab:econ-parameters} are fixed. In this case, the optimal subsidy is $\varepsilon^* = 0$, and as can be seen in Figure~\ref{fig:social_utilities_optimist}, the optimal subsidy maximizing $\Bar{U}^{\texttt{S}}(\varepsilon;\pi^\varepsilon)$ does not necessarily maximize the true (unknown) utility $U^{\texttt{S}}(\varepsilon;\pi^\varepsilon)$. Nevertheless, Figure~\ref{fig:optimist_gain} shows that the proposed sequential subsidized protocol can still substantially improve social utility relative to a non-sequential protocol, with gains exceeding approximately $18\%$.

\begin{figure}[H]
    \vspace{0cm}
    \centering
    \subfloat[ Agent utility $\Bar{U}^{\texttt{A}}(\pi^\varepsilon;\varepsilon)$ computed using $\Mcal^\varepsilon$ ]{
    \includegraphics[width=0.45\linewidth]{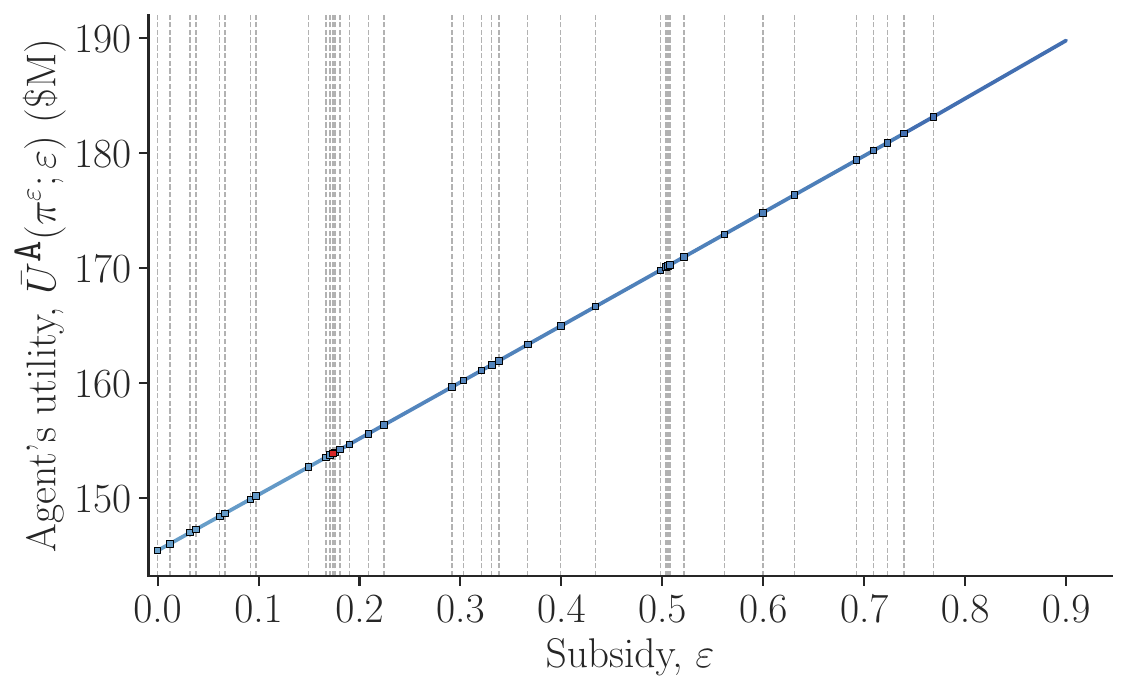}
    }
    \hspace{0mm}
    \subfloat[ Agent utility $U^{\texttt{A}}(\pi^\varepsilon;\varepsilon)$ in the approval process]{
    \includegraphics[width=0.45\linewidth]{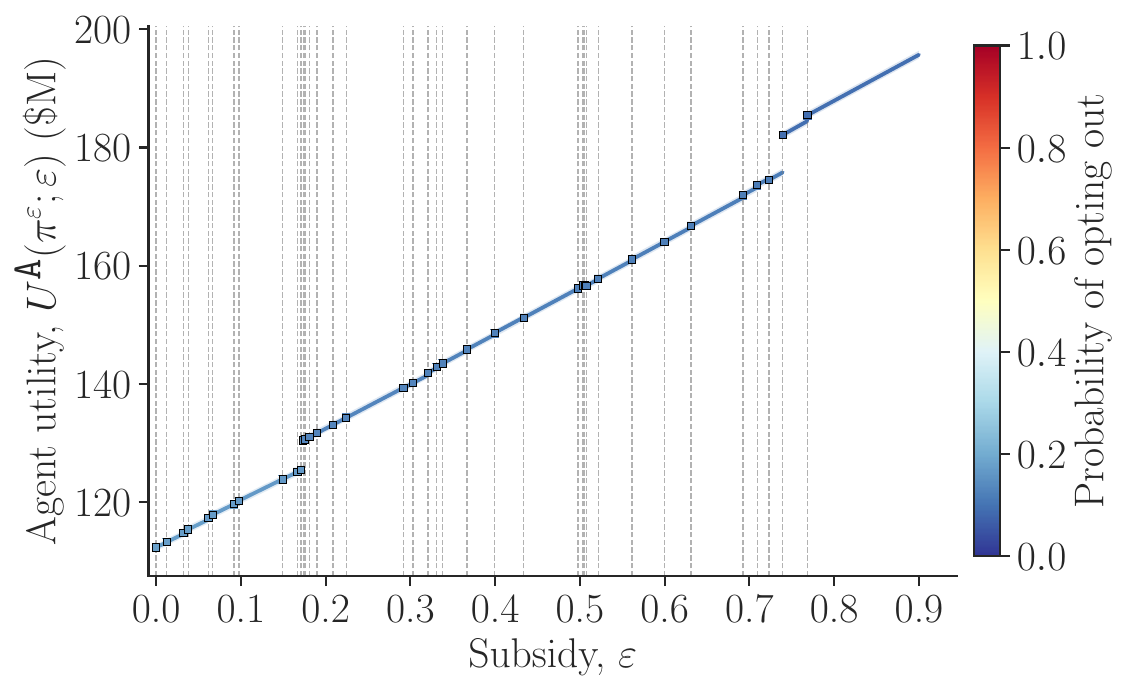}
    }
    \caption{\textbf{Agent utilities under an optimist prior.}
    The left panel shows the agent's utility (Eq.~\ref{eq:adaptive-utility}) computed using the belief MDP $\Mcal^\varepsilon$ when the agent uses the optimal policy for each subsidy, which is a piece-wise linear, convex and continuous function in accordance with Proposition~\ref{prop:utility-piecewise-convex}. The right panel shows the true utility of the agent (Eq.~\ref{eq:true-utilities}) in the approval process when using the optimal policy $\pi^\varepsilon$ for each subsidy and $\theta^* = 0.65$.
    The dashed vertical lines correspond to the intervals of the partition $\mathcal{P}$ where the agent's optimal policy is constant (Proposition~\ref{prop:utility-piecewise-convex}).
    }
    \label{fig:agent_utilities_optimist}
\end{figure}

\begin{figure}[H]
    \vspace{0cm}
    \centering
    \subfloat[ Social utility $\Bar{U}^{\texttt{S}}(\varepsilon;\pi^\varepsilon)$ computed using $\Mcal^\varepsilon$ ]{
    \includegraphics[width=0.45\linewidth]{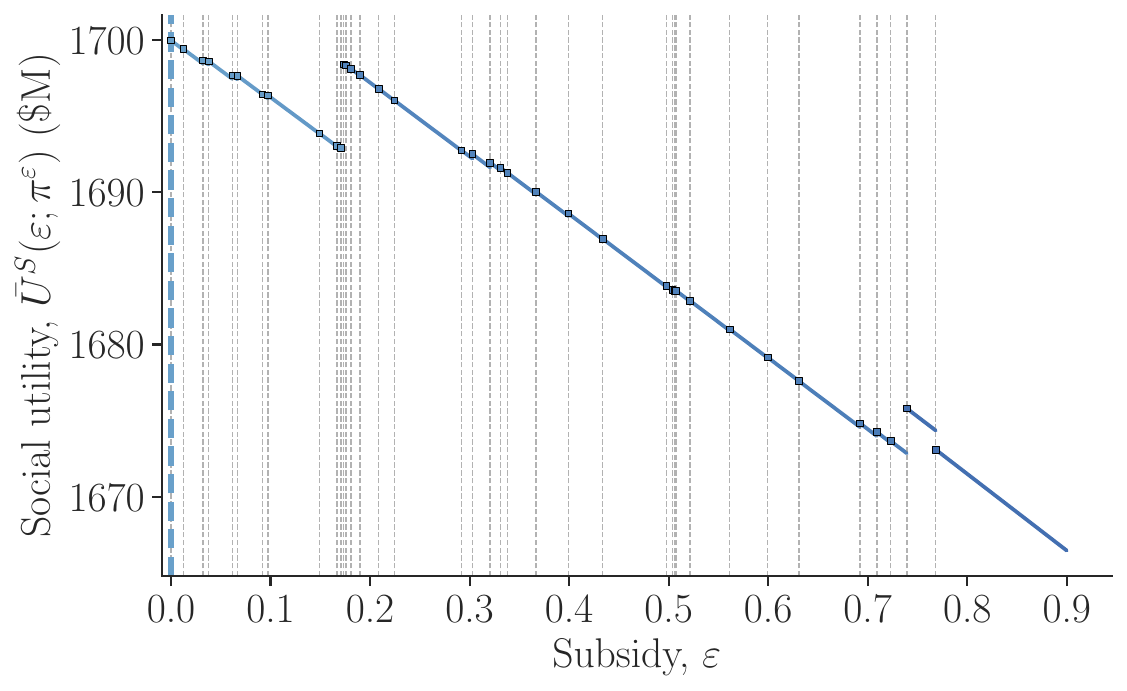}
    }
    \hspace{0mm}
    \subfloat[ Social utility $U^{\texttt{S}}(\varepsilon;\pi^\varepsilon)$ in the approval process]{
    \includegraphics[width=0.45\linewidth]{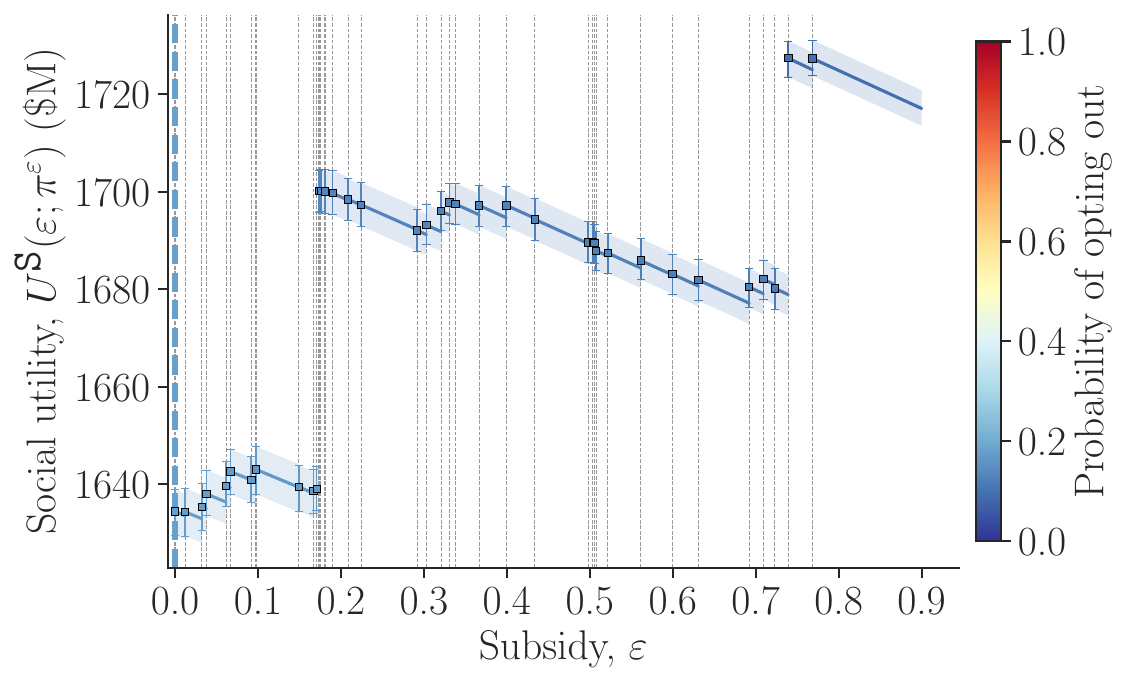}
    }
    \caption{\textbf{Social utilities under an optimist prior.}
    The left panel shows the social utility (Eq.~\ref{eq:adaptive-utility-regulator}) computed using the belief MDP $\Mcal^\varepsilon$ when the agent uses the optimal policy for each subsidy. The right panel shows the true social utility (Eq.~\ref{eq:true-utilities}) in the approval process when the agent uses the optimal policy $\pi^\varepsilon$ for each subsidy and $\theta^* = 0.65$.
    The dashed vertical lines correspond to the intervals of the partition $\mathcal{P}$ where the agent's optimal policy is constant (Proposition~\ref{prop:utility-piecewise-convex}).
    }
\label{fig:social_utilities_optimist}
\end{figure}

\begin{figure}[H]
    \vspace{0cm}
    \centering
    \includegraphics[width=0.60\linewidth]{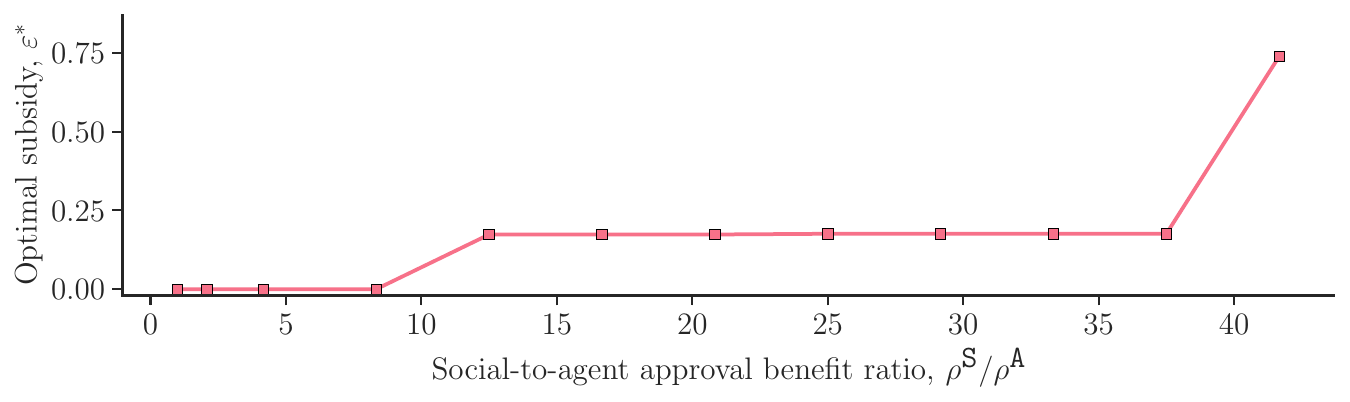}
    \caption{\textbf{Optimal subsidy vs. $\rho^{\texttt{S}} / \rho^{\texttt{A}}$.} The figure shows, as a function of the social-to-agent approval benefit ratio, the optimal subsidy obtained using Algorithm~\ref{alg:epsilon}.
    }
    \label{fig:optimist_subsidy}
\end{figure}

\begin{figure}[H]
    \vspace{0cm}
    \centering
    \includegraphics[width=0.60\linewidth]{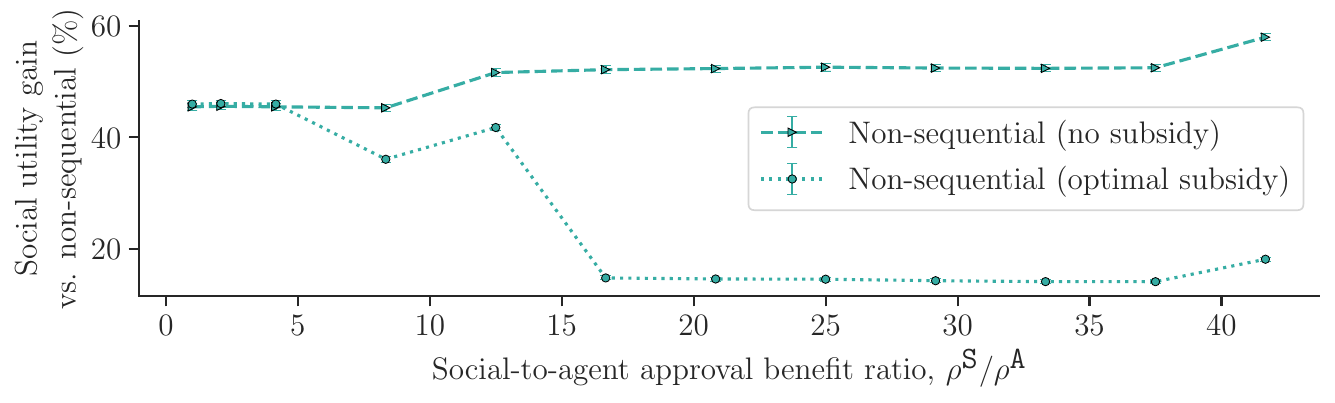}
    \caption{\textbf{Social utility gain vs. $\rho^{\texttt{S}} / \rho^{\texttt{A}}$.} The figure shows, as a function of the social-to-agent approval benefit ratio, the percentage increase in social utility of the sequential approval protocol relative to a non-sequential approval protocol in which the agent is restricted to a single trial with $n^{\texttt{max}}=800$, under (i) the optimal subsidy computed using Algorithm~\ref{alg:epsilon} and (ii) no subsidy ($\varepsilon=0$).
    }
    \label{fig:optimist_gain}
\end{figure}

\clearpage
\newpage

\subsubsection{Approval under calibrated prior}

Here, we show the results of an antibiotic approval process where the agent's prior is $(\alpha_0,\beta_0)=(130,70)$. This corresponds to a very informative prior that is calibrated to the true efficacy $\theta^* = 0.65$, since the mean of the prior is precisely $0.65$. The principal knows such prior, and the rest of the parameters in Table~\ref{tab:non-econ-parameters} and Table~\ref{tab:econ-parameters} are fixed. In this case, we find that the calibrated prior allows the agent to increase its utility $U^{\texttt{A}}(\pi^\varepsilon;\varepsilon)$ for any possible subsidy, as can be seen by comparing Figure~\ref{fig:agent_utilities_calibrated} to the fiducial setting in Figure~\ref{fig:agent_utilities_fiducial}. However, and perhaps surprisingly, we also find that it is still optimal for the principal to subsidize a non-negligible fraction $\varepsilon^* \approx 0.234$ of the agent's cost, despite the prior belief supporting that the drug should be approved.

\begin{figure}[H]
    \vspace{0cm}
    \centering
    \subfloat[ Agent utility $\Bar{U}^{\texttt{A}}(\pi^\varepsilon;\varepsilon)$ computed using $\Mcal^\varepsilon$ ]{
    \includegraphics[width=0.45\linewidth]{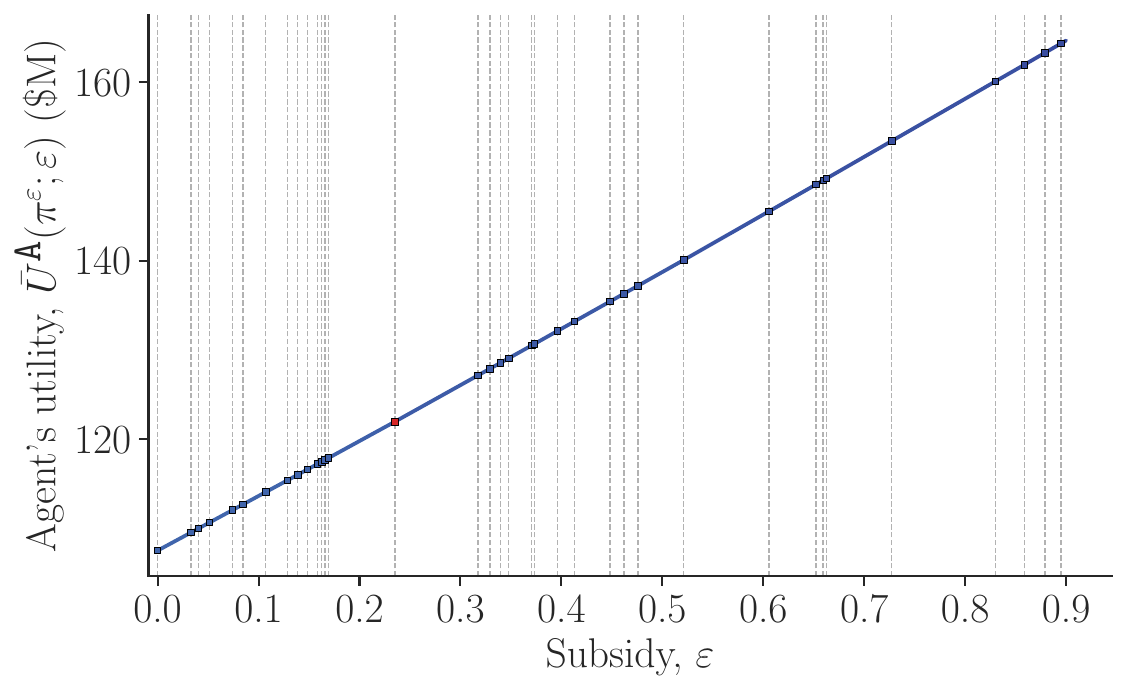}
    }
    \hspace{0mm}
    \subfloat[ Agent utility $U^{\texttt{A}}(\pi^\varepsilon;\varepsilon)$ in the approval process]{
    \includegraphics[width=0.45\linewidth]{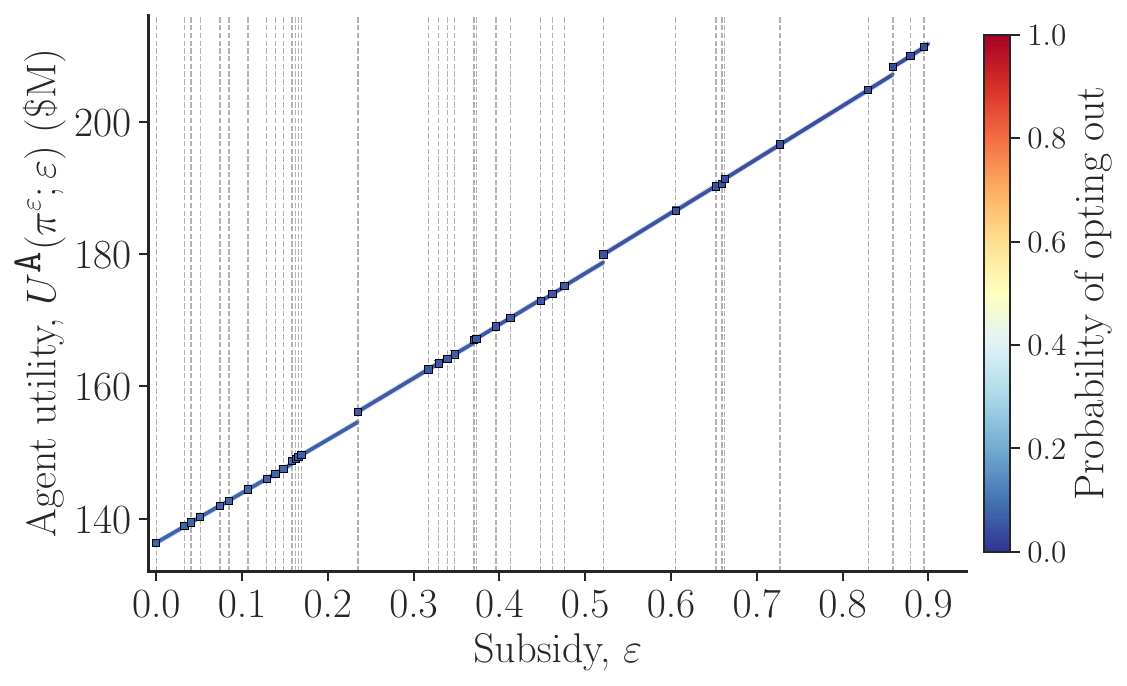}
    }
    \caption{\textbf{Agent utilities under a calibrated informative prior.}
    The left panel shows the agent's utility (Eq.~\ref{eq:adaptive-utility}) computed using the belief MDP $\Mcal^\varepsilon$ when the agent uses the optimal policy for each subsidy, which is a piece-wise linear, convex and continuous function in accordance with Proposition~\ref{prop:utility-piecewise-convex}. The right panel shows the true utility of the agent (Eq.~\ref{eq:true-utilities}) in the approval process when using the optimal policy $\pi^\varepsilon$ for each subsidy and $\theta^* = 0.65$.
    The dashed vertical lines correspond to the intervals of the partition $\mathcal{P}$ where the agent's optimal policy is constant (Proposition~\ref{prop:utility-piecewise-convex}).
    }
    \label{fig:agent_utilities_calibrated}
\end{figure}

\begin{figure}[H]
    \vspace{0cm}
    \centering
    \subfloat[ Social utility $\Bar{U}^{\texttt{S}}(\varepsilon;\pi^\varepsilon)$ computed using $\Mcal^\varepsilon$ ]{
    \includegraphics[width=0.45\linewidth]{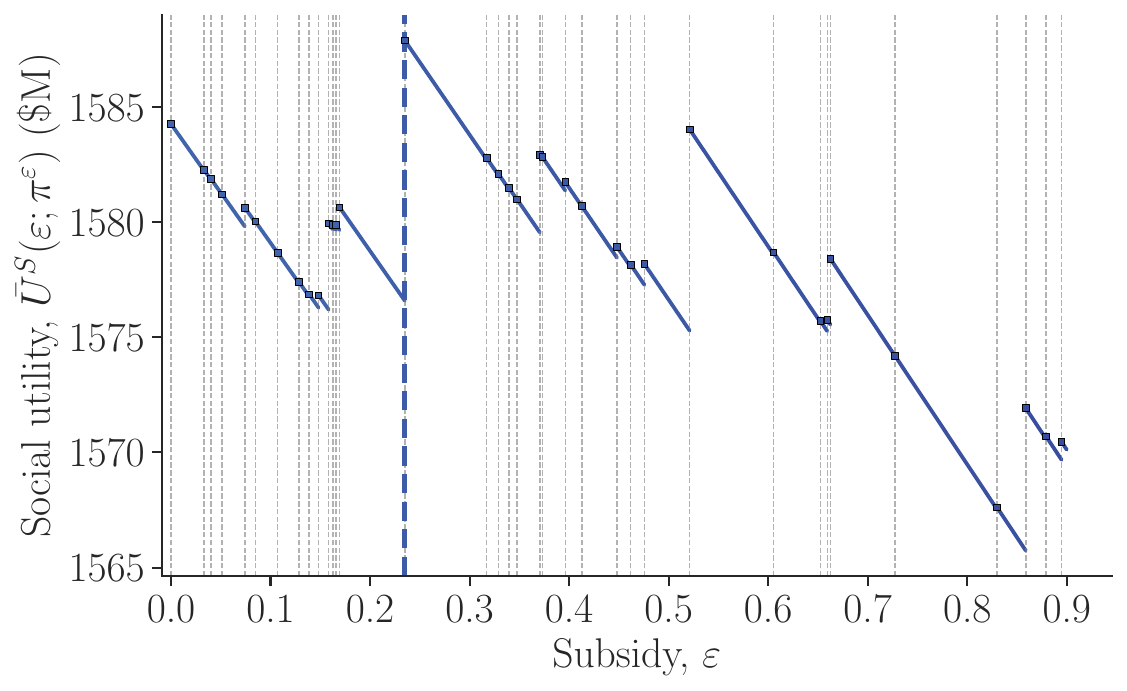}
    }
    \hspace{0mm}
    \subfloat[ Social utility $U^{\texttt{S}}(\varepsilon;\pi^\varepsilon)$ in the approval process]{
    \includegraphics[width=0.45\linewidth]{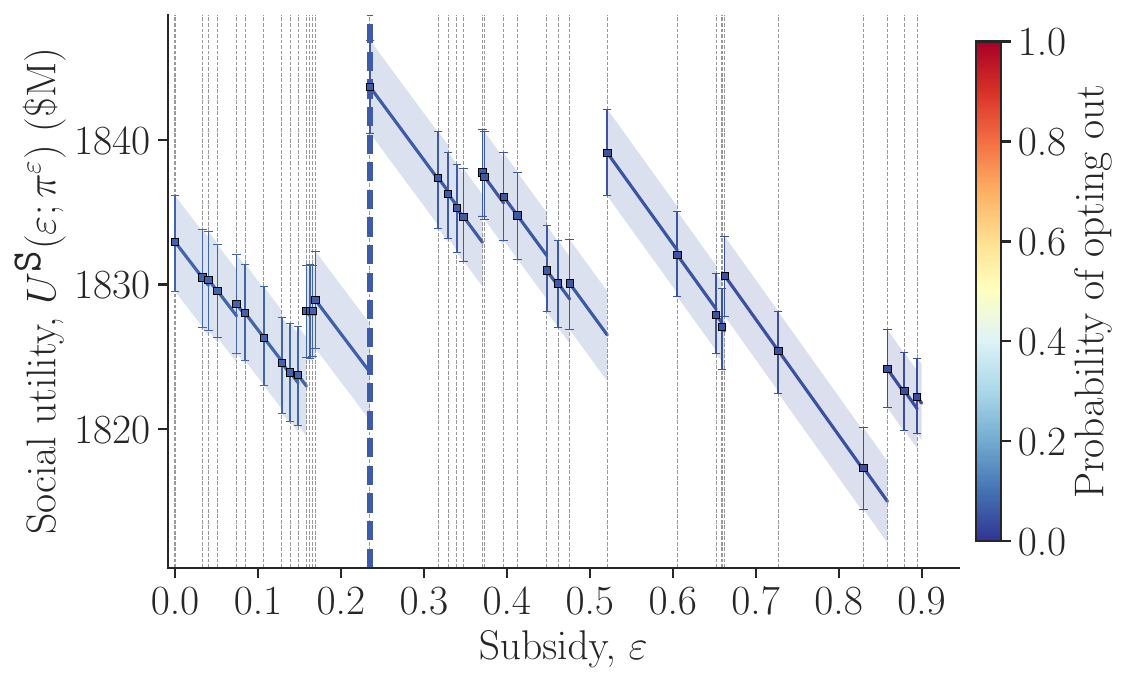}
    }
    \caption{\textbf{Social utilities under a calibrated informative prior.}
    The left panel shows the social utility (Eq.~\ref{eq:adaptive-utility-regulator}) computed using the belief MDP $\Mcal^\varepsilon$ when the agent uses the optimal policy for each subsidy. The right panel shows the true social utility (Eq.~\ref{eq:true-utilities}) in the approval process when the agent uses the optimal policy $\pi^\varepsilon$ for each subsidy and $\theta^* = 0.65$.
    The dashed vertical lines correspond to the intervals of the partition $\mathcal{P}$ where the agent's optimal policy is constant (Proposition~\ref{prop:utility-piecewise-convex}).
    }
\label{fig:social_utilities_calibrated}
\end{figure}

\begin{figure}[H]
    \vspace{0cm}
    \centering
    \includegraphics[width=0.60\linewidth]{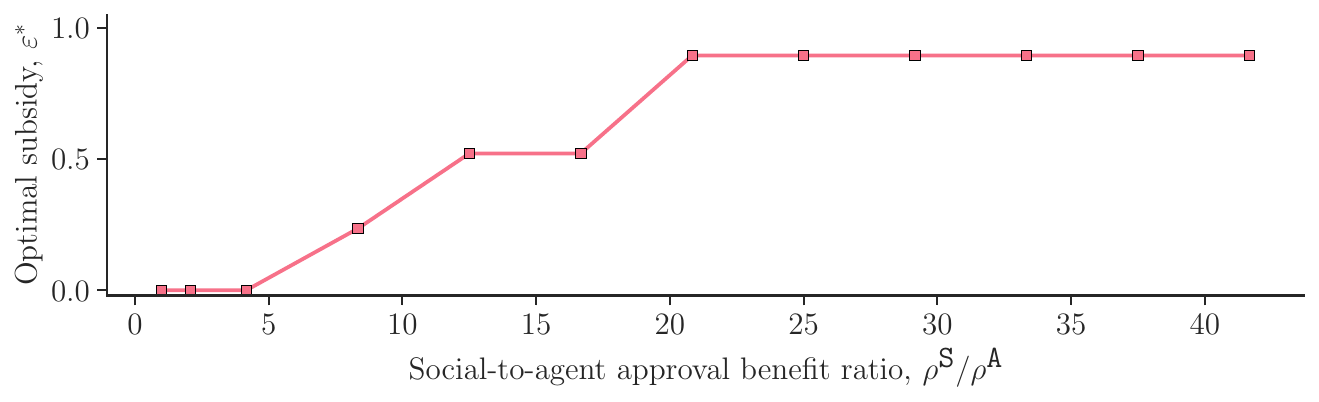}
    \caption{\textbf{Optimal subsidy vs. $\rho^{\texttt{S}} / \rho^{\texttt{A}}$.} The figure shows, as a function of the social-to-agent approval benefit ratio, the optimal subsidy obtained using Algorithm~\ref{alg:epsilon}.
    }
    \label{fig:calibrated_subsidy}
\end{figure}

Lastly, Figure~\ref{fig:calibrated_gain} shows that the social utility gain relative to a non-sequential protocol with optimal subsidies is small for high values of $\rho^{\texttt{S}} / \rho^{\texttt{A}}$. This is expected because under a calibrated prior, the agent already has an accurate estimate of the product’s efficacy and therefore gathering new information brings little benefit. As a result, the agent can effectively select an optimal sample size and complete the process in a single trial.

\begin{figure}[H]
    \vspace{0cm}
    \centering
    \includegraphics[width=0.60\linewidth]{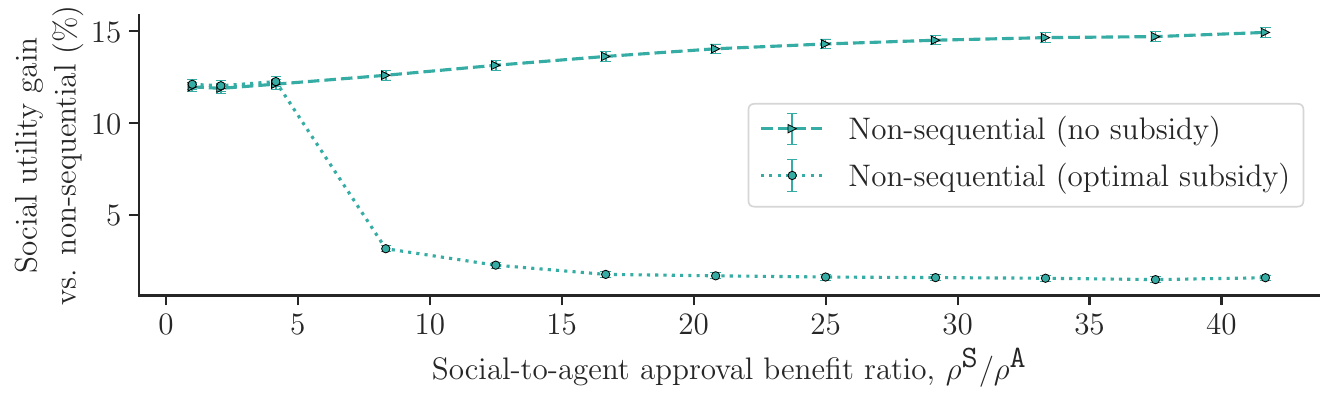}
    \caption{\textbf{Social utility gain vs. $\rho^{\texttt{S}} / \rho^{\texttt{A}}$.} The figure shows, as a function of the social-to-agent approval benefit ratio, the percentage increase in social utility of the sequential approval protocol relative to a non-sequential approval protocol in which the agent is restricted to a single trial with $n^{\texttt{max}}=800$, under (i) the optimal subsidy computed using Algorithm~\ref{alg:epsilon} and (ii) no subsidy ($\varepsilon=0$).
    }
    \label{fig:calibrated_gain}
\end{figure}

\clearpage
\newpage

\subsubsection{Approval under uncalibrated prior}

Here, we show the results of an antibiotic approval process where the agent's prior is $(\alpha_0,\beta_0)=(130,30)$. This corresponds to a very informative and optimistic prior that is uncalibrated to the true efficacy $\theta^* = 0.65$, since the mean of the prior is precisely $0.8125$. The principal knows such prior, and the rest of the parameters in Table~\ref{tab:non-econ-parameters} and Table~\ref{tab:econ-parameters} are fixed. We find that, across subsidies, the utility $U^{\texttt{A}}(\pi^\varepsilon;\varepsilon)$ achieved by the agent slightly decreased compared to an agent with a calibrated prior (Figure~\ref{fig:agent_utilities_uncalibrated} vs. Figure~\ref{fig:agent_utilities_calibrated}). However, surprisingly, we also find that the social utility is greater in this case compared to the case where the agent has a calibrated prior (Figure~\ref{fig:social_utilities_uncalibrated} vs. Figure~\ref{fig:social_utilities_calibrated}).

\begin{figure}[H]
    \vspace{0cm}
    \centering
    \subfloat[ Agent utility $\Bar{U}^{\texttt{A}}(\pi^\varepsilon;\varepsilon)$ computed using $\Mcal^\varepsilon$ ]{
    \includegraphics[width=0.45\linewidth]{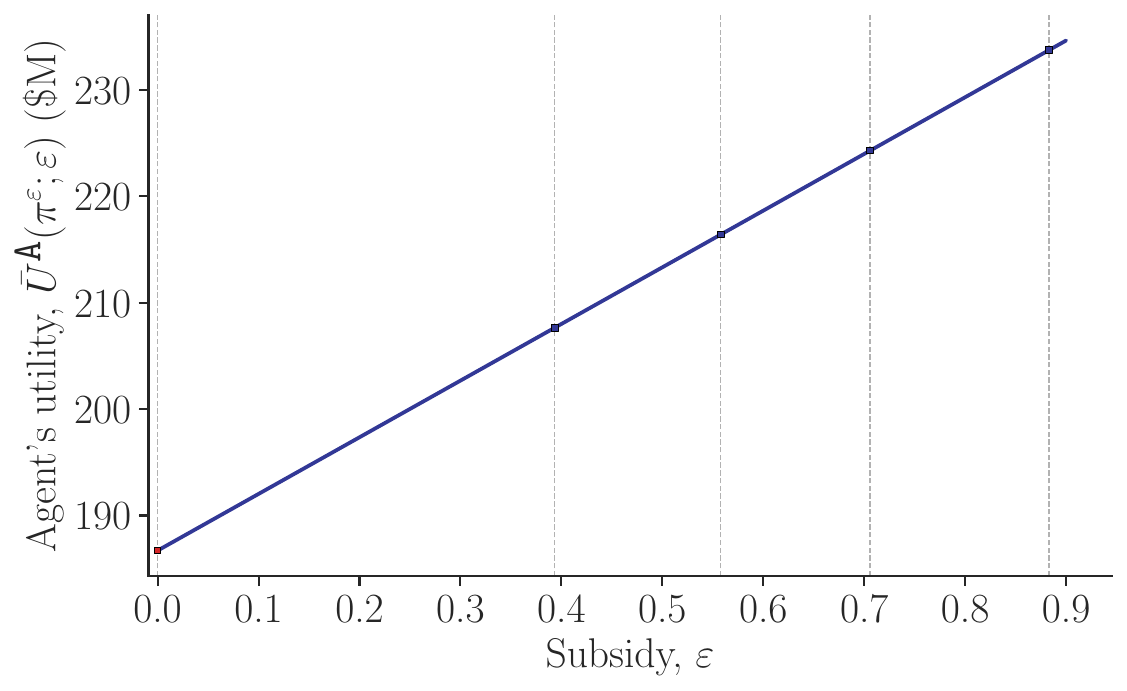}
    }
    \hspace{0mm}
    \subfloat[ Agent utility $U^{\texttt{A}}(\pi^\varepsilon;\varepsilon)$ in the approval process]{
    \includegraphics[width=0.45\linewidth]{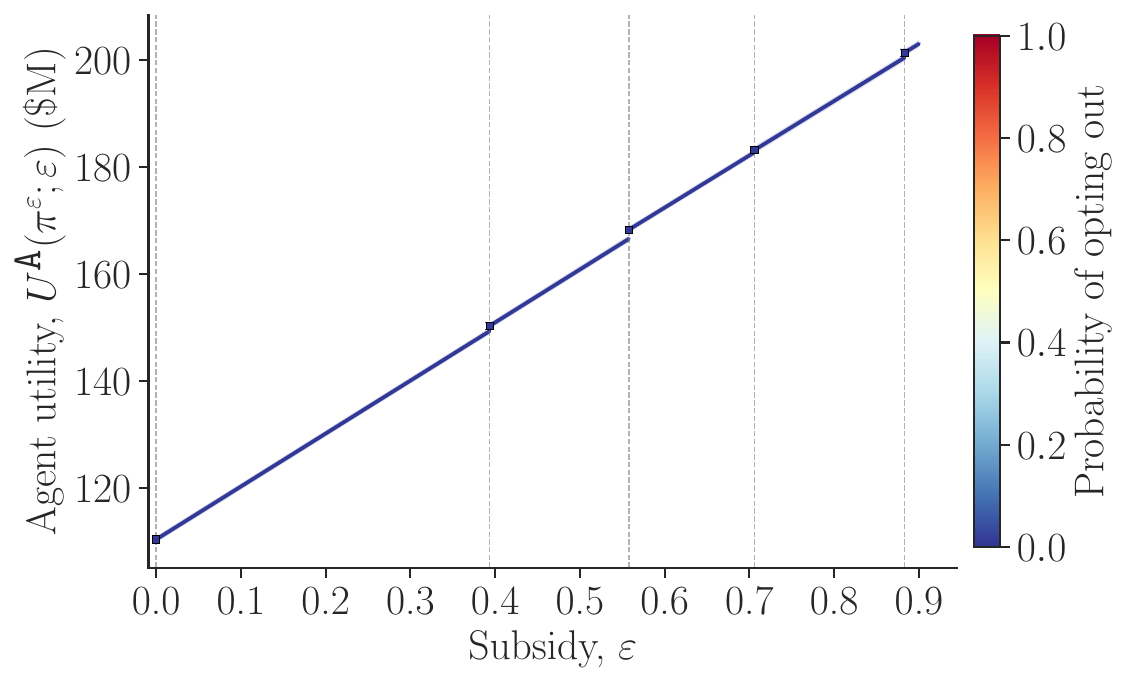}
    }
    \caption{\textbf{Agent utilities under an uncalibrated informative prior.}
    The left panel shows the agent's utility (Eq.~\ref{eq:adaptive-utility}) computed using the belief MDP $\Mcal^\varepsilon$ when the agent uses the optimal policy for each subsidy, which is a piece-wise linear, convex and continuous function in accordance with Proposition~\ref{prop:utility-piecewise-convex}. The right panel shows the true utility of the agent (Eq.~\ref{eq:true-utilities}) in the approval process when using the optimal policy $\pi^\varepsilon$ for each subsidy and $\theta^* = 0.65$.
    The dashed vertical lines correspond to the intervals of the partition $\mathcal{P}$ where the agent's optimal policy is constant (Proposition~\ref{prop:utility-piecewise-convex}).
    }
    \label{fig:agent_utilities_uncalibrated}
\end{figure}

\begin{figure}[H]
    \vspace{0cm}
    \centering
    \subfloat[ Social utility $\Bar{U}^{\texttt{S}}(\varepsilon;\pi^\varepsilon)$ computed using $\Mcal^\varepsilon$ ]{
    \includegraphics[width=0.45\linewidth]{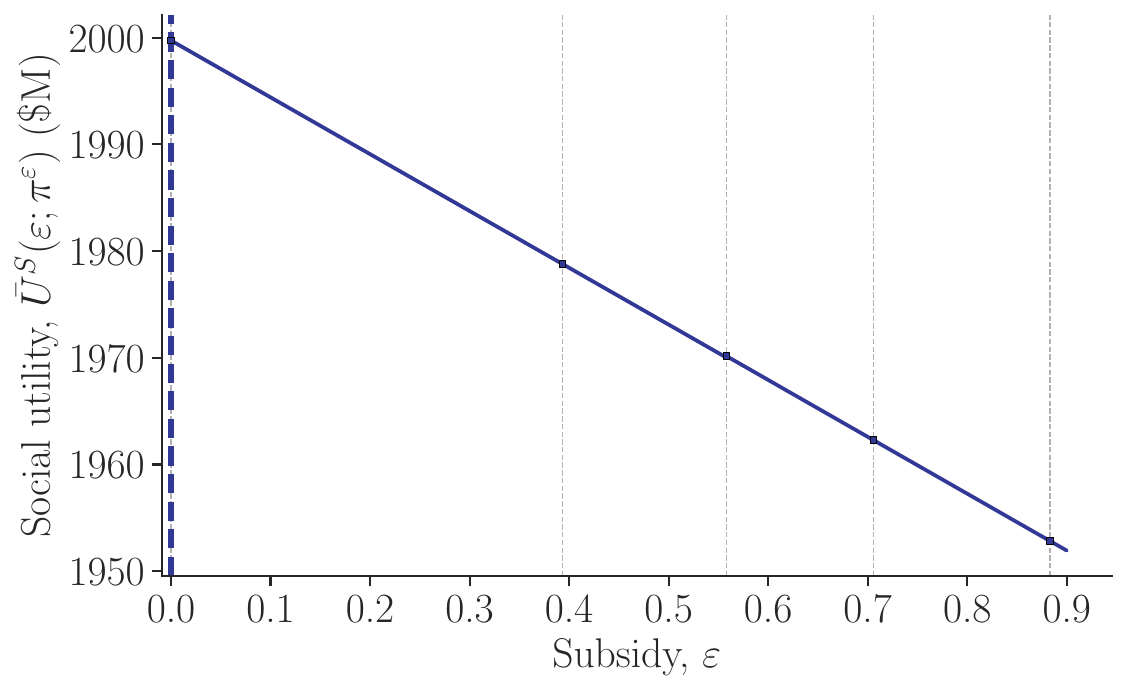}
    }
    \hspace{0mm}
    \subfloat[ Social utility $U^{\texttt{S}}(\varepsilon;\pi^\varepsilon)$ in the approval process]{
    \includegraphics[width=0.45\linewidth]{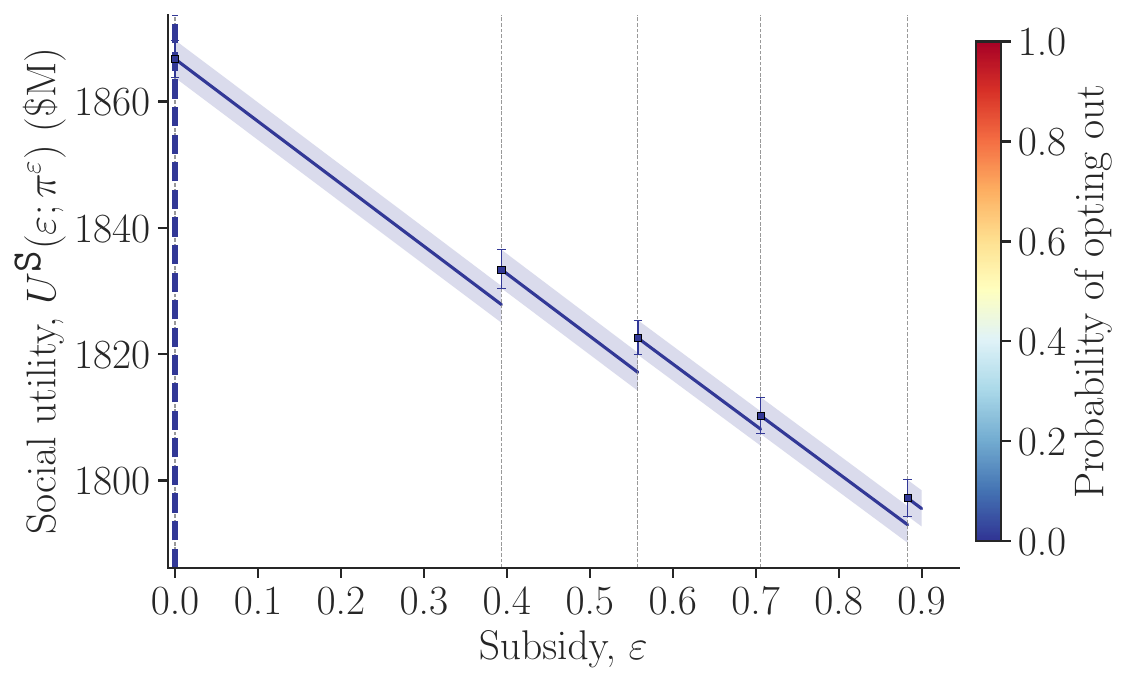}
    }
    \caption{\textbf{Social utilities under an uncalibrated informative prior.}
    The left panel shows the social utility (Eq.~\ref{eq:adaptive-utility-regulator}) computed using the belief MDP $\Mcal^\varepsilon$ when the agent uses the optimal policy for each subsidy. The right panel shows the true social utility (Eq.~\ref{eq:true-utilities}) in the approval process when the agent uses the optimal policy $\pi^\varepsilon$ for each subsidy and $\theta^* = 0.65$.
    The dashed vertical lines correspond to the intervals of the partition $\mathcal{P}$ where the agent's optimal policy is constant (Proposition~\ref{prop:utility-piecewise-convex}).
    }
\label{fig:social_utilities_uncalibrated}
\end{figure}

Interestingly, Figure~\ref{fig:uncalibrated_subsidy} shows that the optimal subsidy remains zero as $\rho^{\texttt{S}} / \rho^{\texttt{A}}$ increases. At the same time, Figure~\ref{fig:uncalibrated_gain} indicates that our sequential protocol achieves gains exceeding $90\%$ in social utility relative to a non-sequential protocol. In other words, the protocol can substantially improve social utility even in the absence of subsidies.

\begin{figure}[H]
    \vspace{0cm}
    \centering
    \includegraphics[width=0.60\linewidth]{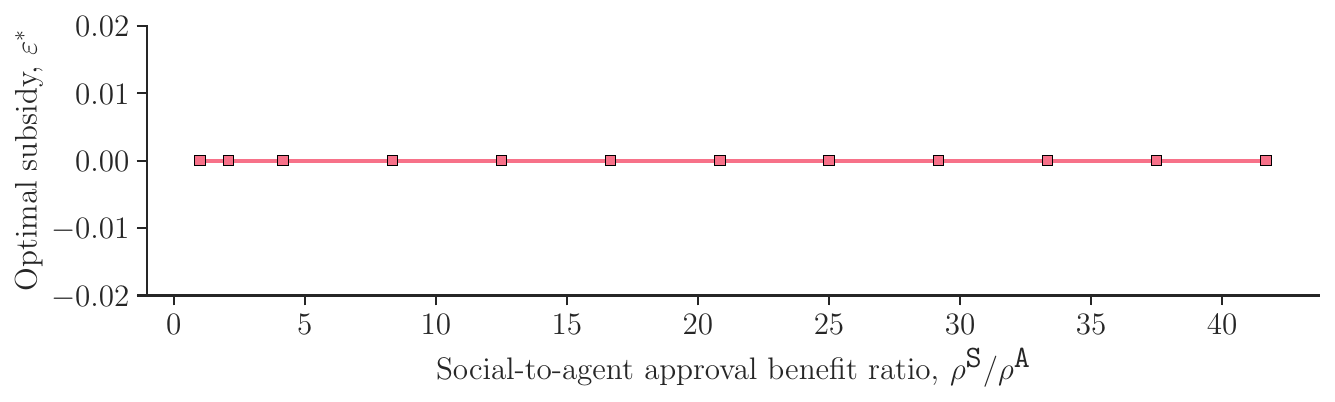}
    \caption{\textbf{Optimal subsidy vs. $\rho^{\texttt{S}} / \rho^{\texttt{A}}$.} The figure shows, as a function of the social-to-agent approval benefit ratio, the optimal subsidy obtained using Algorithm~\ref{alg:epsilon}.
    }
    \label{fig:uncalibrated_subsidy}
\end{figure}

\begin{figure}[H]
    \vspace{0cm}
    \centering
    \includegraphics[width=0.60\linewidth]{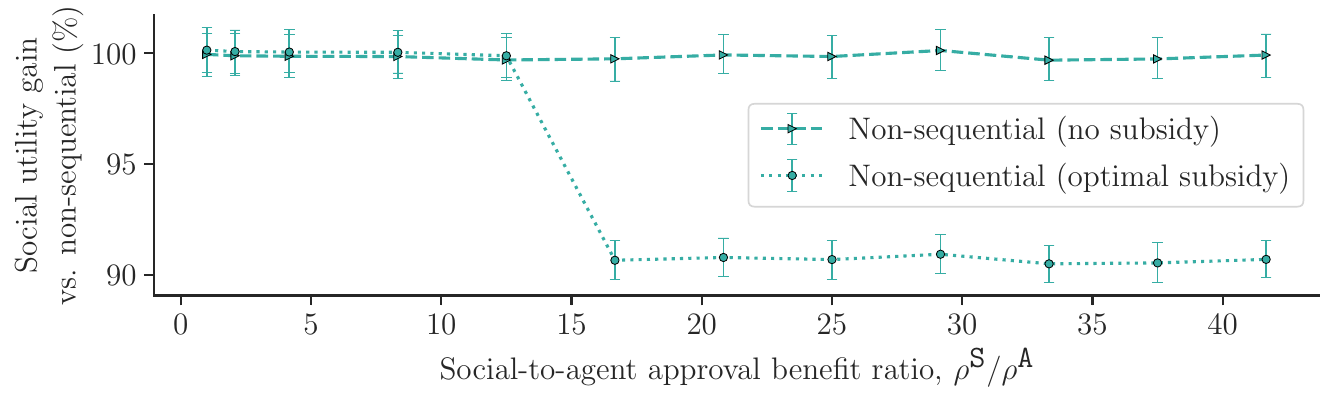}
    \caption{\textbf{Social utility gain vs. $\rho^{\texttt{S}} / \rho^{\texttt{A}}$.} The figure shows, as a function of the social-to-agent approval benefit ratio, the percentage increase in social utility of the sequential approval protocol relative to a non-sequential approval protocol in which the agent is restricted to a single trial with $n^{\texttt{max}}=800$, under (i) the optimal subsidy computed using Algorithm~\ref{alg:epsilon} and (ii) no subsidy ($\varepsilon=0$).
    }
    \label{fig:uncalibrated_gain}
\end{figure}

\clearpage
\newpage

\subsubsection{Approval under a different test process}\label{app:mixture-results}

In this section, we present additional experimental results for the antibiotic approval process described in Section~\ref{sec:experiments}, using an alternative test process to the process $M$ defined in Proposition~\ref{prop:binomial-e-value} and Eq.~\ref{eq:test-process}, while keeping all other parameters fixed as in Tables~\ref{tab:non-econ-parameters} and~\ref{tab:econ-parameters}. Here, our goal is to illustrate that Algorithm~\ref{alg:epsilon} can provide insight into how to optimally subsidize agents under different statistical tests.
More concretely, following Appendix~\ref{app:general-model}, we consider the mixed process $M^{\texttt{mix}}$ defined in Eq.~\ref{eq:mixture-process} with a uniform mixture $P^{\texttt{mix}} = \mathrm{U}(\theta^{\texttt{b}},1)$.
We find that the sequential subsidized protocol yields social utility gains of up to $15\%$ compared to a non-sequential baseline without subsidies. Relative to a non-sequential but optimally subsidized baseline, the gains can also reach up to $15\%$ when the ratio $\rho^{\texttt{S}} / \rho^{\texttt{A}}$ is low, but diminish rapidly as this ratio increases. The initial optimal action taken by the agent is $n\leq 114$ for all subsidies. Lastly, in Figure~\ref{fig:mixture_power}, we confirm that the mixed process $M^{\texttt{mix}}$ has higher power than $M$ for values of $\theta^*$ close to $\theta^{\texttt{b}}$, as discussed in Appendix~\ref{app:general-model}.

\begin{figure}[H]
    \vspace{0cm}
    \centering
    \subfloat[ Agent utility $\Bar{U}^{\texttt{A}}(\pi^\varepsilon;\varepsilon)$ computed using $\Mcal^\varepsilon$ ]{
    \includegraphics[width=0.45\linewidth]{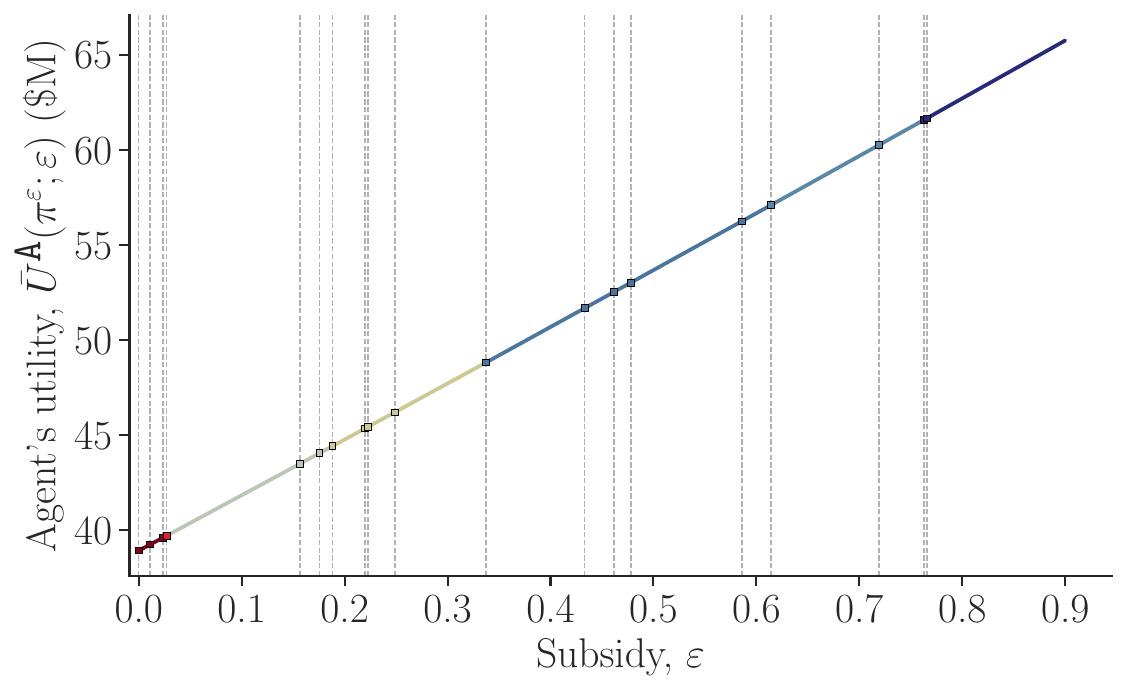}
    }
    \hspace{0mm}
    \subfloat[ Agent utility $U^{\texttt{A}}(\pi^\varepsilon;\varepsilon)$ in the approval process]{
    \includegraphics[width=0.45\linewidth]{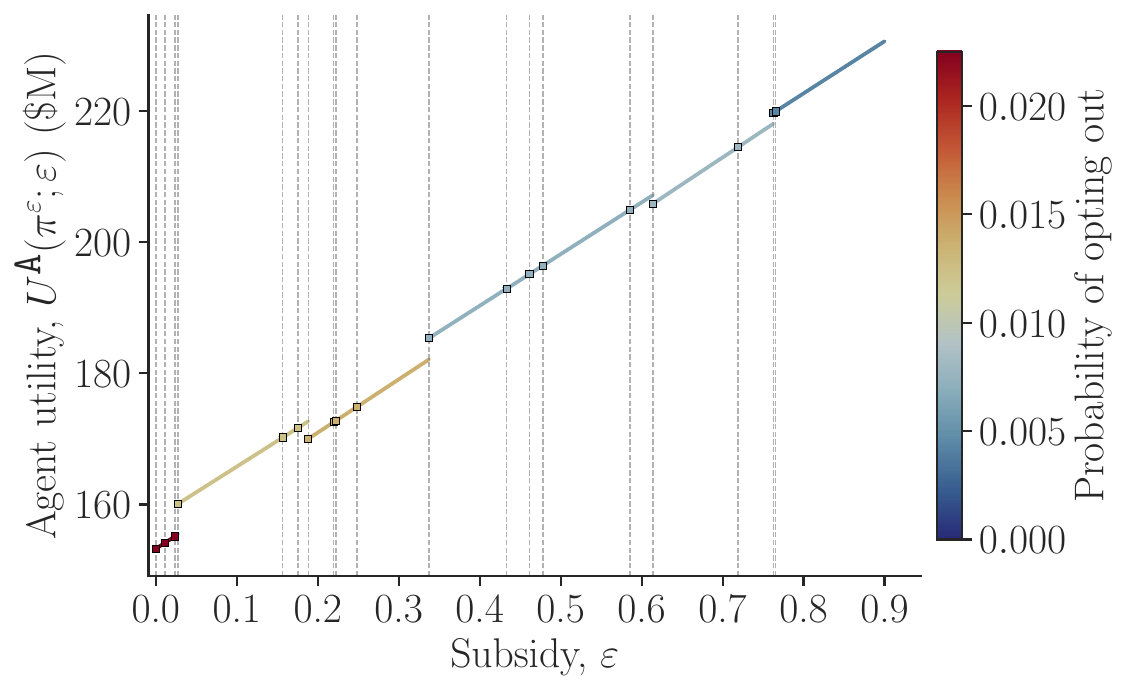}
    }
    \caption{\textbf{Agent utilities under a mixed test process.}
    The left panel shows the agent's utility (Eq.~\ref{eq:adaptive-utility}) computed using the belief MDP $\Mcal^\varepsilon$ when the agent uses the optimal policy for each subsidy, which is a piece-wise linear, convex, and continuous function in accordance with Proposition~\ref{prop:utility-piecewise-convex}. The right panel shows the true utility of the agent (Eq.~\ref{eq:true-utilities}) in the approval process when using the optimal policy $\pi^\varepsilon$ for each subsidy and $\theta^* = 0.65$.
    The dashed vertical lines correspond to the intervals of the partition $\mathcal{P}$ where the agent's optimal policy is constant (Proposition~\ref{prop:utility-piecewise-convex}).
    }
    \label{fig:agent_utilities_mixture}
\end{figure}

\begin{figure}[H]
    \vspace{0cm}
    \centering
    \subfloat[ Social utility $\Bar{U}^{\texttt{S}}(\varepsilon;\pi^\varepsilon)$ computed using $\Mcal^\varepsilon$ ]{
    \includegraphics[width=0.45\linewidth]{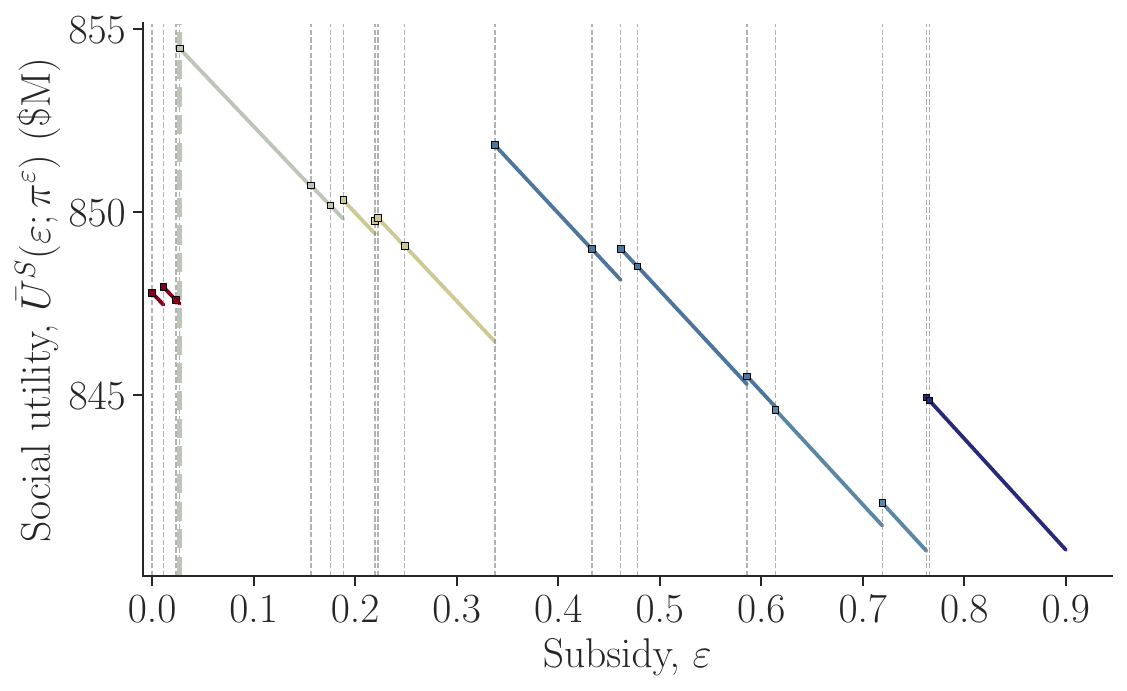}
    }
    \hspace{0mm}
    \subfloat[ Social utility $U^{\texttt{S}}(\varepsilon;\pi^\varepsilon)$ in the approval process]{
    \includegraphics[width=0.45\linewidth]{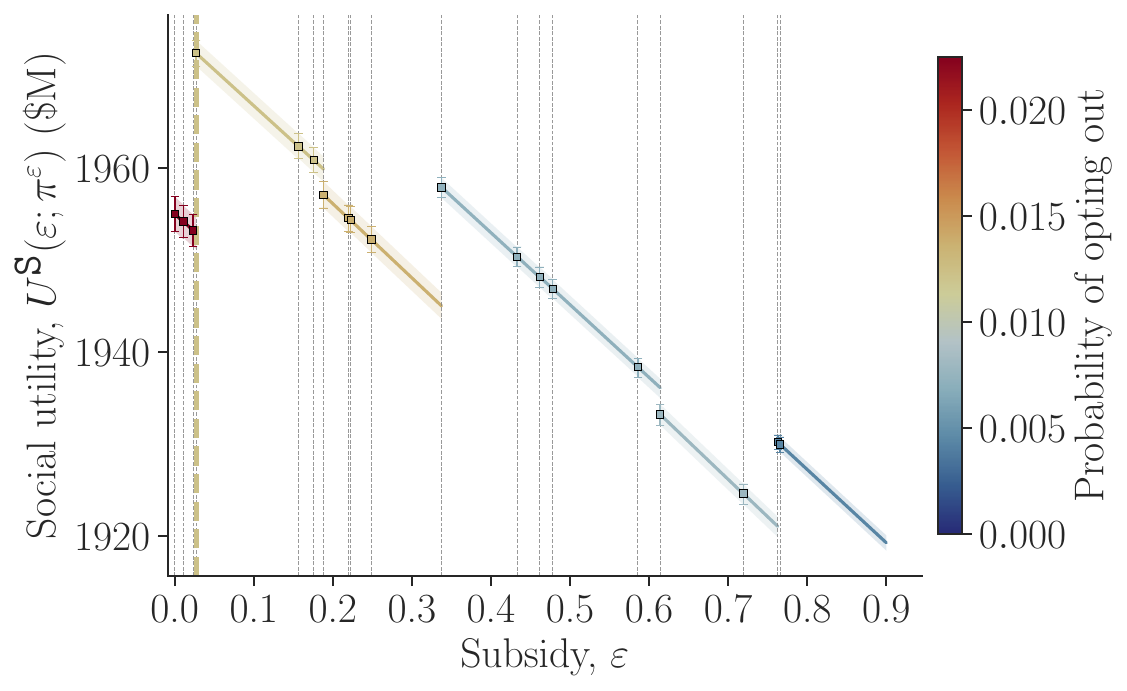}
    }
    \caption{\textbf{Social utilities under a mixed test process.}
    The left panel shows the social utility (Eq.~\ref{eq:adaptive-utility-regulator}) computed using the belief MDP $\Mcal^\varepsilon$ when the agent uses the optimal policy for each subsidy. The right panel shows the true social utility (Eq.~\ref{eq:true-utilities}) in the approval process when the agent uses the optimal policy $\pi^\varepsilon$ for each subsidy and $\theta^* = 0.65$.
    The dashed vertical lines correspond to the intervals of the partition $\mathcal{P}$ where the agent's optimal policy is constant (Proposition~\ref{prop:utility-piecewise-convex}).
    }
\label{fig:social_utilities_mixture}
\end{figure}

\begin{figure}[H]
    \vspace{0cm}
    \centering
    \includegraphics[width=0.60\linewidth]{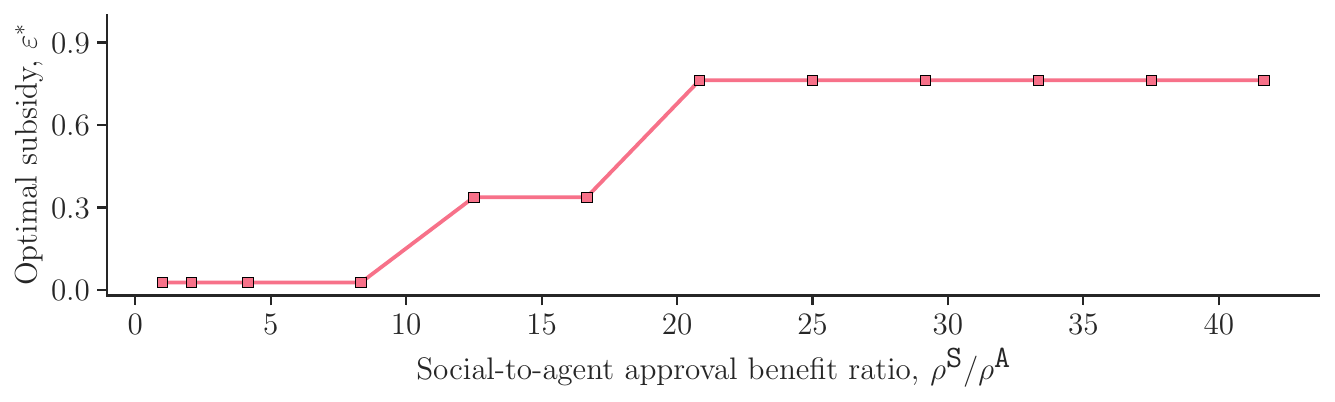}
    \caption{\textbf{Optimal subsidy vs. $\rho^{\texttt{S}} / \rho^{\texttt{A}}$ using a mixed test process.} The figure shows, as a function of the social-to-agent approval benefit ratio, the optimal subsidy obtained using Algorithm~\ref{alg:epsilon}.
    }
    \label{fig:mixture_subsidy}
\end{figure}

\begin{figure}[H]
    \vspace{0cm}
    \centering
    \includegraphics[width=0.60\linewidth]{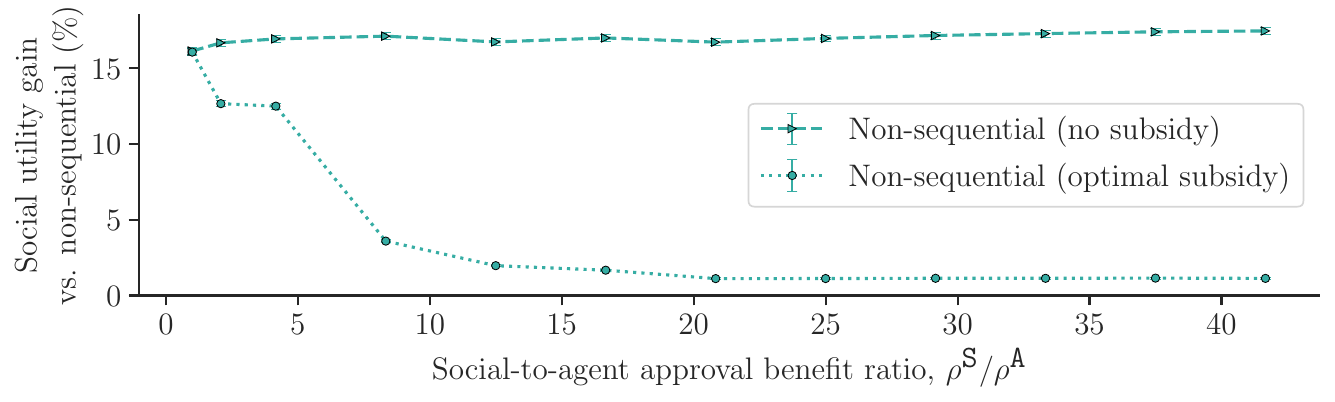}
    \caption{\textbf{Social utility gain vs. $\rho^{\texttt{S}} / \rho^{\texttt{A}}$ using a mixed test process.} The figure shows, as a function of the social-to-agent approval benefit ratio, the percentage increase in social utility of the sequential approval protocol relative to a non-sequential approval protocol in which the agent is restricted to a single trial with $n^{\texttt{max}}=800$, under (i) the optimal subsidy computed using Algorithm~\ref{alg:epsilon} and (ii) no subsidy ($\varepsilon=0$).
    }
    \label{fig:mixture_gain}
\end{figure}

\begin{figure}[h]
    \vspace{0cm}
    \centering
    \includegraphics[width=0.6\linewidth]{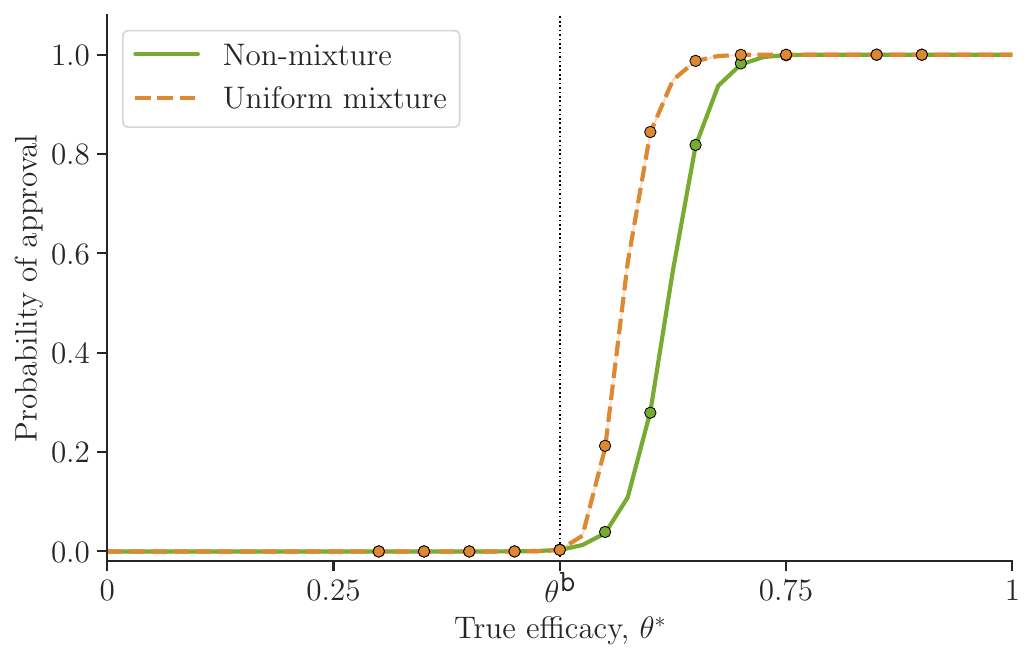}
    \caption{\textbf{Probability of approval under the optimal policy and subsidy.} The figure shows, across various efficacies $\theta^*$ of the antibiotic, the probability of approval (that is, of rejecting $H_0$) when the principal selects the optimal subsidy and the agent its optimal policy. The dashed (orange) curve corresponds to the process $M^{\texttt{mix}}$ defined in Eq.~\ref{eq:mixture-process} for a uniform mixture (optimal subsidy $\varepsilon^* = 0.027$), while the solid (green) curve corresponds to the process $M$ defined in Proposition~\ref{prop:binomial-e-value} and Eq.~\ref{eq:test-process} (optimal subsidy $\varepsilon^* = 0.108$).
    }
    \label{fig:mixture_power}
\end{figure}

%% file: strategic-experiments.bib
@book{Sutton1998,
  author = {Sutton, Richard S. and Barto, Andrew G.},
  edition = {Second},
  publisher = {The MIT Press},
  title = {Reinforcement Learning: An Introduction},
  url = {http://incompleteideas.net/book/the-book-2nd.html},
  year = {2018 }
}

@incollection{grossman1992analysis,
  title={An analysis of the principal-agent problem},
  author={Grossman, Sanford J and Hart, Oliver D},
  booktitle={Foundations of insurance economics: Readings in economics and finance},
  pages={302--340},
  year={1992},
  publisher={Springer}
}

@misc{waudbysmith2025universallogoptimalitygeneralclasses,
      title={Universal Log-Optimality for General Classes of e-processes and Sequential Hypothesis Tests}, 
      author={Ian Waudby-Smith and Ricardo Sandoval and Michael I. Jordan},
      year={2025},
      eprint={2504.02818},
      archivePrefix={arXiv},
      primaryClass={math.ST},
      url={https://arxiv.org/abs/2504.02818}, 
}

@article{ramdas2025hypothesistestingevalues,
    author = {Ramdas, Aaditya and Wang, Ruodu},
    title = {Hypothesis Testing with E-values},
    journal = {Foundations and Trends in Statistics},
    volume = {1},
    number = {1-2},
    pages = {1-390},
    year = {2025},
    month = {07},
    issn = {2978-4212},
    doi = {10.1561/3600000002},
    url = {https://doi.org/10.1561/3600000002},
    eprint = {https://www.emerald.com/ftstat/article-pdf/1/1-2/1/11146063/3600000002en.pdf},
}

@book{ville1939étude,
  title={{\'E}tude Critique de la Notion de Collectif},
  author={Ville, J.},
  lccn={ac40000526},
  series={Collection des monographies des probabilit{\'e}s},
  url={https://books.google.de/books?id=ETY7AQAAIAAJ},
  year={1939},
  publisher={Gauthier-Villars}
}

@misc{ramdas2023gametheoreticstatisticssafeanytimevalid,
      title={Game-theoretic statistics and safe anytime-valid inference}, 
      author={Aaditya Ramdas and Peter Grünwald and Vladimir Vovk and Glenn Shafer},
      year={2023},
      eprint={2210.01948},
      archivePrefix={arXiv},
      primaryClass={math.ST},
      url={https://arxiv.org/abs/2210.01948}, 
}

@article{shekhar2024Two,
author = {Shekhar, Shubhanshu and Ramdas, Aaditya},
title = {Nonparametric Two-Sample Testing by Betting},
year = {2024},
issue_date = {Feb. 2024},
publisher = {IEEE Press},
volume = {70},
number = {2},
issn = {0018-9448},
url = {https://doi.org/10.1109/TIT.2023.3305867},
doi = {10.1109/TIT.2023.3305867},
journal = {IEEE Trans. Inf. Theor.},
month = feb,
pages = {1178–1203},
numpages = {26}
}

@InProceedings{xu2024Online,
  title = 	 {Online multiple testing with e-values},
  author =       {Xu, Ziyu and Ramdas, Aaditya},
  booktitle = 	 {Proceedings of The 27th International Conference on Artificial Intelligence and Statistics},
  pages = 	 {3997--4005},
  year = 	 {2024},
  editor = 	 {Dasgupta, Sanjoy and Mandt, Stephan and Li, Yingzhen},
  volume = 	 {238},
  series = 	 {Proceedings of Machine Learning Research},
  month = 	 {02--04 May},
  publisher =    {PMLR},
  url = 	 {https://proceedings.mlr.press/v238/xu24a.html}
}

@article{shin_ramdas_rinaldo_2024,
    author = {Jaehyeok Shin and Aaditya Ramdas and Alessandro Rinaldo},
    title = {E-detectors: A Nonparametric Framework for Sequential Change Detection},
    journal = {The New England Journal of Statistics in Data Science},
    volume = {2},
    number = {2},
    year = {2024},
    pages = {229--260},
    doi = {10.51387/23-NEJSDS51},
    issn = {2693-7166},
    publisher = {New England Statistical Society}
}

@article{waldSeq,
author = {A. Wald},
title = {{Sequential Tests of Statistical Hypotheses}},
volume = {16},
journal = {The Annals of Mathematical Statistics},
number = {2},
publisher = {Institute of Mathematical Statistics},
pages = {117 -- 186},
year = {1945},
doi = {10.1214/aoms/1177731118},
URL = {https://doi.org/10.1214/aoms/1177731118}
}

@ARTICLE{Janiaud2021-jg,
  title     = "{U.S}. food and Drug Administration reasoning in approval
               decisions when efficacy evidence is borderline, 2013-2018",
  author    = "Janiaud, Perrine and Irony, Telba and Russek-Cohen, Estelle and
               Goodman, Steven N",
  journal   = "Ann. Intern. Med.",
  publisher = "American College of Physicians",
  volume    =  174,
  number    =  11,
  pages     = "1603--1611",
  month     =  nov,
  year      =  2021,
  language  = "en"
}

@BOOK{Von_Stackelberg2010,
  title     = "Market Structure and Equilibrium",
  author    = "von Stackelberg, Heinrich",
  publisher = "Springer",
  edition   =  2011,
  month     =  aug,
  year      =  2010,
  address   = "Berlin, Germany",
  copyright = "https://www.springernature.com/gp/researchers/text-and-data-mining",
  language  = "en"
}

@article{Wu31122022,
author = {Wan-Shu Wu and Kai Zhao},
title = {Government R\&D subsidies and enterprise R\&D activities: theory and evidence},
journal = {Economic Research-Ekonomska Istraživanja},
volume = {35},
number = {1},
pages = {391--408},
year = {2022},
publisher = {Routledge},
doi = {10.1080/1331677X.2021.1893204},


URL = { 
    
        https://doi.org/10.1080/1331677X.2021.1893204
    
    

},
eprint = { 
    
        https://doi.org/10.1080/1331677X.2021.1893204
    
    

}

}

@ARTICLE{Renwick2016-wj,
  title     = "A systematic review and critical assessment of incentive
               strategies for discovery and development of novel antibiotics",
  author    = "Renwick, Matthew J and Brogan, David M and Mossialos, Elias",
  journal   = "J. Antibiot. (Tokyo)",
  publisher = "Springer Science and Business Media LLC",
  volume    =  69,
  number    =  2,
  pages     = "73--88",
  month     =  feb,
  year      =  2016,
  language  = "en"
}

@techreport{fda2019effectiveness,
  author      = {{U.S. Food and Drug Administration}},
  title       = {Demonstrating Substantial Evidence of Effectiveness for Human Drug and Biological Products: Guidance for Industry},
  institution = {U.S. Department of Health and Human Services, Food and Drug Administration, Center for Drug Evaluation and Research (CDER), Center for Biologics Evaluation and Research (CBER)},
  year        = {2019},
  month       = {December},
  type        = {Draft Guidance},
  url         = {https://www.fda.gov/drugs/guidance-compliance-regulatory-information/guidances-drugs},
  note        = {Draft—Not for Implementation. Clinical/Medical.}
}

@misc{nih2026smallbusiness,
  author       = {{National Institutes of Health}},
  title        = {Small Business Funding},
  organization = {U.S. Department of Health and Human Services, National Institutes of Health, Small Business Education and Entrepreneurial Development (SEED) Office},
  year         = {2026},
  howpublished = {\url{https://seed.nih.gov/small-business-funding}},
  note         = {Accessed: 2026-05-03}
}

@misc{usc26_45c,
  author       = {{U.S. Congress}},
  title        = {26 {U.S.C.} 45{C} --- Clinical Testing Expenses for Certain Drugs for Rare Diseases or Conditions},
  howpublished = {Internal Revenue Code},
  year         = {2024},
  url          = {https://www.govinfo.gov/app/details/USCODE-2024-title26/USCODE-2024-title26-subtitleA-chap1-subchapA-partIV-subpartD-sec45C},
  note         = {Accessed: 2026-05-03}
}

@techreport{fda2026use,
  author ={{U.S. Food and Drug Administration}},
  title = {Use of Bayesian Methodology in Clinical Trials of Drug and Biological Products},
  institution = {Center for Biologics Evaluation and Research and Center for Drug Evaluation and Research, Food and Drug Administration},
  year = {2026},
  month       = {March},
  type        = {Draft Guidance}
}

@article{PenaExp,
author = {Victor H. de la Pe{\~n}a},
title = {{A General Class of Exponential Inequalities for Martingales and Ratios}},
volume = {27},
journal = {The Annals of Probability},
number = {1},
publisher = {Institute of Mathematical Statistics},
pages = {537 -- 564},
year = {1999},
doi = {10.1214/aop/1022677271},
URL = {https://doi.org/10.1214/aop/1022677271}
}

@article{HowardTimeUni,
author = {Steven R. Howard and Aaditya Ramdas and Jon McAuliffe and Jasjeet Sekhon},
title = {{Time-uniform Chernoff bounds via nonnegative supermartingales}},
volume = {17},
journal = {Probability Surveys},
number = {none},
publisher = {Institute of Mathematical Statistics and Bernoulli Society},
pages = {257 -- 317},
year = {2020},
doi = {10.1214/18-PS321},
URL = {https://doi.org/10.1214/18-PS321}
}

@book{HernndezLerma1996,
  title = {Discrete-Time Markov Control Processes},
  ISBN = {9781461207290},
  url = {http://dx.doi.org/10.1007/978-1-4612-0729-0},
  DOI = {10.1007/978-1-4612-0729-0},
  publisher = {Springer New York},
  author = {Hernández-Lerma,  Onésimo and Lasserre,  Jean Bernard},
  year = {1996}
}

@ARTICLE{Simmons2011-ny,
  title     = "False-positive psychology: undisclosed flexibility in data
               collection and analysis allows presenting anything as
               significant",
  author    = "Simmons, Joseph P and Nelson, Leif D and Simonsohn, Uri",
  journal   = "Psychol. Sci.",
  publisher = "SAGE Publications",
  volume    =  22,
  number    =  11,
  pages     = "1359--1366",
  month     =  nov,
  year      =  2011,
  language  = "en"
}

@article{GrunwaldSafe,
    author = {Grünwald, Peter and de Heide, Rianne and Koolen, Wouter},
    title = {Safe testing},
    journal = {Journal of the Royal Statistical Society Series B: Statistical Methodology},
    volume = {86},
    number = {5},
    pages = {1091-1128},
    year = {2024},
    month = {03},
    issn = {1369-7412},
    doi = {10.1093/jrsssb/qkae011},
    url = {https://doi.org/10.1093/jrsssb/qkae011},
    eprint = {https://academic.oup.com/jrsssb/article-pdf/86/5/1091/60648648/qkae011.pdf},
}

@BOOK{B2005-xl,
  title     = "Encyclopedia of statistical sciences",
  author    = "Campbell, B. and Balakrishnan, N and Vidakovic, Brani",
  editor    = "Kotz, Samuel and Read, Campbell B and Balakrishnan, N and
               Vidakovic, Brani and Johnson, Norman L",
  publisher = "John Wiley \& Sons",
  series    = "Methods and Applications of Statistics",
  edition   =  2,
  month     =  dec,
  year      =  2005,
  address   = "Nashville, TN",
  language  = "en"
}

@inproceedings{
velasco2026auditing,
title={Auditing Pay-Per-Token in Large Language Models},
author={Ander Artola Velasco and Stratis Tsirtsis and Manuel Gomez Rodriguez},
booktitle={The 29th International Conference on Artificial Intelligence and Statistics},
year={2026},
url={https://openreview.net/forum?id=6lboj007YA}
}

@misc{gauthier2026bettingequilibriummonitoringstrategic,
      title={Betting on Equilibrium: Monitoring Strategic Behavior in Multi-Agent Systems}, 
      author={Etienne Gauthier and Francis Bach and Michael I. Jordan},
      year={2026},
      eprint={2601.05427},
      archivePrefix={arXiv},
      primaryClass={cs.GT},
      url={https://arxiv.org/abs/2601.05427}, 
}

@misc{waudbysmith2022estimatingmeansboundedrandom,
      title={Estimating means of bounded random variables by betting}, 
      author={Ian Waudby-Smith and Aaditya Ramdas},
      year={2022},
      eprint={2010.09686},
      archivePrefix={arXiv},
      primaryClass={math.ST},
      url={https://arxiv.org/abs/2010.09686}, 
}

@article{cho2022bayesian,
  title={Bayesian learning approach to model predictive control},
  author={Cho, Namhoon and Lee, Seokwon and Shin, Hyo-Sang and Tsourdos, Antonios},
  journal={arXiv preprint arXiv:2203.02720},
  year={2022}
}

@misc{russo2020tutorialthompsonsampling,
      title={A Tutorial on Thompson Sampling}, 
      author={Daniel Russo and Benjamin Van Roy and Abbas Kazerouni and Ian Osband and Zheng Wen},
      year={2020},
      eprint={1707.02038},
      archivePrefix={arXiv},
      primaryClass={cs.LG},
      url={https://arxiv.org/abs/1707.02038}, 
}

@article{Ghavamzadeh_2015,
   title={Bayesian Reinforcement Learning: A Survey},
   volume={8},
   ISSN={1935-8245},
   url={http://dx.doi.org/10.1561/2200000049},
   DOI={10.1561/2200000049},
   number={5–6},
   journal={Foundations and Trends® in Machine Learning},
   publisher={Emerald},
   author={Ghavamzadeh, Mohammad and Mannor, Shie and Pineau, Joelle and Tamar, Aviv},
   year={2015},
   month=nov, pages={359–483} }

@misc{buening2023minimaxbayesreinforcementlearning,
      title={Minimax-Bayes Reinforcement Learning}, 
      author={Thomas Kleine Buening and Christos Dimitrakakis and Hannes Eriksson and Divya Grover and Emilio Jorge},
      year={2023},
      eprint={2302.10831},
      archivePrefix={arXiv},
      primaryClass={cs.LG},
      url={https://arxiv.org/abs/2302.10831}, 
}

@article{kaelbling1998planning,
  title={Planning and acting in partially observable stochastic domains},
  author={Kaelbling, Leslie Pack and Littman, Michael L and Cassandra, Anthony R},
  journal={Artificial intelligence},
  volume={101},
  number={1-2},
  pages={99--134},
  year={1998},
  publisher={Elsevier}
}

@inproceedings{hossain2025strategic,
  title={Strategic Hypothesis Testing},
  author={Hossain, Safwan and Chen, Yatong and Chen, Yiling},
  booktitle={The Thirty-Ninth Annual Conference on Neural Information Processing Systems (NeurIPS)},
  year={2025}
}

@article{bates2022principal,
  title={Principal-agent hypothesis testing},
  author={Bates, Stephen and Jordan, Michael I and Sklar, Michael and Soloff, Jake A},
  journal={arXiv preprint arXiv:2205.06812},
  year={2022}
}

@article{huang2026towards,
  title={Towards Anytime-Valid Statistical Watermarking},
  author={Huang, Baihe and Xu, Eric and Ramchandran, Kannan and Jiao, Jiantao and Jordan, Michael I},
  journal={arXiv preprint arXiv:2602.17608},
  year={2026}
}

@article{chugg2026post,
  title={Post-Hoc Large-Sample Statistical Inference},
  author={Chugg, Ben and Gauthier, Etienne and Jordan, Michael I and Ramdas, Aaditya and Waudby-Smith, Ian},
  journal={arXiv preprint arXiv:2603.08002},
  year={2026}
}

@inproceedings{
dhillon2026escores,
title={E-Scores for (In)Correctness Assessment of Generative Model Outputs},
author={Guneet S. Dhillon and Javier Gonzalez and Teodora Pandeva and Alicia Curth},
booktitle={The 29th International Conference on Artificial Intelligence and Statistics},
year={2026},
url={https://openreview.net/forum?id=PCRCLYgiVK}
}

@inproceedings{
gauthier2025backward,
title={Backward Conformal Prediction},
author={Etienne Gauthier and Francis Bach and Michael I. Jordan},
booktitle={The Thirty-ninth Annual Conference on Neural Information Processing Systems},
year={2025},
url={https://openreview.net/forum?id=23ichdd74N}
}

@article{McClellan2022,
author = {McClellan, Andrew},
title = {Experimentation and Approval Mechanisms},
journal = {Econometrica},
volume = {90},
number = {5},
pages = {2215-2247},
doi = {https://doi.org/10.3982/ECTA17021},
url = {https://onlinelibrary.wiley.com/doi/abs/10.3982/ECTA17021},
eprint = {https://onlinelibrary.wiley.com/doi/pdf/10.3982/ECTA17021},
year = {2022}
}

@article{shi2024sharp,
  title={Sharp Results for Hypothesis Testing with Risk-Sensitive Agents},
  author={Shi, Flora C and Bates, Stephen and Wainwright, Martin J},
  journal={arXiv preprint arXiv:2412.16452},
  year={2024}
}

@TechReport{TetenovEcon,
type={CeMMAP working papers},
institution={Institute for Fiscal Studies},
author={Aleksey Tetenov},
title={An economic theory of statistical testing},
year={2016},
month={Sep},
number={50/16},
doi={10.1920/wp.cem.2016.5016},
url={https://ideas.repec.org/p/azt/cemmap/50-16.html},
}

@BOOK{Peskir2006-vv,
  title     = "Optimal stopping and free-boundary problems",
  author    = "Peskir, Goran and Shiryaev, Albert N",
  publisher = "Birkhauser Verlag AG",
  series    = "Lectures in Mathematics. ETH Z{\"u}rich",
  edition   =  2006,
  month     =  aug,
  year      =  2006,
  address   = "Basel, Switzerland",
  language  = "en"
}

@BOOK{Powell2012-tq,
  title     = "Optimal Learning",
  author    = "Powell, Warren B and Ryzhov, Ilya O",
  publisher = "Wiley-Blackwell",
  series    = "Wiley Series in Probability and Statistics",
  month     =  mar,
  year      =  2012,
  address   = "Hoboken, NJ",
  language  = "en"
}

@article{RobbinsOptimal,
 ISSN = {00029890, 19300972},
 URL = {http://www.jstor.org/stable/2316139},
 author = {Herbert Robbins},
 journal = {The American Mathematical Monthly},
 number = {4},
 pages = {333--343},
 publisher = {[Taylor & Francis, Ltd., Mathematical Association of America]},
 title = {Optimal Stopping},
 urldate = {2026-03-27},
 volume = {77},
 year = {1970}
}

@ARTICLE{Shen2025-zc,
  title     = "Variational sequential optimal experimental design using
               reinforcement learning",
  author    = "Shen, Wanggang and Dong, Jiayuan and Huan, Xun",
  journal   = "Comput. Methods Appl. Mech. Eng.",
  publisher = "Elsevier BV",
  volume    =  444,
  number    =  118068,
  pages     = "118068",
  month     =  sep,
  year      =  2025,
  language  = "en"
}

@article{cheng2025optimal,
  title={Optimal Stopping for Sequential Bayesian Experimental Design},
  author={Cheng, Chen and Huan, Xun},
  journal={arXiv preprint arXiv:2509.21734},
  year={2025}
}

@article{shen2023bayesian,
  title={Bayesian sequential optimal experimental design for nonlinear models using policy gradient reinforcement learning},
  author={Shen, Wanggang and Huan, Xun},
  journal={Computer Methods in Applied Mechanics and Engineering},
  volume={416},
  pages={116304},
  year={2023},
  publisher={Elsevier}
}

@book{rockafellar1970convex,
  title={Convex Analysis},
  author={Rockafellar, R.T.},
  isbn={9780691080697},
  lccn={68056318},
  series={Princeton landmarks in mathematics and physics},
  url={https://books.google.de/books?id=OI4Ph2dXXhsC},
  year={1970},
  publisher={Princeton University Press}
}

@inproceedings{Conitzer,
author = {Conitzer, Vincent and Sandholm, Tuomas},
title = {Computing the optimal strategy to commit to},
year = {2006},
isbn = {1595932364},
publisher = {Association for Computing Machinery},
address = {New York, NY, USA},
url = {https://doi.org/10.1145/1134707.1134717},
doi = {10.1145/1134707.1134717},
booktitle = {Proceedings of the 7th ACM Conference on Electronic Commerce},
pages = {82–90},
numpages = {9},
location = {Ann Arbor, Michigan, USA},
series = {EC '06}
}

@article{farina2021strength,
  title={Strength of clinical evidence leading to approval of novel cancer medicines in Europe: A systematic review and data synthesis},
  author={Farina, Alberto and Moro, Federico and Fasslrinner, Frederick and Sedghi, Annahita and Bromley, Miluska and Siepmann, Timo},
  journal={Pharmacology Research \& Perspectives},
  volume={9},
  number={4},
  pages={e00816},
  year={2021},
  publisher={Wiley Online Library}
}

@article{gieringer1985safety,
  title={The safety and efficacy of new drug approval},
  author={Gieringer, Dale H},
  journal={Cato J.},
  volume={5},
  pages={177},
  year={1985},
  publisher={HeinOnline}
}

@article{o2010building,
  title={Building comparative efficacy and tolerability into the FDA approval process},
  author={O’Connor, Alec B},
  journal={Jama},
  volume={303},
  number={10},
  pages={979--980},
  year={2010},
  publisher={American Medical Association}
}

@article{frantz2003clinical,
  title={Why are clinical costs so high?},
  author={Frantz, Simon},
  journal={Nature Reviews Drug Discovery},
  volume={2},
  number={11},
  year={2003}
}

@article{detsky1989clinical,
  title={Are clinical trials a cost-effective investment?},
  author={Detsky, Allan S},
  journal={Jama},
  volume={262},
  number={13},
  pages={1795--1800},
  year={1989},
  publisher={American Medical Association}
}

@article{martin2017much,
  title={How much do clinical trials cost?},
  author={Martin, Linda and Hutchens, Melissa and Hawkins, Conrad and Radnov, Alaina},
  journal={Nature Reviews Drug Discovery},
  volume={16},
  number={6},
  pages={381--382},
  year={2017},
  publisher={Nature Publishing Group UK London}
}

@article{mahajan2010adaptive,
  title={Adaptive design clinical trials: Methodology, challenges and prospect},
  author={Mahajan, Rajiv and Gupta, Kapil},
  journal={Indian journal of pharmacology},
  volume={42},
  number={4},
  pages={201--207},
  year={2010},
  publisher={Medknow}
}

@misc{nih,
  title = {Clinical Trial-Specific Funding Opportunities},
  howpublished = {\url{https://grants.nih.gov/policy-and-compliance/policy-topics/clinical-trials/specific-funding-opportunities}},
  note = {Accessed: 2026-03-31}
}

@misc{fda,
  title = {Clinical Trials Grants Program},
  howpublished = {\url{https://www.fda.gov/industry/orphan-products-grants-program/clinical-trials-grants-program}},
  note = {Accessed: 2026-03-31}
}

@misc{dfg,
  title = {Clinical Trials},
  howpublished = {\url{https://www.dfg.de/en/research-funding/funding-opportunities/programmes/individual/clinical-trials}},
  note = {Accessed: 2026-03-31}
}

@misc{edctp,
  title = {The European and Developing Countries Clinical Trials Partnership},
  howpublished = {\url{https://www.edctp.org/}},
  note = {Accessed: 2026-03-31}
}

@article{brown2009adaptive,
  title={Adaptive designs for randomized trials in public health},
  author={Brown, C Hendricks and Ten Have, Thomas R and Jo, Booil and Dagne, Getachew and Wyman, Peter A and Muth{\'e}n, Bengt and Gibbons, Robert D},
  journal={Annual review of public health},
  volume={30},
  number={1},
  pages={1--25},
  year={2009},
  publisher={Annual Reviews}
}

@ARTICLE{GBD_2021_Antimicrobial_Resistance_Collaborators2024,
  title     = "Global burden of bacterial antimicrobial resistance 1990-2021: a
               systematic analysis with forecasts to 2050",
  author    = "{GBD 2021 Antimicrobial Resistance Collaborators}",
  journal   = "Lancet",
  publisher = "Elsevier BV",
  volume    =  404,
  number    =  10459,
  pages     = "1199--1226",
  month     =  sep,
  year      =  2024,
  copyright = "http://creativecommons.org/licenses/by/4.0/",
  language  = "en"
}

@ARTICLE{Shallcross2015-tg,
  title     = "Tackling the threat of antimicrobial resistance: from policy to
               sustainable action",
  author    = "Shallcross, Laura J and Howard, Simon J and Fowler, Tom and
               Davies, Sally C",
  journal   = "Philos. Trans. R. Soc. Lond. B Biol. Sci.",
  publisher = "The Royal Society",
  volume    =  370,
  number    =  1670,
  pages     = "20140082",
  month     =  jun,
  year      =  2015,
  language  = "en"
}

@ARTICLE{Gargate2025,
  title    = "Current economic and regulatory challenges in developing
              antibiotics for Gram-negative bacteria",
  author   = "Gargate, Nupur and Laws, Mark and Rahman, Khondaker Miraz",
  journal  = "NPJ Antimicrob. Resist.",
  volume   =  3,
  number   =  1,
  pages    = "50",
  month    =  jun,
  year     =  2025,
  language = "en"
}

@ARTICLE{Outterson2013-qz,
  title     = "Approval and withdrawal of new antibiotics and other
               antiinfectives in the {U.S.}, 1980-2009",
  author    = "Outterson, Kevin and Powers, John H and Seoane-Vazquez, Enrique
               and Rodriguez-Monguio, Rosa and Kesselheim, Aaron S",
  journal   = "J. Law Med. Ethics",
  publisher = "Cambridge University Press (CUP)",
  volume    =  41,
  number    =  3,
  pages     = "688--696",
  year      =  2013,
  language  = "en"
}

@ARTICLE{Piddock2024-dn,
  title     = "Advancing global antibiotic research, development and access",
  author    = "Piddock, Laura J V and Alimi, Yewande and Anderson, James and de
               Felice, Damiano and Moore, Catrin E and R{\o}ttingen, John-Arne
               and Skinner, Henry and Beyer, Peter",
  journal   = "Nat. Med.",
  publisher = "Springer Science and Business Media LLC",
  volume    =  30,
  number    =  9,
  pages     = "2432--2443",
  month     =  sep,
  year      =  2024,
  copyright = "https://www.springernature.com/gp/researchers/text-and-data-mining",
  language  = "en"
}

@ARTICLE{Courtemanche2021-up,
  title     = "Looking for solutions to the pitfalls of developing novel
               antibacterials in an economically challenging system",
  author    = "Courtemanche, Gilles and Wadanamby, Rohini and Kiran,
               Amritanjali and Toro-Alzate, Luisa Fernanda and Diggle, Mathew
               and Chakraborty, Dipanjan and Blocker, Ariel and van Dongen,
               Maarten",
  journal   = "Microbiol. Res. (Pavia)",
  publisher = "MDPI AG",
  volume    =  12,
  number    =  1,
  pages     = "173--185",
  month     =  mar,
  year      =  2021,
  copyright = "https://creativecommons.org/licenses/by/4.0/",
  language  = "en"
}

@ARTICLE{Plackett2020-qi,
  title     = "Why big pharma has abandoned antibiotics",
  author    = "Plackett, Benjamin",
  journal   = "Nature",
  publisher = "Springer Science and Business Media LLC",
  volume    =  586,
  number    =  7830,
  pages     = "S50--S52",
  month     =  oct,
  year      =  2020,
  copyright = "https://www.springernature.com/gp/researchers/text-and-data-mining",
  language  = "en"
}

@ARTICLE{Wells2024-fs,
  title     = "Novel insights from financial analysis of the failure to
               commercialise plazomicin: Implications for the antibiotic
               investment ecosystem",
  author    = "Wells, Nadya and Nguyen, Vinh-Kim and Harbarth, Stephan",
  journal   = "Humanit. Soc. Sci. Commun.",
  publisher = "Springer Science and Business Media LLC",
  volume    =  11,
  number    =  1,
  month     =  jul,
  year      =  2024,
  copyright = "https://creativecommons.org/licenses/by/4.0",
  language  = "en"
}

@ARTICLE{Anderson2023-gl,
  title     = "Challenges and opportunities for incentivising antibiotic
               research and development in Europe",
  author    = "Anderson, Michael and Panteli, Dimitra and van Kessel, Robin and
               Ljungqvist, Gunnar and Colombo, Francesca and Mossialos, Elias",
  journal   = "Lancet Reg. Health Eur.",
  publisher = "Elsevier BV",
  volume    =  33,
  number    =  100705,
  pages     = "100705",
  month     =  oct,
  year      =  2023,
  copyright = "http://creativecommons.org/licenses/by/4.0/",
  language  = "en"
}

@ARTICLE{Shlaes2019-fe,
  title     = "The economic conundrum for antibacterial drugs",
  author    = "Shlaes, David M",
  journal   = "Antimicrob. Agents Chemother.",
  publisher = "American Society for Microbiology",
  volume    =  64,
  number    =  1,
  month     =  dec,
  year      =  2019,
  language  = "en"
}

@misc{HR7352_119th_Congress_2026,
  author       = "{United States Congress}",
  title        = "{H.R. 7352: PASTEUR Act of 2026}",
  year         = "2026",
  note         = "To amend the Public Health Service Act to establish a program to develop innovative antimicrobial drugs",
  url          = "https://www.congress.gov/bill/119th-congress/house-bill/7352",
  date         = "2026-02-04"
}

@ARTICLE{Outterson2016-et,
  title     = "Accelerating global innovation to address antibacterial
               resistance: introducing {CARB-X}",
  author    = "Outterson, Kevin and Rex, John H and Jinks, Tim and Jackson,
               Peter and Hallinan, John and Karp, Steve and Hung, Deborah T and
               Franceschi, Francois and Merkeley, Tyler and Houchens,
               Christopher and Dixon, Dennis M and Kurilla, Michael G and
               Aurigemma, Rosemarie and Larsen, Joseph",
  journal   = "Nat. Rev. Drug Discov.",
  publisher = "Springer Science and Business Media LLC",
  volume    =  15,
  number    =  9,
  pages     = "589--590",
  month     =  sep,
  year      =  2016,
  language  = "en"
}

@ARTICLE{Rahman2021-zk,
  title     = "Market concentration of new antibiotic sales",
  author    = "Rahman, Sakib and Lindahl, Olof and Morel, Chantal M and Hollis,
               Aidan",
  journal   = "J. Antibiot. (Tokyo)",
  publisher = "Springer Science and Business Media LLC",
  volume    =  74,
  number    =  6,
  pages     = "421--423",
  month     =  jun,
  year      =  2021,
  language  = "en"
}

@ARTICLE{Moore2018-ls,
  title     = "Estimated costs of pivotal trials for novel therapeutic agents
               approved by the {US} Food and Drug Administration, 2015-2016",
  author    = "Moore, Thomas J and Zhang, Hanzhe and Anderson, Gerard and
               Alexander, G Caleb",
  journal   = "JAMA Intern. Med.",
  publisher = "American Medical Association (AMA)",
  volume    =  178,
  number    =  11,
  pages     = "1451--1457",
  month     =  nov,
  year      =  2018,
  language  = "en"
}

@ARTICLE{Stergiopoulos2018-fp,
  title     = "Cost drivers of a hospital-acquired bacterial pneumonia and
               ventilator-associated bacterial pneumonia phase 3 clinical trial",
  author    = "Stergiopoulos, Stella and Calvert, Sara B and Brown, Carrie A
               and Awatin, Josephine and Tenaerts, Pamela and Holland, Thomas L
               and DiMasi, Joseph A and Getz, Kenneth A",
  journal   = "Clin. Infect. Dis.",
  publisher = "Oxford University Press (OUP)",
  volume    =  66,
  number    =  1,
  pages     = "72--80",
  month     =  jan,
  year      =  2018,
  language  = "en"
}

@article{shi2025instance,
  title={Instance-Adaptive Hypothesis Tests with Heterogeneous Agents},
  author={Shi, Flora C and Wainwright, Martin J and Bates, Stephen},
  journal={arXiv preprint arXiv:2510.21178},
  year={2025}
}

@inproceedings{paruchuri2008playing,
  title={Playing games for security: An efficient exact algorithm for solving Bayesian Stackelberg games},
  author={Paruchuri, Praveen and Pearce, Jonathan P and Marecki, Janusz and Tambe, Milind and Ordonez, Fernando and Kraus, Sarit},
  booktitle={Proceedings of the 7th international joint conference on Autonomous agents and multiagent systems-Volume 2},
  pages={895--902},
  year={2008}
}

@article{harsanyi1979bayesian,
  title={Bayesian decision theory, rule utilitarianism, and Arrow's impossibility theorem},
  author={Harsanyi, John C},
  journal={Theory and Decision},
  volume={11},
  number={3},
  pages={289--317},
  year={1979},
  publisher={Springer}
}
